\DeclareOldFontCommand{\rm}{\normalfont\rmfamily}{\mathrm}
\DeclareOldFontCommand{\sf}{\normalfont\sffamily}{\mathsf}
\DeclareOldFontCommand{\tt}{\normalfont\ttfamily}{\mathtt}
\DeclareOldFontCommand{\bf}{\normalfont\bfseries}{\mathbf}
\DeclareOldFontCommand{\it}{\normalfont\itshape}{\mathit}
\DeclareOldFontCommand{\sl}{\normalfont\slshape}{\@nomath\sl}
\DeclareOldFontCommand{\sc}{\normalfont\scshape}{\@nomath\sc}
\numberwithin{equation}{section}
\numberwithin{figure}{section}
\numberwithin{table}{section}
\newcommand{\ie}{i.e.\xspace}
\newcommand{\eg}{e.g.\xspace}
\newcommand{\wlg}{w.l.o.g.\ }
\newcommand{\NN}{\mathbb{N}}
\newcommand{\ZZ}{\mathbb{Z}}
\newcommand{\RR}{\mathbb{R}}
\newcommand{\RRp}{\mathbb{R}_{\ge0}}
\renewcommand{\Pr}[1]{\mbox{\rm\bf Pr}\left[#1\right]}
\newcommand{\Ex}[1]{\mbox{\rm\bf E}\left[#1\right]}
\newcommand{\tra}{^{\operatorname{T}}}
\newcommand{\OPT}{\mathrm{Opt}}
\DeclareMathOperator{\Vol}{Vol}
\DeclareMathOperator{\dist}{{\mathsf d}}
\newcommand{\NP}{{\sf NP}\xspace}
\newtheorem{theorem}{Theorem}
\newtheorem{definition}[theorem]{Definition}
\newtheorem{lemma}[theorem]{Lemma}
\newtheorem{property}[theorem]{Property}
\title{\vspace{-5ex}

Worst Case and Probabilistic Analysis
        of the 2-Opt Algorithm for the TSP\thanks{An extended abstract of this work has
        appeared in the Proceedings of the 18th ACM-SIAM Symposium on Discrete Algorithms (SODA 2007). The results of this extended abstract have been split into two journal articles~\cite{EnglertRV14Algorithmica} and~\cite{EnglertRV16ACM}. This report is an updated version of~\cite{EnglertRV14Algorithmica}, in which two minor errors in the proofs of Lemma~8 and Lemma~9 have been corrected. We thank Bodo Manthey for pointing out these errors.

        This work was supported in part by the EU within the 6th Framework
        Programme under contract 001907 (DELIS), by DFG grants VO~889/2 and WE~2842/1, and by EPSRC grant EP/F043333/1. 

We thank the referee of~\cite{EnglertRV14Algorithmica} for her/his extraordinary efforts and many helpful suggestions.}}
\author{Matthias Englert\thanks{DIMAP and Dept.~of Computer Science, University of Warwick \href{mailto:englert@dcs.warwick.ac.uk}{englert@dcs.warwick.ac.uk}}
\and
Heiko R\"oglin\thanks{Dept.~of Computer Science, University of Bonn \href{mailto:roeglin@cs.uni-bonn.de}{roeglin@cs.uni-bonn.de}}
\and Berthold V\"ocking\thanks{Dept.~of Computer Science, RWTH Aachen University}
}
\date{\vspace{-5ex}}
\begin{document}

\maketitle

\begin{abstract}
2-Opt is probably the most basic local search heuristic for 
the TSP\@. This heuristic achieves amazingly good results on ``real 
world'' Euclidean instances both with respect to running time and 
approximation ratio. There are numerous experimental studies on the 
performance of 2-Opt. However, the theoretical knowledge about this 
heuristic is still very limited. Not even its worst case running time 
on 2-dimensional Euclidean instances was known so far. We clarify this issue by 
presenting, for every $p\in\NN$, a family of $L_p$ instances on which 
2-Opt can take an exponential number of steps.

Previous probabilistic analyses were restricted to instances in which 
$n$ points are placed uniformly at random in the unit square $[0,1]^2$, 
where it was shown that the expected number of steps is bounded by 
$\tilde{O}(n^{10})$ for Euclidean instances. We consider a more 
advanced model of probabilistic instances in which the points can be 
placed independently according to general distributions on $[0,1]^d$, for an 
arbitrary $d\ge 2$. In particular, we allow different distributions for 
different points. We study the expected number of local improvements in 
terms of the number $n$ of points and the maximal density $\phi$ of the 
probability distributions. We show an upper bound on the expected 
length of any 2-Opt improvement path of 
$\tilde{O}(n^{4+1/3}\cdot\phi^{8/3})$. When starting with an initial 
tour computed by an insertion heuristic, the upper bound on the 
expected number of steps improves even to 
$\tilde{O}(n^{4+1/3-1/d}\cdot\phi^{8/3})$. If the distances are 
measured according to the Manhattan metric, then the expected number of 
steps is bounded by $\tilde{O}(n^{4-1/d}\cdot\phi)$. In addition, we 
prove an upper bound of $O(\sqrt[d]{\phi})$ on the expected 
approximation factor with respect to all $L_p$ metrics.
  
Let us remark that our probabilistic analysis covers as special cases 
the uniform input model with $\phi=1$ and a smoothed analysis with 
Gaussian perturbations of standard deviation $\sigma$ with 
$\phi\sim1/\sigma^d$.
\end{abstract}

\section{Introduction}
\label{sec:Introduction}

In the \emph{traveling salesperson problem (TSP)}, we are given a set 
of \emph{vertices} and for each pair of distinct vertices a distance. 
The goal is to find a \emph{tour} of minimum length that visits every 
vertex exactly once and returns to the initial vertex at the end. 
Despite many theoretical analyses and experimental evaluations of the 
TSP, there is still a considerable gap between the theoretical results 
and the experimental observations. One important special case is the 
\emph{Euclidean TSP} in which the vertices are points in $\RR^d$, for 
some $d\in\NN$, and the distances are measured according to the 
Euclidean metric. This special case is known to be \NP-hard in the 
strong sense~\cite{Papadimitriou77}, but it admits a polynomial time 
approximation scheme (PTAS), shown independently in 1996 by 
Arora~\cite{Arora98} and Mitchell~\cite{Mitchell99}. These 
approximation schemes are based on dynamic programming. However, the 
most successful algorithms on practical instances rely on the principle 
of local search and very little is known about their complexity.

The \emph{2-Opt} algorithm is probably the most basic local search 
heuristic for the TSP\@. 2-Opt starts with an arbitrary initial tour 
and incrementally improves this tour by making successive improvements 
that exchange two of the edges in the tour with two other edges. More 
precisely, in each \emph{improving step} the $2$-Opt algorithm selects 
two edges $\{u_1,u_2\}$ and $\{v_1,v_2\}$ from the tour such that 
$u_1,u_2,v_1,v_2$ are distinct and appear in this order in the tour, 
and it replaces these edges by the edges $\{u_1,v_1\}$ and 
$\{u_2,v_2\}$, provided that this change decreases the length of the 
tour. The algorithm terminates in a local optimum in which no further 
improving step is possible. We use the term \emph{2-change} to denote a 
local improvement made by 2-Opt. This simple heuristic performs 
amazingly well on ``real-life'' Euclidean instances like, \eg, the ones 
in the well-known TSPLIB~\cite{Reinelt91}. Usually the 2-Opt heuristic 
needs a clearly subquadratic number of improving steps until it reaches 
a local optimum and the computed solution is within a few percentage 
points of the global optimum~\cite{JohnsonG97}.

There are numerous experimental studies on the performance of 2-Opt. 
However, the theoretical knowledge about this heuristic is still very 
limited. Let us first discuss the number of local improvement steps 
made by 2-Opt before it finds a locally optimal solution. When talking 
about the number of local improvements, it is convenient to consider 
the \emph{state graph}. The vertices in this graph correspond to the 
possible tours and an arc from a vertex $v$ to a vertex $u$ is 
contained if $u$ is obtained from $v$ by performing an improving 2-Opt 
step. On the positive side, van Leeuwen and Schoone consider a 2-Opt 
variant for the Euclidean plane in which only steps are allowed that 
remove a crossing from the tour. Such steps can introduce new 
crossings, but van Leeuwen and Schoone~\cite{vLeeuwenS81} show that 
after $O(n^3)$ steps, 2-Opt finds a tour without any crossing. On 
the negative side, Lueker~\cite{Lueker75} constructs TSP instances 
whose state graphs contain exponentially long paths. Hence, 2-Opt can 
take an exponential number of steps before it finds a locally optimal 
solution. This result is generalized to $k$-Opt, for arbitrary $k\ge2$, 
by Chandra, Karloff, and Tovey~\cite{ChandraKT99}. These negative 
results, however, use arbitrary graphs 
that cannot be embedded into low-dimensional Euclidean space. 
Hence, they leave open the question 
as to whether it is possible to construct Euclidean TSP instances on which
2-Opt can take an exponential number of steps, which has explicitly
been asked by Chandra, Karloff, and Tovey. We resolve this question by constructing 
such instances in the Euclidean plane. In chip design applications, 
often TSP instances arise in which the distances are measured according 
to the Manhattan metric. Also for this metric and for every other $L_p$ 
metric, we construct instances with exponentially long paths in the 
2-Opt state graph.

\begin{theorem}
\label{theorem:LowerBounds}
For every $p\in\{1,2,3,\ldots\}\cup\{\infty\}$ and $n\in\NN=\{1,2,3,\ldots\}$, there is a 
two-dimensional TSP instance with $16n$ vertices in which the distances 
are measured according to the $L_p$ metric and whose state graph 
contains a path of length $2^{n+4}-22$.
\end{theorem}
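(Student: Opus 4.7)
The plan is to realise Lueker's abstract exponential lower bound geometrically, by means of a recursive family of instances $\mathcal{I}_n$ where $\mathcal{I}_n$ consists of $n$ identical ``blocks'' of $16$ points each, glued together in a linear chain. The recursion will be arranged so that one full internal traversal of blocks $1,\ldots,n-1$ is forced to occur twice during a single traversal of block $n$, giving a 2-Opt improvement path of length $T_n$ satisfying $T_n = 2T_{n-1} + c$, with the base case tuned so that the solution to the recurrence is exactly $2^{n+4}-22$.

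I would first design a single block gadget: $16$ points in the plane together with two canonical ``configurations'' of the tour restricted to the block, called up and down. The gadget is engineered so that, holding its neighbours fixed, the transition from up to down requires a prescribed sequence of improving 2-changes, and the last move of this sequence reconnects endpoints that lie in the adjacent block in such a way that the neighbour is reset to its up configuration. This is what implements the doubling. All coordinates are chosen rationally and independently of the metric parameter $p$; the same point set will then serve every $L_p$ metric, provided one verifies for each step of the planned sequence that $\dist_p(u_1,u_2)+\dist_p(v_1,v_2)>\dist_p(u_1,v_1)+\dist_p(u_2,v_2)$.

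To establish these improvement inequalities uniformly in $p$, I plan to arrange successive blocks in a very thin horizontal strip with rapidly decreasing horizontal spacing, so that in every relevant comparison one coordinate difference dominates all the others. Using the sandwich $\|x\|_\infty \le \|x\|_p \le 2^{1/p}\,\|x\|_\infty$ valid in two dimensions, each scale-separated inequality reduces to a strict inequality between its dominant coordinates and therefore holds simultaneously for all $p\in\{1,2,\ldots\}\cup\{\infty\}$. The constantly many genuinely two-dimensional inequalities, which appear only inside a single block, are checked by direct case analysis that can be carried out once and for all against the closed form of $\dist_p$.

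The main obstacle is ruling out unintended shortcuts off the planned path: at every state along the sequence, one must exclude improving 2-changes that deviate from the intended exponential blueprint. I would handle this with a Locality Lemma stating that no improving 2-change mixes edges from two non-adjacent blocks---any such swap would replace short near-neighbour edges by much longer cross-block edges, which is ruled out by the scale separation---combined with an explicit enumeration of the $O(1)$ potential 2-changes inside each block and between adjacent blocks. Once locality is established, the improvement path is forced into the recursive template, the recurrence $T_n=2T_{n-1}+c$ applies, and its solution yields the claimed length $2^{n+4}-22$ on $16n$ vertices.
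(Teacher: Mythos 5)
There is a genuine gap, and it sits exactly where the theorem's difficulty lies: the doubling gadget. Your recurrence $T_n=2T_{n-1}+c$ requires a constant-size planar gadget whose improving transition resets its neighbour \emph{twice}, but the mechanism you describe (an up-to-down transition whose last move reconnects endpoints in the adjacent block and resets it to ``up'') resets the neighbour once per traversal, which only gives $T_n=T_{n-1}+c$, i.e.\ a linear path. The paper achieves the doubling by splitting each gadget into two four-point blocks, so that each of the two block transitions of gadget $G_{i}$ resets both blocks of $G_{i+1}$ via an explicit sequence of seven 2-changes (Property~\ref{property:LB:Sequence}), and then it must exhibit concrete coordinates (each gadget a scaled, rotated, translated copy of the next) and verify seven explicit improvement inequalities per sequence. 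None of this engineering is outlined in your proposal, and it cannot be treated as routine: producing a constant-size point set in the plane for which \emph{every} step of such a reset sequence is strictly improving in the given $L_p$ metric is the entire content of the theorem; Lueker-style abstract instances do not embed, which is precisely why the question was open.

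Two further points. First, your claim that one rational point set serves all $p$ simultaneously via $\|x\|_\infty\le\|x\|_p\le 2^{1/p}\|x\|_\infty$ and scale separation is unsupported where it matters: the within-block inequalities are genuinely two-dimensional and must hold for infinitely many $p$, and ``check once and for all against the closed form of $\dist_p$'' is not an argument. The paper's experience suggests this is delicate: it needs three different embeddings (one for $L_2$; one for $L_1$, with $L_\infty$ obtained by a $\pi/4$ rotation; one for all integer $p\ge 3$), and for $L_1$ it even has to weaken the gadget property (propagation versus reset gadgets) because the stronger scheme could not be embedded; the $p\ge3$ case is handled by monotonicity bounds on $\sqrt[p]{1+x^p}$ plus finite checks for small $p$. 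Second, your ``main obstacle'' is not an obstacle at all: the theorem only asserts that the state graph \emph{contains} a path of length $2^{n+4}-22$, i.e.\ that the prescribed sequence of 2-changes is improving; other improving 2-changes may coexist at every state, so no Locality Lemma excluding shortcuts is needed (and the paper proves none). As it stands, the proposal spends its effort on an unnecessary lemma while leaving the actual construction---the explicit gadget, the seven-step reset sequence, and the metric-specific coordinate verification---unproved.
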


For Euclidean instances in which $n$ points are placed independently uniformly at 
random in the unit square, Kern~\cite{Kern89} shows that the length of 
the longest path in the state graph is bounded by $O(n^{16})$ with 
probability at least $1-c/n$ for some constant $c$. Chandra, Karloff, and 
Tovey~\cite{ChandraKT99} improve this result by bounding the \emph{expected} 
length of the longest path in the state graph by $O(n^{10}\log{n})$. 
That is, independent of the initial tour and the choice of the local 
improvements, the expected number of 2-changes is bounded by 
$O(n^{10}\log{n})$. For instances in which $n$ points are placed 
uniformly at random in the unit square and the distances are measured 
according to the Manhattan metric, Chandra, Karloff, and Tovey show 
that the expected length of the longest path in the state graph is 
bounded by $O(n^6\log{n})$.

We consider a more general probabilistic input model and improve the 
previously known bounds. The probabilistic model underlying our 
analysis allows different vertices to be placed independently according to 
different continuous probability distributions in the unit hypercube 
$[0,1]^d$, for some constant \emph{dimension} $d\ge2$. The distribution 
of a vertex $v_i$ is defined by a density function $f_i\colon[0,1]^d 
\to[0,\phi]$ for some given $\phi\ge 1$. Our upper bounds depend on the 
number $n$ of vertices and the upper bound $\phi$ on the density. We 
denote instances created by this input model as \emph{$\phi$-perturbed 
Euclidean} or \emph{Manhattan instances}, depending on the underlying 
metric. The parameter $\phi$ can be seen as a parameter specifying how 
close the analysis is to a worst case analysis since the larger $\phi$ 
is, the better can worst case instances be approximated by the 
distributions. For $\phi=1$ and $d=2$, every point has a uniform 
distribution over the unit square, and hence the input model equals the 
uniform model analyzed before. Our results narrow the gap between the 
subquadratic number of improving steps observed in 
experiments~\cite{JohnsonG97} and the upper bounds from the 
probabilistic analysis. With slight modifications, this model also 
covers a smoothed analysis, in which first an adversary specifies the 
positions of the points and after that each position is slightly 
perturbed by adding a Gaussian random variable with small standard 
deviation $\sigma$. In this case, one has to set 
$\phi=1/(\sqrt{2\pi}\sigma)^d$.

We prove the following theorem about the expected length of the longest 
path in the 2-Opt state graph for the three probabilistic input models 
discussed above. It is assumed that the dimension $d\ge 2$ is 
an arbitrary constant.
\begin{theorem}
\label{theorem:runningTime1}
The expected length of the longest path in the 2-Opt state graph
\begin{enumerate}
\setlength{\itemsep}{0em}
\renewcommand{\labelenumi}{\alph{enumi})}
\item is $O(n^4\cdot\phi)$ for $\phi$-perturbed Manhattan instances 
with $n$ points.
\item is $O(n^{4+1/3}\cdot\log( n\phi)\cdot\phi^{8/3})$ for 
$\phi$-perturbed Euclidean instances with $n$ points.
\end{enumerate}
\end{theorem}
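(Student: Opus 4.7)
The plan is to apply the standard potential argument together with a smoothed-analysis bound on the smallest per-step improvement. Since each 2-change strictly decreases the tour length and any tour on points in $[0,1]^d$ has length $O(n)$ for constant $d$, a 2-Opt improvement path of length $T$ must contain at least $T/2$ 2-changes whose improvements are at most $O(n/T)$. It therefore suffices to control the probability that some 2-change on the random input achieves a small positive improvement.

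For Part~(a), I would exploit that Manhattan distance decomposes coordinate-wise, so the improvement of a 2-change on four points $u_1,u_2,v_1,v_2$ factors as $\Delta=\sum_{i=1}^d\Delta_i$, with each $\Delta_i$ piecewise linear in the $i$-th coordinates of those four points. After conditioning on the relative order of these coordinates, $\Delta_i$ becomes a linear combination of four independent random variables of density at most $\phi$, so the density of $\Delta$ is $O(\phi)$ and $\Pr{\Delta\in(0,\epsilon]}\le O(\phi\epsilon)$. The naive union bound over $O(n^4)$ 2-changes is not quite sharp enough, so I would turn to pairs: along any improvement path of length $T$, the total improvement $O(n)$ forces some pair of consecutive 2-changes to have both components in $(0,O(n/T)]$. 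When the pair is vertex-disjoint, independence upgrades the density bound to $O(\phi^2\epsilon^2)$, and a union bound over the $O(n^8)$ disjoint pairs (with separate case analyses for pairs that share one, two, or three vertices) yields a tail bound of the form $\Pr{T\ge t}=O(n^{10}\phi^2/t^2)$. Integrating this tail gives $\Ex{T}=O(n^5\phi)$; a more careful counting that charges only the ``fresh'' vertex introduced by the second 2-change recovers the final factor of $n$ and reaches the claimed $O(n^4\phi)$.

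For Part~(b), the coordinate trick fails because the Euclidean improvement is not piecewise linear, and its density has a $1/\sin\alpha$ singularity in the angle $\alpha$ between the two removed edges. I would replace the coordinate conditioning by a geometric one: fix three of the four points of a 2-change and observe that the locus of the remaining point for which $\Delta\in(0,\epsilon]$ is a thin shell of volume $O(\epsilon/\sin\alpha)$; integrating out $\alpha$ produces the $\log(n\phi)$ factor. Extending this to pairs of 2-changes requires a careful case analysis on the number $k\in\{5,\ldots,8\}$ of distinct vertices involved: each case produces a different dependency structure and calls for a different balance between the sharp-but-singular angular bound on one 2-change and the cleaner bound on a ``free'' coordinate of the other. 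Optimizing this balance over the worst case of $k$ is precisely what produces the fractional exponents in the statement.

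The hard part will therefore be the Euclidean pair analysis. In the Manhattan case the linearity of the improvement with respect to each coordinate is essentially a gift and reduces everything to one clean estimate. In the Euclidean case, by contrast, one must simultaneously bound the two 2-change improvements geometrically, and it is only by carefully interpolating between pure angular conditioning and pure positional conditioning that one arrives at the non-integer exponents $4+1/3$ and $8/3$ together with the surviving $\log(n\phi)$ factor.
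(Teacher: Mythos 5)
There is a genuine gap, and it lies in the combinatorial heart of the argument: which pairs of 2-changes you analyze and how many candidate pairs the union bound must range over. You work with pairs of \emph{consecutive} 2-changes in the improvement path. Two consecutive 2-changes need not share any vertex, so the union bound must run over all $\Theta(n^8)$ ordered pairs of potential 2-changes; with the (correct) bound $O(\phi^2\varepsilon^2)$ per vertex-disjoint pair and $\varepsilon=O(n/t)$ this gives $\Pr[T\ge t]=O(n^{10}\phi^2/t^2)$ and hence only $E[T]=O(n^5\phi)$, exactly as your own intermediate computation shows. Your final step --- ``charge only the fresh vertex introduced by the second 2-change'' --- has no basis in this setting, because for consecutive pairs there is no fresh-vertex structure to exploit: the second step may involve four entirely new vertices. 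The paper closes this factor of $n$ by a different device: it shows (its Lemmas on linked pairs) that \emph{any} sequence of $t$ 2-changes contains $\Omega(t)-O(n)$ disjoint pairs that are \emph{linked by an edge} (an edge added in the first step is removed in the second), and that one may further restrict to pairs sharing exactly that edge and at most one more vertex (types~0 and~1, with the fully overlapping type~2 excluded combinatorially). Linked pairs involve at most $6$ distinct vertices, so the union bound is over $O(n^6)$ pairs, which is precisely what turns $n^5\phi$ into $n^4\phi$. This structural lemma is the missing idea; without it your Manhattan bound stalls at $O(n^5\phi)$.

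The same omission breaks part~(b). Your plan analyzes pairs with $k\in\{5,\ldots,8\}$ distinct vertices and hopes that optimizing over $k$ yields the exponents $4+1/3$ and $8/3$, but the $k=8$ (vertex-disjoint, i.e.\ generic consecutive) case alone already forces a union bound over $\Theta(n^8)$ pairs and an expected number of steps of order $n^5$, worse than the claimed $n^{4+1/3}$. In the paper the fractional exponents arise specifically from linked pairs of type~1 (five distinct vertices, a shared edge): there the second step has only one genuinely fresh point, the first step is bounded by an $\varepsilon\log(1/\varepsilon)$-type density estimate conditioned on an edge length, and the fresh point contributes only a $\sqrt{\varepsilon}$ factor via an adversarial-interval argument on $\dist(v_2,v_5)-\dist(v_1,v_5)$; the resulting $O(n^5\phi^4\varepsilon^{3/2}\log(1/\varepsilon))$ term, integrated against the tail, produces $n^{13/3}\phi^{8/3}$. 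Your single-step geometric estimate (thin shell of width $O(\varepsilon)$ with a $1/\sin\alpha$ singularity integrating to a logarithm) is fine in spirit and matches the paper's conditional-density analysis of one 2-change, but without the linked-pair selection lemma and the type-0/type-1 dichotomy the pair analysis cannot reach the stated bounds. A secondary point: in both metrics some shared-vertex configurations make the two improvements functionally dependent (in the extreme, the second step undoes the first), and the paper must exclude these ``type~2'' pairs when constructing its pair list; your ``separate case analyses'' would have to confront this explicitly rather than rely on independence-style density upgrades.
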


Usually, 2-Opt is initialized with a tour computed by some tour 
construction heuristic. One particular class is that of \emph{insertion 
heuristics}, which insert the vertices one after another into the tour. 
We show that also from a theoretical point of view, using such an 
insertion heuristic yields a significant improvement for metric TSP 
instances because the initial tour 2-Opt starts with is much shorter 
than the longest possible tour. In the following theorem, we summarize 
our results on the expected number of local improvements.
\begin{theorem}
\label{theorem:runningTime2}
The expected number of steps performed by 2-Opt
\begin{enumerate}
\setlength{\itemsep}{0em}
\renewcommand{\labelenumi}{\alph{enumi})}
\item is $O(n^{4-1/d}\cdot\log{n}\cdot\phi)$ on $\phi$-perturbed 
Manhattan instances with $n$ points when 2-Opt is initialized with a 
tour obtained by an arbitrary insertion heuristic.
\item is $O(n^{4+1/3-1/d}\cdot\log^{2}(n\phi)\cdot\phi^{8/3})$ on 
$\phi$-perturbed Euclidean instances with $n$ points when 2-Opt is 
initialized with a tour obtained by an arbitrary insertion heuristic.
\end{enumerate}
\end{theorem}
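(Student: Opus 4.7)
The plan is to reduce to Theorem~\ref{theorem:runningTime1} by combining its probabilistic improvement-per-step analysis with a deterministic upper bound on the length of the initial tour produced by an insertion heuristic. The savings factor of $n^{1/d}$ matches exactly the ratio between the trivial $O(n)$ bound on the length of any tour and the well-known worst-case length $O(n^{1-1/d})$ of a short tour through $n$ points in $[0,1]^d$.

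\medskip
First I would show that, for every insertion heuristic, the resulting tour has length $O(n^{1-1/d}\log n)$ under any $L_p$ metric on $[0,1]^d$. Two classical ingredients suffice: every insertion heuristic is an $O(\log n)$-approximation for metric TSP (Rosenkrantz, Stearns and Lewis), and the shortest tour through any $n$ points in $[0,1]^d$ has length $O(n^{1-1/d})$ under any $L_p$ metric, by a standard space-filling or dissection argument. Both bounds are deterministic and hold for every realization of the random points.

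\medskip
Next I would re-inspect the proof of Theorem~\ref{theorem:runningTime1}. Although it is stated as a bound on the longest path in the state graph, the underlying argument really proves an inequality of the form $\Ex{T}\le L_{\text{init}}\cdot X_{n,\phi,d}$, where $L_{\text{init}}$ is any deterministic upper bound on the length of the starting tour and $X_{n,\phi,d}$ is a probabilistic quantity essentially given by the expected reciprocal of the smallest 2-change improvement. The reason for this factoring is that every 2-change strictly decreases the tour length by some positive amount $\Delta$, so the number $T$ of 2-Opt steps satisfies $T\le L_{\text{init}}/\Delta$. In the proof of Theorem~\ref{theorem:runningTime1} the factor $L_{\text{init}}=O(n)$ is the worst-case tour length (maximal edge length times $n$). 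Substituting the sharper bound from Step~1 saves a factor $n^{1/d}$ (at the cost of an extra $\log n$ from the insertion-heuristic approximation ratio), yielding part~(a) from Theorem~\ref{theorem:runningTime1}(a) and part~(b) from Theorem~\ref{theorem:runningTime1}(b) directly.

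\medskip
The main obstacle is verifying this factoring carefully, especially for Theorem~\ref{theorem:runningTime1}(b), whose Euclidean analysis considers pairs of linked 2-changes rather than isolated single improvements. One has to confirm that the initial tour length enters only at the very last step, namely the passage from a lower tail bound on the smallest achievable improvement to an upper bound on the number of steps; all other parts of the argument (smoothed tail estimates on $\Delta$, union bounds over candidate edges or edge pairs, etc.) must be entirely independent of the starting tour. Once this is checked, the insertion-heuristic bound plugs in as a drop-in replacement for the trivial $O(n)$ factor, and parts~(a) and (b) of the theorem follow immediately.
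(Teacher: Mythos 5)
Your proposal is correct and follows essentially the same route as the paper: the paper also bounds the insertion-heuristic starting tour by $O(n^{(d-1)/d}\log n)$ via the Chandra--Karloff--Tovey bound on the optimal tour combined with the $O(\log n)$-approximation guarantee of insertion heuristics, and then reruns the proofs of Theorem~\ref{theorem:runningTime1} with this sharper initial length, since the tail bounds on the smallest (pair) improvement ($\Delta_{\min}^*$ from Lemmas~\ref{lemma:L1Improvements} and~\ref{lemma:L2Improvements}) are independent of the starting tour and the initial length enters only in the step $\Pr{T\ge t}\le\Pr{\Delta_{\min}^*\le c\,L_{\mathrm{init}}/t}$. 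Your flagged verification point for the Euclidean case is exactly how the paper handles it, so the substitution goes through as you describe.
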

In fact, our analysis shows not only that the expected number of local 
improvements is polynomially bounded but it also shows that the second 
moment and hence the variance is bounded polynomially for 
$\phi$-perturbed Manhattan instances. For the Euclidean 
metric, we cannot bound the variance but the $3/2$-th moment 
polynomially.

In~\cite{EnglertRV07}, we also consider a model in which an arbitrary graph
$G=(V,E)$ is given along with, for each edge $e\in E$, a probability distribution
according to which the edge length $\dist(e)$ is chosen independently of the
other edge lengths. Again, we restrict the choice of distributions to
distributions that can be represented by density functions
$f_e\colon[0,1]\to[0,\phi]$ with maximal density at most $\phi$ for a given
$\phi\ge 1$. We denote inputs created by this input model as
\emph{$\phi$-perturbed graphs}. Observe that in this input model only the
distances are perturbed whereas the graph structure is not changed by the
randomization. This can be useful if one wants to explicitly prohibit certain
edges. However, if the graph $G$ is not complete, one has to initialize 2-Opt
with a Hamiltonian cycle to start with. We prove that in this model the expected
length of the longest path in the 2-Opt state graph is $O(|E|\cdot
n^{1+o(1)}\cdot\phi)$. As the techniques for proving this result are different
from the ones used in this article, we will present it in a separate journal article.

As in the case of running time, the good approximation ratios obtained by 
2-Opt on practical instances cannot be explained by a worst-case 
analysis. In fact, there are quite negative results on the worst-case 
behavior of 2-Opt. For example, Chandra, Karloff, and 
Tovey~\cite{ChandraKT99} show that there are Euclidean instances in the 
plane for which 2-Opt has local optima whose costs are 
$\Omega\left(\frac{\log n}{\log\log n}\right)$ times larger than the 
optimal costs. However, the same authors also show that the expected 
approximation ratio of the worst local optimum for instances with $n$ 
points drawn uniformly at random from the unit square is bounded from 
above by a constant. We generalize their result to our input model in 
which different points can have different distributions with bounded 
density $\phi$ and to all $L_p$ metrics.
\begin{theorem}
\label{theorem:approximation}
Let $p\in\NN\cup\{\infty\}$.
For $\phi$-perturbed $L_p$ instances, the 
expected approximation ratio of the worst tour that is locally optimal 
for 2-Opt is $O(\sqrt[d]{\phi})$.
\end{theorem}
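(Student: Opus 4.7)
The plan is to combine a deterministic upper bound on the length of any 2-Opt locally optimal tour with a probabilistic lower bound on the length of an optimal tour. Write $L$ for the length of the worst 2-Opt locally optimal tour on the given point set; the goal is to bound $\Ex{L/\OPT}$.

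\textbf{Deterministic upper bound on $L$.} I will show that $L \le C_{d,p} \cdot n^{(d-1)/d}$ for a constant depending only on $d$ and $p$, for any configuration of $n$ points in $[0,1]^d$ under the $L_p$ metric. The 2-Opt optimality condition gives, for any two non-adjacent tour edges $\{a_1,b_1\}$ and $\{a_2,b_2\}$ appearing in this order, $\dist(a_1,b_1) + \dist(a_2,b_2) \le \dist(a_1,a_2) + \dist(b_1,b_2)$. If both edges have length at least $\ell$, then $\dist(a_1,a_2) + \dist(b_1,b_2) \ge 2\ell$, which prevents the four endpoints from all lying in a ball of diameter much less than $\ell$. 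A standard packing argument in $[0,1]^d$ then yields that the number of tour edges of length at least $\ell$ is at most $O(\ell^{-d})$; hence the $k$-th longest edge has length at most $O(k^{-1/d})$, and summing yields $L = O(n^{(d-1)/d})$.

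\textbf{Probabilistic lower bound on $\OPT$.} For each $i$, let $\mathrm{NN}_i = \min_{j\neq i} \dist(v_i,v_j)$. In the optimal tour, each vertex $v_i$ is incident to two edges of length at least $\mathrm{NN}_i$, and each edge is counted by at most two vertices, so $\OPT \ge \tfrac{1}{2}\sum_i \mathrm{NN}_i$. The arithmetic--harmonic mean inequality then gives $1/\OPT \le \tfrac{2}{n^2}\sum_i 1/\mathrm{NN}_i$. Using the density bound, $\Pr{\mathrm{NN}_i \le r} \le (n-1)\phi V_d r^d$, where $V_d$ denotes the volume of the unit $L_p$ ball in $\RR^d$. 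Writing $\Ex{1/\mathrm{NN}_i} = \int_0^\infty \Pr{\mathrm{NN}_i < 1/t}\,dt$ and splitting at $t_0 = ((n-1)\phi V_d)^{1/d}$, the contribution from $t \le t_0$ is at most $t_0$ while the tail contributes $\int_{t_0}^\infty (t_0/t)^d\,dt = t_0/(d-1)$, finite since $d \ge 2$. Hence $\Ex{1/\mathrm{NN}_i} = O((n\phi)^{1/d})$.

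\textbf{Combining.} Putting the two bounds together,
\[
\Ex{L/\OPT} \;\le\; L \cdot \Ex{1/\OPT} \;\le\; O(n^{(d-1)/d}) \cdot \frac{2}{n^2} \cdot n \cdot O((n\phi)^{1/d}) \;=\; O(\sqrt[d]{\phi}),
\]
as required. The main obstacle is the deterministic upper bound in the first step: while $L = O(\sqrt{n})$ is classical in the Euclidean plane (where 2-Opt local optima are non-crossing and any non-crossing tour in $[0,1]^2$ has length $O(\sqrt{n})$), extending the packing argument to arbitrary $L_p$ metrics and to dimensions $d \ge 2$ requires careful geometric reasoning to rule out pathological endpoint configurations (such as parallelogram-like arrangements where midpoints of long edges can coincide) and to ensure that the separation constants depend only on $d$ and $p$.
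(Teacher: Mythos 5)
Your proposal is correct, and its probabilistic half takes a genuinely different route from the paper. Both arguments use the same numerator: every 2-Opt local optimum on $n$ points in $[0,1]^d$ has length $O(n^{(d-1)/d})$ for any norm-induced metric, which the paper simply cites from Chandra, Karloff, and Tovey~\cite{ChandraKT99}. For the denominator, the paper partitions $[0,1]^d$ into $\Theta(n\phi)$ cells, shows the expected number of occupied cells is $\Omega(n)$, proves concentration via negatively associated random variables and a Chernoff bound, and handles the exponentially unlikely failure event separately using the fact that no tour is more than $n$ times longer than optimal. You instead bound $\Ex{1/\OPT}$ directly: $\OPT\ge\tfrac12\sum_i \mathrm{NN}_i$, AM--HM gives $1/\OPT\le \tfrac{2}{n^2}\sum_i 1/\mathrm{NN}_i$, and the density bound gives $\Ex{1/\mathrm{NN}_i}=O((n\phi)^{1/d})$ because the tail $\Pr{\mathrm{NN}_i<1/t}\le (n-1)\phi V_d t^{-d}$ is integrable for $d\ge 2$. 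Since the numerator bound is deterministic (pointwise in the realization), the step $\Ex{L/\OPT}\le O(n^{(d-1)/d})\cdot\Ex{1/\OPT}=O(\sqrt[d]{\phi})$ is legitimate, and you need neither a concentration inequality nor a bad-event case distinction. Your route is shorter and more elementary; the paper's route additionally yields a with-high-probability lower bound on $\OPT$ (hence a high-probability statement about the ratio), which an expectation-only argument does not provide.

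One caveat concerns your in-line justification of the deterministic bound. The packing deduction as sketched does not give $O(\ell^{-d})$: for two tour edges of length at least $\ell$, 2-optimality (with endpoints labelled in tour order) yields $\dist(u_1,v_1)+\dist(u_2,v_2)\ge 2\ell$, i.e.\ the ``tails'' or the ``heads'' must be far apart, not all four endpoints pairwise. Gridding $[0,1]^d$ into cells of diameter below $\ell$ then only shows that distinct long edges occupy distinct (tail-cell, head-cell) pairs, i.e.\ at most $O(\ell^{-2d})$ edges per scale, which after summing over dyadic scales gives only $O(n^{1-1/(2d)})$. The actual proof of the $O(n^{(d-1)/d})$ bound in~\cite{ChandraKT99} needs a finer grouping (edges of comparable length whose tails lie in the same cell and whose directions lie in the same cone would contradict 2-optimality). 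Since this is precisely the result the paper itself cites without proof, the clean fix is to cite it rather than re-derive it, as your closing caveat already anticipates; with that citation your argument is complete.
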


The remainder of the paper is organized as follows. We start by stating 
some basic definitions and notation in 
Section~\ref{sec:Preliminaries}. In Section~\ref{sec:LowerBounds}, we 
present the lower bounds. In Section~\ref{sec:runningTime}, we analyze 
the expected number of local improvements and prove 
Theorems~\ref{theorem:runningTime1} and~\ref{theorem:runningTime2}. 
Finally, in Sections~\ref{sec:approximation} 
and~\ref{sec:SmoothedAnalysis}, we prove 
Theorem~\ref{theorem:approximation} about the expected approximation 
factor and we discuss the relation between our analysis and a
smoothed analysis.
\section{Preliminaries}
\label{sec:Preliminaries}

An \emph{instance of the TSP} consists of a set $V=\{v_1,\ldots,v_n\}$ 
of \emph{vertices} (depending on the context, synonymously referred to 
as \emph{points}) and a symmetric \emph{distance function} $\dist\colon 
V\times V\to\RRp$ that associates with each pair $\{v_i,v_j\}$ of 
distinct vertices a distance $\dist(v_i,v_j)=\dist(v_j,v_i)$. The goal 
is to find a Hamiltonian cycle of minimum length. We also use the term 
\emph{tour} to denote a Hamiltonian cycle. 
We define $\NN=\{1,2,3,\ldots\}$, and
for a natural number $n\in\NN$, we denote the set $\{1,\ldots,n\}$ by $[n]$.

A pair $(V,\dist)$ of a nonempty set $V$ and a function $\dist\colon 
V\times V\to\RRp$ is called a \emph{metric space} if for all $x,y,z\in 
V$ the following properties are satisfied:
\begin{enumerate}
  \renewcommand{\labelenumi}{(\alph{enumi})}
  \setlength{\itemsep}{0em}
  \item $\dist(x,y)=0$ if and only if $x=y$ \emph{(reflexivity)},
  \item $\dist(x,y)=\dist(y,x)$ \emph{(symmetry)}, and
  \item $\dist(x,z)\le \dist(x,y)+\dist(y,z)$ \emph{(triangle inequality)}.
\end{enumerate}
If $(V,\dist)$ is a metric space, then $\dist$ is called a \emph{metric 
on $V$}. A TSP instance with vertices $V$ and distance function $\dist$ 
is called \emph{metric TSP instance} if $(V,\dist)$ is a metric space.

A well-known class of metrics on $\RR^d$ is the class of \emph{$L_p$ 
metrics}. For $p\in\NN$, the distance $\dist_{p}(x,y)$ of two points 
$x\in\RR^d$ and $y\in\RR^d$ with respect to the $L_p$ metric is given 
by $\dist_{p}(x,y) = \sqrt[p]{|x_1-y_1|^p+\cdots+|x_d-y_d|^p}$. The 
$L_1$ metric is often called \emph{Manhattan metric}, and the $L_2$ 
metric is well-known as \emph{Euclidean metric}. For $p\to\infty$, the 
$L_p$ metric converges to the $L_{\infty}$ metric defined by the 
distance function 
$\dist_{\infty}(x,y)=\max\{|x_1-y_1|,\ldots,|x_d-y_d|\}$. A TSP 
instance $(V,\dist)$ with $V\subseteq\RR^d$ in which $\dist$ equals 
$\dist_{p}$ restricted to $V$ is called an \emph{$L_p$ instance}. We 
also use the terms \emph{Manhattan instance} and \emph{Euclidean 
instance} to denote $L_1$ and $L_2$ instances, respectively. 
Furthermore, if $p$ is clear from context, we write $\dist$ instead of 
$\dist_p$.

A \emph{tour construction heuristic} for the TSP incrementally 
constructs a tour and stops as soon as a valid tour is created. 
Usually, a tour constructed by such a heuristic is used as the initial 
solution 2-Opt starts with. A well-known class of tour construction 
heuristics for metric TSP instances are so-called \emph{insertion 
heuristics}. These heuristics insert the vertices into the tour one 
after another, and every vertex is inserted between two consecutive 
vertices in the current tour where it fits best. To make this more 
precise, let $T_i$ denote a subtour on a subset $S_i$ of $i$ vertices, 
and suppose $v\notin S_i$ is the next vertex to be inserted. If $(x,y)$ 
denotes an edge in $T_i$ that minimizes 
$\dist(x,v)+\dist(v,y)-\dist(x,y)$, then the new tour $T_{i+1}$ is 
obtained from $T_i$ by deleting the edge $(x,y)$ and adding the edges 
$(x,v)$ and $(v,y)$. Depending on the order in which the vertices are 
inserted into the tour, one distinguishes between several different 
insertion heuristics. Rosenkrantz et al.~\cite{RosenkrantzSL77} show an 
upper bound of $\lceil\log{n}\rceil+1$ on the approximation factor of 
any insertion heuristic on metric TSP instances. Furthermore, they show 
that two variants which they call \emph{nearest insertion} and 
\emph{cheapest insertion} achieve an approximation ratio of 2 for 
metric TSP instances. The nearest insertion heuristic always inserts 
the vertex with the smallest distance to the current tour (i.e.,
the vertex $v\notin S_i$ that minimizes $\min_{x\in S_i}\dist(x,v)$), and the 
cheapest insertion heuristic always inserts the vertex whose insertion 
leads to the cheapest tour $T_{i+1}$.

\section{Exponential Lower Bounds}
\label{sec:LowerBounds}

In this section, we answer Chandra, Karloff, and 
Tovey's question~\cite{ChandraKT99} as to whether it is possible to 
construct TSP instances in the Euclidean plane on which 2-Opt can take 
an exponential number of steps. We present, for every 
$p\in\NN\cup\{\infty\}$, a family of two-dimensional $L_p$ instances 
with exponentially long sequences of improving 2-changes. In 
Section~\ref{subsec:LB:L2}, we present our construction for the 
Euclidean plane, and in Section~\ref{subsec:LB:Lp} we extend this 
construction to general $L_p$ metrics.

\subsection{Exponential Lower Bound for the Euclidean Plane}
\label{subsec:LB:L2}

In Lueker's construction~\cite{Lueker75} many of the 2-changes remove 
two edges that are far apart in the current tour in the sense that many 
vertices are visited between them.
Our construction differs significantly from the 
previous one as the 2-changes in our construction affect the tour only 
locally. The instances we construct are composed of gadgets of constant 
size. Each of these gadgets has a \emph{zero state} and a \emph{one 
state}, and there exists a sequence of improving 2-changes starting in 
the zero state and eventually leading to the one state. Let 
$G_0,\ldots,G_{n-1}$ denote these gadgets. If gadget $G_i$ with $i>0$ 
has reached state one, then it can be reset to its zero state by gadget 
$G_{i-1}$. The crucial property of our construction is that whenever a 
gadget $G_{i-1}$ changes its state from zero to one, it resets gadget 
$G_i$ twice. Hence, if in the initial tour, gadget $G_{0}$ is in its 
zero state and every other gadget is in state one, then for every $i$ 
with $0\le i\le n-1$, gadget $G_{i}$ performs $2^{i}$ state changes 
from zero to one as, for $i>0$, gadget $G_{i}$ is reset $2^{i}$ times.

Every gadget is composed of 2 subgadgets, which we refer to as \emph{blocks}.
Each of these blocks consists of 4 vertices that are consecutively visited in the
tour. For $i\in\{0,\ldots,n-1\}$ and $j\in[2]$, let ${\cal B}^{i}_1$ and ${\cal
B}^{i}_2$ denote the blocks of gadget $G_i$ and let $A^i_j$, $B^i_j$, $C^i_j$,
and $D^i_j$ denote the four points ${\cal B}^{i}_j$ consists of. If one ignores
certain intermediate configurations that arise when one gadget resets another
one, our construction ensures the following property: The points $A^i_j$,
$B^i_j$, $C^i_j$, and $D^i_j$ are always visited consecutively in the tour
either in the order~$A^i_j B^i_j C^i_j D^i_j$ or in the order~$A^i_j C^i_j B^i_j D^i_j$.

Observe that the change from one of these
configurations to the other corresponds to a single 2-change in which the edges
$A^i_jB^i_j$ and $C^i_jD^i_j$ are replaced by the edges $A^i_jC^i_j$ and
$B^i_jD^i_j$, or vice versa. In the following, we assume that the sum
$\dist(A^i_j,B^i_j)+\dist(C^i_j,D^i_j)$ is strictly smaller than the sum
$\dist(A^i_j,C^i_j)+\dist(B^i_j,D^i_j)$, and we refer to the configuration
$A^i_jB^i_jC^i_jD^i_j$ as the \emph{short state} of the block and to the
configuration $A^i_jC^i_jB^i_jD^i_j$ as the \emph{long state}. Another property
of our construction is that neither the order in which the blocks are visited nor
the order of the gadgets is changed during the sequence of 2-changes. Again with
the exception of the intermediate configurations, the order in which the blocks
are visited is ${\cal B}^{0}_1{\cal B}^{0}_2{\cal B}^{1}_1{\cal B}^{1}_2 \cdots
{\cal B}^{n-1}_1{\cal B}^{n-1}_2$ (see Figure~\ref{fig:LB:Gadgets}).

\begin{figure}[H]
\begin{center}
\includegraphics[width=\textwidth]{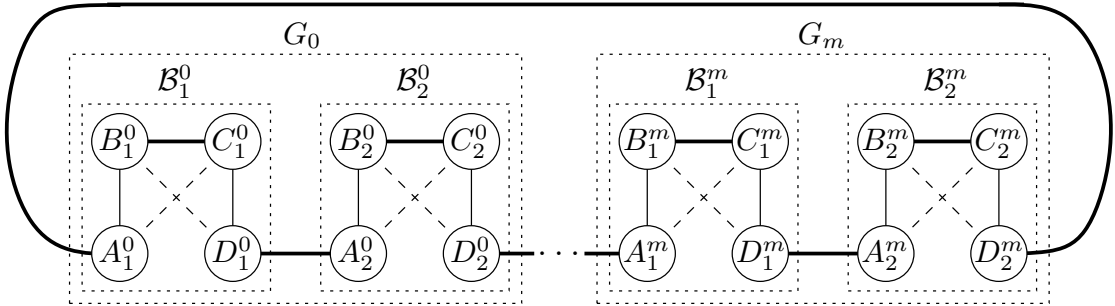}
\caption{In the illustration, we use $m$ to denote $n-1$. Every tour 
that occurs in the sequence of 2-changes contains the thick edges. For 
each block, either both solid or both dashed edges are contained. In 
the former case the block is in its short state; in the latter case the 
block is in its long state.}
\label{fig:LB:Gadgets}
\end{center}
\end{figure}

Due to the aforementioned properties, we can describe every 
non-interme\-di\-ate tour that occurs during the sequence of 2-changes 
completely by specifying for every block if it is in its short state 
or in its long state. In the following, we denote the state of a 
gadget $G_i$ by a pair $(x_1,x_2)$ with $x_j\in\{S,L\}$, meaning 
that block ${\cal B}^i_j$ is in its short state if and only if 
$x_j=S$. Since every gadget consists of two blocks, there are four 
possible states for each gadget. However, only three of them appear in 
the sequence of 2-changes, namely $(L,L)$, $(S,L)$, and $(S,S)$. We 
call state $(L,L)$ the \emph{zero state} and state $(S,S)$ the \emph{one 
state}. In order to guarantee the existence of an exponentially long 
sequence of 2-changes, the gadgets we construct possess the following 
property.
\begin{property}\label{property:LB:Sequence}
If, for $i\in\{0,\ldots,n-2\}$, gadget $G_i$ is in state $(L,L)$ 
(or $(S,L)$, respectively) and gadget $G_{i+1}$ is in state $(S,S)$, then there exists 
a sequence of seven consecutive 2-changes terminating with gadget $G_i$ 
being in state $(S,L)$ (or state $(S,S)$, respectively) and gadget $G_{i+1}$ 
in state $(L,L)$. In this sequence only edges of and between the 
gadgets $G_i$ and $G_{i+1}$ are involved.
\end{property}
We describe in Section~\ref{subsubsec:LB:L2Sequence} how sequences of seven consecutive 2-changes with
the desired properties can be constructed. 
Then we show in Section~\ref{subsubsec:LB:L2Points} that the gadgets can be embedded into the Euclidean plane such that
all of these 2-changes are improving. 
If Property~\ref{property:LB:Sequence} is satisfied and if in the initial tour gadget 
$G_0$ is in its zero state $(L,L)$ and every other gadget is in its 
one state $(S,S)$, then there exists an exponentially long sequence of 
2-changes in which gadget $G_i$ changes $2^i$ times from state zero to 
state one, as the following lemma shows. 
An example with three gadgets
is also depicted in Figure~\ref{fig:LB:Example}.

\begin{figure}[p]
\begin{center}
\includegraphics[scale=0.8]{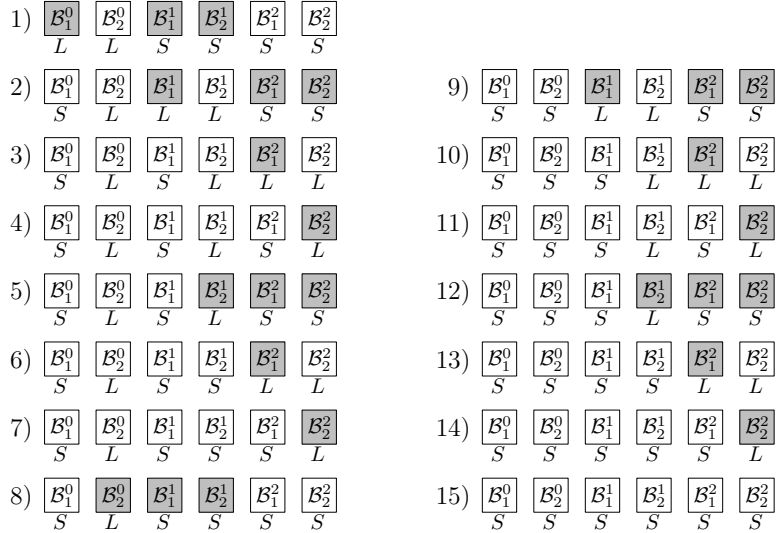}
\caption{This figure shows an example with three gadgets. It shows the 15
configurations that these gadgets assume during the sequence of 2-changes,
excluding the intermediate configurations that arise when one gadget resets 
another one. Gadgets that are involved in the transformation from configuration~$i$
to configuration~$i+1$ are shown in gray. For example, in the step from the
first to the second configuration, the first block ${\cal B}_1^0$ of gadget $G_0$
resets the two blocks of gadget $G_1$. That is, these three blocks follow the
sequence of seven 2-changes from Property~\ref{property:LB:Sequence}. On the other
hand, in the step from the third to the fourth configuration only the first block~${\cal B}_1^2$ of gadget $G_2$
is involved. It changes from its long state to its short state by a single 2-change. 
As this figure shows an example with three gadgets, the total number of 2-changes performed
according to Lemma~\ref{lemma:LB:ExpSequence} is~$2^{3+3-0}-14=50$. This is indeed the case
because 6 of the 14 shown steps correspond to sequences of seven 2-changes
while 8 steps correspond to single 2-changes.}
\label{fig:LB:Example}
\end{center}
\end{figure}

\begin{lemma}
\label{lemma:LB:ExpSequence}
If, for $i\in\{0,\ldots,n-1\}$, gadget $G_i$ is in the zero state 
$(L,L)$ and all gadgets $G_j$ with $j>i$ are in the one state $(S,S)$, 
then there exists a sequence of $2^{n+3-i}-14$ consecutive 2-changes in 
which only edges of and between the gadgets $G_j$ with $j\ge i$ are 
involved and that terminates in a state in which all gadgets $G_j$ with 
$j\ge i$ are in the one state~$(S,S)$.
\end{lemma}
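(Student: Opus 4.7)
My plan is to prove the lemma by downward induction on $i$ (equivalently, induction on $k = n-i$, the number of gadgets that still need to be driven to state $(S,S)$), using Property~\ref{property:LB:Sequence} as the engine that trades a local 7-step sequence for flipping one block of $G_i$ while resetting $G_{i+1}$ from $(S,S)$ back to $(L,L)$.

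For the base case $i = n-1$, only gadget $G_{n-1}$ is in state $(L,L)$ and there is no later gadget. Each of its two blocks can be brought from its long to its short state by a single 2-change (this is precisely the single 2-change described in Section~\ref{subsubsec:LB:L2Points} that swaps the edges $A^{n-1}_j B^{n-1}_j, C^{n-1}_j D^{n-1}_j$ with $A^{n-1}_j C^{n-1}_j, B^{n-1}_j D^{n-1}_j$). This yields exactly $2 = 2^{n+3-(n-1)} - 14$ 2-changes, matching the claimed bound.

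For the inductive step, assume the lemma for $i+1$, and start from a configuration in which $G_i$ is in $(L,L)$ and every $G_j$ with $j>i$ is in $(S,S)$. I would decompose the sequence into four phases.
\emph{Phase 1:} apply Property~\ref{property:LB:Sequence} to the pair $(G_i, G_{i+1})$; seven 2-changes move $G_i$ from $(L,L)$ to $(S,L)$ and reset $G_{i+1}$ from $(S,S)$ to $(L,L)$, touching only edges of and between $G_i$ and $G_{i+1}$.
\emph{Phase 2:} now $G_{i+1}$ is in $(L,L)$ and $G_j$ is in $(S,S)$ for $j>i+1$, so the induction hypothesis supplies $2^{n+3-(i+1)}-14 = 2^{n+2-i}-14$ further 2-changes, involving only edges of and between $G_j$ for $j\ge i+1$, and terminating with all these gadgets in $(S,S)$. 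Crucially, $G_i$ is untouched throughout this phase and therefore remains in $(S,L)$.
\emph{Phase 3:} apply Property~\ref{property:LB:Sequence} once more to $(G_i,G_{i+1})$; seven 2-changes move $G_i$ from $(S,L)$ to $(S,S)$ and again reset $G_{i+1}$ to $(L,L)$.
\emph{Phase 4:} invoke the induction hypothesis a second time to sort out $G_{i+1},\ldots,G_{n-1}$, contributing another $2^{n+2-i}-14$ 2-changes, again leaving $G_i$ in state $(S,S)$.

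Adding the contributions yields
\[
7 + (2^{n+2-i}-14) + 7 + (2^{n+2-i}-14) = 2\cdot 2^{n+2-i} - 14 = 2^{n+3-i}-14,
\]
as required, and by construction every 2-change uses only edges of or between gadgets $G_j$ with $j\ge i$.

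The only subtle point—and the step that deserves explicit mention—is the invariance of the state of $G_i$ during Phases~2 and~4. This relies on the locality clauses in both Property~\ref{property:LB:Sequence} and the inductive statement: the recursive sub-sequence modifies only edges of and between $G_{i+1},\ldots,G_{n-1}$, so in particular no edge incident to a vertex of $G_i$ is altered, and thus the short/long state of each block of $G_i$ is preserved. Given this, the four-phase accounting is entirely mechanical. The non-trivial content of the lower bound therefore lies not in this lemma but in constructing the gadgets that realize Property~\ref{property:LB:Sequence} in the plane; this lemma is the clean combinatorial recursion that turns that local doubling property into an exponential bound.
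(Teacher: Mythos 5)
Your proposal is correct and follows essentially the same argument as the paper: induction on $i$ with the base case $i=n-1$ handled by two single 2-changes, and the inductive step consisting of an application of Property~\ref{property:LB:Sequence} (seven 2-changes), the induction hypothesis, a second application of Property~\ref{property:LB:Sequence}, and the induction hypothesis again, giving $14+2(2^{n+2-i}-14)=2^{n+3-i}-14$ steps. Your explicit remark that the locality clauses guarantee $G_i$ is untouched during the recursive phases is a nice clarification of a point the paper leaves implicit, but it is not a different proof.
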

\begin{proof}
We prove the lemma by induction on $i$. If gadget $G_{n-1}$ is in state 
$(L,L)$, then it can change its state with two 2-changes to $(S,S)$ 
without affecting the other gadgets. This is true because the two blocks
of gadget~$G_{n-1}$ can, one after another, change from their long
state~$A^{n-1}_jC^{n-1}_jB^{n-1}_jD^{n-1}_j$ to their short state~$A^{n-1}_jB^{n-1}_jC^{n-1}_jD^{n-1}_j$
by a single 2-change. Hence, the lemma is true for $i=n-1$ because $2^{n+3-(n-1)}-14=2$.

Now assume that the lemma is true for $i+1$ and consider a 
state in which gadget $G_i$ is in state $(L,L)$ and all gadgets $G_j$ 
with $j>i$ are in state $(S,S)$. Due to 
Property~\ref{property:LB:Sequence}, there exists a sequence of seven 
consecutive 2-changes in which only edges of and between $G_i$ and 
$G_{i+1}$ are involved, terminating with $G_i$ being in state $(S,L)$ 
and $G_{i+1}$ being in state $(L,L)$. By the induction hypothesis there 
exists a sequence of $(2^{n+2-i}-14)$ 2-changes after which all gadgets 
$G_j$ with $j>i$ are in state $(S,S)$. Then, due to 
Property~\ref{property:LB:Sequence}, there exists a sequence of seven 
consecutive 2-changes in which only $G_i$ changes its state from 
$(S,L)$ to $(S,S)$ while resetting gadget $G_{i+1}$ again from $(S,S)$ 
to $(L,L)$. Hence, we can apply the induction hypothesis again, 
yielding that after another $(2^{n+2-i}-14)$ 2-changes all gadgets $G_j$ 
with $j\ge i$ are in state $(S,S)$. This concludes the proof as the 
number of 2-changes performed is $14+2(2^{n+2-i}-14)=2^{n+3-i}-14$.
\end{proof}
In particular, this implies that, given 
Property~\ref{property:LB:Sequence}, one can construct instances 
consisting of $2n$ gadgets, \ie, $16n$ points, whose state graphs 
contain paths of length $2^{2n+3}-14>2^{n+4}-22$, as desired in 
Theorem~\ref{theorem:LowerBounds}.

\subsubsection{Detailed description of the sequence of steps}
\label{subsubsec:LB:L2Sequence}

Now we describe in detail how a sequence of 2-changes satisfying 
Property~\ref{property:LB:Sequence} can be constructed. First, we 
assume that gadget $G_i$ is in state $(S,L)$ and that gadget $G_{i+1}$ 
is in state $(S,S)$. Under this assumption, there are three consecutive 
blocks, namely ${\cal B}^{i}_2$, ${\cal B}^{i+1}_1$, and ${\cal 
B}^{i+1}_2$, such that the leftmost one ${\cal B}^{i}_2$ is in its long state, and the 
other blocks are in their short states. We need to find a sequence of 
2-changes in which only edges of and between these three blocks are 
involved and after which ${\cal B}^{i}_2$ is in its short state and the 
other blocks are in their long states. Remember that when the edges 
$\{u_1,u_2\}$ and $\{v_1,v_2\}$ are removed from the tour and the 
vertices appear in the order $u_1,u_2,v_1,v_2$ in the current tour, 
then the edges $\{u_1,v_1\}$ and $\{u_2,v_2\}$ are added to the tour 
and the subtour between $u_1$ and $v_2$ is visited in reverse order. 
If, \eg, the current tour corresponds to the permutation 
$(1,2,3,4,5,6,7)$ and the edges $\{1,2\}$ and $\{5,6\}$ are removed, 
then the new tour is $(1,5,4,3,2,6,7)$. The following sequence of 
2-changes, which is also shown in Figure~\ref{fig:LB:Reset},
has the desired properties. Brackets indicate the edges that 
are removed from the tour.

\begin{figure}[H]
\begin{center}
\subfloat[]{\includegraphics[scale=0.38]{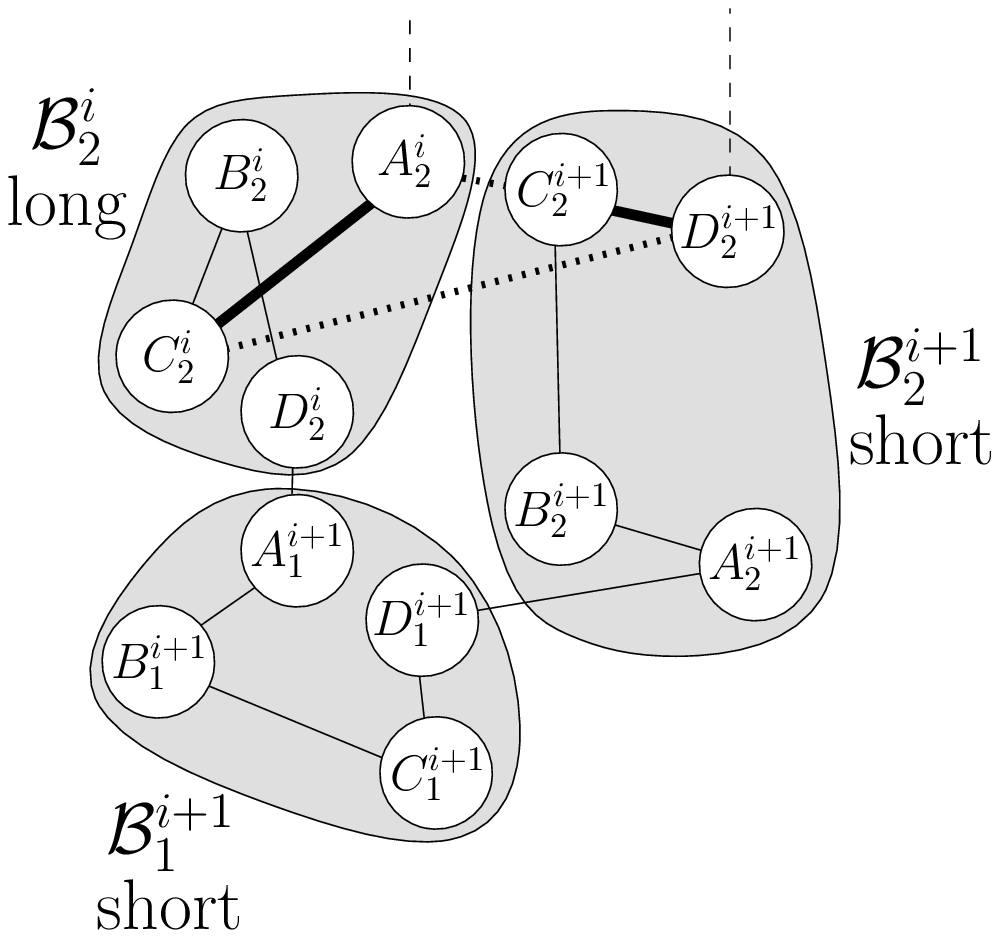}}\hspace{0.1cm}
\subfloat[]{\includegraphics[scale=0.38]{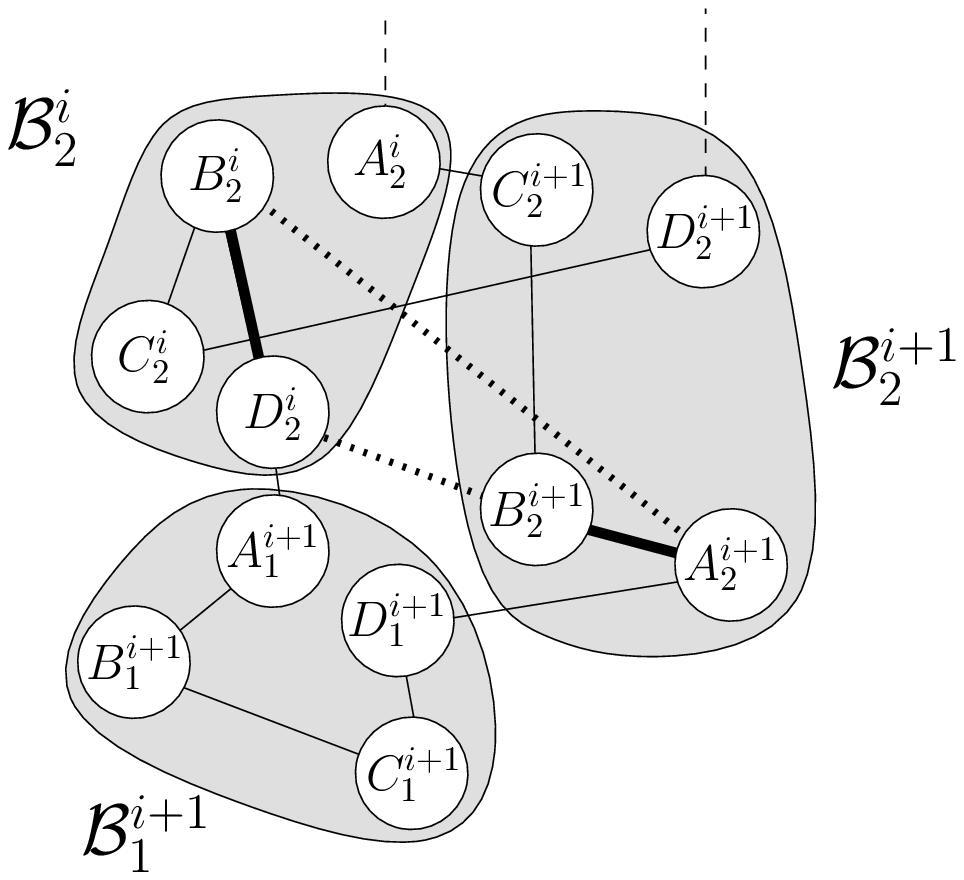}}\hspace{0.1cm}
\subfloat[]{\includegraphics[scale=0.38]{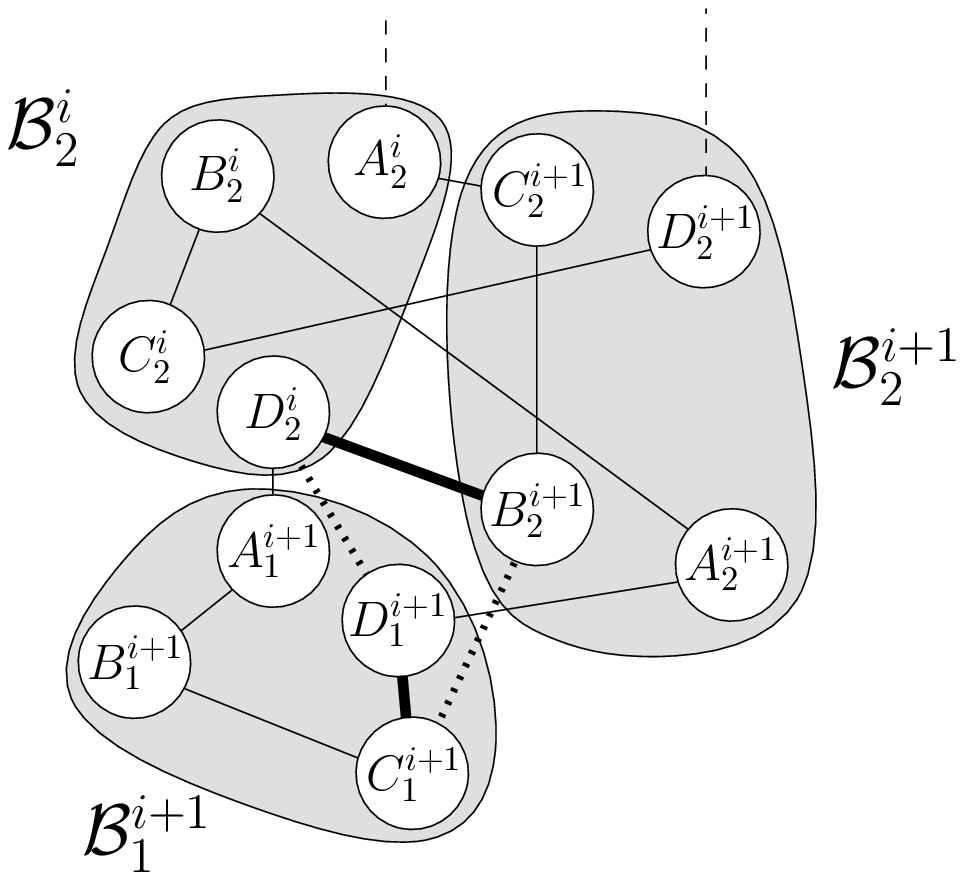}}\\
\subfloat[]{\includegraphics[scale=0.38]{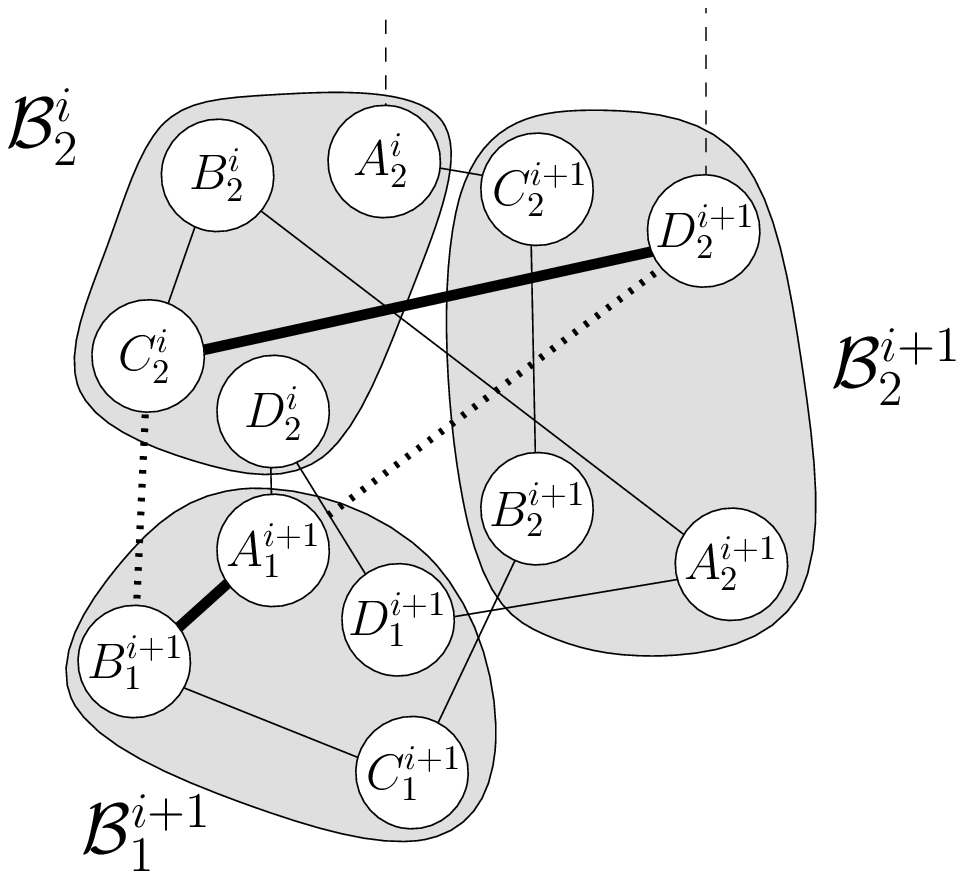}}\hspace{0.1cm}
\subfloat[]{\includegraphics[scale=0.38]{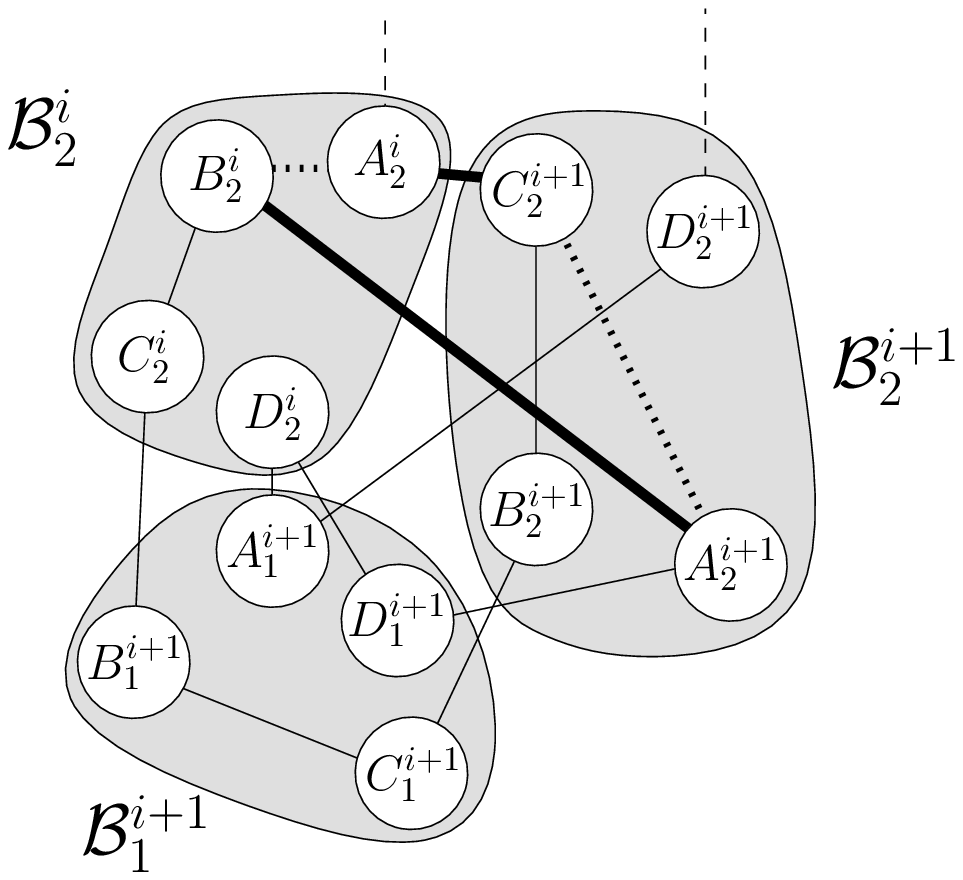}}\hspace{0.1cm}
\subfloat[]{\includegraphics[scale=0.38]{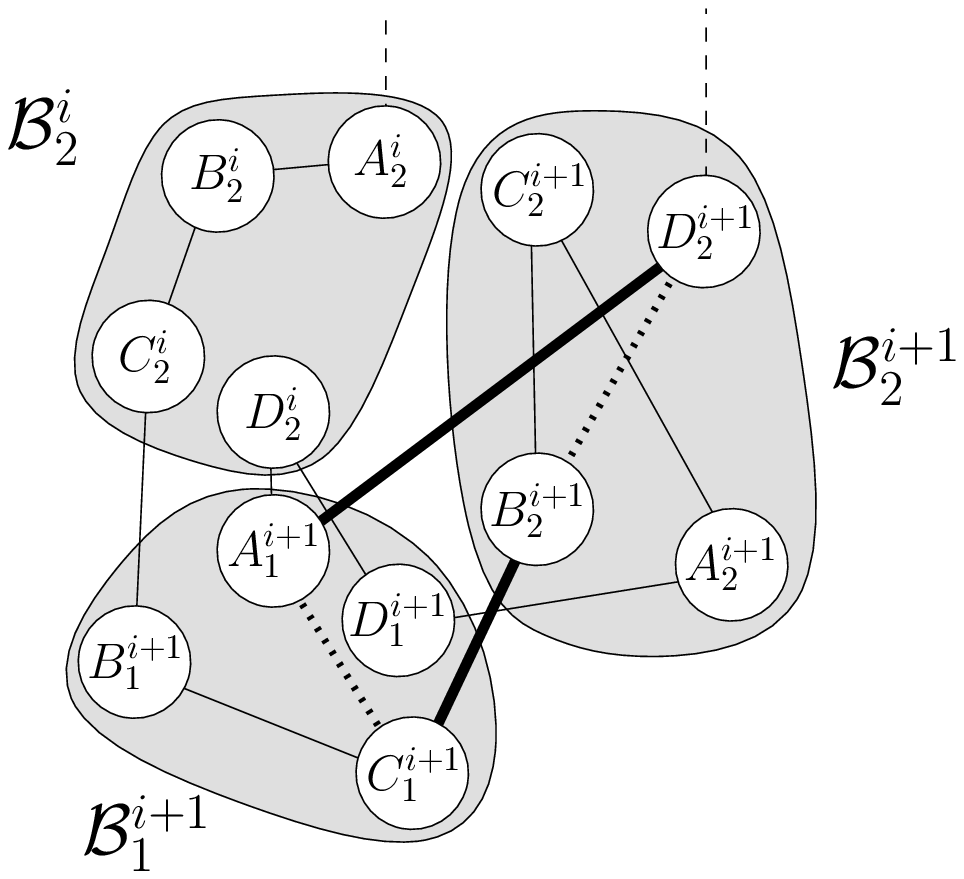}}\\
\subfloat[]{\includegraphics[scale=0.38]{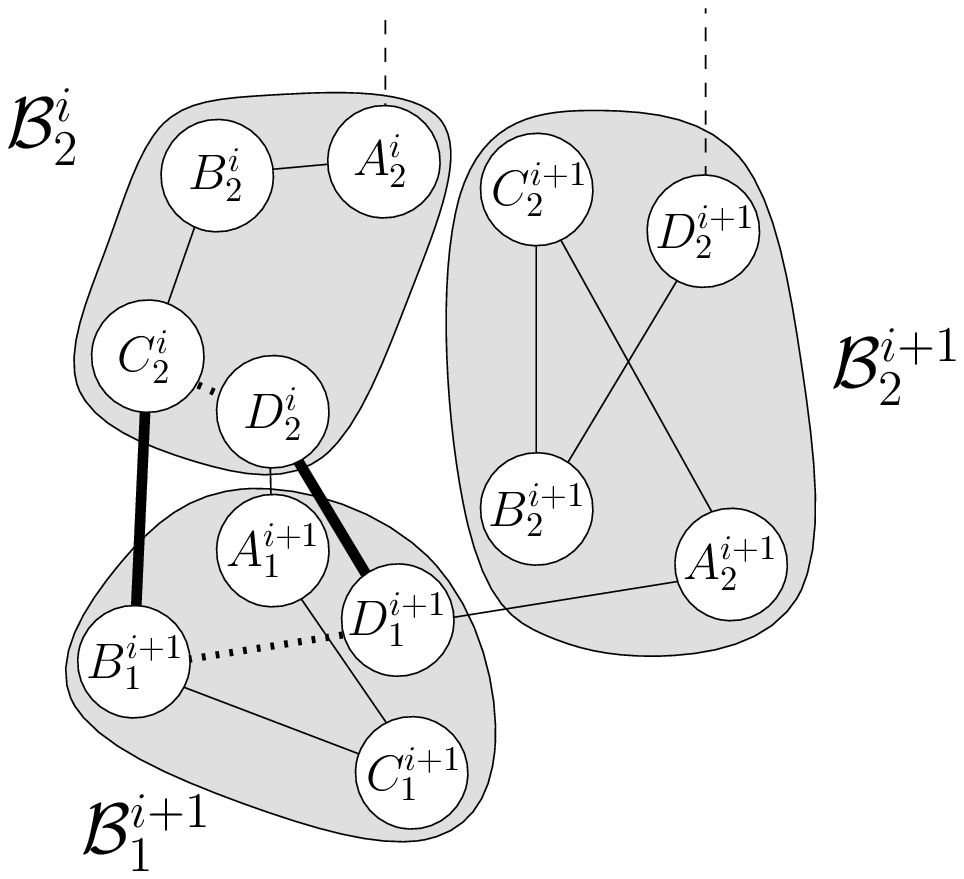}}\hspace{0.1cm}
\subfloat[]{\includegraphics[scale=0.38]{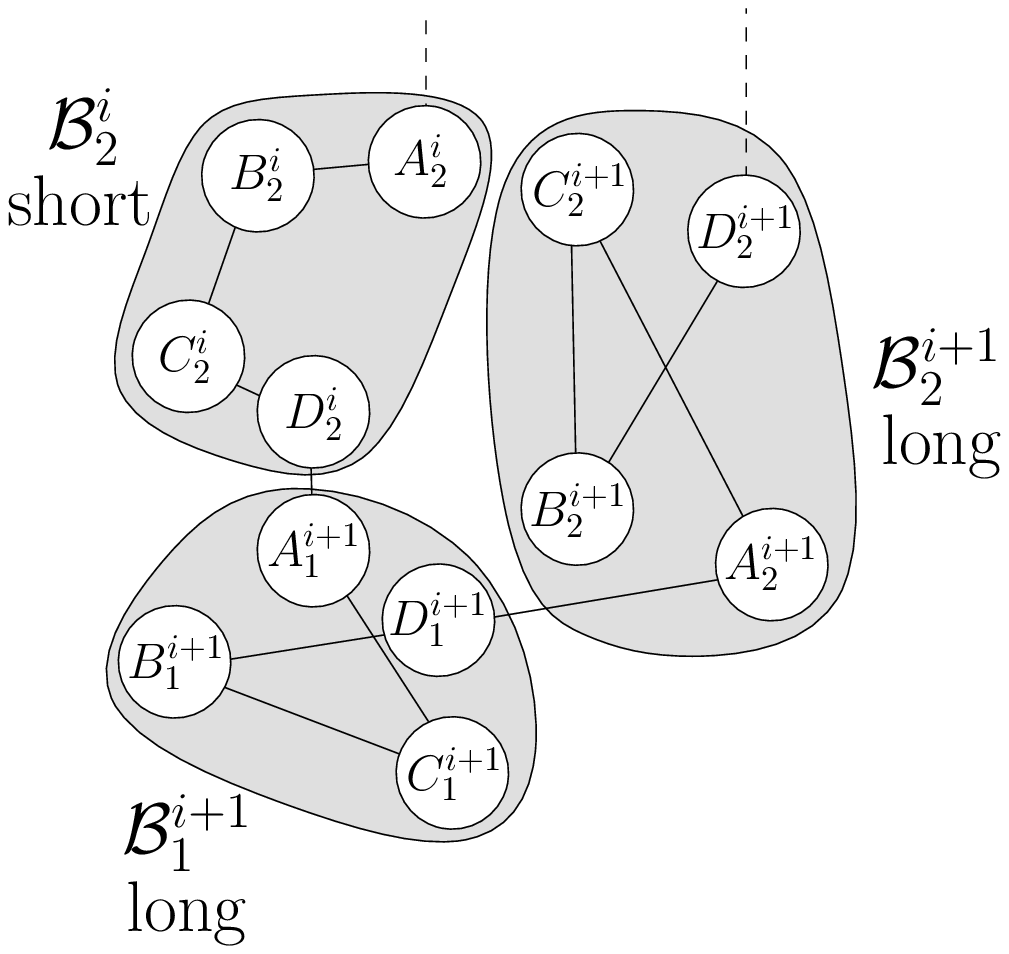}}
\caption{This figure shows the sequence of seven consecutive 2-changes from
Property~\ref{property:LB:Sequence}. In each step the thick edges are removed
from the tour, and the dotted edges are added to the tour. 
It shows how block ${\cal B}_2^i$
switches from its long to its short state while resetting the blocks
${\cal B}_1^{i+1}$ and ${\cal B}_2^{i+1}$ from their short to their long states.
This figure is only schematic and it does not show the actual geometric embedding
of the points into the Euclidean plane.}
\label{fig:LB:Reset}
\end{center}
\end{figure}

\begin{center}
\small
\begin{tabular}{lcccc|cccc|cccc}
& \multicolumn{4}{c}{Long state ACBD} & \multicolumn{4}{c}{Short state ABCD} & \multicolumn{4}{c}{Short state ABCD} \\

1) \hspace{-0.3cm} & $\left[A^{i}_2\right.$
& \hspace{-0.5cm} $\left.C^{i}_2\right]$
& \hspace{-0.5cm} $B^{i}_2$
& \hspace{-0.5cm} $D^{i}_2$ &

  $A^{i+1}_1$
& \hspace{-0.5cm} $B^{i+1}_1$
& \hspace{-0.5cm} $C^{i+1}_1$
& \hspace{-0.5cm} $D^{i+1}_1$ &

  $A^{i+1}_2$
& \hspace{-0.5cm} $B^{i+1}_2$
& \hspace{-0.5cm} $\left[C^{i+1}_2\right.$
& \hspace{-0.5cm} $\left.D^{i+1}_2\right]$ \\[0.1cm]

2) \hspace{-0.3cm} & $A^{i}_2$
& \hspace{-0.5cm} $C^{i+1}_2$
& \hspace{-0.5cm} $\left[B^{i+1}_2\right.$
& \hspace{-0.5cm} $\left.A^{i+1}_2\right]$ &

  $D^{i+1}_1$
& \hspace{-0.5cm} $C^{i+1}_1$
& \hspace{-0.5cm} $B^{i+1}_1$
& \hspace{-0.5cm} $A^{i+1}_1$ &

  $\left[D^{i}_2\right.$
& \hspace{-0.5cm} $\left.B^{i}_2\right]$
& \hspace{-0.5cm} $C^{i}_2$
& \hspace{-0.5cm} $D^{i+1}_2$ \\[0.1cm]

3) \hspace{-0.3cm} & $A^{i}_2$
& \hspace{-0.5cm} $C^{i+1}_2$
& \hspace{-0.5cm} $\left[B^{i+1}_2\right.$
& \hspace{-0.5cm} $\left.D^{i}_2\right]$ &

  $A^{i+1}_1$
& \hspace{-0.5cm} $B^{i+1}_1$
& \hspace{-0.5cm} $\left[C^{i+1}_1\right.$
& \hspace{-0.5cm} $\left.D^{i+1}_1\right]$ &

  $A^{i+1}_2$
& \hspace{-0.5cm} $B^{i}_2$
& \hspace{-0.5cm} $C^{i}_2$
& \hspace{-0.5cm} $D^{i+1}_2$ \\[0.1cm]

4) \hspace{-0.3cm} & $A^{i}_2$
& \hspace{-0.5cm} $C^{i+1}_2$
& \hspace{-0.5cm} $B^{i+1}_2$
& \hspace{-0.5cm} $C^{i+1}_1$ &

  $\left[B^{i+1}_1\right.$
& \hspace{-0.5cm} $\left.A^{i+1}_1\right]$
& \hspace{-0.5cm} $D^{i}_2$
& \hspace{-0.5cm} $D^{i+1}_1$ &
  
  $A^{i+1}_2$
& \hspace{-0.5cm} $B^{i}_2$
& \hspace{-0.5cm} $\left[C^{i}_2\right.$
& \hspace{-0.5cm} $\left.D^{i+1}_2\right]$ \\[0.1cm]

5) \hspace{-0.3cm} & $\left[A^{i}_2\right.$
& \hspace{-0.5cm} $\left.C^{i+1}_2\right]$
& \hspace{-0.5cm} $B^{i+1}_2$
& \hspace{-0.5cm} $C^{i+1}_1$ &

  $B^{i+1}_1$
& \hspace{-0.5cm} $C^{i}_2$
& \hspace{-0.5cm} $\left[B^{i}_2\right.$
& \hspace{-0.5cm} $\left.A^{i+1}_2\right]$ &

  $D^{i+1}_1$
& \hspace{-0.5cm} $D^{i}_2$
& \hspace{-0.5cm} $A^{i+1}_1$
& \hspace{-0.5cm} $D^{i+1}_2$ \\[0.1cm]

6) \hspace{-0.3cm} & $A^{i}_2$
& \hspace{-0.5cm} $B^{i}_2$
& \hspace{-0.5cm} $C^{i}_2$
& \hspace{-0.5cm} $B^{i+1}_1$ &

  $\left[C^{i+1}_1\right.$
& \hspace{-0.5cm} $\left.B^{i+1}_2\right]$
& \hspace{-0.5cm} $C^{i+1}_2$
& \hspace{-0.5cm} $A^{i+1}_2$ &

  $D^{i+1}_1$
& \hspace{-0.5cm} $D^{i}_2$
& \hspace{-0.5cm} $\left[A^{i+1}_1\right.$
& \hspace{-0.5cm} $\left.D^{i+1}_2\right]$ \\[0.1cm]

7) \hspace{-0.3cm} &  $A^{i}_2$
& \hspace{-0.5cm} $B^{i}_2$
& \hspace{-0.5cm} $\left[C^{i}_2\right.$
& \hspace{-0.5cm} $\left.B^{i+1}_1\right]$ &

  $C^{i+1}_1$
& \hspace{-0.5cm} $A^{i+1}_1$
& \hspace{-0.5cm} $\left[D^{i}_2\right.$
& \hspace{-0.5cm} $\left.D^{i+1}_1\right]$ &

  $A^{i+1}_2$
& \hspace{-0.5cm} $C^{i+1}_2$
& \hspace{-0.5cm} $B^{i+1}_2$
& \hspace{-0.5cm} $D^{i+1}_2$ \\[0.1cm]

&  $A^{i}_2$
& \hspace{-0.5cm} $B^{i}_2$
& \hspace{-0.5cm} $C^{i}_2$
& \hspace{-0.5cm} $D^{i}_2$ &

  $A^{i+1}_1$
& \hspace{-0.5cm} $C^{i+1}_1$
& \hspace{-0.5cm} $B^{i+1}_1$
& \hspace{-0.5cm} $D^{i+1}_1$ &

  $A^{i+1}_2$
& \hspace{-0.5cm} $C^{i+1}_2$
& \hspace{-0.5cm} $B^{i+1}_2$
& \hspace{-0.5cm} $D^{i+1}_2$\\

& \multicolumn{4}{c}{Short state ABCD} & \multicolumn{4}{c}{Long state ACBD} & \multicolumn{4}{c}{Long state ACBD}
\end{tabular}
\end{center}

Observe that the configurations~2 to~7 do not have the property mentioned at the beginning
of this section that, for every block~${\cal B}^i_j$, the points~$A^i_j$, $B^i_j$, $C^i_j$, and~$D^i_j$ are
visited consecutively either in the order~$A^i_j B^i_j C^i_j D^i_j$ or in the order~$A^i_j C^i_j B^i_j D^i_j$.
The configurations~2 to~7 are exactly the intermediate configurations that we mentioned at the beginning
of this section.

If gadget $G_i$ is in state $(L,L)$ instead of state $(S,L)$, a 
sequence of steps that satisfies Property~\ref{property:LB:Sequence} 
can be constructed analogously. Additionally, one has to take into 
account that the three involved blocks ${\cal B}^{i}_1$, ${\cal 
B}^{i+1}_1$, and ${\cal B}^{i+1}_2$ are not consecutive in the tour but 
that block ${\cal B}^{i}_2$ lies between them. However, one can easily 
verify that this block is not affected by the sequence of 2-changes, as 
after the seven 2-changes have been performed, the block is in the same 
state and at the same position as before.

\subsubsection{Embedding the construction into the Euclidean plane}
\label{subsubsec:LB:L2Points}

The only missing step in the proof of Theorem~\ref{theorem:LowerBounds} 
for the Euclidean plane is to find points such that all of the 
2-changes that we described in the previous section are improving. We 
specify the positions of the points of gadget $G_{n-1}$ and give a rule 
as to how the points of gadget $G_i$ can be derived when all points of gadget 
$G_{i+1}$ have already been placed. In our construction it happens that 
different points have exactly the same coordinates. This is only for 
ease of notation; if one wants to obtain a TSP instance in which 
distinct points have distinct coordinates, one can slightly move these 
points without affecting the property that all 2-changes are improving.
 
For $j\in[2]$, we choose $A^{n-1}_j=(0,0)$, $B^{n-1}_j=(1,0)$, 
$C^{n-1}_j=(-0.1,1.4)$, and $D^{n-1}_j=(-1.1,4.8)$. Then 
$A^{n-1}_jB^{n-1}_jC^{n-1}_jD^{n-1}_j$ is the short state 
and $A^{n-1}_jC^{n-1}_jB^{n-1}_jD^{n-1}_j$ is the long state
because
\[
\dist(A^{n-1}_j,C^{n-1}_j)+\dist(B^{n-1}_j,D^{n-1}_j)
> \dist(A^{n-1}_j,B^{n-1}_j)+\dist(C^{n-1}_j,D^{n-1}_j),
\]
as
\[
  \dist(A^{n-1}_j,C^{n-1}_j)+\dist(B^{n-1}_j,D^{n-1}_j) 
  = \sqrt{0.1^2+1.4^2}+\sqrt{2.1^2+4.8^2} > 6.64
\]
and
\[
  \dist(A^{n-1}_j,B^{n-1}_j)+\dist(C^{n-1}_j,D^{n-1}_j) 
  = \sqrt{1^2+0^2}+\sqrt{1^2+3.4^2} < 4.55.
\]
We place the points of gadget $G_{i}$ as follows (see 
Figure~\ref{figure:counter:PointsL2}):
\begin{enumerate}
\setlength{\itemsep}{0em}
\item Start with the coordinates of the points of gadget $G_{i+1}$.
\item Rotate these points around the origin by $3\pi/2$.
\item Scale each coordinate by a factor of 3.
\item Translate the points by the vector $(-1.2,0.1)$.
\end{enumerate}
For $j\in[2]$, this yields $A^{n-2}_j=(-1.2,0.1)$, 
$B^{n-2}_j=(-1.2,-2.9)$, $C^{n-2}_j=(3,0.4)$, and 
$D^{n-2}_j=(13.2,3.4)$.

\begin{figure}[H]
\begin{center}
\includegraphics{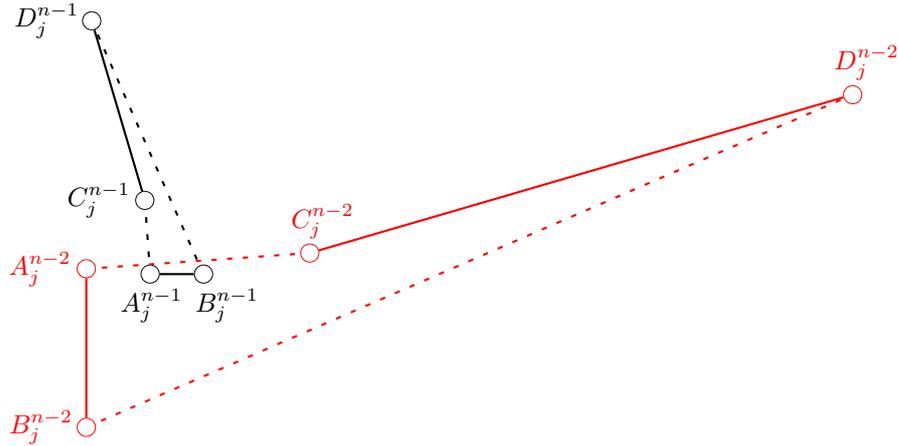}
\end{center}
\caption{This illustration shows the points of the gadgets $G_{n-1}$ 
and $G_{n-2}$. One can see that $G_{n-2}$ is a scaled, rotated, and 
translated copy of $G_{n-1}$.}
\label{figure:counter:PointsL2}
\end{figure}

From this construction it follows that each gadget is a scaled, 
rotated, and translated copy of gadget $G_{n-1}$. If one has a set of 
points in the Euclidean plane that admits certain improving 2-changes, 
then these 2-changes are still improving if one scales, rotates, and 
translates all points in the same manner. Hence, it suffices to show 
that the sequences in which gadget $G_{n-2}$ resets gadget $G_{n-1}$ 
from $(S,S)$ to $(L,L)$ are improving because, for any~$i$, the points
of the gadgets~$G_{i}$ and~$G_{i+1}$ are a scaled, rotated, and translated
copy of the points of the gadgets~$G_{n-2}$ and~$G_{n-1}$.

There are two sequences in which gadget $G_{n-2}$ resets gadget $G_{n-1}$ 
from $(S,S)$ to $(L,L)$: in 
the first one, gadget $G_{n-2}$ changes its state from $(L,L)$ to 
$(S,L)$, in the second one, gadget $G_{n-2}$ changes its state from 
$(S,L)$ to $(S,S)$. Since the coordinates of the points in both blocks 
of gadget $G_{n-2}$ are the same, the inequalities for both sequences 
are also identical. The following inequalities show that the
improvements made by the steps in both sequences are all positive
(see Figure~\ref{fig:LB:Reset} or the table in Section~\ref{subsubsec:LB:L2Sequence}
for the sequence of 2-changes): 

{\small
\begin{alignat*}{5}
  \text{1)\,\,\,} &  \dist(A^{n-2}_2,C^{n-2}_2) && + \dist(C^{n-1}_2,D^{n-1}_2)
                  && - \dist(A^{n-2}_2,C^{n-1}_2) && - \dist(C^{n-2}_2,D^{n-1}_2) \, && > \, 0.03,\\
  \text{2)\,\,\,} &  \dist(B^{n-1}_2,A^{n-1}_2) && + \dist(D^{n-2}_2,B^{n-2}_2)
                  && - \dist(B^{n-1}_2,D^{n-2}_2) && - \dist(A^{n-1}_2,B^{n-2}_2) && > \, 0.91,\\
  \text{3)\,\,\,} &  \dist(B^{n-1}_2,D^{n-2}_2) && + \dist(C^{n-1}_1,D^{n-1}_1)
                  && - \dist(B^{n-1}_2,C^{n-1}_1) && - \dist(D^{n-2}_2,D^{n-1}_1) && > \, 0.06,\\
  \text{4)\,\,\,} &  \dist(B^{n-1}_1,A^{n-1}_1) && + \dist(C^{n-2}_2,D^{n-1}_2)
                  && - \dist(B^{n-1}_1,C^{n-2}_2) && - \dist(A^{n-1}_1,D^{n-1}_2) && > \, 0.05,\\
  \text{5)\,\,\,} &  \dist(A^{n-2}_2,C^{n-1}_2) && + \dist(B^{n-2}_2,A^{n-1}_2)
                  && - \dist(A^{n-2}_2,B^{n-2}_2) && - \dist(C^{n-1}_2,A^{n-1}_2) && > \, 0.43,\\                                    
  \text{6)\,\,\,} &  \dist(C^{n-1}_1,B^{n-1}_2) && + \dist(A^{n-1}_1,D^{n-1}_2)
                  && - \dist(C^{n-1}_1,A^{n-1}_1) && - \dist(B^{n-1}_2,D^{n-1}_2) && > \, 0.06,\\
  \text{7)\,\,\,} &  \dist(C^{n-2}_2,B^{n-1}_1) && + \dist(D^{n-2}_2,D^{n-1}_1)
                  && - \dist(C^{n-2}_2,D^{n-2}_2) && - \dist(B^{n-1}_1,D^{n-1}_1) && > \, 0.53.
\end{alignat*}}
This concludes the proof of 
Theorem~\ref{theorem:LowerBounds} for the Euclidean plane as it shows 
that all 2-changes in Lemma~\ref{lemma:LB:ExpSequence} are improving.

\subsection{Exponential Lower Bound for $L_p$ Metrics}
\label{subsec:LB:Lp}

We were not able to find a set of points in the plane such that all 
2-changes in Lemma~\ref{lemma:LB:ExpSequence} are improving with 
respect to the Manhattan metric. Therefore, we modify the construction 
of the gadgets and the sequence of 2-changes. Our construction for the 
Manhattan metric is based on the construction for the Euclidean plane, 
but it does not possess the property that every gadget resets its 
neighboring gadget twice. This property is only true for half of the 
gadgets. To be more precise, we construct two different types of 
gadgets which we call \emph{reset gadgets} and \emph{propagation 
gadgets}. Reset gadgets perform the same sequence of 2-changes as the 
gadgets that we constructed for the Euclidean plane. Propagation 
gadgets also have the same structure as the gadgets for the Euclidean 
plane, but when such a gadget changes its state from $(L,L)$ to 
$(S,S)$, it resets its neighboring gadget only once. Due to this 
relaxed requirement it is possible to find points in the Manhattan 
plane whose distances satisfy all necessary inequalities. Instead of 
$n$ gadgets, our construction consists of $2n$ gadgets, namely $n$ 
propagation gadgets $G_0^P,\ldots,G_{n-1}^P$ and $n$ reset gadgets 
$G_0^R,\ldots,G_{n-1}^R$. The order in which these gadgets appear in 
the tour is $G_0^PG_0^RG_1^PG_1^R\ldots G_{n-1}^PG_{n-1}^R$.

As before, every gadget consists of two blocks and the order in which 
the blocks and the gadgets are visited does not change during the 
sequence of 2-changes. Consider a reset gadget $G^R_i$ and its 
neighboring propagation gadget $G^P_{i+1}$. We will embed the points
of the gadgets into the Manhattan plane in such a way that 
Property~\ref{property:LB:Sequence} is still satisfied. That is, if 
$G^R_i$ is in state $(L,L)$ (or state $(S,L)$, respectively) and $G^P_{i+1}$ is in state 
$(S,S)$, then there exists a sequence of seven consecutive 2-changes 
resetting gadget $G^P_{i+1}$ to state $(L,L)$ and leaving gadget 
$G^R_i$ in state $(S,L)$ (or $(S,S)$, respectively). The situation is 
different for a propagation gadget $G^P_i$ and its neighboring reset 
gadget $G^R_i$. In this case, if $G^P_i$ is in state $(L,L)$, it first 
changes its state with a single 2-change to $(S,L)$. After that, gadget 
$G^P_i$ changes its state to $(S,S)$ while resetting gadget $G^R_i$ 
from state $(S,S)$ to state $(L,L)$ by a sequence of seven consecutive 
2-changes. In both cases, the sequences of 2-changes in which one block 
changes from its long to its short state while resetting two blocks of
the neighboring gadget from their short to their long states are chosen
analogously to the ones for the Euclidean plane described in 
Section~\ref{subsubsec:LB:L2Sequence}.
An example with three propagation
and three reset gadgets is shown in Figure~\ref{fig:LB:ExampleB}.

In the initial tour, only gadget $G^P_0$ is in state $(L,L)$ and every 
other gadget is in state $(S,S)$. With similar arguments as for the 
Euclidean plane, we can show that gadget $G_i^R$ is reset from its one 
state $(S,S)$ to its zero state $(L,L)$ $2^{i}$ times and that the 
total number of steps is $2^{n+4}-22$. 

\begin{figure}[H]
\begin{center}
\includegraphics[scale=0.79]{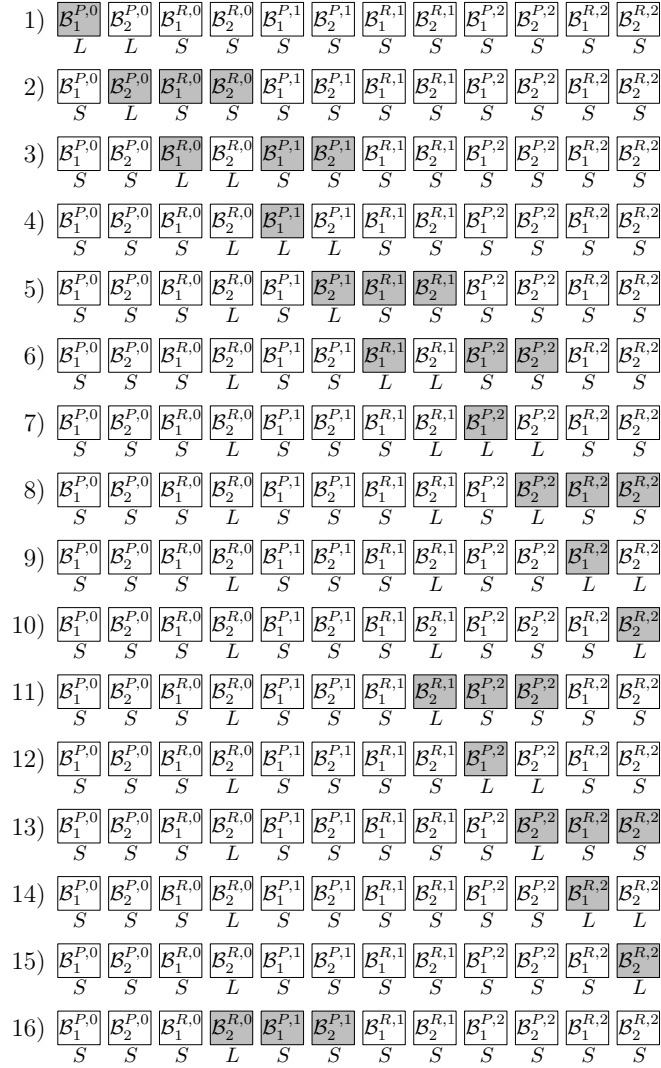}
\caption{This figure shows an example with three propagation and three reset gadgets.
It shows the first 16 configurations that these gadgets assume during the sequence of 2-changes,
excluding the intermediate configurations that arise when one gadget resets 
another one. Gadgets that are involved in the transformation from configuration~$i$
to configuration~$i+1$ are shown in gray. For example, in the step from the
first to the second configuration, the first block ${\cal B}_{1}^{P,0}$ of
the first propagation gadget $G_0^P$ switches from its long to its short state by a 
single 2-change. Then in the step from the second to the third configuration, 
the second block ${\cal B}_{2}^{P,0}$ of the first propagation gadget $G_0^P$
resets the two blocks of the first reset gadget $G_0^R$. That is, these three
blocks follow the sequence of seven 2-changes from Property~\ref{property:LB:Sequence}.}
\label{fig:LB:ExampleB}
\end{center}
\end{figure}

\subsubsection{Embedding the construction into the Manhattan plane}
\label{subsubsec:LB:L1Points}

As in the construction in the Euclidean plane, the points in both 
blocks of a reset gadget $G_i^R$ have the same coordinates. Also in 
this case one can slightly move all the points without affecting the 
inequalities if one wants distinct coordinates for distinct points. 
Again, we choose points for the gadgets $G_{n-1}^P$ and $G_{n-1}^R$ and 
describe how the points of the gadgets $G_i^P$ and $G_i^R$ can be 
chosen when the points of the gadgets $G_{i+1}^P$ and $G_{i+1}^R$ are 
already chosen. For $j\in[2]$, we choose $A^{n-1}_{R,j}=(0,1)$, 
$B^{n-1}_{R,j}=(0,0)$, $C^{n-1}_{R,j}=(-0.7,0.1)$, and 
$D^{n-1}_{R,j}=(-1.2,0.08)$. Furthermore, we choose 
$A^{n-1}_{P,1}=(-2,1.8)$, $B^{n-1}_{P,1}=(-3.3,2.8)$, 
$C^{n-1}_{P,1}=(-1.3,1.4)$, $D^{n-1}_{P,1}=(1.5,0.9)$, 
$A^{n-1}_{P,2}=(-0.7,1.6)$, $B^{n-1}_{P,2}=(-1.5,1.2)$, 
$C^{n-1}_{P,2}=(1.9,-1.5)$, and $D^{n-1}_{P,2}=(-0.8,-1.1)$.

Before we describe how the points of the other gadgets are chosen, we 
first show that the 2-changes within and between the gadgets 
$G_{n-1}^P$ and $G_{n-1}^R$ are improving. For $j\in[2]$, 
$A^{n-1}_{R,j}B^{n-1}_{R,j}C^{n-1}_{R,j}D^{n-1}_{R,j}$ is the short 
state because
\begin{align*}
& \quad \dist(A^{n-1}_{R,j},C^{n-1}_{R,j})+\dist(B^{n-1}_{R,j},D^{n-1}_{R,j})-
(\dist(A^{n-1}_{R,j},B^{n-1}_{R,j})+\dist(C^{n-1}_{R,j},D^{n-1}_{R,j}))\\
& = (0.7+0.9) + (1.2+0.08) - (0+1) - (0.5+0.02)  
= 1.36.
\end{align*}

In the 2-change in which $G_{n-1}^P$ changes its state from $(L,L)$ to
$(S,L)$ the edges~$A^{n-1}_{P,1},C^{n-1}_{P,1}$
and~$B^{n-1}_{P,1},D^{n-1}_{P,1}$ are replaced with the edges~$A^{n-1}_{P,1},B^{n-1}_{P,1}$
and~$C^{n-1}_{P,1},D^{n-1}_{P,1}$. This 2-change is improving because
\begin{align*}
& \quad \dist(A^{n-1}_{P,1},C^{n-1}_{P,1})+\dist(B^{n-1}_{P,1},D^{n-1}_{P,1})-
(\dist(A^{n-1}_{P,1},B^{n-1}_{P,1})+\dist(C^{n-1}_{P,1},D^{n-1}_{P,1}))\\
& = (0.7+0.4)+(4.8+1.9)-(1.3+1)-(2.8+0.5)
= 2.2. 
\end{align*}
The 2-changes in the sequence in which~$G_{n-1}^P$ changes its state from $(S,L)$ to $(S,S)$ while resetting 
$G_{n-1}^R$ are chosen analogously to the ones shown in Figure~\ref{fig:LB:Reset} and in the table in
Section~\ref{subsubsec:LB:L2Sequence}. The only difference is that the involved blocks are
not~${\cal B}^{i}_2$, ${\cal B}^{i+1}_1$, and ${\cal B}^{i+1}_2$ anymore, but the second block of gadget~$G_{n-1}^P$
and the two blocks of gadget~$G_{n-1}^R$, respectively. This gives rise to the following equalities that show that  
the improvements made by the 2-changes in this sequence are all positive:

{\small
 \begin{alignat*}{5}
  \text{1)\,\,\,} &  \dist(A^{n-1}_{P,2},C^{n-1}_{P,2}) && + \dist(C^{n-1}_{R,2},D^{n-1}_{R,2})
                  && - \dist(A^{n-1}_{P,2},C^{n-1}_{R,2}) && - \dist(C^{n-1}_{P,2},D^{n-1}_{R,2}) \, && = \, 0.04,\\
  \text{2)\,\,\,} &  \dist(B^{n-1}_{R,2},A^{n-1}_{R,2}) && + \dist(D^{n-1}_{P,2},B^{n-1}_{P,2})
                  && - \dist(B^{n-1}_{R,2},D^{n-1}_{P,2}) && - \dist(A^{n-1}_{R,2},B^{n-1}_{P,2}) && = \, 0.4,\\
  \text{3)\,\,\,} &  \dist(B^{n-1}_{R,2},D^{n-1}_{P,2}) && + \dist(C^{n-1}_{R,1},D^{n-1}_{R,1})
                  && - \dist(B^{n-1}_{R,2},C^{n-1}_{R,1}) && - \dist(D^{n-1}_{P,2},D^{n-1}_{R,1}) && = \, 0.04,\\
  \text{4)\,\,\,} &  \dist(B^{n-1}_{R,1},A^{n-1}_{R,1}) && + \dist(C^{n-1}_{P,2},D^{n-1}_{R,2})
                  && - \dist(B^{n-1}_{R,1},C^{n-1}_{P,2}) && - \dist(A^{n-1}_{R,1},D^{n-1}_{R,2}) && = \, 0.16,\\
  \text{5)\,\,\,} &  \dist(A^{n-1}_{P,2},C^{n-1}_{R,2}) && + \dist(B^{n-1}_{P,2},A^{n-1}_{R,2})
                  && - \dist(A^{n-1}_{P,2},B^{n-1}_{P,2}) && - \dist(C^{n-1}_{R,2},A^{n-1}_{R,2}) && = \, 0.4,\\                                    
  \text{6)\,\,\,} &  \dist(C^{n-1}_{R,1},B^{n-1}_{R,2}) && + \dist(A^{n-1}_{R,1},D^{n-1}_{R,2})
                  && - \dist(C^{n-1}_{R,1},A^{n-1}_{R,1}) && - \dist(B^{n-1}_{R,2},D^{n-1}_{R,2}) && = \, 0.04,\\
  \text{7)\,\,\,} &  \dist(C^{n-1}_{P,2},B^{n-1}_{R,1}) && + \dist(D^{n-1}_{P,2},D^{n-1}_{R,1})
                  && - \dist(C^{n-1}_{P,2},D^{n-1}_{P,2}) && - \dist(B^{n-1}_{R,1},D^{n-1}_{R,1}) && = \, 0.6.
 \end{alignat*}}

Again, our construction possesses the property that each pair of 
gadgets $G_i^P$ and $G_i^R$ is a scaled and translated version of 
the pair $G_{n-1}^P$ and $G_{n-1}^R$. Since we have relaxed the 
requirements for the gadgets, we do not even need rotations here. 
We place the points of $G_i^P$ and $G_i^R$ as follows:
\begin{enumerate}
\setlength{\itemsep}{0em}
\item Start with the coordinates specified for the points of 
gadgets $G_{i+1}^P$ and $G_{i+1}^R$.
\item Scale each coordinate by a factor of 7.7.
\item Translate the points by the vector $(1.93,0.3)$.
\end{enumerate}
For $j\in[2]$, this yields $A^{n-2}_{R,j}=(1.93,8)$, 
$B^{n-2}_{R,j}=(1.93,0.3)$, $C^{n-2}_{R,j}=(-3.46,1.07)$, and 
$D^{n-2}_{R,j}=(-7.31,0.916)$. 
Additionally, it yields
$A^{n-2}_{P,1}=(-13.47,14.16)$, $B^{n-2}_{P,1}=(-23.48,21.86)$, 
$C^{n-2}_{P,1}=(-8.08,11.08)$, $D^{n-2}_{P,1}=(13.48,7.23)$, 
$A^{n-2}_{P,2}=(-3.46,12.62)$, $B^{n-2}_{P,2}=(-9.62,9.54)$, 
$C^{n-2}_{P,2}=(16.56,-11.25)$, and $D^{n-2}_{P,2}=(-4.23,-8.17)$.

As in our construction for the 
Euclidean plane, it suffices to show that the sequences in which gadget 
$G^R_{n-2}$ resets gadget $G^P_{n-1}$ from $(S,S)$ to $(L,L)$ are 
improving because, for any~$i$, the points of the gadgets~$G^R_{i}$ and $G^P_{i+1}$ are
a scaled and translated copy of the points of the gadgets~$G^R_{n-2}$ and $G^P_{n-1}$.
The 2-changes in these sequences 
are chosen analogously to the ones shown in Figure~\ref{fig:LB:Reset} and in the table in
Section~\ref{subsubsec:LB:L2Sequence}. The only difference is that the involved blocks are
not~${\cal B}^{i}_2$, ${\cal B}^{i+1}_1$, and ${\cal B}^{i+1}_2$ anymore, but one of the blocks of gadget~$G^R_{n-2}$
and the two blocks of gadget~$G^P_{n-1}$, respectively.
As the coordinates of the points in the two blocks of gadget 
$G^R_{n-2}$ are the same, the inequalities for both sequences are also 
identical. The improvements made by the steps in both sequences are

{\small 
\begin{alignat*}{5}
\text{1)\,\,\,} &  \dist(A^{n-2}_{R,2},C^{n-2}_{R,2}) && + \dist(C^{n-1}_{P,2},D^{n-1}_{P,2})
&& - \dist(A^{n-2}_{R,2},C^{n-1}_{P,2}) && - \dist(C^{n-2}_{R,2},D^{n-1}_{P,2}) \, && = \, 1.06\\
\text{2)\,\,\,} &  \dist(B^{n-1}_{P,2},A^{n-1}_{P,2}) && + \dist(D^{n-2}_{R,2},B^{n-2}_{R,2})
&& - \dist(B^{n-1}_{P,2},D^{n-2}_{R,2}) && - \dist(A^{n-1}_{P,2},B^{n-2}_{R,2}) \, && = \, 1.032,\\
\text{3)\,\,\,} &  \dist(B^{n-1}_{P,2},D^{n-2}_{R,2}) && + \dist(C^{n-1}_{P,1},D^{n-1}_{P,1})
&& - \dist(B^{n-1}_{P,2},C^{n-1}_{P,1}) && - \dist(D^{n-2}_{R,2},D^{n-1}_{P,1}) \, && = \, 0.168,\\
\text{4)\,\,\,} &  \dist(B^{n-1}_{P,1},A^{n-1}_{P,1}) && + \dist(C^{n-2}_{R,2},D^{n-1}_{P,2})
&& - \dist(B^{n-1}_{P,1},C^{n-2}_{R,2}) && - \dist(A^{n-1}_{P,1},D^{n-1}_{P,2}) \, && = \, 1.14,\\
\text{5)\,\,\,} &  \dist(A^{n-2}_{R,2},C^{n-1}_{P,2}) && + \dist(B^{n-2}_{R,2},A^{n-1}_{P,2})
&& - \dist(A^{n-2}_{R,2},B^{n-2}_{R,2}) && - \dist(C^{n-1}_{P,2},A^{n-1}_{P,2}) \, && = \, 0.06,\\
\text{6)\,\,\,} &  \dist(C^{n-1}_{P,1},B^{n-1}_{P,2}) && + \dist(A^{n-1}_{P,1},D^{n-1}_{P,2})
&& - \dist(C^{n-1}_{P,1},A^{n-1}_{P,1}) && - \dist(B^{n-1}_{P,2},D^{n-1}_{P,2}) \, && = \, 0.4,\\
\text{7)\,\,\,} &  \dist(C^{n-2}_{R,2},B^{n-1}_{P,1}) && + \dist(D^{n-2}_{R,2},D^{n-1}_{P,1})
&& - \dist(C^{n-2}_{R,2},D^{n-2}_{R,2}) && - \dist(B^{n-1}_{P,1},D^{n-1}_{P,1}) \, && = \, 0.012.
\end{alignat*}}
This concludes the proof of Theorem~\ref{theorem:LowerBounds} for the
Manhattan metric as it shows that all 2-changes are improving.

Let us remark that this also implies Theorem~\ref{theorem:LowerBounds} 
for the $L_{\infty}$ metric because distances with respect to the $L_{\infty}$ 
metric coincide with distances with respect to the Manhattan metric if one rotates 
all points by $\pi/4$ around the origin and scales every coordinate 
by $1/\sqrt{2}$.

\subsubsection{Embedding the construction into general $L_p$ metrics}
\label{subsubsec:LB:LpPoints}

It is also possible to embed our Manhattan construction into the $L_p$ metric for~$p\in\NN$ with~$p\ge 3$.
For $j\in[2]$, we choose $A^{n-1}_{R,j}=(0,1)$, 
$B^{n-1}_{R,j}=(0,0)$, $C^{n-1}_{R,j}=(3.5,3.7)$, and 
$D^{n-1}_{R,j}=(7.8,-3.2)$. Moreover, we choose 
$A^{n-1}_{P,1}=(-2.5,-2.4)$, $B^{n-1}_{P,1}=(-4.7,-7.3)$, 
$C^{n-1}_{P,1}=(-8.6,-4.6)$, $D^{n-1}_{P,1}=(3.7,9.8)$, 
$A^{n-1}_{P,2}=(3.2,2)$, $B^{n-1}_{P,2}=(7.2,7.2)$, 
$C^{n-1}_{P,2}=(-6.5,-1.6)$, and $D^{n-1}_{P,2}=(-1.5,-7.1)$. We place 
the points of $G_i^P$ and $G_i^R$ as follows:
\begin{enumerate}
\setlength{\itemsep}{0em}
\item Start with the coordinates specified for the points of 
gadgets $G_{i+1}^P$ and $G_{i+1}^R$.
\item Rotate these points around the origin by $\pi$.
\item Scale each coordinate by a factor of 7.8.
\item Translate the points by the vector $(7.2,5.3)$.
\end{enumerate}
For $j\in[2]$, this yields $A^{n-2}_{R,j}=(7.2,-2.5)$, 
$B^{n-2}_{R,j}=(7.2,5.3)$, $C^{n-2}_{R,j}=(-20.1,-23.56)$, and 
$D^{n-2}_{R,j}=(-53.64,30.26)$.
Additionally, it yields
$A^{n-2}_{P,1}=(26.7,24.02)$, $B^{n-2}_{P,1}=(43.86,62.24)$, 
$C^{n-2}_{P,1}=(74.28,41.18)$, $D^{n-2}_{P,1}=(-21.66,-71.14)$, 
$A^{n-2}_{P,2}=(-17.76,-10.3)$, $B^{n-2}_{P,2}=(-48.96,-50.86)$, 
$C^{n-2}_{P,2}=(57.9,17.78)$, and $D^{n-2}_{P,2}=(18.9,60.68)$.		

It needs to be shown that the distances of these points when measured 
according to the $L_p$ metric for any~$p\in\NN$ with~$p\ge 3$ satisfy all necessary 
inequalities, that is, all 16 inequalities that we have verified in the
previous section for the Manhattan metric. Let us start by showing that
for $j\in[2]$, $A^{n-1}_{R,j}B^{n-1}_{R,j}C^{n-1}_{R,j}D^{n-1}_{R,j}$ is the short 
state. For this, we have to prove the following inequality for every~$p\in\NN$ with~$p\ge 3$:
\begin{alignat}{1}
 & \dist_p(A^{n-1}_{R,j},C^{n-1}_{R,j})\!+\!\dist_p(B^{n-1}_{R,j},D^{n-1}_{R,j}) \!-\!
(\dist_p(A^{n-1}_{R,j},B^{n-1}_{R,j}) \!+\!\dist_p(C^{n-1}_{R,j},D^{n-1}_{R,j})) > 0 \notag\\
 & \iff \,\,\,  \sqrt[p]{3.5^p+2.7^p} + \sqrt[p]{7.8^p+3.2^p} - \sqrt[p]{0^p+1^p} - \sqrt[p]{4.3^p+6.9^p} \, > 0.
 \label{eqn:LB:Lp1}
\end{alignat}
For $p=\infty$, the inequality is satisfied as the left side equals $3.4$ when
distances are measured according to the $L_{\infty}$ metric. In order to
show that the inequality is also satisfied for every~$p\in\NN$ with~$p\ge 3$, we analyze by
how much the distances $\dist_p$ deviate from the distances $\dist_{\infty}$.
For~$p\in\NN$ with~$p\ge 3$, we obtain
\begin{equation}\textstyle
\begin{split}
  & \sqrt[p]{4.3^p+6.9^p} - 6.9
  = 6.9\cdot\left(\sqrt[p]{1+\left(\frac{4.3}{6.9}\right)^p} - 1\right)\\   
  \le &\, 6.9\cdot\left(\sqrt[3]{1+\left(\frac{4.3}{6.9}\right)^3} - 1\right)
  < 0.52.\label{eqn:LB:Lp2}
  \end{split}
\end{equation}
Hence,
\begin{alignat*}{1}
   & \sqrt[p]{3.5^p+2.7^p} + \sqrt[p]{7.8^p+3.2^p} - \sqrt[p]{0^p+1^p} - \sqrt[p]{4.3^p+6.9^p} \\
   \ge &\, 3.5+7.8-1-6.9 - 0.52 > 0,
\end{alignat*}
which proves that $A^{n-1}_{R,j}B^{n-1}_{R,j}C^{n-1}_{R,j}D^{n-1}_{R,j}$ is the short 
state for every~$p\in\NN$ with~$p\ge 3$.

Next we argue that also the 2-change in which $G_{n-1}^P$ changes its state from
$(L,L)$ to $(S,L)$ is improving. For this, the following inequality needs to be
verified for every~$p\in\NN$ with~$p\ge 3$: 
\begin{alignat*}{1}
&  \dist(A^{n-1}_{P,1},C^{n-1}_{P,1})+\dist(B^{n-1}_{P,1},D^{n-1}_{P,1})-
(\dist(A^{n-1}_{P,1},B^{n-1}_{P,1})-\dist(C^{n-1}_{P,1},D^{n-1}_{P,1}))   > 0\\
& \iff \,\,\,  \sqrt[p]{6.1^p+2.2^p} + \sqrt[p]{8.4^p+17.1^p} - \sqrt[p]{2.2^p+4.9^p} - \sqrt[p]{12.3^p+14.4^p}  > 0.
\end{alignat*}
As before, we obtain for~$p\in\NN$ with~$p\ge3$
\[\textstyle
  \sqrt[p]{2.2^p+4.9^p} - 4.9
  = 4.9\cdot\left(\sqrt[p]{1+\left(\frac{2.2}{4.9}\right)^p} \!-\! 1\right)   
  \le 4.9\cdot\left(\sqrt[3]{1+\left(\frac{2.2}{4.9}\right)^3} \!-\! 1\right)
  < 0.15
\]
and
\begin{alignat*}{1}\textstyle
  & \sqrt[p]{12.3^p+14.4^p} - 14.4
  = 14.4\cdot\left(\sqrt[p]{1+\left(\frac{12.3}{14.4}\right)^p} - 1\right)\\   
  \le &\,14.4\cdot\left(\sqrt[3]{1+\left(\frac{12.3}{14.4}\right)^3} - 1\right)
  < 2.53.
\end{alignat*}
This implies for~$p\in\NN$ with~$p\ge 3$
\begin{align*}
   & \quad \sqrt[p]{6.1^p+2.2^p} + \sqrt[p]{8.4^p+17.1^p} - \sqrt[p]{2.2^p+4.9^p} - \sqrt[p]{12.3^p+14.4^p}\\
   & \ge 6.1+17.1-4.9-0.15-14.4-2.53 > 0,
\end{align*}
which proves that the 2-change in which $G_{n-1}^P$ changes its state from
$(L,L)$ to $(S,L)$ is improving for every~$p\in\NN$ with~$p\ge 3$.

Next we show that 
the improvements made by the 2-changes in the sequence in which 
$G_{n-1}^P$ changes its state from $(S,L)$ to $(S,S)$ while resetting 
$G_{n-1}^R$ are positive. For this we need to verify the following inequalities
for every~$p\in\NN$ with~$p\ge 3$ (observe that these are exactly the same inequalities that we
have verified in Section~\ref{subsubsec:LB:L1Points} for the Manhattan metric):
{\small
\begin{alignat*}{5}
  \text{1)\,\,\,} &  \dist_p(A^{n-1}_{P,2},C^{n-1}_{P,2}) && + \dist_p(C^{n-1}_{R,2},D^{n-1}_{R,2})
                  && - \dist_p(A^{n-1}_{P,2},C^{n-1}_{R,2}) && - \dist_p(C^{n-1}_{P,2},D^{n-1}_{R,2}) \, && > \, 0\\
  \iff \,\,\, & \sqrt[p]{9.7^p+3.6^p} &&+ \sqrt[p]{4.3^p+6.9^p} &&- \sqrt[p]{0.3^p+1.7^p} &&- \sqrt[p]{14.3^p+1.6^p} &&\, > 0,\\                  
  \text{2)\,\,\,} &  \dist_p(B^{n-1}_{R,2},A^{n-1}_{R,2}) && + \dist_p(D^{n-1}_{P,2},B^{n-1}_{P,2})
                  && - \dist_p(B^{n-1}_{R,2},D^{n-1}_{P,2}) && - \dist_p(A^{n-1}_{R,2},B^{n-1}_{P,2}) && > \, 0\\
  \iff \,\,\, & \sqrt[p]{0.0^p+1.0^p} &&+ \sqrt[p]{8.7^p+14.3^p} &&- \sqrt[p]{1.5^p+7.1^p} &&- \sqrt[p]{7.2^p+6.2^p} &&\, > 0,\\
  \text{3)\,\,\,} &  \dist_p(B^{n-1}_{R,2},D^{n-1}_{P,2}) && + \dist_p(C^{n-1}_{R,1},D^{n-1}_{R,1})
                  && - \dist_p(B^{n-1}_{R,2},C^{n-1}_{R,1}) && - \dist_p(D^{n-1}_{P,2},D^{n-1}_{R,1}) && > \, 0\\
  \iff \,\,\, & \sqrt[p]{1.5^p+7.1^p} &&+ \sqrt[p]{4.3^p+6.9^p} &&- \sqrt[p]{3.5^p+3.7^p} &&- \sqrt[p]{9.3^p+3.9^p} &&\, > 0,\\
  \text{4)\,\,\,} &  \dist_p(B^{n-1}_{R,1},A^{n-1}_{R,1}) && + \dist_p(C^{n-1}_{P,2},D^{n-1}_{R,2})
                  && - \dist_p(B^{n-1}_{R,1},C^{n-1}_{P,2}) && - \dist_p(A^{n-1}_{R,1},D^{n-1}_{R,2}) && > \, 0\\
  \iff \,\,\, & \sqrt[p]{0.0^p+1.0^p} &&+ \sqrt[p]{14.3^p+1.6^p} &&- \sqrt[p]{6.5^p+1.6^p} &&- \sqrt[p]{7.8^p+4.2^p} &&\, > 0,\\
  \text{5)\,\,\,} &  \dist_p(A^{n-1}_{P,2},C^{n-1}_{R,2}) && + \dist_p(B^{n-1}_{P,2},A^{n-1}_{R,2})
                  && - \dist_p(A^{n-1}_{P,2},B^{n-1}_{P,2}) && - \dist_p(C^{n-1}_{R,2},A^{n-1}_{R,2}) && > \, 0\\                                    
  \iff \,\,\, & \sqrt[p]{0.3^p+1.7^p} &&+ \sqrt[p]{7.2^p+6.2^p} &&- \sqrt[p]{4.0^p+5.2^p} &&- \sqrt[p]{3.5^p+2.7^p} &&\, > 0,\\
  \text{6)\,\,\,} &  \dist_p(C^{n-1}_{R,1},B^{n-1}_{R,2}) && + \dist_p(A^{n-1}_{R,1},D^{n-1}_{R,2})
                  && - \dist_p(C^{n-1}_{R,1},A^{n-1}_{R,1}) && - \dist_p(B^{n-1}_{R,2},D^{n-1}_{R,2}) && > \, 0\\
  \iff \,\,\, & \sqrt[p]{3.5^p+3.7^p} &&+ \sqrt[p]{7.8^p+4.2^p} &&- \sqrt[p]{3.5^p+2.7^p} &&- \sqrt[p]{7.8^p+3.2^p} &&\, > 0,\\
  \text{7)\,\,\,} &  \dist_p(C^{n-1}_{P,2},B^{n-1}_{R,1}) && + \dist_p(D^{n-1}_{P,2},D^{n-1}_{R,1})
                  && - \dist_p(C^{n-1}_{P,2},D^{n-1}_{P,2}) && - \dist_p(B^{n-1}_{R,1},D^{n-1}_{R,1}) && > \, 0\\
  \iff \,\,\, & \sqrt[p]{6.5^p+1.6^p} &&+ \sqrt[p]{9.3^p+3.9^p} &&- \sqrt[p]{5.0^p+5.5^p} &&- \sqrt[p]{7.8^p+3.2^p} &&\, > 0.\\
\end{alignat*}}
These inequalities can be checked in the same way as Inequality~\eqref{eqn:LB:Lp1}.
Details can be found in Appendix~\ref{app:Inequalities}.

It remains to be shown that the sequences in which gadget 
$G^R_{n-2}$ resets gadget $G^P_{n-1}$ from $(S,S)$ to $(L,L)$, are 
improving. As the coordinates of the points in the two blocks of gadget 
$G^R_{n-2}$ are the same, the inequalities for both sequences are also 
identical. We need to verify the following inequalities:
{\small
\begin{alignat*}{5}
\text{1)\,\,\,} &  \dist_p(A^{n-2}_{R,2},C^{n-2}_{R,2}) && + \dist_p(C^{n-1}_{P,2},D^{n-1}_{P,2})
&& - \dist_p(A^{n-2}_{R,2},C^{n-1}_{P,2}) && - \dist_p(C^{n-2}_{R,2},D^{n-1}_{P,2}) \, && > \, 0\\
  \iff \,\,\, & \sqrt[p]{27.3^p+21.06^p} &&+ \sqrt[p]{5.0^p+5.5^p} &&- \sqrt[p]{13.7^p+0.9^p} &&- \sqrt[p]{18.6^p+16.46^p} &&\! > 0,\\                  
  \text{2)\,\,\,} &  \dist_p(B^{n-1}_{P,2},A^{n-1}_{P,2}) && + \dist_p(D^{n-2}_{R,2},B^{n-2}_{R,2})
&& - \dist_p(B^{n-1}_{P,2},D^{n-2}_{R,2}) && - \dist_p(A^{n-1}_{P,2},B^{n-2}_{R,2}) \, && > \, 0\\
  \iff \,\,\, & \sqrt[p]{4.0^p+5.2^p} &&+ \sqrt[p]{60.84^p+24.96^p} &&- \sqrt[p]{60.84^p+23.06^p} &&- \sqrt[p]{4.0^p+3.3^p} &&\! > 0,\\
  \text{3)\,\,\,} &  \dist_p(B^{n-1}_{P,2},D^{n-2}_{R,2}) && + \dist_p(C^{n-1}_{P,1},D^{n-1}_{P,1})
&& - \dist_p(B^{n-1}_{P,2},C^{n-1}_{P,1}) && - \dist_p(D^{n-2}_{R,2},D^{n-1}_{P,1}) \, && > \, 0\\
  \iff \,\,\, & \sqrt[p]{60.84^p+23.06^p} &&+ \sqrt[p]{12.3^p+14.4^p} &&- \sqrt[p]{15.8^p+11.8^p} &&- \sqrt[p]{57.34^p+20.46^p} &&\! > 0,\\
  \text{4)\,\,\,} &  \dist_p(B^{n-1}_{P,1},A^{n-1}_{P,1}) && + \dist_p(C^{n-2}_{R,2},D^{n-1}_{P,2})
&& - \dist_p(B^{n-1}_{P,1},C^{n-2}_{R,2}) && - \dist_p(A^{n-1}_{P,1},D^{n-1}_{P,2}) \, && > \, 0\\
  \iff \,\,\, & \sqrt[p]{2.2^p+4.9^p} &&+ \sqrt[p]{18.6^p+16.46^p} &&- \sqrt[p]{15.4^p+16.26^p} &&- \sqrt[p]{1.0^p+4.7^p} &&\! > 0,\\
\text{5)\,\,\,} &  \dist_p(A^{n-2}_{R,2},C^{n-1}_{P,2}) && + \dist_p(B^{n-2}_{R,2},A^{n-1}_{P,2})
&& - \dist_p(A^{n-2}_{R,2},B^{n-2}_{R,2}) && - \dist_p(C^{n-1}_{P,2},A^{n-1}_{P,2}) \, && > \, 0\\
  \iff \,\,\, & \sqrt[p]{13.7^p+0.9^p} &&+ \sqrt[p]{4.0^p+3.3^p} &&- \sqrt[p]{0.0^p+7.8^p} &&- \sqrt[p]{9.7^p+3.6^p} &&\! > 0,\\
  \text{6)\,\,\,} &  \dist_p(C^{n-1}_{P,1},B^{n-1}_{P,2}) && + \dist_p(A^{n-1}_{P,1},D^{n-1}_{P,2})
&& - \dist_p(C^{n-1}_{P,1},A^{n-1}_{P,1}) && - \dist_p(B^{n-1}_{P,2},D^{n-1}_{P,2}) \, && > \, 0\\
  \iff \,\,\, & \sqrt[p]{15.8^p+11.8^p} &&+ \sqrt[p]{1.0^p+4.7^p} &&- \sqrt[p]{6.1^p+2.2^p} &&- \sqrt[p]{8.7^p+14.3^p} &&\! > 0,\\
  \text{7)\,\,\,} &  \dist_p(C^{n-2}_{R,2},B^{n-1}_{P,1}) && + \dist_p(D^{n-2}_{R,2},D^{n-1}_{P,1})
&& - \dist_p(C^{n-2}_{R,2},D^{n-2}_{R,2}) && - \dist_p(B^{n-1}_{P,1},D^{n-1}_{P,1}) \, && > \, 0\\
\iff \,\,\, & \sqrt[p]{15.4^p+16.26^p} &&+ \sqrt[p]{57.34^p+20.46^p} &&- \sqrt[p]{33.54^p+53.82^p} &&- \sqrt[p]{8.4^p+17.1^p} &&\! > 0.
  \end{alignat*}}
These inequalities can be checked in the same way as
Inequality~\eqref{eqn:LB:Lp1} was checked; see the details in Appendix~\ref{app:Inequalities}.

\section{Expected Number of 2-Changes}
\label{sec:runningTime}

We analyze the expected number of 2-changes on random $d$-dimensional 
Manhattan and Euclidean instances, for an arbitrary constant dimension 
$d\ge2$. One possible approach for this is to analyze the 
improvement made by the smallest improving 2-change: If the smallest 
improvement is not too small, then the number of improvements cannot be 
large. This approach yields polynomial bounds, but in our analysis, we 
consider not only a single step but certain pairs of steps. We show 
that the smallest improvement made by any such pair is typically much 
larger than the improvement made by a single step, which yields 
better bounds. Our approach is not restricted to pairs of steps. One 
could also consider sequences of steps of length $k$ for any small 
enough $k$. In fact, for general $\phi$-perturbed graphs with $m$ 
edges, we consider sequences of length $\sqrt{\log{m}}$ in~\cite{EnglertRV07}.
The reason why 
we can analyze longer sequences for general graphs is that these inputs 
possess more randomness than $\phi$-perturbed Manhattan and Euclidean 
instances because every edge length is a random variable that is 
independent of the other edge lengths. Hence, the analysis for general 
$\phi$-perturbed graphs demonstrates the limits of our approach under 
optimal conditions. For Manhattan and Euclidean instances, the gain of 
considering longer sequences is small due to the dependencies between 
the edge lengths.

\subsection{Manhattan Instances}
\label{subsec:UB:L1}

In this section, we analyze the expected number of 2-changes on 
$\phi$-perturbed Manhattan instances. First we prove a weaker bound 
than the one in Theorem~\ref{theorem:runningTime1} in a slightly different model.
In this model the position of a vertex~$v_i$ is not chosen according to a density
function~$f_i\colon[0,1]^d \to[0,\phi]$, but instead each of its~$d$ coordinates
is chosen independently. To be more precise, for every~$j\in[d]$, there is a density
function~$f_i^j\colon[0,1] \to[0,\phi]$ according to which the $j$th coordinate of~$v_i$ is chosen.

The proof of this 
weaker bound illustrates our approach and reveals the problems one has 
to tackle in order to improve the upper bounds. It is solely based on 
an analysis of the smallest improvement made by any of the possible 
2-Opt steps. If with high probability every 2-Opt step decreases the 
tour length by an inverse polynomial amount, then with high probability 
only polynomially many 2-Opt steps are possible before a local optimum 
is reached. In fact, the probability that there exists a 2-Opt step that decreases
the tour length by less than an inverse polynomial amount is so small that
(as we will see) even the expected number of possible 2-Opt steps can be bounded polynomially.

\begin{theorem}
\label{theorem:L1weak}
Starting with an arbitrary tour, the expected number of steps performed 
by 2-Opt on $\phi$-perturbed Manhattan instances with $n$ vertices is 
$O(n^6\cdot\log{n}\cdot\phi)$ if the coordinates of every vertex are drawn independently.
\end{theorem}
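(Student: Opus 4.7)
The high-level strategy is to bound the smallest positive improvement
\[
\Delta^{*} := \min\{\Delta(S) : S \text{ is a 2-change with } \Delta(S) > 0\}
\]
over \emph{all} potential 2-changes, and then translate the bound into a bound on the number of steps $T$ via the deterministic inequality $T\le L/\Delta^{*}$. Here $L$ is the initial tour length, which is at most $dn$ since every $L_1$-distance in $[0,1]^d$ is at most~$d$. Together with the crude bound $T\le n!$, a tail integration of $\Pr{\Delta^{*}\le\varepsilon}$ will give the claimed expectation.

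\emph{Step 1: structural decomposition.} For a fixed 2-change $S$ that removes edges $\{u_1,u_2\},\{v_1,v_2\}$ and inserts edges $\{u_1,v_1\},\{u_2,v_2\}$, the Manhattan improvement splits coordinate-wise:
\[
\Delta(S)=\sum_{j=1}^{d}\Delta_j(S), \quad \Delta_j(S)=|u_1^j-u_2^j|+|v_1^j-v_2^j|-|u_1^j-v_1^j|-|u_2^j-v_2^j|.
\]
For every pair $(w,j)$ with $w\in\{u_1,u_2,v_1,v_2\}$ and $j\in[d]$, let $S_{w,j}$ be the event that $w^j$ lies strictly between the two $j$-coordinates with which it is compared in $\Delta_j$; e.g.\ $S_{u_1,j}=\{\min(u_2^j,v_1^j)<u_1^j<\max(u_2^j,v_1^j)\}$. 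A short case analysis on which of $u_1^j,u_2^j,v_1^j,v_2^j$ is smallest shows that whenever none of the four events $S_{\cdot,j}$ occurs, the four absolute values telescope and $\Delta_j=0$. Hence if no event $S_{w,j}$ occurs at all, then $\Delta(S)=0$, giving
\[
\{\Delta(S)\in(0,\varepsilon)\}\;\subseteq\;\bigcup_{(w,j)}S_{w,j}.
\]

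\emph{Step 2: linearisation on each slab.} On the event $S_{w,j}$, the piecewise-linear function $\Delta(S)$ has partial derivative $\partial\Delta(S)/\partial w^{j}=\pm2$. Because the theorem's model draws each of the $4d$ coordinates of $u_1,u_2,v_1,v_2$ independently, we may condition on the other $4d-1$ coordinates. On $S_{w,j}$ the function $w^j\mapsto\Delta(S)$ is then linear with slope $\pm 2$, so the preimage of $(0,\varepsilon)$ is an interval of length at most $\varepsilon/2$. Since the density of $w^j$ is bounded by $\phi$, we obtain $\Pr{\Delta(S)\in(0,\varepsilon),\,S_{w,j}}\le\phi\varepsilon/2$. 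Summing over the $4d$ pairs yields $\Pr{\Delta(S)\in(0,\varepsilon)}\le 2d\phi\varepsilon$.

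\emph{Step 3: union bound and tail integration.} There are $O(n^{4})$ possible 2-changes, so $\Pr{\Delta^{*}\le\varepsilon}=O(n^{4}\phi\varepsilon)$ (with $d$ absorbed into the constant). Combining $T\le dn/\Delta^{*}$ with the crude bound $T\le n!$,
\[
\Ex{T}\le\int_{0}^{n!}\Pr{\Delta^{*}\le dn/t}\,dt\le\int_{0}^{t_0}1\,dt+\int_{t_0}^{n!}\frac{c\,n^{5}\phi}{t}\,dt,
\]
where $t_{0}=\Theta(n^{5}\phi)$ is the crossover point at which the min-bound switches. The first integral contributes $O(n^{5}\phi)$ and the second contributes $O(n^{5}\phi\cdot\log(n!/t_{0}))=O(n^{6}\phi\log n)$, which is the bound claimed in the theorem.

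\emph{Main obstacle.} The only subtle point is the vanishing claim in Step~1: one must check that whenever all $4d$ signed-derivative coefficients of $\Delta$ simultaneously vanish, $\Delta$ is identically $0$ on that piece, not merely constant. Without this, the ``outer'' pieces of the piecewise-linear function $\Delta$ could take nonzero values in $(0,\varepsilon)$, forcing a weaker per-2-change bound. The observation reduces to four symmetric cases depending on which of $u_1^j,u_2^j,v_1^j,v_2^j$ is smallest, in each of which the four absolute values cancel.
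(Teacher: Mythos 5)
Your proposal is correct, and its overall skeleton is exactly the paper's: bound the smallest improvement $\Delta^{*}$ over all $O(n^4)$ possible 2-changes, use $T\le dn/\Delta^{*}$ together with the crude bound $T\le n!$, and integrate the tail to get $O(n^6\log n\cdot\phi)$. Where you genuinely differ is in the per-step anti-concentration bound. The paper enumerates all $(4!)^d$ coordinate orderings, writes $\Delta(S)$ on each ordering as a fixed linear combination with integer coefficients, dismisses the identically-zero combinations, and applies deferred decisions to one variable with nonzero coefficient, giving $(4!)^d\varepsilon\phi$ per step. You instead cover the event $\{\Delta(S)\in(0,\varepsilon)\}$ by the $4d$ ``betweenness'' events $S_{w,j}$ and use that, conditioned on the other $4d-1$ coordinates, $\Delta(S)$ is linear in $w^j$ with slope exactly $\pm 2$ on $S_{w,j}$, giving $2d\phi\varepsilon$ per step. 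The price is that you must prove the covering claim — that if no coordinate is strictly sandwiched between its two comparison values then every $\Delta_j$ telescopes to $0$ — and this claim is indeed correct for configurations with pairwise distinct $j$-coordinates (a case analysis on the smallest of $u_1^j,u_2^j,v_1^j,v_2^j$, as you indicate, verifies it); with ties it can fail (e.g.\ $u_1^j=v_1^j$, $u_2^j=v_2^j$ gives $\Delta_j=2|u_1^j-u_2^j|$ with no sandwiching), but ties have probability zero under the continuous coordinate densities, a one-sentence remark worth adding. Your decomposition buys a constant that is linear rather than exponential in $d$ (and sidesteps the paper's bookkeeping over orderings), though for constant $d$ both yield the same $O(n^6\log n\cdot\phi)$ bound; both arguments rely in the same essential way on the coordinate-wise independence assumed in this theorem.
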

\begin{proof}
We will see below that, 
in order to prove the desired bound on the expected convergence time, 
we only need two simple observations. First, the initial tour can have 
length at most $dn$ as the number of edges is $n$ and every edge has 
length at most $d$. And second, every 2-Opt step decreases the length 
of the tour by an inverse polynomial amount with high probability. The 
latter can be shown by a union bound over all possible 2-Opt steps. 
Consider a fixed 2-Opt step $S$, let $e_1$ and $e_2$ denote the edges 
removed from the tour in step $S$, and let $e_3$ and $e_4$ denote the 
edges added to the tour. Then the improvement $\Delta(S)$ of step $S$ 
can be written as
\begin{equation}
  \label{eqn:improvementStep} 
  \Delta(S) = \dist(e_1)+\dist(e_2)-\dist(e_3)-\dist(e_4). 
\end{equation}
Without loss of generality let $e_1=(v_1,v_2)$ be the edge between the 
vertices $v_1$ and $v_2$, and let $e_2=(v_3,v_4)$, $e_3=(v_1,v_3)$, and 
$e_4=(v_2,v_4)$. Furthermore, for $i\in\{1,\ldots4\}$, let 
$x^i\in\RR^d$ denote the coordinates of vertex $v_i$. Then the 
improvement $\Delta(S)$ of step $S$ can be written as
\[
\Delta(S) = \sum_{i=1}^d\left(|x^1_i-x^2_i|+|x^3_i-x^4_i|-|x^1_i-x^3_i|-|x^2_i-x^4_i|\right).
\]
Depending on the order of the coordinates, $\Delta(S)$ can be written 
as some linear combination of the coordinates. If, \eg, for all $i\in[d]$, 
$x^1_i\ge x^2_i\ge x^3_i\ge x^4_i$, then the improvement $\Delta(S)$ 
can be written as $\sum_{i=1}^d(-2x^2_i+2x^3_i)$. There are $(4!)^d$ 
such orders and each one gives rise to a linear combination of the 
$x_i^j$'s with integer coefficients.

For each of these linear 
combinations, the probability that it takes a value in the interval 
$(0,\varepsilon]$ is bounded from above by $\varepsilon\phi$.
To see this, we distinguish between two cases:
If all coefficients in the linear combination are zero then the probability that
the linear combination takes a value in the interval~$(0,\varepsilon]$
is zero. If at least one coefficient is nonzero then we can apply the principle
of deferred decisions (see, \eg,~\cite{MotwaniR95}). Let~$x_i^j$ be a variable
that has a nonzero coefficient~$\alpha$ and assume that all random variables except for~$x_i^j$
are already drawn. Then, in order for the linear combination to take a value in the
interval~$(0,\varepsilon]$, the random variable~$x_i^j$ has to take a value in a fixed
interval of length~$\varepsilon/|\alpha|$. As the density of~$x_i^j$ is bounded from above by~$\phi$
and~$\alpha$ is a nonzero integer, the probability of this event is at most~$\varepsilon\phi$.
 
Since $\Delta(S)$ 
can only take a value in the interval $(0,\varepsilon]$ if one of the 
linear combinations takes a value in this interval, the probability of 
the event $\Delta(S)\in(0,\varepsilon]$ can be upper bounded by 
$(4!)^d\varepsilon\phi$.

Let $\Delta_{\min}$ denote the improvement of the smallest improving 
2-Opt step $S$, \ie, $\Delta_{\min} = 
\min\{\Delta(S)\mid\Delta(S)>0\}$. We can estimate $\Delta_{\min}$ by a 
union bound, yielding
\[
       \Pr{\Delta_{\min}\le\varepsilon}  
   \le (4!)^d\varepsilon n^4\phi
\]
as there are at most $n^4$ different 2-Opt steps. Let $T$ denote the 
random variable describing the number of 2-Opt steps before a local 
optimum is reached. Observe that $T$ can only exceed a given number 
$t$ if the smallest improvement $\Delta_{\min}$ is less than $dn/t$, 
and hence
\[
   \Pr{T\ge t} \le \Pr{\Delta_{\min}\le\frac{dn}{t}} \le \frac{d(4!)^dn^5\phi}{t}. 
\]
Since there are at most $n!$ different TSP tours and none of these 
tours can appear twice during the local search, $T$ is always bounded 
by $n!$. Altogether, we can bound the expected value of $T$ by
\[
  \Ex{T} = \sum_{t=1}^{n!}\Pr{T\ge t}
  \le \sum_{t=1}^{n!} \frac{d(4!)^dn^5\phi}{t}.
\]
Since we assumed the dimension $d$ to be a constant, bounding the 
$n$-th harmonic number by $\ln(n)+1$ and using $\ln(n!)=O(n\log{n})$ 
yields
\[
  \Ex{T} \le d(4!)^dn^5\phi (\ln(n!)+1) = O(n^6\cdot\log{n}\cdot\phi).\tag*{\qed}
\]
\end{proof}

The bound in Theorem~\ref{theorem:L1weak} is only based on the smallest 
improvement $\Delta_{\min}$ made by any of the 2-Opt steps. 
Intuitively, this is too pessimistic since most of the steps performed 
by 2-Opt yield a larger improvement than $\Delta_{\min}$. In 
particular, two consecutive steps yield an improvement of at least 
$\Delta_{\min}$ plus the improvement $\Delta_{\min}'$ of the second 
smallest step. This observation alone, however, does not suffice to 
improve the bound substantially. Instead, we show in Lemma~\ref{lemma:numberLinkedPairs} 
that we can regroup the 2-changes to 
pairs such that each pair of 2-changes is \emph{linked} by an edge, 
\ie, one edge added to the tour in the first 2-change is removed from 
the tour in the second 2-change. Then we analyze the smallest 
improvement made by any pair of linked 2-Opt steps. Obviously, this 
improvement is at least $\Delta_{\min}+\Delta_{\min}'$ but one can hope 
that it is much larger because it is unlikely that the 2-change that 
yields the smallest improvement and the 2-change that yields the second 
smallest improvement form a pair of linked steps. We show that this is 
indeed the case and use this result to prove the bound on the expected 
length of the longest path in the state graph of 2-Opt on 
$\phi$-perturbed Manhattan instances claimed in 
Theorem~\ref{theorem:runningTime1}.

\subsubsection{Construction of pairs of linked 2-changes}
\label{subsubsec:LinkedPairs}

Consider an arbitrary sequence of length $t$ of consecutive 2-changes. 
The following lemma guarantees that the number of disjoint linked pairs 
of 2-changes in every such sequence increases linearly with the length 
$t$.
\begin{lemma}
\label{lemma:numberLinkedPairs}
In every sequence of $t$ consecutive 2-changes, the number of disjoint 
pairs of 2-changes that are linked by an edge, \ie, pairs such that 
there exists an edge added to the tour in the first 2-change of the 
pair and removed from the tour in the second 2-change of the pair, is 
at least $(2t-n)/7$.
\end{lemma}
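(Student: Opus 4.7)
The plan is to encode the 2-changes and their links as an auxiliary multigraph and reduce the claim to a matching estimate. I would build a multigraph $H$ on vertex set $\{S_1,\ldots,S_t\}$ (the 2-changes of the sequence), in which each \emph{link event}---a single tour edge that is added at some step $S_i$ and then removed at some later step $S_j$---contributes one multi-edge $\{S_i,S_j\}$. The disjoint linked pairs requested by the lemma are then precisely a matching in $H$, so the task reduces to lower-bounding the matching number of $H$.

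The first step is the edge count $|E(H)| \ge 2t - n$. To each of the $2t$ add events (two per 2-change) I would associate its \emph{next removal}, namely the first later step (if any) that removes the edge in question; whenever this next removal exists, it contributes one multi-edge to $H$. Two distinct add events cannot share the same next removal, because between two successive additions of the same edge there must lie a removal, and that removal is the next removal of the earlier addition. The add events \emph{without} a next removal are exactly those whose added edge is still present in the final tour, so there are at most $n$ of them. Hence at least $2t - n$ add events produce a link event, proving $|E(H)| \ge 2t - n$.

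The second step is a degree bound and a greedy matching argument. Each vertex $S_k$ has multi-degree at most $4$ in $H$: it appears as adder in at most two multi-edges (since it adds only two edges) and, by the injectivity of the next-removal map, as remover in at most two multi-edges (since each of its two removed edges can complete at most one link event). A greedy matching---process multi-edges in arbitrary order and insert $\{S_i,S_j\}$ into the matching whenever both endpoints are still free---then suffices. Adding one such edge saturates two vertices each of multi-degree at most $4$, so it blocks at most $3+3=6$ further multi-edges. Each matched edge thus accounts for at most $7$ multi-edges of $H$, yielding a matching of size at least $|E(H)|/7 \ge (2t-n)/7$.

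The main subtlety, and the only place where I expect care to be needed, is the combinatorial bookkeeping: a tour edge may be added and removed several times during a long sequence of 2-changes, and it is precisely the ``next removal'' convention that both makes the assignment from add events to multi-edges injective and gives the degree bound $4$ (rather than something worse). Once this set-up is in place, the two ingredients---the edge count and the greedy matching estimate---combine immediately to the claimed bound.
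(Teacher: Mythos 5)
Your proof is correct and follows essentially the same route as the paper: your ``next removal'' multi-edges are exactly the paper's list ${\cal L}$ of linked pairs, the bound $|E(H)|\ge 2t-n$ uses the same observation that only the last addition of each of the $n$ final-tour edges fails to produce a link, and your degree-$4$ bound together with the greedy matching losing a factor $7$ mirrors the paper's claim that each pair conflicts with at most $6$ others. The multigraph/matching phrasing is just a repackaging of the paper's argument, so nothing further is needed.
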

\begin{proof}
Let $S_1,\ldots,S_t$ denote an arbitrary sequence of consecutive
2-changes. The sequence is processed step by step and a list ${\cal L}$
of linked pairs of 2-changes is created. The pairs in~${\cal L}$ are not
necessarily disjoint. Hence, after the list has been created, pairs
have to be removed from the list until there are no non-disjoint pairs
left. Assume that the 2-changes $S_1,\ldots,S_{i-1}$ have already been
processed and that now 2-change $S_i$ has to be processed. Assume
further that in step $S_i$ the edges $e_1$ and $e_2$ are exchanged with
the edges $e_3$ and $e_4$. Let $j$ denote the smallest index with $j>i$
such that edge $e_3$ is removed from the tour in step $S_j$ if such a
step exists. In this case, the pair $(S_i,S_j)$ is added to the
list ${\cal L}$. Analogously, let $j'$ denote the smallest
index with $j'>i$ such that edge $e_4$ is removed from the tour in step
$S_{j'}$ if such a step exists. In this case, also the pair
$(S_i,S_{j'})$ is added to the list ${\cal L}$.

After the sequence has been processed completely, each pair in ${\cal
L}$ is linked by an edge but we still have to identify a subset ${\cal
L}'$ of ${\cal L}$ consisting only of pairwise disjoint pairs. This
subset is constructed in a greedy fashion. We process the list ${\cal
L}$ step by step, starting with an empty list ${\cal L}'$. For each
pair in ${\cal L}$, we check whether it is disjoint from all pairs that
have already been inserted into ${\cal L}'$ or not. In the former case,
the current pair is inserted into ${\cal L'}$. This way, we obtain a
list ${\cal L}'$ of disjoint pairs such that each pair is linked by an
edge. The number of pairs in ${\cal L}$ is at least $2t-n$ because
each of the $t$ steps gives rise to $2$ pairs, unless an edge
is added to the tour that is never removed again. The tour~$C$ obtained after the 2-changes~$S_1,\ldots,S_t$ contains exactly~$n$ edges. For every edge~$e\in C$, only the last step in which~$e$ enters the tour (if such a step exists) does not create a pair of linked 2-changes involving~$e$.

Each 2-change occurs in at most~$4$ different pairs in ${\cal L}$. In order to see this, consider a 2-change~$S$ in which the edges~$e_1$ and~$e_2$ are exchanged with the edges~$e_3$ and~$e_4$. Then~${\cal L}$ contains the following pairs involving~$S$ (if they exist): $(S,S')$ where~$S'$ is either the first step after~$S$ in which~$e_3$ gets removed or the first step after~$S$ in which~$e_4$ gets removed, and $(S',S)$ where~$S'$ is either the last step before~$S$ in which~$e_1$ enters the tour or the last step before~$S$ in which~$e_2$ enters the tour. With similar reasoning, one can argue that each pair in
${\cal L}$ is non-disjoint from at most $6$ other pairs in ${\cal L}$.
This implies that ${\cal L}$ contains at most $7$ times as many pairs
as ${\cal L'}$, which concludes the proof.
\end{proof}

Consider a fixed pair of 2-changes linked by an edge. Without loss of 
generality assume that in the first step the edges $\{v_1,v_2\}$ and 
$\{v_3,v_4\}$ are exchanged with the edges $\{v_1,v_3\}$ and 
$\{v_2,v_4\}$, for distinct vertices $v_1,\ldots,v_4$. Also without 
loss of generality assume that in the second step the edges 
$\{v_1,v_3\}$ and $\{v_5,v_6\}$ are exchanged with the edges 
$\{v_1,v_5\}$ and $\{v_3,v_6\}$. However, note that the vertices $v_5$ 
and $v_6$ are not necessarily distinct from the vertices $v_2$ and 
$v_4$. We distinguish between three different types of pairs.

\begin{figure}[H]
\begin{center}
\includegraphics[scale=0.9]{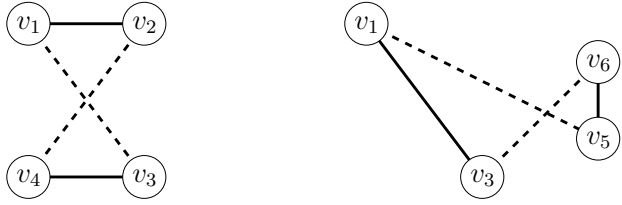}
\caption{A pair of type~0.}         
\label{fig:Type1}
\end{center}
\end{figure}

\begin{figure}[H]
\begin{center}
\includegraphics[scale=0.9]{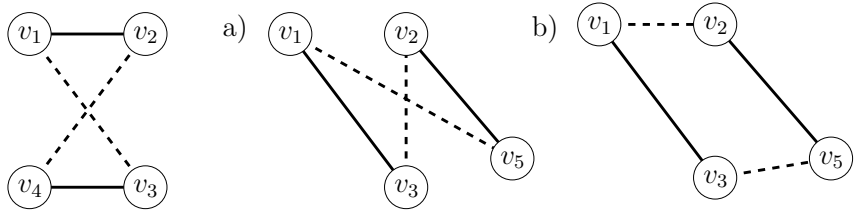}
\caption{Pairs of type~1.}
\label{fig:Type2}
\end{center}
\end{figure}

\begin{itemize}
  \setlength{\itemsep}{0em}
  \item pairs of type 0: $\left|\{v_2,v_4\}\cap\{v_5,v_6\}\right|=0$. This case is
  illustrated in Figure~\ref{fig:Type1}.
  \item pairs of type 1: $\left|\{v_2,v_4\}\cap\{v_5,v_6\}\right|=1$. We can assume 
  \wlg that $v_2\in\{v_5,v_6\}$. We have to distinguish 
  between two subcases: a) The edges $\{v_1,v_5\}$ and $\{v_2,v_3\}$ 
  are added to the tour in the second step.\ b) The edges $\{v_1,v_2\}$ 
  and $\{v_3,v_5\}$ are added to the tour in the second step. These 
  cases are illustrated in Figure~\ref{fig:Type2}.
  \item pairs of type 2: $\left|\{v_2,v_4\}\cap\{v_5,v_6\}\right|=2$. The case 
  $v_2=v_5$ and $v_4=v_6$ cannot appear as it would imply that in the first step
  the edges $\{v_1,v_2\}$ and $\{v_3,v_4\}$ are exchanged with the edges $\{v_1,v_3\}$ and 
  $\{v_2,v_4\}$, and that in the second step the edges $\{v_1,v_3\}$ and 
  $\{v_2,v_4\}$ are again exchanged with the edges $\{v_1,v_2\}$ and $\{v_3,v_4\}$.
  Hence, one of these 2-changes cannot be improving,
  and for pairs of this type we must have $v_2=v_6$ and $v_4=v_5$.
\end{itemize}

When distances are measured according to the Euclidean metric, pairs of 
type 2 result in vast dependencies and hence the probability that there 
exists a pair of this type in which both steps are improvements by at 
most $\varepsilon$ with respect to the Euclidean metric cannot easily be bounded. In order to reduce the number of cases we have to 
consider and in order to prepare for the analysis of $\phi$-perturbed 
Euclidean instances, we exclude pairs of type 2 from our probabilistic 
analysis by leaving out all pairs of type 2 when constructing the list 
${\cal L}$ in the proof of Lemma~\ref{lemma:numberLinkedPairs}. The following lemma shows that 
there are always enough pairs of type~0 or 1. 

\begin{lemma}
\label{lemma:numberLinkedPairs2}
In every sequence of $t$ consecutive 2-changes the number of disjoint 
pairs of 2-changes of type~0 or 1 is at least $t/7-3n/28$.
\end{lemma}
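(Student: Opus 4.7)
The plan is to adapt the construction in the proof of Lemma~\ref{lemma:numberLinkedPairs} so that type-$2$ pairs are never added to the list $\mathcal{L}$. Processing the sequence $S_1,\ldots,S_t$ in order, for each step $S_i$ with added edges $e_3,e_4$ I would still consider the first future steps $S_j$ and $S_{j'}$ removing $e_3$ and $e_4$, respectively, but add the pair $(S_i,S_j)$ (resp.\ $(S_i,S_{j'})$) to $\mathcal{L}$ only when it is of type~$0$ or~$1$. In particular, the case $j=j'$ collapses the two candidate pairs into a single type-$2$ pair, which is discarded.

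Next I would classify the $t$ steps by cardinalities $a$ (two added edges removed together by one later step, i.e., type-$2$ sources), $b$ (both removed, by distinct later steps), $c$ (exactly one added edge ever removed), and $d$ (neither removed). Since the final tour has $n$ edges, the number of unmatched adds satisfies $c+2d\le n$, and the number of type-$0/1$ pairs put into $\mathcal{L}$ is exactly $2b+c$. The non-disjointness argument of Lemma~\ref{lemma:numberLinkedPairs} still applies: each step belongs to at most $4$ pairs of $\mathcal{L}$, so each pair conflicts with at most $6$ others; the greedy selection returns at least $(2b+c)/7$ disjoint type-$0/1$ pairs.

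The essential new ingredient is an upper bound on $a$. I would use the improvement inequalities of 2-Opt: a type-$2$ pair $(S_i,S_j)$ on vertices $v_1,v_2,v_3,v_4$ transitions the $4$-cycle on these vertices between two of its three possible configurations, with $S_j$ adding the edges $\{v_1,v_4\},\{v_2,v_3\}$. If there were a chain $S_i\to S_j\to S_k$ of two type-$2$ pairs, all three steps would necessarily act on the same four vertices and their three strict improvement inequalities would telescope to $0>0$. Since time rules out cycles, the directed graph with arcs ``type-$2$ pairs'' is therefore a matching on the set of steps, yielding $2a\le t$, i.e., $a\le t/2$.

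Combining $2b+c\ge 2t-2a-(c+2d)$ with $c+2d\le n$ and $a\le t/2$ should give $2b+c\ge t-3n/4$, hence at least $(t-3n/4)/7=t/7-3n/28$ disjoint type-$0/1$ pairs after the greedy step. The main obstacle is the final constant: a naive combination of $a\le t/2$ and $c+2d\le n$ treats the two losses independently and only gives $(t-n)/7$, which is weaker by $n/28$ than the claimed bound. Closing this gap will require a careful joint accounting that uses the fact (implied by the matching structure of the type-$2$ digraph) that every type-$2$ target is itself neither a source of further type-$2$ waste nor an independent consumer of the $c+2d\le n$ budget beyond what is already charged.
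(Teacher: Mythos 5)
Your overall architecture (build $\mathcal{L}$ while discarding type-2 pairs, then divide by $7$ via the greedy disjointness argument) matches the paper, and your ``no chain of two type-2 pairs'' observation is correct: if $(S_i,S_j)$ and $(S_j,S_k)$ were both of type~2, then $S_k$ would restore on the four involved vertices either the matching removed by $S_j$ or the one removed by $S_i$, so the improvements of two, respectively all three, of these steps would sum to zero, contradicting that every step is improving; together with the fact that distinct type-2 pairs in $\mathcal{L}$ have distinct second steps, this indeed makes the type-2 pairs a partial matching, hence at most $t/2$ of them. But, as you concede yourself, this is where the proof stops short of the statement: combining $a\le t/2$ with $c+2d\le n$ only yields about $t-n$ type-0/1 pairs in $\mathcal{L}$ and hence $(t-n)/7$ disjoint ones, which is weaker than the claimed $t/7-3n/28$. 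This is a genuine gap and not a routine constant-chase: the configuration your accounting cannot handle is a type-2 pair $(S_i,S_j)$ whose second step $S_j$ adds two edges that are never removed again -- two steps are then spent without producing any type-0/1 pair, and a bound on the \emph{number} of type-2 pairs in terms of $t$ alone cannot separate this case from the harmless ones. The paper closes exactly this hole by a different charging: for every type-2 pair $(S_i,S_j)$ such that at least one edge added by $S_j$ is removed later, the next pair $(S_j,S_l)$ or $(S_j,S_{l'})$ lies in $\mathcal{L}$ and cannot be of type~2 (the same telescoping argument you use), so such type-2 pairs are charged injectively to type-0/1 pairs, while the remaining type-2 pairs number at most $n/2$ because each of their second steps contributes two never-removed edges of the final tour. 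This gives $x\le y+n/2$, which combined with $x+y\ge 2t-n$ yields $y\ge t-3n/4$ and then the stated $t/7-3n/28$. Your closing sentence gestures at such a joint accounting but does not supply it; the missing ingredient is precisely to exploit the pairs generated \emph{by the targets} $S_j$, not merely the matching structure of the sources.

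A secondary inaccuracy: ``the number of type-0/1 pairs put into $\mathcal{L}$ is exactly $2b+c$'' is not quite right. A step whose two added edges have distinct first-removal steps can still generate one type-2 pair, namely when the first step removing $e_3$ also removes $e_4$, the latter having been removed earlier and re-inserted in the meantime. This only lowers your count, so it compounds the constant problem, though it is easy to absorb once the charging argument above is in place.
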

\begin{proof}
We follow the construction in the proof of Lemma~\ref{lemma:numberLinkedPairs} with the only difference that we do not add pairs of type~2 to the list ${\cal L}$. Assume that in step $S_i$ the edges $\{v_1,v_2\}$ and $\{v_3,v_4\}$ 
are replaced by the edges $\{v_1,v_3\}$ and $\{v_2,v_4\}$, and that in 
step $S_j$ these edges are replaced by the edges $\{v_1,v_4\}$ and 
$\{v_2,v_3\}$.
Now consider the next step $S_l$ with $l>j$ in which the 
edge $\{v_1,v_4\}$ is removed from the tour, if such a step exists, and 
the next step $S_{l'}$ with $l'>j$ in which the edge $\{v_2,v_3\}$ is 
removed from the tour if such a step exists. Observe that neither 
$(S_j,S_l)$ nor $(S_j,S_{l'})$ can be a pair of type~2 because 
otherwise the improvement of one of the steps $S_i$, $S_j$, and $S_l$, 
or of one of the steps $S_i$, $S_j$, and $S_{l'}$, respectively, must be negative. In particular, we must have $l\neq l'$. 

Hence, for every pair of type~2 in ${\cal L}$ in which either $l$ or $l'$ is defined, we can identify a pair of type~0 or~1 in ${\cal L}$. Let $x$ denote the number of pairs of type~2 that are encountered in the construction of ${\cal L}$ in the proof of Lemma~\ref{lemma:numberLinkedPairs}. There can be at most $n/2$ pairs of type~2 for which neither $l$ or $l'$ is defined. Hence, the total number~$y$ of type~0 or~1 pairs in ${\cal L}$ must be at least $x-n/2$. This implies $x \le y+n/2$.

Let $z$ denote the number of pairs that are added to the list ${\cal L}$ in the proof of Lemma~\ref{lemma:numberLinkedPairs}. By definition, $z=x+y$. Furthermore, we argued that $z\ge 2t-n$. Altogether this implies $2t-n\le x+y\le 2y+n/2$, which in turn implies $y\ge t-3n/4$. Hence, if we do not add pairs of type~2 to the list ${\cal L}$, we still have at least $t-3n/4$ pairs in ${\cal L}$. By the same arguments as in the proof of Lemma~\ref{lemma:numberLinkedPairs}, the list ${\cal L'}$ contains at least $(t-3n/4)/7=t/7-3n/28$ pairs, which are all of type~0 or~1 and pairwise disjoint.
\end{proof}

\subsubsection{Analysis of pairs of linked 2-changes}

The following lemma gives a bound on the probability that there exists 
a pair of type~0 or~1 in which both steps are small improvements.
\begin{lemma}
\label{lemma:L1Improvements}
In a $\phi$-perturbed Manhattan instance with $n$ vertices, the 
probability that there exists a pair of type~0 or type~1 in which both 
2-changes are improvements by at most $\varepsilon$ is
$O(n^6\cdot\varepsilon^{2}\cdot\phi^2)$.
\end{lemma}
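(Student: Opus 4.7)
The plan is to apply a union bound over all pairs of linked 2-changes of type~0 or type~1, showing that for each such fixed pair the joint probability $\Pr{\Delta_1\in(0,\varepsilon]\text{ and }\Delta_2\in(0,\varepsilon]}$ is $O(\varepsilon^2\phi^2)$. A pair of type~0 is determined by an ordered tuple of six distinct vertices, and a pair of type~1 by an ordered tuple of five distinct vertices together with a subtype label, so the total number of pairs is $O(n^6)$, and the lemma then follows from the union bound.

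For a fixed pair I would extend the sign-pattern analysis from the proof of Theorem~\ref{theorem:L1weak} to both improvements simultaneously. Decomposing each improvement coordinate-wise, one sees that for any fixed joint ordering of the involved coordinates in all $d$ dimensions, both $\Delta_1$ and $\Delta_2$ become linear combinations of these coordinates with integer coefficients. The crucial structural observation is that in both type~0 and type~1 pairs, the vertex $v_4$ is involved in $\Delta_1$ but not in $\Delta_2$, and at least one of $v_5,v_6$ is involved in $\Delta_2$ but not in $\Delta_1$; since these are distinct vertices, a coordinate of the former and a coordinate of the latter are independent random variables. For each joint ordering that assigns a nonzero coefficient to some coordinate of $v_4$ in $\Delta_1$ and a nonzero coefficient to some coordinate of $v_5$ or $v_6$ in $\Delta_2$, the principle of deferred decisions applied successively to these two independent coordinates yields a joint probability of at most $(\varepsilon\phi)^2$. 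Summing over the $O(1)$ joint orderings gives $O(\varepsilon^2\phi^2)$ for a single pair.

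The main obstacle is handling the degenerate joint orderings in which the coefficient of every coordinate of $v_4$ in $\Delta_1$ vanishes (or, symmetrically, in which $\Delta_2$ is degenerate with respect to both $v_5$ and $v_6$). A direct inspection of the one-dimensional contribution $f(a,b,c,d)=|a-b|+|c-d|-|a-c|-|b-d|$ shows that $f$ is independent of $d$ precisely when $d$ lies outside $[\min(b,c),\max(b,c)]$, and that in this region $f$ simplifies either to zero or to a linear function of the remaining coordinates with integer coefficients. Using this, I would argue that either $\Delta_1$ vanishes identically in the degenerate region (so the joint event has zero contribution from this ordering) or an alternative pivot can be chosen: for type~0 pairs the second vertex $v_2$ unique to $\Delta_1$ and, analogously, $v_6$ for $\Delta_2$ serve as backups, so a pivot pair from $\{v_2,v_4\}\times\{v_5,v_6\}$ is always available. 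The more delicate situation arises for type~1 pairs, where $\Delta_1$ and $\Delta_2$ share three of their four involved vertices; there the backup pivot must come from the shared vertices $\{v_1,v_2,v_3\}$, and the technical core is to verify that in every joint ordering the pivot coordinates for $\Delta_1$ and $\Delta_2$ can be chosen from coordinates of different vertices, preserving the independence required for the two successive applications of deferred decisions.
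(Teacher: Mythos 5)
Your high-level plan (union bound over the $O(n^6)$ pairs and the constantly many coordinate orderings, then a per-ordering bound of $(\varepsilon\phi)^2$) matches the paper's, and your non-degenerate case is sound: if some coordinate of $v_4$ keeps a nonzero coefficient in $\Delta_1^{\sigma}$ and some coordinate of $v_5$ or $v_6$ keeps one in $\Delta_2^{\sigma}$, conditioning on all remaining coordinates gives the product bound, since those two coordinates never appear in the other improvement. The gap is in your treatment of the degenerate orderings, which is exactly where the real work of the lemma lies (the paper's Lemmas~\ref{lem:Classes0}, \ref{lem:Classes1} and \ref{lem:Classes1b}). Your dichotomy for type~0 --- ``either $\Delta_1$ vanishes identically, or a pivot pair from $\{v_2,v_4\}\times\{v_5,v_6\}$ survives'' --- is false. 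In a dimension $i$ with $x^2_i\ge x^1_i\ge x^3_i\ge x^4_i$ one gets $X_1^{\sigma_i,i}=-2x^1_i+2x^3_i\neq 0$, so both private coordinates $x^2_i,x^4_i$ cancel although the contribution does not vanish; and with, say, $x^5_i\ge x^1_i\ge x^3_i\ge x^6_i$ simultaneously $X_2^{\sigma_i,i}=2x^1_i-2x^3_i$, so $x^5_i,x^6_i$ cancel as well. If this happens in every dimension, $\Delta_1^{\sigma}\not\equiv 0$, $\Delta_2^{\sigma}\not\equiv 0$, and the two combinations involve exactly the same coordinates, so no choice of pivots --- not even ``from different vertices'', your proposed criterion for type~1 --- restores the conditional independence that your two successive applications of deferred decisions require: after conditioning on everything else, each improvement still depends on both remaining coordinates.

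What saves the lemma in these orderings is not independence at all but a sign fact: whenever the private coordinates cancel, the explicit case analysis forces each $X_2^{\sigma_i,i}$ to be either $0$ or exactly $-X_1^{\sigma_i,i}$, hence $\Delta_2^{\sigma}$ is $0$, $-\Delta_1^{\sigma}$, or linearly independent of $\Delta_1^{\sigma}$; in the first two situations the event $\Delta_1^{\sigma},\Delta_2^{\sigma}\in(0,\varepsilon]$ is simply impossible. This negation property is not automatic --- had the degenerate orderings produced $\Delta_2^{\sigma}=\Delta_1^{\sigma}$, the claimed $O(\varepsilon^2\phi^2)$ bound would fail --- and your proposal neither states nor proves it; for type~1 you explicitly defer the ``technical core'', and the criterion you propose there would not suffice even if verified. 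Finally, note that when the two combinations are linearly independent but share all their variables, a sequential pivot argument also breaks down; the paper handles every linearly independent pair uniformly via the change-of-variables bound of Lemma~\ref{lemma:ProbLinComb}, which needs no coordinate exclusive to one of the two combinations. So to complete your proof you would have to add both ingredients: the structural classification of the degenerate orderings (zero or exact negatives) and a two-dimensional change-of-variables (or determinant) argument replacing the purely sequential use of deferred decisions.
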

\begin{proof}
First, we consider pairs of type~0. We assume that in the first step 
the edges $\{v_1,v_2\}$ and $\{v_3,v_4\}$ are replaced by the edges 
$\{v_1,v_3\}$ and $\{v_2,v_4\}$ and that in the second step the edges 
$\{v_1,v_3\}$ and $\{v_5,v_6\}$ are replaced by the edges $\{v_1,v_5\}$ 
and $\{v_3,v_6\}$. For $j\in[6]$, let $x^j_i\in\RR^d$, $i=1,2,\ldots,d$, denote the 
$d$ coordinates of vertex $v_j$. Furthermore, let $\Delta_1$ denote the 
(possibly negative) improvement of the first step and let $\Delta_2$ 
denote the (possibly negative) improvement of the second step. 
The random variables $\Delta_1$ and $\Delta_2$ can be written as
\begin{align*}
   \Delta_1 & = \sum_{i=1}^d
   (|x^1_i-x^2_i|+|x^3_i-x^4_i|-|x^1_i-x^3_i|-|x^2_i-x^4_i|)
\intertext{and}
   \Delta_2 & = \sum_{i=1}^d
   (|x^1_i-x^3_i|+|x^5_i-x^6_i|-|x^1_i-x^5_i|-|x^3_i-x^6_i|).
\end{align*}

For any fixed order of the coordinates, $\Delta_1$ and $\Delta_2$ can 
be expressed as linear combinations of the coordinates with integer 
coefficients. For $i\in[d]$, let $\sigma_i$ denote an order of the 
coordinates $x_i^1,\ldots,x_i^6$, let 
$\sigma=(\sigma_1,\ldots,\sigma_d)$, and let $\Delta_1^{\sigma}$ and 
$\Delta_2^{\sigma}$ denote the corresponding linear combinations. We 
denote by ${\cal A}$ the event that both $\Delta_1$ and $\Delta_2$ take 
values in the interval $(0,\varepsilon]$, and we denote by ${\cal 
A}^{\sigma}$ the event that both linear combinations 
$\Delta_1^{\sigma}$ and $\Delta_2^{\sigma}$ take values in the interval 
$(0,\varepsilon]$. Obviously ${\cal A}$ can only occur if for at least 
one $\sigma$, the event ${\cal A}^{\sigma}$ occurs. Hence, we obtain 
\[
   \Pr{{\cal A}} \le \sum_{\sigma}\Pr{{\cal A}^{\sigma}}.
\]
Since there are $(6!)^d$ different orders $\sigma$, which is constant 
for constant dimension $d$, it suffices to show that for every tuple of 
orders $\sigma$, the probability of the event ${\cal A}^{\sigma}$ is 
bounded from above by $O(\varepsilon^2\phi^2)$. Then a union bound over 
all possible pairs of linked 2-changes of type~0 (there are fewer than~$n^6$ of them) and all possible orders~$\sigma$ 
(there is a constant number of them) yields the lemma for pairs of type~0.

We divide the set of possible pairs of linear combinations 
$(\Delta_1^{\sigma},\Delta_2^{\sigma})$ into three classes. We say that 
a pair of linear combinations belongs to class A if at least one of the 
linear combinations equals $0$, we say that it belongs to class B if 
$\Delta_1^{\sigma}=-\Delta_2^{\sigma}$, and we say that it belongs to 
class C if $\Delta_1^{\sigma}$ and $\Delta_2^{\sigma}$ are linearly 
independent. For tuples of orders $\sigma$ that yield pairs from class A,
the event ${\cal A}^{\sigma}$ cannot occur because the 
value of at least one linear combination is $0$. For tuples $\sigma$ that yield pairs from
class $B$, the event cannot occur either because either $\Delta_1^{\sigma}$
or $\Delta_2^{\sigma}=-\Delta_1^{\sigma}$ is at most $0$.
For tuples $\sigma$ that yield pairs from class C, we can apply 
Lemma~\ref{lemma:ProbLinComb} from Appendix~\ref{appendix:probTheory}, 
which shows that the probability of the event ${\cal A}^{\sigma}$ is 
bounded from above by $(\varepsilon\phi)^2$. Hence, we only need to 
show that every pair $(\Delta_1^{\sigma},\Delta_2^{\sigma})$ of linear 
combinations belongs either to class A, B, or C.

Consider a fixed tuple $\sigma=(\sigma_1,\ldots,\sigma_d)$ of orders. 
We split $\Delta_1^{\sigma}$ and $\Delta_2^{\sigma}$ into $d$ parts 
that correspond to the $d$ dimensions. To be precise, for $j\in[2]$, we 
write $\Delta_j^{\sigma}=\sum_{i\in[d]}X^{\sigma_i,i}_j$, where 
$X^{\sigma_i,i}_j$ is a linear combination of the variables 
$x^1_i,\ldots,x^6_i$. As an example let us consider the case $d=2$,
let the first order $\sigma_1$ be $x^1_1\le x^2_1\le x^3_1\le x^4_1\le x^5_1\le x^6_1$,
and let the second order $\sigma_2$ be $x^6_2\le x^5_2\le x^4_2\le x^3_2\le x^2_2\le x^1_2$.
Then we get
\begin{align*}
   \Delta_1^{\sigma} & = \sum_{i=1}^2 (|x^1_i-x^2_i|+|x^3_i-x^4_i|-|x^1_i-x^3_i|-|x^2_i-x^4_i|) \\
                     & = \overbrace{((x^2_1-x^1_1)+(x^4_1-x^3_1)-(x^3_1-x^1_1)-(x^4_1-x^2_1))}^{X_1^{\sigma_1,1}}\\
                     & \quad + \overbrace{((x^1_2-x^2_2)+(x^3_2-x^4_2)-(x^1_2-x^3_2)-(x^2_2-x^4_2))}^{X_1^{\sigma_2,2}}
\end{align*}
and
\begin{align*}
   \Delta_2^{\sigma} & = \sum_{i=1}^2 (|x^1_i-x^3_i|+|x^5_i-x^6_i|-|x^1_i-x^5_i|-|x^3_i-x^6_i|) \\
                     & = \overbrace{((x^3_1-x^1_1)+(x^6_1-x^5_1)-(x^5_1-x^1_1)-(x^6_1-x^3_1))}^{X_2^{\sigma_1,1}}\\
                     & \quad + \overbrace{((x^1_2-x^3_2)+(x^5_2-x^6_2)-(x^1_2-x^5_2)-(x^3_2-x^6_2))}^{X_2^{\sigma_2,2}}.
\end{align*}

If, for one $i\in[d]$, the pair~$(X_1^{\sigma_i,i},X_2^{\sigma_i,i})$ of linear combinations belongs to class C,
then also the pair $(\Delta_1^{\sigma},\Delta_2^{\sigma})$ belongs to class C because the sets of variables occurring
in $X_j^{\sigma_i,i}$ and $X_j^{\sigma_{i'},i'}$ are disjoint for $i\neq i'$. If for all $i\in[d]$ the pair of linear
combinations $(X_1^{\sigma_i,i},X_2^{\sigma_i,i})$ belongs to class A or B, then also the pair $(\Delta_1^{\sigma},\Delta_2^{\sigma})$
belongs either to class A or B. Hence, the following lemma directly implies that $(\Delta_1^{\sigma},\Delta_2^{\sigma})$
belongs to one of the classes A, B, or C.
\begin{lemma}\label{lem:Classes0}
For pairs of type~$0$ and for $i\in[d]$, the pair of linear 
combinations $(X_1^{\sigma_i,i},X_2^{\sigma_i,i})$ belongs either to 
class A, B, or C.
\end{lemma} 

\begin{proof}
Assume that the pair~$(X_1^{\sigma_i,i},X_2^{\sigma_i,i})$ of linear combinations 
is linearly dependent for a 
fixed order $\sigma_i$. Observe that this can only happen 
if the sets of variables occurring in $X_1^{\sigma_i,i}$
and $X_2^{\sigma_i,i}$ are the same. Hence, it can only happen
if the following two conditions occur.
\begin{itemize}
\item $X_1^{\sigma_i,i}$ does not contain $x^2_i$ or $x_i^4$. 
If $x^3_i\ge x_i^4$, it must be true that $x^2_i\ge x_i^4$ in order
for $x_i^4$ to cancel out. Then, in order for $x_i^2$ to cancel out, it must be true that $x^2_i\ge x_i^1$. 
If $x^3_i\le x_i^4$, it must be true that $x^2_i\le x_i^4$ in order for $x_i^4$ to cancel out.
Then, in order for $x_i^2$ to cancel out, it must be true that $x^2_i\le x_i^1$.

Hence, either $x^3_i\ge x_i^4$, $x^2_i\ge x_i^4$, and $x^2_i\ge x_i^1$, or
$x^3_i\le x_i^4$, $x^2_i\le x_i^4$, and $x^2_i\le x_i^1$.

\item $X_2^{\sigma_i,i}$ does not contain $x^5_i$ or $x^6_i$.
If $x^5_i\ge x^6_i$, it must be true that $x^3_i\ge x^6_i$ in order for $x^6_i$ to cancel out,
and it must be true that $x^5_i\ge x_i^1$ in order for $x^5_i$ to cancel out.
If $x^5_i\le x^6_i$, it must be true that $x^3_i\le x^6_i$ in order for $x^6_i$ to cancel out,
and it must be true that $x^5_i\le x_i^1$ in order for $x^5_i$ to cancel out.
 
Hence, either $x^5_i\ge x^6_i$, $x^3_i\ge x^6_i$, and $x^5_i\ge x_i^1$, or 
$x^5_i\le x^6_i$, $x^3_i\le x^6_i$, and $x^5_i\le x_i^1$.
\end{itemize}

Now we choose an order such that $x^2_i$, $x_i^4$, $x^5_i$, and 
$x^6_i$ cancel out. We distinguish between the cases $x_i^1\ge x_i^3$
and $x_i^3\ge x_i^1$.
\begin{itemize}
  \item[$x_i^1\ge x_i^3$:] In this case, we can write $X_1^{\sigma_i,i}$ as
  \begin{align*}
     X_1^{\sigma_i,i} & = |x^1_i-x^2_i|+|x^3_i-x^4_i|-|x^1_i-x^3_i|-|x^2_i-x^4_i| \\
             & = |x^1_i-x^2_i|+|x^3_i-x^4_i|-(x^1_i-x^3_i)-|x^2_i-x^4_i|.             
  \end{align*}
  Since we have argued above that either $x^3_i\ge x_i^4$, $x^2_i\ge x_i^4$, and $x^2_i\ge x_i^1$, or
 $x^3_i\le x_i^4$, $x^2_i\le x_i^4$, and $x^2_i\le x_i^1$, we obtain that either
 \[
     X_1^{\sigma_i,i} = (x^2_i-x^1_i)+(x^3_i-x^4_i)-(x^1_i-x^3_i)-(x^2_i-x^4_i) = -2x^1_i+2x^3_i
 \]
 or
 \[
    X_1^{\sigma_i,i} = (x^1_i-x^2_i)+(x^4_i-x^3_i)-(x^1_i-x^3_i)-(x^4_i-x^2_i) = 0.
 \]
 We can write $X_2^{\sigma_i,i}$ as
 \begin{align*}
     X_2^{\sigma_i,i} & = |x^1_i-x^3_i|+|x^5_i-x^6_i|-|x^1_i-x^5_i|-|x^3_i-x^6_i| \\
                      & = (x^1_i-x^3_i)+|x^5_i-x^6_i|-|x^1_i-x^5_i|-|x^3_i-x^6_i|.             
  \end{align*}
 Since we have argued above that either $x^5_i\ge x^6_i$, $x^3_i\ge x^6_i$, and $x^5_i\ge x_i^1$, or 
 $x^5_i\le x^6_i$, $x^3_i\le x^6_i$, and $x^5_i\le x_i^1$, we obtain that either
 \[
     X_2^{\sigma_i,i} = (x^1_i-x^3_i)+(x^5_i-x^6_i)-(x^5_i-x^1_i)-(x^3_i-x^6_i) = 2x^1_i-2x^3_i 
 \]
 or
 \[
     X_2^{\sigma_i,i} = (x^1_i-x^3_i)+(x^6_i-x^5_i)-(x^1_i-x^5_i)-(x^6_i-x^3_i) = 0. 
 \]
 In summary, the case analysis shows that $X_1^{\sigma_i,i} \in \{0,-2x_i^1+2x^3_i\}$ and 
$X_2^{\sigma_i,i} \in \{0,2x_i^1-2x^3_i\}$. Hence, in this case the 
resulting pair of linear combinations belongs either to class A or B. 

\item[$x_i^3\ge x_i^1$:] In this case, we can write $X_1^{\sigma_i,i}$ as
  \begin{align*}
     X_1^{\sigma_i,i} & = |x^1_i-x^2_i|+|x^3_i-x^4_i|-|x^1_i-x^3_i|-|x^2_i-x^4_i| \\
             & = |x^1_i-x^2_i|+|x^3_i-x^4_i|-(x^3_i-x^1_i)-|x^2_i-x^4_i|.             
  \end{align*}
  Since we have argued above that either $x^3_i\ge x_i^4$, $x^2_i\ge x_i^4$, and $x^2_i\ge x_i^1$, or
 $x^3_i\le x_i^4$, $x^2_i\le x_i^4$, and $x^2_i\le x_i^1$, we obtain that either
 \[
     X_1^{\sigma_i,i} = (x^2_i-x^1_i)+(x^3_i-x^4_i)-(x^3_i-x^1_i)-(x^2_i-x^4_i) = 0
 \]
 or
 \[
    X_1^{\sigma_i,i} = (x^1_i-x^2_i)+(x^4_i-x^3_i)-(x^3_i-x^1_i)-(x^4_i-x^2_i) = 2x^1_i-2x^3_i.
 \]
 We can write $X_2^{\sigma_i,i}$ as
 \begin{align*}
     X_2^{\sigma_i,i} & = |x^1_i-x^3_i|+|x^5_i-x^6_i|-|x^1_i-x^5_i|-|x^3_i-x^6_i| \\
                      & = (x^3_i-x^1_i)+|x^5_i-x^6_i|-|x^1_i-x^5_i|-|x^3_i-x^6_i|.             
  \end{align*}
 Since we have argued above that either $x^5_i\ge x^6_i$, $x^3_i\ge x^6_i$, and $x^5_i\ge x_i^1$, or 
 $x^5_i\le x^6_i$, $x^3_i\le x^6_i$, and $x^5_i\le x_i^1$, we obtain that either
 \[
     X_2^{\sigma_i,i} = (x^3_i-x^1_i)+(x^5_i-x^6_i)-(x^5_i-x^1_i)-(x^3_i-x^6_i) = 0 
 \]
 or
 \[
     X_2^{\sigma_i,i} = (x^3_i-x^1_i)+(x^6_i-x^5_i)-(x^1_i-x^5_i)-(x^6_i-x^3_i) = -2x^1_i+2x^3_i. 
 \]
 In summary, the case analysis shows that $X_1^{\sigma_i,i} \in \{0,2x^1_i-2x^3_i\}$ and 
$X_2^{\sigma_i,i} \in \{0,-2x^1_i+2x^3_i\}$. Hence, also in this case the 
resulting pair of linear combinations belongs either to class A or B.\qed
\end{itemize}
\end{proof}

Now we consider pairs of type~1~a).
Using the same notation as for 
pairs of type~0, we can write the improvement $\Delta_2$ as
\[
   \Delta_2 = \sum_{i\in[d]} (|x^1_i-x^3_i|+|x^2_i-x^5_i|-|x^1_i-x^5_i|-|x^2_i-x^3_i|).
\] 
Again we write, for $j\in[2]$, $\Delta_j^{\sigma}=\sum_{i\in[d]}X^{\sigma_i,i}_j$, where 
$X^{\sigma_i,i}_j$ is a linear combination of the variables 
$x^1_i,\ldots,x^6_i$. Compared to pairs of type~0, only the terms $X^{\sigma_i,i}_2$ are different,
whereas the terms $X^{\sigma_i,i}_1$ do not change.

\begin{lemma}\label{lem:Classes1}
For pairs of type~$1$~a) and for $i\in[d]$, the pair~$(X_1^{\sigma_i,i},X_2^{\sigma_i,i})$ of linear 
combinations belongs either to class A, B, or C.
\end{lemma}
\begin{proof}
Assume that the pair $(X_1^{\sigma_i,i},X_2^{\sigma_i,i})$ is linearly 
dependent for a fixed order $\sigma_i$.
Observe that this can only happen if the sets of variables occurring in $X_1^{\sigma_i,i}$
and $X_2^{\sigma_i,i}$ are the same. Hence, it can only happen
if the following two conditions occur.
\begin{itemize}
\item $X_1^{\sigma_i,i}$ does not contain $x_i^4$.
If $x^3_i\ge x_i^4$, it must be true that $x^2_i\ge x_i^4$ in order for $x_i^4$ to cancel out. 
If $x^3_i\le x_i^4$, it must be true that $x^2_i\le x_i^4$ in order for $x_i^4$ to cancel out.

Hence, either $x^3_i\ge x^4_i$ and $x^2_i\ge x^4_i$, or $x^3_i\le x^4_i$ and $x^2_i\le x^4_i$.

\item $X_2^{\sigma_i,i}$ does not contain $x^5_i$.
If $x^2_i\ge x^5_i$, it must be true that $x^1_i\ge x_i^5$ in order for $x^5_i$ to cancel out.
If $x^2_i\le x^5_i$, it must be true that $x^1_i\le x_i^5$ in order for $x^5_i$ to cancel out.
 
Hence, either $x^2_i\ge x^5_i$ and $x_i^1\ge x^5_i$, or $x^2_i\le x^5_i$ and $x_i^1\le x^5_i$.
\end{itemize}

Now we choose an order such that $x_i^4$ and $x^5_i$ cancel out.
We distinguish between the following cases.
\begin{itemize}
  \item[$x_i^1\ge x_i^3$:] In this case, we can write $X_1^{\sigma_i,i}$ as
  \begin{align*}
     X_1^{\sigma_i,i} & = |x^1_i-x^2_i|+|x^3_i-x^4_i|-|x^1_i-x^3_i|-|x^2_i-x^4_i| \\
             & = |x^1_i-x^2_i|+|x^3_i-x^4_i|-(x^1_i-x^3_i)-|x^2_i-x^4_i|.             
  \end{align*}
  Since we have argued above that either $x^3_i\ge x^4_i$ and $x^2_i\ge x^4_i$, or $x^3_i\le x^4_i$ and $x^2_i\le x^4_i$,
  we obtain that either
 \begin{align*}
     X_1^{\sigma_i,i} & = |x^1_i-x^2_i|+(x^3_i-x^4_i)-(x^1_i-x^3_i)-(x^2_i-x^4_i) \\
                 & = |x^1_i-x^2_i|+2x^3_i-x^1_i-x^2_i \in \{2x_i^3-2x_i^2,2x_i^3-2x_i^1\}.
 \end{align*}
 or
 \begin{align*}
    X_1^{\sigma_i,i} & = |x^1_i-x^2_i|+(x^4_i-x^3_i)-(x^1_i-x^3_i)-(x^4_i-x^2_i) \\
                 & = |x^1_i-x^2_i|-x^1_i+x^2_i \in \{0,-2x_i^1+2x_i^2\}.
  \end{align*}
 We can write $X_2^{\sigma_i,i}$ as
 \begin{align*}
     X_2^{\sigma_i,i} & = |x^1_i-x^3_i|+|x^2_i-x^5_i|-|x^1_i-x^5_i|-|x^2_i-x^3_i| \\
                      & = (x^1_i-x^3_i)+|x^2_i-x^5_i|-|x^1_i-x^5_i|-|x^2_i-x^3_i|.             
  \end{align*}
 Since we have argued above that either $x^2_i\ge x^5_i$ and $x_i^1\ge x^5_i$, or if $x^2_i\le x^5_i$ and $x_i^1\le x^5_i$,
 we obtain that either
 \begin{align*}
     X_2^{\sigma_i,i} & = (x^1_i-x^3_i)+(x^2_i-x^5_i)-(x^1_i-x^5_i)-|x^2_i-x^3_i| \\
                      & =  x^2_i-x^3_i-|x^2_i-x^3_i| \in \{0,2x^2_i-2x^3_i\}
 \end{align*}
 or
 \begin{align*}
     X_2^{\sigma_i,i} & = (x^1_i-x^3_i)+(x^5_i-x^2_i)-(x^5_i-x^1_i)-|x^2_i-x^3_i|\\
                      & = 2x^1_i-x^2_i-x^3_i-|x^2_i-x^3_i| \in \{2x^1_i-2x^2_i,2x^1_i-2x^3_i\}. 
 \end{align*}
 In summary, the case analysis shows that $X_1^{\sigma_i,i} \in \{0,-2x_i^1+2x_i^2,-2x_i^1+2x_i^3,-2x_i^2+2x_i^3\}$ and 
$X_2^{\sigma_i,i} \in \{0,2x^1_i-2x^2_i,2x^1_i-2x^3_i,2x^2_i-2x^3_i\}$. Hence, in this case the 
resulting pair of linear combinations belongs either to class A, B, or C. 
  \item[$x_i^1\le x_i^3$:] In this case, we can write $X_1^{\sigma_i,i}$ as
  \begin{align*}
     X_1^{\sigma_i,i} & = |x^1_i-x^2_i|+|x^3_i-x^4_i|-|x^1_i-x^3_i|-|x^2_i-x^4_i| \\
             & = |x^1_i-x^2_i|+|x^3_i-x^4_i|-(x^3_i-x^1_i)-|x^2_i-x^4_i|.             
  \end{align*}
  Since we have argued above that either $x^3_i\ge x^4_i$ and $x^2_i\ge x^4_i$, or $x^3_i\le x^4_i$ and $x^2_i\le x^4_i$,
  we obtain that either
 \begin{align*}
     X_1^{\sigma_i,i} & = |x^1_i-x^2_i|+(x^3_i-x^4_i)-(x^3_i-x^1_i)-(x^2_i-x^4_i) \\
                 & = |x^1_i-x^2_i|+x^1_i-x^2_i \in \{0,2x^1_i-2x^2_i\}.
 \end{align*}
 or
 \begin{align*}
    X_1^{\sigma_i,i} & = |x^1_i-x^2_i|+(x^4_i-x^3_i)-(x^3_i-x^1_i)-(x^4_i-x^2_i) \\
                 & = |x^1_i-x^2_i|+x^1_i+x^2_i-2x^3_i \in \{2x^1_i-2x^3_i,2x_i^2-2x_i^3\}.
  \end{align*}
 We can write $X_2^{\sigma_i,i}$ as
 \begin{align*}
     X_2^{\sigma_i,i} & = |x^1_i-x^3_i|+|x^2_i-x^5_i|-|x^1_i-x^5_i|-|x^2_i-x^3_i| \\
                      & = (x^3_i-x^1_i)+|x^2_i-x^5_i|-|x^1_i-x^5_i|-|x^2_i-x^3_i|.             
  \end{align*}
 Since we have argued above that either $x^2_i\ge x^5_i$ and $x_i^1\ge x^5_i$, or $x^2_i\le x^5_i$ and $x_i^1\le x^5_i$,
 we obtain that either
 \begin{align*}
     X_2^{\sigma_i,i} & = (x^3_i-x^1_i)+(x^2_i-x^5_i)-(x^1_i-x^5_i)-|x^2_i-x^3_i| \\
                      & =  -2x^1_i+x^2_i+x^3_i-|x^2_i-x^3_i| \in \{-2x^1_i+2x^3_i,-2x^1_i+2x^2_i\}
 \end{align*}
 or
 \begin{align*}
     X_2^{\sigma_i,i} & = (x^3_i-x^1_i)+(x^5_i-x^2_i)-(x^5_i-x^1_i)-|x^2_i-x^3_i|\\
                      & = -x^2_i+x^3_i-|x^2_i-x^3_i| \in \{0,-2x^2_i+2x^3_i\}. 
 \end{align*}
 In summary, the case analysis shows that $X_1^{\sigma_i,i} \in \{0,2x^1_i-2x^2_i,2x^1_i-2x^3_i,2x_i^2-2x_i^3\}$ and 
$X_2^{\sigma_i,i} \in \{0,-2x^1_i+2x^2_i,-2x^1_i+2x^3_i,-2x^2_i+2x^3_i\}$. Hence, in this case the 
resulting pair of linear combinations belongs either to class A, B, or C.\qed
\end{itemize}
\end{proof}

Finally we consider pairs of type~1~b).
Using the same notation as before, we can write the improvement $\Delta_2$ as
\[
   \Delta_2 = \sum_{i=1}^d \big(|x^1_i-x^3_i|+|x^2_i-x^5_i|-|x^1_i-x^2_i|-|x^3_i-x^5_i|\big).
\]
Again we write, for $j\in[2]$, $\Delta_j^{\sigma}=\sum_{i\in[d]}X^{\sigma_i,i}_j$, where 
$X^{\sigma_i,i}_j$ is a linear combination of the variables 
$x^1_i,\ldots,x^6_i$. And again only the terms $X^{\sigma_i,i}_2$ are different from before.

\begin{lemma}\label{lem:Classes1b}
For pairs of type~$1$~b) and for $i\in[d]$, the pair of linear 
combinations $(X_1^{\sigma_i,i},X_2^{\sigma_i,i})$ belongs either to 
class A, B, or C.
\end{lemma}
\begin{proof}
Using the same notation as for 
pairs of type~0, we can write the improvement $\Delta_2$ as
\[
   \Delta_2 = \sum_{i=1}^d \big(|x^1_i-x^3_i|+|x^2_i-x^5_i|-|x^1_i-x^2_i|-|x^3_i-x^5_i|\big).
\]

Assume that the pair $(X_1^{\sigma_i,i},X_2^{\sigma_i,i})$ is linearly 
dependent for a fixed order $\sigma_i$.
Observe that this can only happen if the sets of variables occurring in $X_1^{\sigma_i,i}$
and $X_2^{\sigma_i,i}$ are the same. Hence, it can only happen
if the following two conditions occur.

\begin{itemize}
\item $X_1^{\sigma_i,i}$ does not contain $x_i^4$. We have considered this condition already
for pairs of type~1~a) and showed that
either $x^3_i\ge x^4_i$ and $x^2_i\ge x^4_i$, or $x^3_i\le x^4_i$ and $x^2_i\le x^4_i$.

\item $X_2^{\sigma_i,i}$ does not contain $x^5_i$.
If $x^2_i\ge x^5_i$, it must be true that $x^3_i\ge x_i^5$ in order for $x^5_i$ to cancel out.
If $x^2_i\le x^5_i$, it must be true that $x^3_i\le x_i^5$ in order for $x^5_i$ to cancel out.
 
Hence, either $x^2_i\ge x^5_i$ and $x_i^3\ge x^5_i$, or $x^2_i\le x^5_i$ and $x_i^3\le x^5_i$.
\end{itemize}

Now we choose an order such that $x_i^4$ and $x^5_i$ cancel out.
We distinguish between the following cases.
\begin{itemize}
  \item[$x_i^1\ge x_i^3$:] We have argued already for pairs of type~1~a) that in this case
  $X_1^{\sigma_i,i} \in \{0,-2x_i^1+2x_i^2,-2x_i^1+2x_i^3,-2x_i^2+2x_i^3\}$. 
  
  We can write $X_2^{\sigma_i,i}$ as
 \begin{align*}
     X_2^{\sigma_i,i} & = |x^1_i-x^3_i|+|x^2_i-x^5_i|-|x^1_i-x^2_i|-|x^3_i-x^5_i| \\
                      & = (x^1_i-x^3_i)+|x^2_i-x^5_i|-|x^1_i-x^2_i|-|x^3_i-x^5_i|.             
  \end{align*}
 Since we have argued above that either $x^2_i\ge x^5_i$ and $x_i^3\ge x^5_i$, or $x^2_i\le x^5_i$ and $x_i^3\le x^5_i$,
 we obtain that either
 \begin{align*}
     X_2^{\sigma_i,i} & = (x^1_i-x^3_i)+(x^2_i-x^5_i)-|x^1_i-x^2_i|-(x^3_i-x^5_i) \\
                      & =  x^1_i+x^2_i-2x^3_i-|x^1_i-x^2_i| \in \{2x^2_i-2x^3_i,2x^1_i-2x^3_i\}
 \end{align*}
 or
 \begin{align*}
     X_2^{\sigma_i,i} & = (x^1_i-x^3_i)+(x^5_i-x^2_i)-|x^1_i-x^2_i|-(x^5_i-x^3_i) \\
                      & = x^1_i-x^2_i-|x^1_i-x^2_i| \in \{0,2x^1_i-2x^2_i\}. 
 \end{align*}
 In summary, the case analysis shows that $X_1^{\sigma_i,i} \in \{0,-2x_i^1+2x_i^2,-2x_i^1+2x_i^3,-2x_i^2+2x_i^3\}$ and 
$X_2^{\sigma_i,i} \in \{0,2x_i^1-2x_i^2,2x_i^1-2x_i^3,2x_i^2-2x_i^3\}$. Hence, in this case the 
resulting pair of linear combinations belongs either to class A, B, or C.

  \item[$x_i^1\le x_i^3$:] We have argued already for pairs of type~1~a) that in this case
  $X_1^{\sigma_i,i} \in \{0,2x^1_i-2x^2_i,2x^1_i-2x^3_i,2x_i^2-2x_i^3\}$. 
  
  We can write $X_2^{\sigma_i,i}$ as
 \begin{align*}
     X_2^{\sigma_i,i} & = |x^1_i-x^3_i|+|x^2_i-x^5_i|-|x^1_i-x^2_i|-|x^3_i-x^5_i| \\
                      & = (x^3_i-x^1_i)+|x^2_i-x^5_i|-|x^1_i-x^2_i|-|x^3_i-x^5_i|.             
  \end{align*}
 Since we have argued above that either $x^2_i\ge x^5_i$ and $x_i^3\ge x^5_i$, or $x^2_i\le x^5_i$ and $x_i^3\le x^5_i$,
 we obtain that either
 \begin{align*}
     X_2^{\sigma_i,i} & = (x^3_i-x^1_i)+(x^2_i-x^5_i)-|x^1_i-x^2_i|-(x^3_i-x^5_i) \\
                      & =  -x^1_i+x^2_i-|x^1_i-x^2_i| \in \{0,-2x^1_i+2x^2_i\}
 \end{align*}
 or
 \begin{align*}
     X_2^{\sigma_i,i} & = (x^3_i-x^1_i)+(x^5_i-x^2_i)-|x^1_i-x^2_i|-(x^5_i-x^3_i) \\
                      & = -x^1_i-x^2_i+2x^3_i-|x^1_i-x^2_i| \in \{-2x^1_i+2x^3_i,-2x^2_i+2x^3_i\}. 
 \end{align*}
 In summary, the case analysis shows that $X_1^{\sigma_i,i} \in \{0,2x^1_i-2x^2_i,2x^1_i-2x^3_i,2x_i^2-2x_i^3\}$ and 
$X_2^{\sigma_i,i} \in \{0,-2x^1_i+2x^2_i,-2x^1_i+2x^3_i,-2x^2_i+2x^3_i\}$. Hence, in this case the 
resulting pair of linear combinations belongs either to class A, B, or C. \qed
\end{itemize}
\end{proof}
We have argued above that for tuples $\sigma$ of orders that yield pairs from class A or B,
the event ${\cal A}^{\sigma}$ cannot occur.
For tuples $\sigma$ that yield pairs from class C, we can apply 
Lemma~\ref{lemma:ProbLinComb} from Appendix~\ref{appendix:probTheory}, 
which shows that the probability of the event ${\cal A}^{\sigma}$ is 
bounded from above by $(\varepsilon\phi)^2$. As we have shown that every tuple yields a pair from
class A, B, or C, we can conclude the proof of Lemma~\ref{lemma:L1Improvements} by a union bound over
all pairs of linked 2-changes of type~0 and~1 and all tuples $\sigma$. As these are $O(n^6)$, the lemma follows. 
\end{proof}

\subsubsection{Expected number of 2-changes}

Based on Lemmas~\ref{lemma:numberLinkedPairs2} 
and~\ref{lemma:L1Improvements}, we are now able to prove part a) of 
Theorem~\ref{theorem:runningTime1}.
\begin{proof}[Theorem~\ref{theorem:runningTime1} a)]
Let $T$ denote the random variable that describes the length of the 
longest path in the state graph. If $T\ge t$, then there must exist a 
sequence $S_1,\ldots,S_t$ of $t$ consecutive 2-changes in the state 
graph. We start by identifying a set of linked pairs of type~0 and 1 in 
this sequence. Due to Lemma~\ref{lemma:numberLinkedPairs2}, we know 
that we can find at least $z=t/7-3n/28$ such pairs. Let 
$\Delta_{\min}^*$ denote the smallest improvement made by any pair of 
improving 2-Opt steps of type 0 or 1. If~$T\ge t$, then~$\Delta_{\min}^*\le \frac{dn}{z}$
as the initial tour has length at most~$dn$ and every linked pair of type~0 or~1
decreases the length of the tour by at least~$\Delta_{\min}^*$. 
For $t>n$, we have 
$z=t/7-3n/28>t/28$ and hence due to 
Lemma~\ref{lemma:L1Improvements},
\[
  \Pr{T\ge t}\le \Pr{\Delta_{\min}^*\le \frac{dn}{z}} \le 
  \Pr{\Delta_{\min}^*\le \frac{28dn}{t}} 
  = O\left(\frac{n^8\phi^2}{t^2}\right).
\]
Using the fact that probabilities are bounded from above by one, we obtain
\[
  \Pr{T\ge t} = O\left(\min\left\{\frac{n^8\phi^2}{t^2},1\right\}\right).
\]

Since $T$ cannot exceed $n!$, this implies the following bound on the 
expected number of 2-changes:
\begin{align*}
  \Ex{T} & \le n+\sum_{t=n+1}^{n!}O\left(\min\left\{\frac{n^8\phi^2}{t^2},1\right\}\right) \\
         & = n+O\left(\sum_{t=n+1}^{n^4\phi}1\right)+O\left(\sum_{t=n^4\phi+1}^{n!}\frac{n^8\phi^2}{t^2}\right)
         = O(n^4\cdot\phi).
\end{align*}

This concludes the proof of part a) of the theorem.
\end{proof}

Chandra, Karloff, and Tovey~\cite{ChandraKT99} show that for every 
metric that is induced by a norm on $\RR^d$, and for any set of $n$ 
points in the unit hypercube $[0,1]^d$, the optimal tour visiting all 
$n$ points has length $O(n^{(d-1)/d})$. Furthermore, every insertion 
heuristic finds an $O(\log{n})$-approximation~\cite{RosenkrantzSL77}. 
Hence, if one starts with a solution computed by an insertion 
heuristic, the initial tour has length $O(n^{(d-1)/d}\cdot\log{n})$. 
Using this observation yields part a) of 
Theorem~\ref{theorem:runningTime2}:

\begin{proof}[Theorem~\ref{theorem:runningTime2} a)]
Since the initial tour has length $O(n^{(d-1)/d}\cdot\log{n})$, we obtain
for an appropriate constant $c$ and $t>n$,
\begin{alignat*}{1}
  \Pr{T\ge t} & \le \Pr{\Delta_{\min}^*\le \frac{c\cdot 
  n^{(d-1)/d}\cdot\log{n}}{t}}\\  
  & =  O\left(\min\left\{\frac{n^{8-2/d}\cdot\log^2{n}\cdot\phi^2}{t^2},1\right\}\right),
\end{alignat*}
where the equality follows from Lemma~\ref{lemma:L1Improvements}.
This yields
\[
  \Ex{T} \le
  n+\sum_{t=n+1}^{n!}O\left(\min\left\{\frac{n^{8-2/d}\cdot\log^2{n}\cdot\phi^2}{t^2},1\right\}\right)
          = O(n^{4-1/d}\cdot\log{n}\cdot\phi).\tag*{\qed}
\]
\end{proof}

\subsection{Euclidean Instances}
\label{subsec:UB:L2}

In this section, we analyze the expected number of 2-changes on 
$\phi$-perturbed Euclidean instances. The analysis is similar to the 
analysis of Manhattan instances in the previous section; only 
Lemma~\ref{lemma:L1Improvements} needs to be replaced by the following
equivalent version for the $L_2$ metric, which will be proved later
in this section.

\begin{lemma}
\label{lemma:L2Improvements}
For $\phi$-perturbed $L_2$ instances, the probability that there exists 
a pair of type~0 or type~1 in which both 2-changes are improvements by 
at most $\varepsilon\le1/2$ is bounded by 
$O(n^6\cdot\phi^5\cdot\varepsilon^{2}\cdot\log^2(1/\varepsilon))
+O(n^5\cdot\phi^4\cdot\varepsilon^{3/2}\cdot\log(1/\varepsilon))$.
\end{lemma}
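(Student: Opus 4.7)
A union bound over the $O(n^6)$ sextuples realizing a pair of type~0 and the $O(n^5)$ quintuples realizing a pair of type~1 reduces the lemma to bounding, for each fixed such pair, the probability that both 2-changes improve by at most $\varepsilon$. The target bounds are $O(\phi^5\varepsilon^2\log^2(1/\varepsilon))$ per type-0 pair and $O(\phi^4\varepsilon^{3/2}\log(1/\varepsilon))$ per type-1 pair. The orientation-based integer-linear-combination argument that proved Lemma~\ref{lemma:L1Improvements} no longer applies, since Euclidean improvements involve square roots of coordinates; in its place I would use a geometric density estimate for the difference of two Euclidean distances.

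For a type~0 pair with vertices $v_1,\ldots,v_6$ and shared edge $\{v_1,v_3\}$, I would decouple the two steps by conditioning on $v_1$ and $v_3$. Given these, $\Delta_1 = \|v_1-v_2\|+\|v_3-v_4\|-\|v_1-v_3\|-\|v_2-v_4\|$ is a function of the independent points $v_2, v_4$ only, while $\Delta_2 = \|v_1-v_3\|+\|v_5-v_6\|-\|v_1-v_5\|-\|v_3-v_6\|$ is a function of $v_5, v_6$ only, so the conditional joint probability factors into two single-step probabilities. To bound one such factor, I would further condition on one of its two free vertices (say $v_2$): the set $\{v_4 : \Delta_1\in(0,\varepsilon]\}$ then lies in a thin slab around a rotational hyperboloid with foci $v_2$ and $v_3$, whose $d$-dimensional Lebesgue volume is $O(\varepsilon/h)$ for a geometric scale $h$ governed by the angle at $v_4$ in the triangle $v_2 v_3 v_4$. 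Multiplying by the $\phi$-bounded density of $v_4$ and integrating against the $\phi$-bounded density of $v_2$, with a truncation near $h\sim\varepsilon$ that produces a $\log(1/\varepsilon)$ factor, yields a single-step bound of order $O(\phi^2\varepsilon\log(1/\varepsilon))$. Multiplying the two conditionally independent single-step bounds and taking expectation over $v_1, v_3$ gives the required type-0 estimate.

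For a type~1 pair only five vertices are involved, with $v_1, v_2, v_3$ shared between the two steps and $v_4, v_5$ each belonging to exactly one step. Conditioning on $v_1, v_2, v_3$ again makes $\Delta_1$ and $\Delta_2$ conditionally independent, each a single-vertex function of the remaining free point, and the same hyperboloidal analysis delivers a per-vertex bound of the form $O(\phi\varepsilon/h)$. However, with only one free vertex per step the straightforward product-and-integrate approach yields only $\varepsilon^2$, which is short of the target $\varepsilon^{3/2}\log(1/\varepsilon)$ once one accounts correctly for the integrations over the three conditioning vertices. The remedy is to split the integration over $v_1, v_2, v_3$ at a threshold of the form $h\sim\sqrt{\varepsilon}$: in the large-$h$ regime the sharp bound $O(\phi\varepsilon/h)$ is used, while in the small-$h$ regime it is replaced by a coarser a priori bound on the single-step probability that is of constant order in $1/h$. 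The two regimes match at the threshold and together produce the advertised type-1 bound.

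The main obstacle is proving the single-vertex geometric density bound $O(\phi\varepsilon/h)$ with a scale $h$ for which the subsequent integrations converge in arbitrary dimension $d\ge 2$. This will require a careful change of variables into coordinates aligned with the two foci of the hyperboloid and a controlled treatment of boundary effects from $[0,1]^d$ and of the angular degeneracies where $h$ is small. A secondary but essential point is that the entire decoupling-by-conditioning strategy needs at least one free vertex per step, which is precisely why type~2 pairs were excluded from the pair construction in Lemma~\ref{lemma:numberLinkedPairs2}: in a type~2 pair all four vertices would be shared and no reduction to conditionally independent single-step probabilities would be available.
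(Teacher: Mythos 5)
Your type-0 outline is essentially the paper's argument in different clothing: condition on the endpoints of the shared edge $\{v_1,v_3\}$, note that $\Delta_1$ and $\Delta_2$ are then conditionally independent, prove a per-step bound of order $\varepsilon\log(1/\varepsilon)$ (with a $1/\sqrt{r}$ dependence on $r=\dist(v_1,v_3)$), multiply, and integrate out the resulting $1/r$ singularity against the radial density. The technical core you defer as the ``main obstacle'' -- the density estimate for a difference of two Euclidean distances, including its endpoint singularities -- is precisely what the paper establishes in Lemmas~\ref{lemma:densityZ}--\ref{lemma:UB:DensitiesDeltaZ}, working in a rotationally symmetric auxiliary experiment and transferring back to the $\phi$-bounded model at a cost $((4d)^{d/2}\phi)^5$ (resp.\ $((4d)^{d/2}\phi)^4$); so for type~0 your plan is sound in outline, if unproven in its key estimate.

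The genuine gap is in your type-1 treatment. If you condition on all three shared vertices $v_1,v_2,v_3$, then $\Delta_1\in[0,\varepsilon]$ becomes the event that $Z=\dist(v_3,v_4)-\dist(v_2,v_4)$ falls in a fixed interval of length $\varepsilon$ whose location is dictated by the conditioning. The density of such a distance difference has inverse-square-root singularities at the endpoints $\pm\dist(v_2,v_3)$ of its support (cf.\ Lemma~\ref{lemma:UB:DensitiesDeltaZ}~d)), so the interval can sit at the singularity and the conditional probability is then $\Theta(\sqrt{\varepsilon})$ -- not $O(\phi\varepsilon/h)$ with an integrable $1/h$. Hence the ``straightforward product'' over the two free vertices gives only $O(\varepsilon)$ per pair, not $\varepsilon^2$; your diagnosis is inverted ($\varepsilon^2$ would be more than enough), which indicates the true origin of the exponent $3/2$ has not been located. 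The paper's resolution is an asymmetric conditioning: condition only on $\dist(v_1,v_2)=r$, spend the joint randomness of $v_3,v_4$ on the first step to get $O(\varepsilon\log(1/\varepsilon)/\sqrt{r})$, then treat $\dist(v_1,v_3)$ and $\dist(v_2,v_3)$ as adversarial and use only the fresh randomness of $v_5$ in the second step, where the worst-case interval at the singularity of the density of $\dist(v_2,v_5)-\dist(v_1,v_5)$ yields $O(\sqrt{\varepsilon}/\sqrt{r})$ -- this $\sqrt{\varepsilon}$ is exactly where $\varepsilon^{3/2}$ comes from; the two bounds may be multiplied because the second uses only $v_5$, which is independent of $\Delta_1$, and integrating over $r$ finishes. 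Your proposed threshold split at $h\sim\sqrt{\varepsilon}$ does not substitute for this: $h$ was defined through the free vertex that is being integrated out, so it is not a function of the conditioning variables you propose to split over, and controlling how often the two offsets are near-degenerate would require analyzing their joint distribution as functions of the shared (hence correlated) vertices $v_1,v_2,v_3$ -- an argument your proposal does not supply.
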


The bound that this lemma provides is slightly weaker than its $L_1$ 
counterpart, and hence also the bound on the expected running time is 
slightly worse for $L_2$ instances. The crucial step to proving 
Lemma~\ref{lemma:L2Improvements} is to gain a better understanding of 
the random variable that describes the improvement of a \emph{single} fixed 
2-change. In the next section, we analyze this random variable under 
several conditions, \eg, under the condition that the length of one of 
the involved edges is fixed. With the help of these results, pairs of 
linked 2-changes can easily be analyzed. Let us mention that our 
analysis of a single 2-change yields a bound of 
$O(n^7\cdot\log^2(n)\cdot\phi^3)$ for the expected number of 2-changes. 
For Euclidean instances in which all points are distributed uniformly 
at random over the unit square, this bound already improves the best 
previously known bound of $O(n^{10}\cdot\log{n})$.

\subsubsection{Analysis of a single 2-change}

We analyze a 2-change in which the edges $\{O,Q_1\}$ and $\{P,Q_2\}$ 
are exchanged with the edges $\{O,Q_2\}$ and $\{P,Q_1\}$ for some 
vertices $O$, $P$, $Q_1$, and $Q_2$. In the input model we consider, 
each of these points has a probability distribution over the unit 
hypercube according to which it is chosen. In this section, we consider 
a simplified random experiment in which $O$ is chosen to be the origin 
and $P$, $Q_1$, and $Q_2$ are chosen independently and uniformly at 
random from a $d$-dimensional hyperball with radius $\sqrt{d}$ centered 
at the origin. In the next section, we argue that the analysis of this 
simplified random experiment helps to analyze the actual random 
experiment that occurs in the probabilistic input model.

Due to the rotational symmetry of the simplified model, we assume 
without loss of generality that $P$ lies at position $(0^{d-1},T)$ for some 
$T\ge0$. For $i\in[2]$, Let $Z_i$ denote the difference 
$\dist(O,Q_i)-\dist(P,Q_i)$. Then the improvement $\Delta$ of the 
2-change can be expressed as $Z_1-Z_2$. The random variables $Z_1$ and 
$Z_2$ are identically distributed, and they are independent if $T$ is 
fixed.
We denote by $f_{Z_1|T=\tau,R=r}$ the density of $Z_1$ 
conditioning on the fact that $\dist(O,Q_1)=r$ and $T=\tau$.
Similarly, we denote by $f_{Z_2|T=\tau,R=r}$ the density of $Z_2$ 
conditioning on the fact that $\dist(O,Q_2)=r$ and $T=\tau$.
As $Z_1$ and $Z_2$ are identically distributed, the conditional
densities $f_{Z_1|T=\tau,R=r}$ and $f_{Z_2|T=\tau,R=r}$ are identical
as well. Hence, we can drop the index in the following and write
$f_{Z|T=\tau,R=r}$.
\begin{lemma}\label{lemma:densityZ}
For $\tau,r\in(0,\sqrt{d}]$, and $z\in(-\tau,\min\{\tau,2r-\tau\})$,
\[ 
f_{Z|T=\tau,R=r}(z)  \le \begin{cases}
\sqrt{\frac{2}{\tau^2-z^2}} & \mbox{if } r\ge\tau, \\
\sqrt{\frac{2}{(\tau+z)(2r-\tau-z)}} & \mbox {if } r<\tau.
\end{cases}
\] For 
$z\notin[-\tau,\min\{\tau,2r-\tau\}]$, the density 
$f_{Z|T=\tau,R=r}(z)$ is $0$.
\end{lemma}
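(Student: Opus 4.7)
My plan is to reduce the computation to a one-dimensional density calculation. By the rotational symmetry of the simplified experiment, conditioning on $T=\tau$ and $R:=\dist(O,Q_1)=r$ lets me assume that $P=(0,\ldots,0,\tau)$ and that $Q_1$ is uniformly distributed on the sphere $S_r\subset\RR^d$ of radius $r$ about the origin. The key observation is that $W:=\dist(P,Q_1)=\sqrt{r^2+\tau^2-2\tau Y}$ depends only on the last coordinate $Y:=(Q_1)_d$ of $Q_1$, so $Z=r-W$ is a strictly monotone function of $Y$, and its density reduces to a one-dimensional change of variables.

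The density of $Y$ on $[-r,r]$ follows from a direct calculation: the level set $\{Q_d=y\}\cap S_r$ is a $(d-2)$-sphere of radius $\sqrt{r^2-y^2}$, and comparing its surface area with that of $S_r$ shows that $f_Y(y)$ is proportional to $(r^2-y^2)^{(d-3)/2}/r^{d-2}$, the normalizing constant being the ratio of the surface areas of the unit $(d-2)$- and $(d-1)$-spheres. Inverting the relation between $W$ and $Y$ yields $y(z)=(\tau^2+2rz-z^2)/(2\tau)$ and $|\mathrm{d}y/\mathrm{d}z|=(r-z)/\tau$, so $f_{Z\mid T=\tau,R=r}(z) = f_Y(y(z))\cdot(r-z)/\tau$. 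The stated support of $z$ is immediate from $y(z)\in[-r,r]$.

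The algebraic crux is the identity
\[
r^2 - y(z)^2 \;=\; \frac{(\tau-z)(\tau+z)(2r-\tau-z)(2r+\tau-z)}{4\tau^2},
\]
obtained by factoring each of $1\mp y(z)/r$ as a quadratic in $z$. The four linear factors vanish exactly at the candidate boundary values $z\in\{\tau,-\tau,2r-\tau,-(2r+\tau)\}$; when $r\ge\tau$ the active boundaries are $\pm\tau$ so the factors $\tau\pm z$ are precisely the ones appearing in the claimed denominator $\tau^2-z^2$, while when $r<\tau$ the active boundaries shift to $-\tau$ and $2r-\tau$, producing the factors $\tau+z$ and $2r-\tau-z$. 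Substituting this identity into the expression for $f_Z(z)$ exposes the claimed denominators, and it remains to bound the two ``inactive'' factors together with the Jacobian $r-z$ against the normalization $r^{d-2}\tau$ up to the constant $\sqrt{2}$. The main obstacle I foresee is controlling the dimensional dependence: the power $(d-3)/2$ of $(r^2-y^2)$ in $f_Y$ interacts delicately with the four-factor decomposition, and one cannot simply apply $(r^2-y^2)^{(d-3)/2}\le r^{d-3}$ across all factors without losing a $d$-dependent factor; instead, this estimate should be applied only after splitting off the two active factors from the product, keeping them intact to cancel against the claimed denominator while the inactive factors absorb into the normalization together with the ratio of unit-sphere surface areas.
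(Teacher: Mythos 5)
Your reduction is correct and in fact more careful than the paper's own argument: it yields the exact conditional density
\[
f_{Z|T=\tau,R=r}(z)\;=\;\frac{\omega_{d-2}}{\omega_{d-1}}\cdot\frac{r-z}{\tau\, r^{d-2}}\cdot\left(\frac{(\tau-z)(\tau+z)(2r-\tau-z)(2r+\tau-z)}{4\tau^{2}}\right)^{(d-3)/2},
\]
where $\omega_k$ denotes the surface area of the unit $k$-sphere; your density for $Y$, the change of variables, and the four-factor identity are all right (minor slip: the factor $2r+\tau-z$ vanishes at $z=2r+\tau$, not at $-(2r+\tau)$). The paper proceeds differently: it conditions on the plane spanned by $O$, $P$, $Q$, treats the in-plane angle $\alpha$ as uniform, bounds $f_{Z|T=\tau,R=r}(z)\le|\alpha'(z)|$, and finishes with the purely algebraic inequality $4r^2\tau^2-(\tau^2+2rz-z^2)^2\ge 2(r-z)^2(\tau^2-z^2)$ (and its analogue for $r<\tau$); this is in effect the two-dimensional computation, and for $d=2$ your formula reproduces exactly that bound (with a spare factor $1/\pi$).

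The genuine gap is your final step. For $d\ge 3$ the ``active'' factors $(\tau+z)(\tau-z)$ (resp.\ $(\tau+z)(2r-\tau-z)$) appear in the \emph{numerator} with the nonnegative exponent $(d-3)/2$, so there is nothing to ``keep intact to cancel against the claimed denominator''; what you would actually have to show (case $r\ge\tau$) is $\frac{\omega_{d-2}}{\omega_{d-1}}\cdot\frac{r-z}{\tau r^{d-2}}\,(r^2-y(z)^2)^{(d-3)/2}\sqrt{(\tau+z)(\tau-z)}\le\sqrt2$, and this is false for large constant $d$: since $\omega_{d-2}/\omega_{d-1}=1/\int_0^{\pi}\sin^{d-2}\beta\,d\beta\sim\sqrt{(d-2)/(2\pi)}$, taking $r=\sqrt d$, $\tau$ small and $z=0$ gives $f_{Z|T=\tau,R=r}(0)=\Theta(\sqrt d)/\tau$; concretely, $d=20$, $\tau=1/2$, $z=0$ gives about $3.3$, whereas the claimed bound is $\sqrt{2}/\tau\approx 2.83$. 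So no way of splitting the four factors will rescue the plan — the obstruction is the growth of the normalization $\omega_{d-2}/\omega_{d-1}$, not the bookkeeping of the inactive factors, and it is felt in the middle of the range, where your claimed bound is smallest, not at the endpoints. The dimension dependence you flagged as a delicate point is thus not merely delicate: along your (exact) route the statement of Lemma~\ref{lemma:densityZ} can only be recovered with a $d$-dependent constant in place of $\sqrt2$ (which would still suffice for Lemmas~\ref{lemma:densityDelta} and~\ref{lemma:UB:DensitiesDeltaZ}, since they absorb $d$-dependent constants into $\kappa$); the explicit constant $\sqrt2$ is an artifact of the paper's planar reduction via the uniform in-plane angle, which your computation does not, and cannot, reproduce for $d\ge3$.
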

\begin{proof}
We denote by $Z$ the random variable $\dist(O,Q)-\dist(P,Q)$, where $Q$ 
is a point chosen uniformly at random from a $d$-dimensional hyperball 
with radius $\sqrt{d}$ centered at the origin. In the following, we 
assume that the plane spanned by the points $O$, $P$, and $Q$ is fixed 
arbitrarily, and we consider the random experiment conditioned on the 
event that $Q$ lies in this plane. To make the calculations simpler, we 
use polar coordinates to describe the location of $Q$. Since the radius 
$\dist(O,Q)=r$ is given, the point $Q$ is completely determined by the 
angle $\alpha$ between the $y$-axis and the line between $O$ and $Q$ 
(see Figure~\ref{fig:UB:Triangle}). Hence, the random variable $Z$ can 
be written as
\[
   Z = r - \sqrt{r^2+\tau^2-2r\tau\cdot\cos{\alpha}}.
\]
It is easy to see that $Z$ can only take values in the interval 
$[-\tau,\min\{\tau,2r-\tau\}]$, and hence the density
$f_{Z|T=\tau,R=r}(z)$ is $0$ outside this interval.
\begin{figure}[H]
\begin{center}
\includegraphics{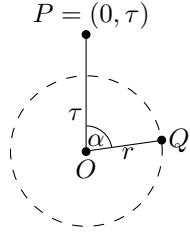}
\caption{The random variable $Z$ is defined as $r-\dist(P,Q)$.}
\label{fig:UB:Triangle}
\end{center}
\end{figure}

Since $Q$ is chosen uniformly at random from a hyperball centered at
the origin, rotational symmetry implies that
the angle $\alpha$ is chosen 
uniformly at random from the interval $[0,2\pi)$. For symmetry reasons, 
we can assume that $\alpha$ is chosen uniformly from the interval 
$[0,\pi)$. When $\alpha$ is restricted to the interval $[0,\pi)$, 
there exists a unique inverse function mapping $Z$ to $\alpha$, namely
\[
 \alpha(z) = \arccos\left(\frac{\tau^2+2zr-z^2}{2r\tau}\right).
\]
For $|x|<1$, the derivative of the arc cosine is
\[
   (\arccos(x))' = -\frac{1}{\sqrt{1-x^2}} \le 0.
\]
Hence, the density $f_{Z|T=\tau,R=r}$ can be expressed as
\[
   f_{Z|T=\tau,R=r}(z) = f_{\alpha}(\alpha(z))\cdot \left|\frac{d}{dz}\alpha(z)\right|
          = -\frac{1}{\pi}\cdot \frac{d}{dz}\alpha(z)
          \le -\frac{d}{dz}\alpha(z),
\]
where $f_{\alpha}$ denotes the density of $\alpha$, \ie, the density of the uniform 
distribution over $[0,\pi)$.
Using the chain rule, we obtain that the derivative of $\alpha(z)$ equals
\begin{alignat*}{1}
  & \frac{r-z}{r \tau}\cdot\frac{-1}{\sqrt{1-\frac{(\tau^2+2zr-z^2)^2}{4r^2\tau^2}}}\\
  = & \frac{2(z-r)}{\sqrt{4r^2\tau^2-4r^2z^2-4r\tau^2z+4rz^3-\tau^4+2\tau^2z^2-z^4}}.
\end{alignat*}
In order to prove the lemma, we distinguish between the cases $r\ge\tau$
and $r< \tau$.

{\bf First case: $r\ge \tau$.} \\
In this case, it suffices to show that
\begin{equation}\label{eqn:LemmaZ}
  4r^2\tau^2-4r^2z^2-4r\tau^2z+4rz^3-\tau^4+2\tau^2z^2-z^4 \ge 2(z-r)^2(\tau^2-z^2),
\end{equation}
which is implied by
\begin{alignat*}{1}
   &    4r^2\tau^2-4r^2z^2-4r\tau^2z+4rz^3-\tau^4+2\tau^2z^2-z^4 - 2(z-r)^2(\tau^2-z^2) \\
   & =  2r^2(\tau^2-z^2)-\tau^4+z^4  
   \ge 2\tau^2(\tau^2-z^2)-\tau^4+z^4 
    = (\tau^2-z^2)^2 \ge 0.
\end{alignat*}
This proves the lemma for $r\ge \tau$ because
\begin{align*}
   -\frac{d}{dz}\alpha(z) & =
   -\frac{2(z-r)}{\sqrt{4r^2\tau^2-4r^2z^2-4r\tau^2z+4rz^3-\tau^4+2\tau^2z^2-z^4}}\\
   & \le -\frac{2(z-r)}{\sqrt{2(z-r)^2(\tau^2-z^2)}}
   = -\frac{2(z-r)}{|z-r|\sqrt{2(\tau^2-z^2)}} = \sqrt{\frac{2}{\tau^2-z^2}},
\end{align*}
where we have used~\eqref{eqn:LemmaZ} for the inequality.

{\bf Second case: $r< \tau$.} \\
In this case, it suffices to show that
\[
  4r^2\tau^2-4r^2z^2-4r\tau^2z+4rz^3-\tau^4+2\tau^2z^2-z^4 \ge 2(z-r)^2(\tau+z)(2r-\tau-z),
\]
which is implied by
\begin{alignat}{1}
   & 4r^2\tau^2\!-\!4r^2z^2\!-\!4r\tau^2z\!+\!4rz^3\!-\!\tau^4\!+\!2\tau^2z^2\!-\!z^4\!-\!2(z\!-\!r)^2(\tau\!+\!z)(2r\!-\!\tau-z) \ge 0 \notag \\
   &\iff (-2r+z+\tau)(\tau+z)(z^2+2\tau z-2rz+2r^2-\tau^2-2\tau r) \ge 0 \notag \\
   &\iff z^2+2\tau z-2rz+2r^2-\tau^2-2\tau r \le 0, \label{eqn:equivalenceA}
\end{alignat}
where the first equivalence follows because the left hand sides of the first and second inequality are identical and
where the last equivalence follows because $(-2r+z+\tau)<0$ and $(\tau+z)>0$.
Both these inequalities are true because $z\in(-\tau,\min\{\tau,2r-\tau\})$.
Inequality~\eqref{eqn:equivalenceA} follows from
\begin{equation*}
\begin{split}
    &  z^2+2\tau z-2rz+2r^2-\tau^2-2\tau r \\
   =  & z^2+2z(\tau-r)+2r^2-\tau^2-2\tau r \\
  \le &(2r-\tau)^2+2(2r-\tau)(\tau-r)+2r^2-\tau^2-2\tau r \\
   = & 2(r^2-\tau^2) \le 0,
\end{split}
\end{equation*}
where the first inequality follows because $z\le 2r-\tau$.
\end{proof}

Based on Lemma~\ref{lemma:densityZ}, the density of the random variable 
$\Delta=Z_1-Z_2$ under the conditions $R_1:=\dist(O,Q_1)=r_1$, 
$R_2:=\dist(O,Q_2)=r_2$, and $T:=\dist(O,P)=\tau$ can be computed as
the convolution of the densities of the random variables $Z_1$ and $-Z_2$. 
The former density equals $f_{Z|T=\tau,R=r}$ and the latter density can easily
be obtained from $f_{Z|T=\tau,R=r}$.
\begin{lemma}
\label{lemma:densityDelta}
Let $\tau,r_1,r_2\in(0,\sqrt{d}]$, and let $Z_1$ and $Z_2$ be 
independent random variables drawn according to the densities 
$f_{Z|T=\tau,R=r_1}$ and $f_{Z|T=\tau,R=r_2}$, respectively. For
$\delta\in(0,1/2]$ and a sufficiently large constant $\kappa$, the density
$f_{\Delta|T=\tau,R_1=r_1,R_2=r_2}(\delta)$ of the random variable
$\Delta=Z_1-Z_2$ is bounded from above by
\[
  \begin{cases}
   \frac{\kappa}{\tau}\cdot\ln\left(\delta^{-1}\right)
      & \mbox{if } \tau\le r_1, \tau\le r_2, \\
   \frac{\kappa}{\sqrt{r_1r_2}}\cdot\left(\ln\left(\delta^{-1}\right)+\ln|2(r_1-r_2)-\delta|^{-1}\right)
      & \mbox{if } r_1\le \tau, r_2\le\tau, \delta\neq 2(r_1-r_2), \\
   \frac{\kappa}{\sqrt{\tau r_1}}\cdot\ln\left(\delta^{-1}\right) &\mbox{if } r_1\le \tau \le r_2, \\
   \frac{\kappa}{\sqrt{\tau r_2}}\cdot\left(\ln\left(\delta^{-1}\right)+\ln|2(\tau-r_2)-\delta|^{-1}\right) & \mbox{if } r_2\le \tau \le r_1, \delta\neq 2(\tau-r_2).
 \end{cases}
\]
\end{lemma}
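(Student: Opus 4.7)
The plan is to compute $f_{\Delta\mid T=\tau,R_1=r_1,R_2=r_2}$ as the convolution of the densities of $Z_1$ and $-Z_2$. Since $Z_1$ and $Z_2$ are independent, we have
\begin{equation*}
f_{\Delta\mid T=\tau,R_1=r_1,R_2=r_2}(\delta) \;=\; \int f_{Z\mid T=\tau,R=r_1}(z)\,f_{Z\mid T=\tau,R=r_2}(z-\delta)\,dz,
\end{equation*}
using the fact that the density of $-Z_2$ at $y$ equals $f_{Z\mid T=\tau,R=r_2}(-y)$. Substituting the pointwise bounds from Lemma~\ref{lemma:densityZ} and restricting the integration range to the intersection of the two supports yields an integrand of the form $2/\sqrt{\ell_1(z)\ell_2(z)\ell_3(z)\ell_4(z)}$, where each $\ell_i$ is a linear factor whose root is a square-root singularity coming from one of the two density bounds. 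Which pair of linear factors describes each density depends on whether $r_i\le\tau$ or $r_i\ge\tau$, so the four cases of the present lemma correspond exactly to the four combinations arising from Lemma~\ref{lemma:densityZ}.

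In each of the four cases I will bound the integral by splitting the integration domain into small neighborhoods of each singularity together with a bulk region. Near an isolated simple square-root singularity the integrand looks like $1/\sqrt{|z-z_0|}$ times a bounded factor, which integrates to a bounded quantity. A logarithmic factor $\ln(1/\eta)$ arises only when two of the singularities lie on the same side of the integration interval and are separated by a small distance $\eta$, because then $\sqrt{(z-z_0)(z-z_0-\eta)}$ behaves like $|z-z_0|$ over an interval of length comparable to the support, and its reciprocal integrates to roughly $\ln(1/\eta)$. Before this splitting, an affine substitution that scales out the relevant length parameters transforms the integral so that its global prefactor becomes exactly the one claimed in the lemma, namely $1/\tau$ in case~a), $1/\sqrt{r_1r_2}$ in case~b), $1/\sqrt{\tau r_1}$ in case~c), and $1/\sqrt{\tau r_2}$ in case~d).

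Tracking which pairs of singularities can near-collide then yields the stated logarithmic factors. In every case, one pair of ``left-end'' singularities approaches at distance $\delta$ as $\delta\to 0$, contributing $\ln(1/\delta)$. In case~b) a second pair of ``right-end'' singularities, namely $z=2r_1-\tau$ and $z=2r_2-\tau+\delta$, meet exactly when $\delta=2(r_1-r_2)$, contributing the additional factor $\ln|2(r_1-r_2)-\delta|^{-1}$; symmetrically, in case~d) the right-end pair $z=\tau$ and $z=2r_2-\tau+\delta$ collide at $\delta=2(\tau-r_2)$, giving $\ln|2(\tau-r_2)-\delta|^{-1}$. In cases~a) and~c) the corresponding ``second'' pair is always separated by a distance bounded below by a positive constant multiple of $\tau$ (in case~c), one uses $r_1\le\tau$ to conclude that the right endpoints of the two supports differ by at least $2(\tau-r_1)$, which together with $\delta>0$ keeps the singularities well separated), so no second logarithmic factor arises. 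The main obstacle is performing these integral estimates uniformly in the parameters $\tau,r_1,r_2\in(0,\sqrt d\,]$ and $\delta\in(0,1/2]$, keeping careful track of the boundary contributions and of how the non-logarithmic pieces are absorbed into a single constant $\kappa$ that works for all four cases.
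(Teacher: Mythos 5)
Your overall strategy---express $f_{\Delta|T=\tau,R_1=r_1,R_2=r_2}$ as the convolution of $f_{Z|T=\tau,R=r_1}$ with the density of $-Z_2$, insert the pointwise bounds of Lemma~\ref{lemma:densityZ}, and then control an integral whose integrand is the reciprocal square root of four linear factors---is exactly the paper's starting point; the paper executes it by bounding one factor of each product from below by $\tau$ or $r_i$, splitting each density bound into a sum of two reciprocal square roots, and reducing everything to the closed-form estimates \eqref{eqn:Integral1} and \eqref{eqn:Integral2}. The genuine problem is in your bookkeeping of which singularities can near-collide. In case~a) the two right-end roots are $z=\tau$ (from $Z_1$) and $z=\tau+\delta$ (from the shifted $Z_2$), so they are separated by exactly $\delta$, not by a constant multiple of $\tau$; in case~c) the right-end roots are $z=2r_1-\tau$ and $z=\tau+\delta$, separated by $2(\tau-r_1)+\delta$, which can be arbitrarily small compared with $\tau$ (let $r_1\to\tau$ and $\delta\to 0$). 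Hence a second logarithmic contribution \emph{does} arise in both of these cases; it is harmless only because the separation is always at least $\delta$, so the extra factor is at most $\ln(\delta^{-1})$ and is absorbed into the first term---this is precisely what happens in the paper's Case~1 and Case~3, where two of the four elementary integrals are of logarithmic type. As written, your plan would treat the right endpoint in case~a) as an isolated square-root singularity contributing $O(1)$, which is false, and the justification ``separated by a positive constant multiple of $\tau$'' cannot be repaired to yield $O(1)$; the correct statement is ``separated by at least $\delta$, hence contributing at most another $\ln(\delta^{-1})$.''

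A secondary weakness is the claim that an affine substitution ``scales out the relevant length parameters'' and thereby produces the stated prefactors: one affine change of variables cannot normalize two different scales simultaneously, so it does not by itself yield $1/\sqrt{r_1r_2}$ or $1/\sqrt{\tau r_i}$. The paper obtains these prefactors pointwise, by bounding one linear factor of each density from below by $r_i$ (respectively $\tau$) before integrating; your sketch needs an analogous step, carried out uniformly in $\tau,r_1,r_2\in(0,\sqrt d\,]$ and $\delta\in(0,1/2]$, before the singularity analysis can be made rigorous.
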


The simple but somewhat tedious calculation that yields 
Lemma~\ref{lemma:densityDelta} is deferred to 
Appendix~\ref{appendix:proof:densityDelta}. In order to prove 
Lemma~\ref{lemma:L2Improvements}, we need bounds on the densities of 
the random variables $\Delta$, $Z_1$, and $Z_2$ under certain 
conditions. We summarize these bounds in the following lemma.
\begin{lemma}
\label{lemma:UB:DensitiesDeltaZ}
Let $\tau,r\in(0,\sqrt{d}]$, $\delta\in(0,1/2]$, and
let $\kappa$ denote a sufficiently large constant.
\begin{enumerate}
\renewcommand{\labelenumi}{\alph{enumi})}
  \item For $i\in[2]$, the density of $\Delta$ under the condition
  $R_i=r$ is bounded by
   \[
  f_{\Delta|R_i=r}(\delta) \le \frac{\kappa}{\sqrt{r}}\cdot\ln\left(\delta^{-1}\right).
   \]
\item The density of $\Delta$ under the condition $T=\tau$ is bounded by
   \[
  f_{\Delta|T=\tau}(\delta) \le \frac{\kappa}{\tau}\cdot\ln\left(\delta^{-1}\right).
\]
\item The density of $\Delta$ is bounded by
\[
   f_{\Delta}(\delta) \le   
   \kappa \cdot\ln\left(\delta^{-1}\right).
\]
\item For $i\in[2]$, the density of $Z_i$ under the condition
$T=\tau$ is bounded by
 \[
  f_{Z_i|T=\tau}(z) \le \frac{\kappa}{\sqrt{\tau^2-z^2}}
\]
if $|z|< \tau$. Since $Z_i$ takes only values in 
the interval $[-\tau,\tau]$, the conditional density 
$f_{Z_i|T=\tau}(z)$ is $0$ for $z\notin[-\tau,\tau]$.
\end{enumerate}
\end{lemma}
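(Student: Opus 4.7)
In the simplified random experiment, $P$, $Q_1$, $Q_2$ are independent and uniform in the ball of radius $\sqrt d$ about the origin, so the radii $T=\|P\|$ and $R_i=\|Q_i\|$ are mutually independent. Each has marginal density on $[0,\sqrt d]$ bounded by a constant $c_d$ depending only on $d$, because the density $d\,r^{d-1}/d^{d/2}$ attains its maximum $\sqrt d$ at $r=\sqrt d$. Every integration variable is bounded above by $\sqrt d$, which lets me absorb polynomial-in-$\tau$ slack into the constant $\kappa$ via inequalities like $\tau\le d/\tau$. The plan is to derive (d) from Lemma~\ref{lemma:densityZ} and (a)--(c) from Lemma~\ref{lemma:densityDelta} by integrating out the variables not being conditioned on.

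\textbf{Part (d).} By independence of $R_i$ and $T$,
\[
 f_{Z_i|T=\tau}(z)=\int_0^{\sqrt d} f_{Z|T=\tau,R=r}(z)\,f_{R_i}(r)\,dr.
\]
I split the integration at $r=\tau$. On $[\tau,\sqrt d]$, Lemma~\ref{lemma:densityZ} gives the $r$-independent bound $\sqrt{2/(\tau^2-z^2)}$, and since $\int f_{R_i}\le 1$ this piece contributes $O(1/\sqrt{\tau^2-z^2})$. On $[(\tau+z)/2,\tau)$, the substitution $u=2r-\tau-z$ yields
\[
 \int_{(\tau+z)/2}^\tau\!\sqrt{\tfrac{2}{(\tau+z)(2r-\tau-z)}}\,dr=\sqrt{\tfrac{2(\tau-z)}{\tau+z}}=\tfrac{\sqrt 2\,(\tau-z)}{\sqrt{\tau^2-z^2}}\le \tfrac{2\sqrt{2d}}{\sqrt{\tau^2-z^2}},
\]
using $\tau-z\le 2\sqrt d$; multiplying by $c_d$ preserves the order. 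Summing the two pieces gives the claim.

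\textbf{Parts (a), (b), (c).} I write each desired density as an integral of $f_{\Delta|T=\tau,R_1=r_1,R_2=r_2}(\delta)$ against the marginal densities (each bounded by $c_d$) of the variables being integrated out, and split the domain into the four cases of Lemma~\ref{lemma:densityDelta}. Each resulting integral over $[0,\sqrt d]$ reduces to one of three elementary facts: $\int_0^{\sqrt d}r^{-1/2}\,dr=2d^{1/4}$ is finite, $\int_0^{\sqrt d}\ln|r-r_0|^{-1}\,dr$ is bounded uniformly in $r_0\in[0,\sqrt d]$, and the change of variables $u_i=\sqrt{r_i}$ removes the $1/\sqrt{r_1 r_2}$ factor from Case~2 via
\[
 \int_0^\tau\!\!\int_0^\tau\!\frac{\ln|2(r_1-r_2)-\delta|^{-1}}{\sqrt{r_1 r_2}}\,dr_1\,dr_2=4\!\int_0^{\sqrt\tau}\!\!\int_0^{\sqrt\tau}\!\ln|2(u_1^2-u_2^2)-\delta|^{-1}\,du_1\,du_2,
\]
whose integrand has only a logarithmic singularity along the smooth curve $u_1^2-u_2^2=\delta/2$ and is therefore $O(1)$ uniformly in $\delta\in(0,1/2]$. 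For (b), the case $r_1,r_2\le\tau$ contributes $O(\tau\ln(1/\delta))=O(\tau^{-1}\ln(1/\delta))$ by $\tau^2\le d$, the case $r_1,r_2\ge\tau$ contributes $\tau^{-1}\ln(1/\delta)$ directly, and the two mixed cases are bounded by the same methods. Parts (a) and (c) follow by the same bookkeeping, with the conditioning choice determining which factor survives in the final denominator.

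\textbf{Main obstacle.} The only nonroutine step is bounding the product of a $1/\sqrt{\,\cdot\,}$ singularity with a logarithmic singularity that arises in Cases~2 and~4 of Lemma~\ref{lemma:densityDelta}. The substitution $u_i=\sqrt{r_i}$ clears the square-root singularity, after which integrability of $\ln|\cdot|^{-1}$ across a smooth zero set on a bounded domain provides the required $O(1)$ bound; Case~4 reduces to a one-dimensional version of the same phenomenon once the bounded $r_1$-integration is performed.
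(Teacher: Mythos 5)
Your plan for part (d) and for part (b) is sound and is essentially the paper's own argument, and your substitution $u_i=\sqrt{r_i}$ is a legitimate alternative to the paper's device (Lemma~\ref{lem:CalculationIntegralDetail} applied after absorbing $r^{-1/2}$ into the polynomial radial density) for handling the product of the $1/\sqrt{r_1r_2}$ singularity with the logarithmic one. However, there is a concrete gap in the step ``Parts (a) and (c) follow by the same bookkeeping.'' The bookkeeping is not the same: in part (b) the factor $1/\tau$ from Case~1 of Lemma~\ref{lemma:densityDelta} is never integrated, but in parts (a) and (c) you must integrate the Case-1 bound $\frac{\kappa}{\tau}\ln(\delta^{-1})$ over $\tau$ down to $0$. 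With your stated toolkit --- marginal densities ``each bounded by $c_d$'' plus the three elementary facts --- this step fails, because
\[
  \int_0^{\sqrt d}\frac{c_d}{\tau}\,d\tau = \infty .
\]
What rescues the argument (and what the paper uses) is the exact form of the radial density, $f_T(\tau)=\tau^{d-1}/d^{d/2-1}$, which vanishes at $0$ fast enough that $f_T(\tau)/\tau\le \tau^{d-2}/d^{d/2-1}=O(1)$ for $d\ge 2$; only then is the Case-1 contribution to (a) and (c) finite. You have the density formula in your Strategy paragraph, so the fix is one line, but as written the reduction ``to one of three elementary facts'' is false precisely for the parts you did not spell out, and a constant bound on $f_T$ is genuinely insufficient there.

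A secondary, smaller point: your justification that $\int_0^{\sqrt\tau}\!\int_0^{\sqrt\tau}\ln|2(u_1^2-u_2^2)-\delta|^{-1}\,du_1\,du_2=O(1)$ uniformly in $\delta$ because the singularity lies ``along a smooth curve'' is a bit too quick, since for small $\delta$ the curve passes near the origin where the gradient of $2(u_1^2-u_2^2)-\delta$ degenerates. The claim is still true, but it should be argued via one-dimensional slices, e.g.\ using $|2u_1^2-c|\ge 2\bigl(u_1-\sqrt{c/2}\bigr)^2$ with $c=2u_2^2+\delta$, so that each slice contributes at most a universal constant; this is the analogue of the paper's Lemma~\ref{lem:CalculationIntegralDetail}. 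With these two repairs your route coincides, up to reorganization (you integrate both free radii at once rather than going through the intermediate Lemma~\ref{lemma:densityDelta2}), with the paper's proof.
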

Lemma~\ref{lemma:UB:DensitiesDeltaZ} follows from 
Lemmas~\ref{lemma:densityZ} and~\ref{lemma:densityDelta} by integrating 
over all values of the unconditioned distances. The proof can be found 
in Appendix~\ref{appendix:proof:densities}.

\subsubsection{Simplified random experiments}
\label{subsubsec:Simplified}

In the previous section we did not analyze the random experiment that 
really takes place. Instead of choosing the points according to the 
given density functions, we simplified their distributions by placing 
point $O$ in the origin and by giving the other points $P$, $Q_1$, and 
$Q_2$ uniform distributions centered around the origin. In our input 
model, however, each of these points is described by a density function 
over the unit hypercube. We consider the probability of the event 
$\Delta\in[0,\varepsilon]$ in the original input model as well as 
in the simplified random experiment. In the following, we denote this 
event by ${\cal E}$. We claim that the simplified random experiment 
that we analyze is only slightly dominated by the original random 
experiment, in the sense that the probability of the event ${\cal E}$ 
in the simplified random experiment is smaller by at most some factor 
depending on $\phi$.

In order to compare the probabilities in the original and in the 
simplified random experiment, consider the original experiment and 
assume that the point $O$ lies at position $x\in[0,1]^d$. Then one 
can identify a region ${\cal R}_{x}\subseteq\RR^{3d}$ with the 
property that the event ${\cal E}$ occurs if and only if the random 
vector $(P,Q_1,Q_2)$ lies in ${\cal R}_{x}$. No matter how the 
position $x$ of $O$ is chosen, this region always has the same 
shape, only its position is shifted. That is, 
${\cal R}_{x}=\{(x,x,x)+{\cal R}_{0^d}\}$.
Let ${\cal V}=\sup_{x\in[0,1]^d}\Vol({\cal R}_{x}\cap[0,1]^{3d})$. 
Then the probability of ${\cal E}$ can be bounded from above by 
$\phi^3\cdot{\cal V}$ in the original random experiment because the density
of the random vector $(P,Q_1,Q_2)$ is bounded from above by $\phi^3$
as $P$, $Q_1$, and $Q_2$ are independent vectors whose densities are bounded
by $\phi$. Since $\Delta$ is invariant under translating $O$, $P$, $Q_1$, and $Q_2$ by
the same vector, we obtain
\begin{alignat*}{1}
    \Vol\left({\cal R}_{x}\cap[0,1]^{3d}\right)
  &= \Vol\left({\cal R}_{0^d}\cap([-x_1,1-x_1]\times\cdots\times[-x_d,1-x_d])^3\right)\\
  &\le \Vol\left({\cal R}_{0^d}\cap[-1,1]^{3d}\right),
\end{alignat*}
where the equality follows from shifting ${\cal R}_{x}\cap[0,1]^{3d}$
by $(-x,-x,-x)$.
Hence, ${\cal V}\le {\cal V}':=\Vol({\cal R}_{0^d}\cap[-1,1]^{3d})$. 
In the simplified random experiment, $P$, $Q_1$, and $Q_2$ are chosen
uniformly from the hyperball centered at the origin with radius $\sqrt{d}$.
This hyperball contains the hypercube $[-1,1]^d$ completely. 
Hence, the region on which the vector $(P,Q_1,Q_2)$ is uniformly distributed   
contains the region ${\cal R}_{0^d}\cap[-1,1]^{3d}$ completely.
As the vector $(P,Q_1,Q_2)$ is uniformly distributed on a region of volume
$V_d(\sqrt{d})^3$, where $V_d(\sqrt{d})$ denotes the 
volume of a $d$-dimensional hyperball with radius $\sqrt{d}$,
this implies that the probability of ${\cal E}$ in the simplified random experiment can be bounded from below by 
${\cal V}'/V_d(\sqrt{d})^3$.
Since a $d$-dimensional hyperball with radius $\sqrt{d}$ is contained in a hypercube with side length 
$2\sqrt{d}$, its volume can be bounded from above by $(2\sqrt{d})^d=(4d)^{d/2}$. 
Hence, the probability of ${\cal E}$ in the simplified random 
experiment is at least ${\cal V}'/(4d)^{3d/2}$, and we have argued above
that the probability of ${\cal E}$ in the original random experiment
is at most $\phi^3\cdot{\cal V}\le \phi^3\cdot{\cal V}'$. Hence, 
the probability of ${\cal E}$ in the simplified random 
experiment is smaller by at most a factor of $((4d)^{d/2}\phi)^3$ 
compared to the original random experiment.

Taking into account this factor and using 
Lemma~\ref{lemma:UB:DensitiesDeltaZ}~c) and a union bound over all 
possible 2-changes yields the following lemma about the improvement of 
a single 2-change.
\begin{lemma}\label{lemma:ImprovementSingle}
The probability that there exists an improving 2-change whose 
improvement is at most $\varepsilon\le1/2$ is bounded from above by 
$O(n^4\cdot\phi^3\cdot\varepsilon\cdot\log(1/\varepsilon))$.
\end{lemma}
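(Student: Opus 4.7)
The plan is to fix a single 2-change and bound the probability that its improvement lies in $(0,\varepsilon]$, and then take a union bound over the at most $\binom{n}{4}\cdot 3 = O(n^4)$ possible 2-changes. Fix four vertices playing the roles $O$, $P$, $Q_1$, $Q_2$ of the previous subsection; the improvement of the corresponding 2-change is the random variable $\Delta = Z_1 - Z_2$ studied in Lemmas~\ref{lemma:densityZ}--\ref{lemma:UB:DensitiesDeltaZ}.

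First I would work in the simplified random experiment: $O$ at the origin and $P,Q_1,Q_2$ drawn independently and uniformly from the $d$-dimensional hyperball of radius $\sqrt d$. By Lemma~\ref{lemma:UB:DensitiesDeltaZ}~c) the (unconditional) density of $\Delta$ in this experiment satisfies $f_\Delta(\delta) \le \kappa\ln(\delta^{-1})$ for all $\delta\in(0,1/2]$. Integrating gives
\[
\Pr{\Delta\in(0,\varepsilon]}
\;\le\; \int_0^\varepsilon \kappa\ln(\delta^{-1})\,d\delta
\;=\; \kappa\,\varepsilon\bigl(1+\ln(\varepsilon^{-1})\bigr)
\;=\; O\!\bigl(\varepsilon\log(1/\varepsilon)\bigr),
\]
valid for $\varepsilon\le 1/2$.

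Next I would transfer this bound back to the $\phi$-perturbed input model using the comparison between the two random experiments described at the beginning of Section~\ref{subsubsec:Simplified}. That argument shows that for any event depending only on the four points $O,P,Q_1,Q_2$ (such as $\{\Delta\in(0,\varepsilon]\}$), its probability in the original model is at most $((4d)^{d/2}\phi)^3$ times its probability in the simplified experiment. Since $d$ is a constant, this introduces an $O(\phi^3)$ factor, giving a per-2-change bound of $O(\phi^3\cdot\varepsilon\log(1/\varepsilon))$.

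Finally, a union bound over the $O(n^4)$ choices of the four vertices involved in a potential 2-change yields the claimed $O(n^4\cdot\phi^3\cdot\varepsilon\log(1/\varepsilon))$. The only mild subtlety is making sure the comparison between the two random experiments is legitimate when we condition only on the four vertices that participate in the fixed 2-change; this is immediate because the event $\{\Delta\in(0,\varepsilon]\}$ depends only on the positions of these four vertices, so the remaining $n-4$ points can be integrated out without affecting the bound.
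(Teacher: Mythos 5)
Your proposal is correct and follows essentially the same route as the paper: integrate the density bound of Lemma~\ref{lemma:UB:DensitiesDeltaZ}~c) to get the per-2-change probability $O(\varepsilon\log(1/\varepsilon))$ in the simplified experiment, transfer to the $\phi$-perturbed model via the $((4d)^{d/2}\phi)^3$ comparison factor from Section~\ref{subsubsec:Simplified}, and conclude with a union bound over the $O(n^4)$ possible 2-changes. Your remark about integrating out the remaining $n-4$ points is a fine (and harmless) extra precaution that the paper leaves implicit.
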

\begin{proof}
As in the proof of Theorem~\ref{theorem:L1weak}, we first 
consider a fixed 2-change $S$, whose improvement we denote 
by $\Delta(S)$.
For the simplified random experiment, Lemma~\ref{lemma:UB:DensitiesDeltaZ}~c)
yields the following bound on the probability that the improvement $\Delta(S)$
lies in $(0,\varepsilon]$:
\begin{alignat*}{1}
   \Pr{\Delta(S)\in(0,\varepsilon]} & = \kappa \int_{0}^{\varepsilon}\ln\left(\delta^{-1}\right)\,d\delta
   = \left[\delta\ln\left(\delta^{-1}\right)+\delta\right]_{0}^{\varepsilon}\\
   & = \varepsilon\ln\varepsilon^{-1}+\varepsilon \le 3\varepsilon\ln\varepsilon^{-1},
\end{alignat*} 
where we used $\varepsilon\le1/2$ for the last inequality.

We have argued that the probability of the event $\Delta(S)\in(0,\varepsilon]$ in the
simplified random experiment is smaller by at most a factor of $((4d)^{d/2}\phi)^3$
compared to the original random experiment. Together with the factor of at most $n^4$ coming from 
a union bound over all possible 2-changes~$S$, we obtain for the original random experiment
\[
   \Pr{\exists S: \Delta(S)\in(0,\varepsilon]} \le 3\varepsilon\ln\varepsilon^{-1} \cdot ((4d)^{d/2}\phi)^3 \cdot n^4,
\] 
which proves the lemma because $d$ is regarded as a constant.
\end{proof}

Using similar arguments as in the proof of 
Theorem~\ref{theorem:L1weak} yields the following upper bound on the 
expected number of 2-changes.
\begin{theorem}
\label{theorem:L2weak}
Starting with an arbitrary tour, the expected number of steps 
performed by 2-Opt on $\phi$-perturbed Euclidean instances is 
$O(n^7\cdot\log^2{(n)}\cdot\phi^3)$.
\end{theorem}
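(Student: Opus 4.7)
The plan is to mimic the proof of Theorem~\ref{theorem:L1weak} but replace the union bound on a single $L_1$ step with Lemma~\ref{lemma:ImprovementSingle}, which already handles the Euclidean case. Let $T$ denote the length of the longest improving path in the state graph. Since all points lie in $[0,1]^d$, every edge has Euclidean length at most $\sqrt{d}$, so any tour (in particular the initial one) has length at most $n\sqrt{d}$. Therefore, if $T\ge t$, the smallest improvement $\Delta_{\min}$ of any improving 2-change executed along the path satisfies $\Delta_{\min}\le n\sqrt{d}/t$, because $t$ consecutive improvements of size at least $\Delta_{\min}$ cannot exceed the initial tour length.

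Next I would combine this with Lemma~\ref{lemma:ImprovementSingle}. For $t\ge 2n\sqrt{d}$, the quantity $\varepsilon:=n\sqrt{d}/t$ lies in $(0,1/2]$, so the lemma applies and yields
\[
\Pr{T\ge t}\;\le\;\Pr{\Delta_{\min}\le n\sqrt{d}/t}\;=\;O\!\left(\frac{n^{5}\,\phi^{3}\,\log(t/n)}{t}\right).
\]
For $t<2n\sqrt{d}$ I would use the trivial bound $\Pr{T\ge t}\le 1$, and I would also use the deterministic bound $T\le n!$ (no tour can repeat).

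Finally I would compute $\Ex{T}=\sum_{t=1}^{n!}\Pr{T\ge t}$ by splitting the sum at $t_{0}=\lceil 2n\sqrt{d}\rceil$:
\[
\Ex{T}\;\le\;t_{0}+\sum_{t=t_{0}+1}^{n!}O\!\left(\frac{n^{5}\phi^{3}\log t}{t}\right)
\;=\;O(n)+O\!\left(n^{5}\phi^{3}\cdot\log^{2}(n!)\right).
\]
Since $\log(n!)=O(n\log n)$ gives $\log^{2}(n!)=O(n^{2}\log^{2}n)$, the right-hand side is $O(n^{7}\log^{2}(n)\,\phi^{3})$, which is the claimed bound.

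There is no real obstacle here because all the work lies in Lemma~\ref{lemma:ImprovementSingle}; the only mild issue is making sure the hypothesis $\varepsilon\le 1/2$ of that lemma is respected, which is taken care of by the split at $t_{0}$, and being careful that the integral (or sum) $\sum_{t}\log(t)/t$ contributes the extra factor of $\log^{2}(n!)=O(n^{2}\log^{2}n)$ that combined with the $n^{5}\phi^{3}$ prefactor produces exactly the exponent $n^{7}\log^{2}(n)\phi^{3}$ stated in the theorem.
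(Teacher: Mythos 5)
Your proposal is correct and follows essentially the same route as the paper: bound the initial tour length by $n\sqrt{d}$, observe that $T\ge t$ forces the smallest single-step improvement below $n\sqrt{d}/t$, invoke Lemma~\ref{lemma:ImprovementSingle}, and sum $\log(t)/t$ up to $n!$ to collect the $\log^2(n!)=O(n^2\log^2 n)$ factor. Your explicit split at $t_0=\lceil 2n\sqrt{d}\rceil$ to respect the hypothesis $\varepsilon\le 1/2$ is a small point of extra care that the paper glosses over, but otherwise the arguments coincide.
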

\begin{proof}
As in the proof of Theorem~\ref{theorem:L1weak}, let $T$ denote the longest path in the state graph.
Let $\Delta_{\min}$ denote the smallest improvement made by any of the 2-changes.
Then, as in the proof of Theorem~\ref{theorem:L1weak}, we know that $T\ge t$ implies that
$\Delta_{\min}\le (\sqrt{d}n)/t$ because each of the $n$ edges in the initial tour has length at most $\sqrt{d}$.
As $T$ cannot exceed $n!$, we obtain with Lemma~\ref{lemma:ImprovementSingle}
\begin{align*}
   \Ex{T} & = \sum_{t=1}^{n!} \Pr{T \ge t} \le \sum_{t=1}^{n!} \Pr{\Delta_{\min}\le \frac{\sqrt{d}n}{t}} \\
    & = O\left(\sum_{t=1}^{n!} \frac{n^5\cdot\phi^3\cdot\sqrt{d}}{t}\cdot\log\left(\frac{t}{\sqrt{d}n}\right)\right) \\
    & = O\left(\sum_{t=1}^{n!} \frac{n^5\cdot\phi^3\cdot\sqrt{d}}{t}\cdot\ln{t}\right) \\
    & = O\left(n^5\cdot\phi^3\cdot\sqrt{d}\cdot \int_{t=1}^{n!}\frac{\ln{t}}{t}\,dt\right) \\
    & = O\left(n^5\cdot\phi^3\cdot\sqrt{d}\cdot \left[\frac12\ln^2{t}\right]_{t=1}^{n!}\right) \\
    & = O\left(n^7\cdot\phi^3\cdot\sqrt{d}\cdot \ln^2{n}\right),
\end{align*}
which proves the lemma because $d$ is regarded as a constant.
\end{proof}

\paragraph{Pairs of type~0.}
In order to improve upon Theorem~\ref{theorem:L2weak}, we consider 
pairs of linked 2-changes as in the analysis of $\phi$-perturbed 
Manhattan instances. Since our analysis of pairs of linked 2-changes is 
based on the analysis of a single 2-change that we presented in the 
previous section, we also have to consider simplified random 
experiments when analyzing pairs of 2-changes. For a fixed pair of 
type~0, we assume that point $v_3$ is chosen to be the origin and the 
other points $v_1$, $v_2$, $v_4$, $v_5$, and $v_6$ are chosen uniformly 
at random from a hyperball with radius $\sqrt{d}$ centered at $v_3$. 
Let ${\cal E}$ denote the event that both $\Delta_1$ and $\Delta_2$ lie 
in the interval $[0,\varepsilon]$, for some given $\varepsilon$. With 
the same arguments as above, one can see that the probability of ${\cal 
E}$ in the simplified random experiment is smaller compared to the 
original experiment by at most a factor of $((4d)^{d/2}\phi)^5$.
The exponent $5$ is due to the fact that we have now five other points
instead of only three.

\paragraph{Pairs of type~1.}
For a fixed pair of type~1, we consider the simplified random 
experiment in which $v_2$ is placed in the origin and the other points 
$v_1$, $v_3$, $v_4$, and $v_5$ are chosen uniformly at random from a 
hyperball with radius $\sqrt{d}$ centered at $v_2$. In this case, the 
probability in the simplified random experiment is smaller by at most a 
factor of $((4d)^{d/2}\phi)^4$. The exponent $4$ is due to the fact that
we have now four other points.

\subsubsection{Analysis of pairs of linked 2-changes}

Finally, we can prove Lemma~\ref{lemma:L2Improvements}.
\begin{proof}[Lemma~\ref{lemma:L2Improvements}]
We start by considering pairs of type~0. We consider the simplified 
random experiment in which $v_3$ is chosen to be the origin and the 
other points are drawn uniformly at random from a hyperball with radius 
$\sqrt{d}$ centered at $v_3$. If the position of the point $v_1$ is 
fixed, then the events $\Delta_1\in[0,\varepsilon]$ and 
$\Delta_2\in[0,\varepsilon]$ are independent as only the vertices $v_1$ 
and $v_3$ appear in both the first and the second step. In fact, 
because the densities of the points $v_2$, $v_4$, $v_5$, and $v_6$ are 
rotationally symmetric, the concrete position of $v_1$ is not important 
in our simplified random experiment anymore; only the distance $R$ 
between $v_1$ and $v_3$ is of interest.

For $i\in[2]$, we determine the conditional probability of the 
event $\Delta_i\in[0,\varepsilon]$ under the condition that the 
distance $\dist(v_1,v_3)$ is fixed with the help of 
Lemma~\ref{lemma:UB:DensitiesDeltaZ}~a), and obtain
\begin{align}\label{eqn:ProbDeltaI}
        & \Pr{\Delta_i\in[0,\varepsilon]\,|\,\dist(v_1,v_3)=r}\notag
    =   \int_{0}^{\varepsilon}f_{\Delta|R_i=r}(\delta)\,d\delta
    \le \int_{0}^{\varepsilon}\frac{\kappa}{\sqrt{r}}\ln\left(\delta^{-1}\right)\,d\delta\notag\\
    = & \frac{\kappa}{\sqrt{r}}\cdot\left[\delta\left(1+\ln\left(\delta^{-1}\right)\right)\right]_{0}^{\varepsilon}
    = \frac{\kappa}{\sqrt{r}}\cdot\varepsilon\cdot(1+\ln(1/\varepsilon))
    \le \frac{3\kappa}{\sqrt{r}}\cdot\varepsilon\cdot\ln(1/\varepsilon),
\end{align}
where the last inequality follows because, as $\varepsilon\le1/2$,
$1\le 2\ln(1/\varepsilon)$. 
Since for fixed distance $\dist(v_1,v_3)$ the random variables $\Delta_1$ 
and $\Delta_2$ are independent, we obtain
\begin{equation}
\label{eqn:eventE}
  \Pr{\Delta_1,\Delta_2\in[0,\varepsilon]\,|\,\dist(v_1,v_3)=r} \le
  \frac{9\kappa^2}{r}\cdot\varepsilon^2\cdot\ln^2(1/\varepsilon).  
\end{equation}
For $r\in[0,\sqrt{d}]$, the density $f_{\dist(v_1,v_3)}$ of the random 
variable $\dist(v_1,v_3)$ in the simplified random experiment is 
$r^{d-1}/d^{d/2-1}$. In order to see this, remember that $v_3$ is chosen
to be the origin and $v_1$ is chosen uniformly at random from a hyperball
with radius $\sqrt{d}$ centered at the origin. The volume $V_d(r)$ 
of a $d$-dimensional hyperball with radius $r$ is $C_d\cdot r^d$ for some 
constant $C_d$ depending on $d$. Now the density $f_{\dist(v_1,v_3)}$
can be written as 
\[
   f_{\dist(v_1,v_3)}(r) = \frac{\frac{d}{dr}V_d(r)}{V_d(\sqrt{d})}
   = \frac{C_d\cdot d\cdot r^{d-1}}{C_d\cdot d^{d/2}}
   = \frac{r^{d-1}}{d^{d/2-1}}.
\]   
Combining this observation with the bound given in~(\ref{eqn:eventE}) yields
\begin{alignat*}{1}
   \Pr{\Delta_1,\Delta_2\in[0,\varepsilon]} & 
   \le \int_{0}^{\sqrt{d}}\left(\frac{9\kappa^2}{r}\varepsilon^2\ln^2(1/\varepsilon)\right)
   \left(\frac{r^{d-1}}{d^{d/2-1}}\right)\,dr\\
   & = O\left(\varepsilon^2\cdot\ln^2(1/\varepsilon)\right),
\end{alignat*}
where the last equation follows because $d$ is assumed to be a constant.
There are $O(n^6)$ different pairs of type~0; hence a union bound over 
all of them concludes the proof of the first term in the sum in
Lemma~\ref{lemma:L2Improvements} when taking into 
account the factor $((4d)^{d/2}\phi)^5$ that results from considering 
the simplified random experiment (see Section~\ref{subsubsec:Simplified}).

It remains to consider pairs of type~1. We consider the simplified 
random experiment in which $v_2$ is chosen to be the origin and the 
other points are drawn uniformly at random from a hyperball with radius 
$\sqrt{d}$ centered at $v_2$. In contrast to pairs of type $0$, pairs 
of type $1$ exhibit larger dependencies as only $5$ different vertices 
are involved in these pairs. Fix one pair of type $1$. The two 
2-changes share the whole triangle consisting of $v_1$, $v_2$, and 
$v_3$. In the second step, there is only one new vertex, namely $v_5$. 
Hence, there is not enough randomness contained in a pair of type $1$ 
such that $\Delta_1$ and $\Delta_2$ are nearly independent as for pairs 
of type $0$.

We start by considering pairs of type~1~a) as defined in Section~\ref{subsubsec:LinkedPairs}.
First, we analyze the 
probability that $\Delta_1$ lies in the interval $[0,\varepsilon]$. 
After that, we analyze the probability that $\Delta_2$ lies in the 
interval $[0,\varepsilon]$ under the condition that the points $v_1$, 
$v_2$, $v_3$, and $v_4$ have already been chosen. In the analysis of the 
second step we cannot make use of the fact that the distances 
$\dist(v_1,v_3)$ and $\dist(v_2,v_3)$ are random variables anymore 
since we exploited their randomness already in the analysis of the 
first step. The only distances whose randomness we can exploit are the 
distances $\dist(v_1,v_5)$ and $\dist(v_2,v_5)$. We pessimistically 
assume that the distances $\dist(v_1,v_3)$ and $\dist(v_2,v_3)$ have 
been chosen by an adversary. This means the adversary can determine an 
interval of length $\varepsilon$ in which the random variable 
$\dist(v_2,v_5)-\dist(v_1,v_5)$ must lie in order for $\Delta_2$ to lie 
in $[0,\varepsilon]$.

Analogously to~(\ref{eqn:ProbDeltaI}), the probability of the event 
$\Delta_1\in[0,\varepsilon]$ under the condition $\dist(v_1,v_2)=r$ can 
be bounded by
\begin{equation}
  \Pr{\Delta_1\in[0,\varepsilon]\,|\,\dist(v_1,v_2)=r} 
  \le \frac{3\kappa}{\sqrt{r}}\cdot\varepsilon\cdot\ln(1/\varepsilon).
  \label{eqn:CondProbDelta1}
\end{equation}
Due to Lemma~\ref{lemma:UB:DensitiesDeltaZ}~d), the conditional density 
of the random variable $Z=\dist(v_2,v_5)-\dist(v_1,v_5)$ under the 
condition $\dist(v_1,v_2)=r$ can be bounded by
\[
  f_{Z|\dist(v_1,v_2)=r}(z) \le \frac{\kappa}{\sqrt{r^2-z^2}}
\]
for $|z| < r$. Note that Lemma~\ref{lemma:UB:DensitiesDeltaZ}~d) applies
if we set $O=v_2$, $P=v_1$, and $Q_i=v_5$. Then~$T=\dist(O,P)=\dist(v_1,v_2)$. 

This upper bound on the density function $f_{Z|\dist(v_1,v_2)=r}(z)$ is symmetric around zero, it is monotonically increasing
for $z\in[0,r)$, and it is monotonically decreasing in $(-r,0)$. This implies that
the intervals the adversary can specify that have the highest upper bound on the
probability of $Z$ falling into them are $[-r,-r+\varepsilon]$ and 
$[r-\varepsilon,r]$.
Hence, the conditional probability of the event 
$\Delta_2\in[0,\varepsilon]$ under the condition $\dist(v_1,v_2)=r$ and 
for fixed points $v_3$ and $v_4$ is bounded from above by
\[
   \int_{\max\{r-\varepsilon,-r\}}^{r} \frac{\kappa}{\sqrt{r^2-z^2}}\, dz,
\]
where the lower bound in the integral follows because $Z$ can only take values in $[-r,r]$.
This can be rewritten as
\[
   \kappa\cdot \int_{\max\{r-\varepsilon,-r\}}^{r} \frac{1}{\sqrt{r+|z|}}\cdot\frac{1}{\sqrt{r-|z|}}\, dz
   \le \frac{\kappa}{\sqrt{r}}\cdot \int_{\max\{r-\varepsilon,-r\}}^{r} \frac{1}{\sqrt{r-|z|}}\, dz.
\]
For $\varepsilon\le r$, we have $r-\varepsilon\ge 0\ge -r$ and hence,
\[
   \frac{\kappa}{\sqrt{r}}\cdot \int_{\max\{r-\varepsilon,-r\}}^{r} \frac{1}{\sqrt{r-|z|}}\, dz
   = \frac{\kappa}{\sqrt{r}}\cdot \int_{r-\varepsilon}^{r} \frac{1}{\sqrt{r-z}}\, dz
   = \frac{2\kappa \sqrt{\varepsilon}}{\sqrt{r}}
   \le \frac{4\kappa \sqrt{\varepsilon}}{\sqrt{r}}.
\]
For $\varepsilon\in(r,2r]$, we have $0\ge r-\varepsilon\ge -r$ and hence,
\begin{align*}
   \frac{\kappa}{\sqrt{r}}\cdot \int_{\max\{r-\varepsilon,-r\}}^{r} \frac{1}{\sqrt{r-|z|}}\, dz
   & = \frac{\kappa}{\sqrt{r}}\cdot \left(\int_{0}^{r} \frac{1}{\sqrt{r-z}}\, dz+\int_{r-\varepsilon}^{0} \frac{1}{\sqrt{r+z}}\, dz\right)\\
   & \le \frac{\kappa}{\sqrt{r}}\cdot \left(2\sqrt{r}+\int_{-r}^{0} \frac{1}{\sqrt{r+z}}\, dz\right)\\
   & = \frac{4\kappa \sqrt{r}}{\sqrt{r}} \le \frac{4\kappa \sqrt{\varepsilon}}{\sqrt{r}},
\end{align*}
where we used $\varepsilon > r$ for the last inequality.
For $\varepsilon> 2r$, we have $r-\varepsilon \le -r$ and hence,
\[
   \frac{\kappa}{\sqrt{r}}\cdot \int_{\max\{r-\varepsilon,-r\}}^{r} \frac{1}{\sqrt{r-|z|}}\, dz
   = \frac{\kappa}{\sqrt{r}}\cdot \int_{-r}^{r} \frac{1}{\sqrt{r-|z|}}\, dz
   = 2\kappa 
   \le \frac{2\kappa \sqrt{\varepsilon}}{\sqrt{r}}
   \le \frac{4\kappa \sqrt{\varepsilon}}{\sqrt{r}},
\]
where we used $\varepsilon > r$ for the penultimate inequality.
Altogether this argument shows that
\begin{equation}\label{eqn:CondProbDelta2}
  \Pr{\Delta_2\in[0,\varepsilon]\,|\,\text{$v_1,v_2,v_3,v_4$ fixed arbitrarily with $\dist(v_1,v_2)=r$}}  \le \frac{4\kappa \sqrt{\varepsilon}}{\sqrt{r}}.
\end{equation}
Since~\eqref{eqn:CondProbDelta2} uses only the randomness of~$v_5$ which is independent of~$\Delta_1$,
we can multiply the upper bounds from~(\ref{eqn:CondProbDelta1}) and~\eqref{eqn:CondProbDelta2} to obtain
\[
  \Pr{\Delta_1,\Delta_2\in[0,\varepsilon]\,|\,\dist(v_1,v_2)=r} \le \frac{12\kappa^2}{r}\varepsilon^{3/2}\cdot\ln(1/\varepsilon).
\]
In order to get rid of the condition $\dist(v_1,v_2)=r$, we integrate over 
all possible values the random variable $\dist(v_1,v_2)$ can take, yielding
\begin{align*}
  \Pr{\Delta_1,\Delta_2\in[0,\varepsilon]} 
  & = \int_{0}^{\sqrt{d}} \frac{r^{d-1}}{d^{d/2-1}}\cdot \Pr{\Delta_1,\Delta_2\in[0,\varepsilon]\,|\,\dist(v_1,v_2)=r}\,dr\\
  & \le \int_{0}^{\sqrt{d}}\frac{12\kappa^2\cdot r^{d-2}}{d^{d/2-1}}\cdot
  \varepsilon^{3/2}\cdot\ln(1/\varepsilon)\, dr 
  = O\left(\varepsilon^{3/2}\cdot\ln(1/\varepsilon)\right),
\end{align*}
where the last equation follows because $d$ is assumed to be constant.
Applying a union bound over all $O(n^5)$ possible pairs of type~1~a) 
concludes the proof when one takes into account the factor 
$((4d)^{d/2}\phi)^4$ due to considering the simplified random 
experiment (see Section~\ref{subsubsec:Simplified}).

For pairs of type~1~b), the situation looks somewhat similar. We 
analyze the first step and in the second step, we can only exploit the 
randomness of the distances $\dist(v_2,v_5)$ and $\dist(v_3,v_5)$. Due 
to Lemma~\ref{lemma:UB:DensitiesDeltaZ}~b) and similarly
to~(\ref{eqn:ProbDeltaI}), the probability of the event 
$\Delta_1\in[0,\varepsilon]$ under the condition $\dist(v_2,v_3)=\tau$ 
can be bounded by
\begin{equation}
  \Pr{\Delta_1\in[0,\varepsilon]\,|\,\dist(v_2,v_3)=\tau} 
  \le \frac{3\kappa}{\tau}\cdot\varepsilon\cdot\ln(1/\varepsilon).
  \label{eqn:CondProbDelta1_2}
\end{equation}
The remaining analysis of pairs of type~1~b) can be carried out 
completely analogously to the analysis of pairs of type~1~a).
\end{proof}

\subsubsection{Expected number of 2-changes}

Based on Lemmas~\ref{lemma:numberLinkedPairs2} 
and~\ref{lemma:L2Improvements}, we are now able to prove part b) of 
Theorem~\ref{theorem:runningTime1}, which states that
the expected length of the longest path in the 2-Opt state graph
is $O(n^{4+1/3}\cdot\log( n\phi)\cdot\phi^{8/3})$ for 
$\phi$-perturbed Euclidean instances with $n$ points.

\begin{proof}[Theorem~\ref{theorem:runningTime1} b)]
We use the same notation as in the proof of part a) of the theorem. 
For $t>n$, we have $t/7-3n/28>t/28$ and 
hence using Lemma~\ref{lemma:L2Improvements} with $\varepsilon=\frac{48\sqrt{d}n}{t}$
yields
\begin{equation*}
\begin{split}
  & \Pr{T\ge t} 
  \le \Pr{\Delta_{\min}^*\le \frac{28\sqrt{d}n}{t}} \\
  & =  O\left(\min\left\{\frac{n^8\cdot\log^2(t)\cdot\phi^5}{t^2},1\right\}\right) + O\left(\min\left\{\frac{n^{13/2}\cdot\log(t)\cdot\phi^4}{t^{3/2}},1\right\}\right).
\end{split}
\end{equation*}
This implies that the expected length 
of the longest path in the state graph is bounded from above by
\begin{equation}\label{eqn:ExpectedNum2Changes}
\begin{split}
  n+\sum_{t=n+1}^{n!}\left(O\left(\min\left\{\frac{n^8\cdot\log^2(t)\cdot\phi^5}{t^2},1\right\}\right)\right.\\ + 
  \left.O\left(\min\left\{\frac{n^{13/2}\cdot\log(t)\cdot\phi^4}{t^{3/2}},1\right\}\right)\right).
  \end{split}
\end{equation}

In the following, we use the fact that, for $a>0$,
\[
   \int_a^{\infty} \frac{\ln^2(x)}{x^2}\,dx = \left[-\frac{\ln^2(x)+2\ln(x)+2}{x}\right]_a^{\infty}
   = O\left(\frac{\ln^2(a)}{a}\right).
\]
For $t_A=n^4\cdot\log(n\phi)\cdot\phi^{5/2}$, the first sum in~\eqref{eqn:ExpectedNum2Changes} can be bounded as follows:
\begin{align*}
  & \,\sum_{t=n+1}^{n!} O\left(\min\left\{\frac{n^8\cdot\log^2(t)\cdot\phi^5}{t^2},1\right\}\right)\\
  \le & \, t_A+O\left(\int_{t=t_A}^{\infty} \frac{n^8\cdot\log^2(t)\cdot\phi^5}{t^2}\,dt\right)
  = t_A+O\left(\left[-\frac{n^8\cdot\log^2(t)\cdot\phi^5}{t}\right]_{t=t_A}^{\infty}\right)\\
  = & \, t_A+O\left(\frac{n^8\cdot\log^2(t_A)\cdot\phi^5}{t_A}\right)
  = t_A+O\left(\frac{n^8\cdot\log^2(n\phi)\cdot\phi^5}{t_A}\right)
  = O(t_A).
\end{align*}
In the following, we use the fact that, for $a>0$,
\[
   \int_a^{\infty} \frac{\ln(x)}{x^{3/2}}\,dx = \left[-\frac{2\ln(x)+4}{\sqrt{x}}\right]_a^{\infty}
   = O\left(\frac{\ln(a)}{\sqrt{a}}\right).
\]
For $t_B=n^{13/3}\cdot\log^{2/3}(n\phi)\cdot \phi^{8/3}$, the second sum in~\eqref{eqn:ExpectedNum2Changes} can be bounded
as follows: 
\begin{align*}
  & \,\sum_{t=n+1}^{n!} O\left(\min\left\{\frac{n^{13/2}\cdot\log(t)\cdot\phi^4}{t^{3/2}},1\right\}\right) \\
  \le & \, t_B+O\left(\int_{t=t_B}^{\infty} \frac{n^{13/2}\cdot\log(t)\cdot\phi^4}{t^{3/2}}\,dt\right)\\
  = &\, t_B+O\left(\left[-\frac{n^{13/2}\cdot\log(t)\cdot\phi^4}{\sqrt{t}}\right]_{t=t_B}^{\infty}\right)\\
  = &\,t_B+O\left(\frac{n^{13/2}\cdot\log(t_B)\cdot\phi^4}{\sqrt{t_B}}\right)
  = t_B+O\left(\frac{n^{13/2}\cdot\log(n\phi)\cdot\phi^4}{\sqrt{t_B}}\right)
  = O(t_B).
\end{align*}

Together this yields
\[
  \Ex{T} = O\left(n^4\cdot\log(n\phi)\cdot\phi^{5/2}\right)+ O\left(n^{13/3}\cdot\log^{2/3}(n\phi)\cdot \phi^{8/3}\right),
\]
which concludes the proof of part b) of the theorem.
\end{proof}

Using the same observations as in the proof of 
Theorem~\ref{theorem:runningTime2}~a) also yields part b):
\begin{proof}[Theorem~\ref{theorem:runningTime2} b)]
Estimating the length of the initial tour by 
$O(n^{(d-1)/d}\cdot\log{n})$ instead of $O(n)$ improves the upper bound 
on the expected number of 2-changes by a factor of 
$\Theta(n^{1/d}/\log{n})$ compared to 
Theorem~\ref{theorem:runningTime1}~b). This observation yields the 
bound claimed in Theorem~\ref{theorem:runningTime2}~b).
\end{proof}

\section{Expected Approximation Ratio}

\label{sec:approximation}

In this section, we consider the expected approximation ratio of the solution
found by 2-Opt on $\phi$-perturbed $L_p$ instances. Chandra, Karloff, and
Tovey~\cite{ChandraKT99} show that if one has a set of $n$ points in the unit
hypercube $[0,1]^d$ and the distances are measured according to a metric that is
induced by a norm, then every locally optimal solution has length at most $c\cdot
n^{(d-1)/d}$ for an appropriate constant $c$ depending on the dimension $d$ and
the metric. Hence, it follows for every $L_p$ metric that 2-Opt yields a tour of
length $O(n^{(d-1)/d})$ on $\phi$-perturbed $L_p$ instances.
This implies that the approximation ratio of 2-Opt on these instances can be bounded
from above by $O(n^{(d-1)/d})/\OPT$, where $\OPT$ denotes the length
of the shortest tour. We will show a lower bound on  
$\OPT$ that holds with high probability in $\phi$-perturbed $L_p$ instances.
Based on this, we prove Theorem~\ref{theorem:approximation}.

\begin{proof}[Theorem~\ref{theorem:approximation}]
Let $v_1,\ldots,v_n\in\RR^d$ denote the points of the $\phi$-perturbed instance.
We denote by $k$ the largest integer $k\le n\phi$ that can be written
as $k=\ell^d$ for some $\ell\in\NN$. We partition the unit hypercube into $k$
smaller hypercubes with volume $1/k$ each and analyze how many of these smaller
hypercubes contain at least one of the points. Assume that $X>3^d$ of these
hypercubes contain a point; then the optimal tour must have length at least
\begin{equation}\label{eqn:LengthOptimal}
 \left\lceil \frac{X}{3^d} \right\rceil\cdot \frac{1}{\sqrt[d]{k}}.
\end{equation}
In order to see this, we construct a set
$P\subseteq\{v_1,\ldots,v_n\}$ of points as follows: Consider the points
$v_1,\ldots,v_n$ one after another, and insert a point $v_i$ into $P$ if $P$ does
not contain a point in the same hypercube as $v_i$ or in one of its $3^d-1$
neighboring hypercubes yet. Due to the triangle inequality, the optimal tour on
$P$ is at most as long as the optimal tour on $v_1,\ldots,v_n$. Furthermore, $P$
contains at least $\left\lceil X/3^d \right\rceil\ge 2$ points and every edge between two points from $P$ has
length at least $1/\sqrt[d]{k}$ since $P$ does not contain two points in the same
or in two neighboring hypercubes.
Hence, it remains to analyze the random variable $X$. For each 
hypercube $i$ with $1\le i\le k$, we define a random variable $X_i$ 
which takes value $0$ if hypercube $i$ is empty and value $1$ if 
hypercube $i$ contains at least one point. The density functions that 
specify the locations of the points induce for each pair of hypercube 
$i$ and point $j$ a probability $p_i^j$ such that point $j$ falls into 
hypercube $i$ with probability $p_i^j$. Hence, one can think of 
throwing $n$ balls into $k$ bins in a setting where each ball has its 
own probability distribution over the bins. Due to the bounded density, 
we have $p_i^j\le \phi/k$. For each hypercube $i$, let $M_i$ denote the 
probability mass associated with hypercube~$i$, that is
\[
   M_i = \sum_{j=1}^np_i^j \le \frac{n\phi}{k}.
\]
We can write the expected value of the random variable $X_i$ as 
\[
   \Ex{X_i} = \Pr{X_i=1} = 1-\prod_{j=1}^n(1-p_i^j)
   \ge 1-\left(1-\frac{M_i}{n}\right)^n
\]
as, under the constraint $\sum_j(1-p_i^j)=n-M_i$, the term
$\prod_j(1-p_i^j)$ is maximized if all $p_i^j$ are equal.
Due to linearity of expectation, the expected value of $X$ is
\[
  \Ex{X} \ge \sum_{i=1}^k \left(1-\left(1-\frac{M_i}{n}\right)^n\right)
         = k - \sum_{i=1}^k\left(1-\frac{M_i}{n}\right)^n.
\]
Observe that $\sum_i M_i=n$ and hence, also the sum 
$\sum_{i}\left(1-M_i/n\right)=k-1$ is fixed. As the function $f(x)=x^n$
is convex for $n\ge 1$, the sum $\sum_i(1-M_i/n)^n$
becomes maximal if the $M_i$'s are chosen as unbalanced as
possible. Hence, we assume that $\lceil k/\phi\rceil$ of the $M_i$'s
take their maximal value of $n\phi/k$ and the other $M_i$'s are zero.
This yields, for sufficiently large $n$,
\begin{align*}
  \Ex{X} \ge & k -\left(
  \left\lceil\frac{k}{\phi}\right\rceil\left(1-\frac{\phi}{k}\right)^n 
  + \left(k-\left\lceil\frac{k}{\phi}\right\rceil\right)\right) \\
  = & \left\lceil\frac{k}{\phi}\right\rceil - \left\lceil\frac{k}{\phi}\right\rceil \cdot \left(1-\frac{\phi}{k}\right)^n\\
  \ge & \frac{k}{\phi}-\frac{2k}{\phi}\left(1-\frac{\phi}{k}\right)^n\\
  \ge & \frac{k}{\phi}\left(1-2\left(1-\frac{1}{n}\right)^n\right)
  \ge \frac{k}{\phi}\left(1-\frac{2}{e}\right)
  \ge \frac{k}{4\phi}. 
\end{align*}
For the second inequality we have used that~$\frac{k}{\phi}\ge 1$ for sufficiently large~$n$ and hence~$\left\lceil\frac{k}{\phi}\right\rceil\le \frac{2k}{\phi}$.
For the third inequality we have used that~$n\ge \frac{k}{\phi}$, which follows from
the definition of~$k$ as the largest integer $k\le n\phi$ that can be written
as $k=\ell^d$ for some $\ell\in\NN$. This definition also implies 
\[
  n\phi < (\ell+1)^d =(\sqrt[d]{k}+1)^d \le (2\sqrt[d]{k})^d = 2^dk 
\]
and hence, $\Ex{X}\ge n/2^{d+2}$.

Next we show that $X$ is sharply concentrated 
around its mean value. The random variable $X$ is the sum of $k$ 
0-1-random variables $X_i$. If these random variables were independent, we 
could simply use a Chernoff bound to bound the probability that $X$ 
takes a value that is much smaller than its mean value. Intuitively,
whenever we already know that 
some of the $X_i$'s are zero, then the probability of the event that 
another $X_i$ also takes the value zero becomes smaller. Hence, 
intuitively, the dependencies can only help to bound the probability 
that $X$ takes a value smaller than its mean value.

To formalize this intuition, we use the framework of \emph{negatively associated} random variables,
introduced by Dubhashi and Ranjan~\cite{DubhashiR98}. In Appendix~\ref{appendix:NegativelyAssociated},
we repeat the formal definition and we show that the $X_i$ are negatively associated. 
Dubhashi and Ranjan show (Proposition 7 of~\cite{DubhashiR98}) that in the case of 
negatively associated random variables, one can still apply a Chernoff 
bound.
The Chernoff bound from~\cite{MotwaniR95} implies that, for any~$\delta\in(0,1)$,
\[
   \Pr{X \le (1-\delta)\cdot\Ex{X}} \le \exp\left(-\frac{\Ex{X}\cdot\delta^2}{2}\right).
\]
This yields
\begin{equation}\label{eqn:ApplicationChernoffBound}
  \Pr{X\le \frac{n}{2^{d+3}}} \le 
  \Pr{X\le \frac{\Ex{X}}{2}} \le
  \exp\left(-\frac{\Ex{X}}{8}\right)
  \le
  \exp\left(-\frac{n}{2^{d+5}}\right),
\end{equation}
where we used $\Ex{X}\ge n/2^{d+2}$ for the first and last inequality.

In order to bound the expected approximation ratio of any locally optimal
solution, we distinguish between two cases:
\begin{itemize} 
  \item If $X\ge \frac{n}{2^{d+3}}$, then, assuming that $n$ is large enough,
        we have that $X>3^d$ and hence, \eqref{eqn:LengthOptimal} implies that
        \[
           \OPT \ge \left\lceil \frac{X}{3^d} \right\rceil\cdot \frac{1}{\sqrt[d]{k}} \ge
           \frac{X}{3^d\sqrt[d]{k}} \ge \frac{n}{2^{d+3}3^d\sqrt[d]{k}}
           = \Theta\left(\frac{n^{(d-1)/d}}{\sqrt[d]{\phi}}\right),
        \]
        where we used that~$k=\Theta(n\phi)$ for the last equation.
        Combining this with Chandra, Karloff, and Tovey's~\cite{ChandraKT99} result that
		every locally optimal solution has length at most $O(n^{(d-1)/d})$ yields an
		approximation ratio of
		\[
		   \frac{O(n^{(d-1)/d})}{\Theta\left(\frac{n^{(d-1)/d}}{\sqrt[d]{\phi}}\right)}
		   = O(\sqrt[d]{\phi}).
		\]
\item If $X < \frac{n}{2^{d+3}}$, then we use $n$ as an upper bound on the approximation
      ratio of any locally optimal solution. This bound holds in fact for any possible tour,
      as the following argument shows:
      The length of every tour is bounded from above by $n$ times the length $\alpha$ of the longest edge.
      Let $u$ and $v$ be the vertices that this edge connects. Then every tour has to contain
      a path between $u$ and $v$. Due to the triangle inequality, this path must have
      length at least $\alpha$.
      
      We have seen in~\eqref{eqn:ApplicationChernoffBound} that the event $X < \frac{n}{2^{d+3}}$ occurs only with exponentially
      small probability. This implies that it adds at most
      \[
         \exp\left(-\frac{n}{2^{d+5}}\right)\cdot n = o(1)
      \]    
      to the expected approximation ratio.
\end{itemize}
This concludes the proof as the contribution of both cases to the expected approximation
ratio is $O(\sqrt[d]{\phi})$.
\end{proof}

\section{Smoothed Analysis}
\label{sec:SmoothedAnalysis}

Smoothed Analysis was introduced by Spielman and 
Teng~\cite{SpielmanT04} as a hybrid of worst case and average case 
analysis. The semi-random input model in a smoothed analysis is 
designed to capture the behavior of algorithms on typical inputs better 
than a worst case or average case analysis alone as it allows an 
adversary to specify an arbitrary input which is randomly perturbed 
afterwards. In Spielman and Teng's analysis of the Simplex algorithm 
the adversary specifies an arbitrary linear program which is perturbed 
by adding independent Gaussian random variables to each number in the 
linear program. Our probabilistic analysis of Manhattan and Euclidean 
instances can also be seen as a smoothed analysis in which an adversary 
can choose the distributions for the points over the unit hypercube. 
The adversary is restricted to distributions that can be represented by 
densities that are bounded by $\phi$. Our model cannot handle Gaussian 
perturbations directly because the support of Gaussian random variables 
is not bounded.

Assume that every point $v_1,\ldots,v_n$ is described by a density 
whose support is restricted to the hypercube $[-\alpha,1+\alpha]^d$, for 
some $\alpha\ge1$. Then after appropriate scaling and translating, we 
can assume that all supports are restricted to the unit hypercube 
$[0,1]^d$. Thereby, the maximal density $\phi$ increases by at most a 
factor of $(2\alpha+1)^d$. Hence, after appropriate scaling and 
translating, Theorems~\ref{theorem:runningTime1}, 
\ref{theorem:runningTime2}, and~\ref{theorem:approximation} can still 
be applied if one takes into account the increased densities.

One possibility to cope with Gaussian perturbations is to consider 
\emph{truncated Gaussian perturbations}. In such a perturbation model, 
the coordinates of each point are initially chosen from $[0,1]^d$ and 
then perturbed by adding Gaussian random variables with mean $0$ and with some standard 
deviation $\sigma$ to them that are conditioned to lie in 
$[-\alpha,\alpha]$ for some $\alpha\ge1$. The maximal density of such 
truncated Gaussian random variables for $\sigma\le1$ is bounded from 
above by
\begin{equation}\label{eqn:TruncatedGaussian}
   \frac{1/(\sigma\sqrt{2\pi})}{1-\sigma\cdot\exp(-\alpha^2/(2\sigma^2))}.
\end{equation}
This is shown by the following calculation in which we denote 
by $X$ a Gaussian random variable with mean $0$ and standard deviation $\sigma$,
by $f(z)=\exp(-z^2/(2\sigma^2))/(\sigma\sqrt{2\pi})$ its density function and by
$f_{X|X\in[-\alpha,\alpha]}$ the density of $X$ conditioned on the fact that $X\in[-\alpha,\alpha]$:  
\begin{align*}
  f_{X|X\in[-\alpha,\alpha]}(z) & \le \frac{f(z)}{\Pr{X\in [-\alpha,\alpha]}} = 
  \frac{\exp(-z^2/(2\sigma^2))}{\sigma \sqrt{2\pi}\cdot \Pr{X\in [-\alpha,\alpha]}}\\
  & \le \frac{1/(\sigma\sqrt{2\pi})}{\Pr{X\in [-\alpha,\alpha]}}
  = \frac{1/(\sigma\sqrt{2\pi})}{1-\Pr{X\notin [-\alpha,\alpha]}}\\
  & \le \frac{1/(\sigma\sqrt{2\pi})}{1-\sigma\cdot\exp(-a^2/(2\sigma^2))},  
\end{align*}
where we used the following bound on the probability that $X$ does not lie in $[-\alpha,\alpha]$:
\begin{align}
  \Pr{X\notin [-\alpha,\alpha]} & = \int_{\alpha}^{\infty} f(z)\,dz + \int_{-\infty}^{-\alpha} f(z)\,dz\notag\\
  & = 2\int_{\alpha}^{\infty} f(z)\,dz = \frac{\sqrt{2}}{\sigma\sqrt{\pi}}\int_{\alpha}^{\infty} \exp(-z^2/(2\sigma^2))\,dz\notag\\
  & \le \frac{\sqrt{2}}{\sigma\sqrt{\pi}}\int_{\alpha}^{\infty} z\cdot \exp(-z^2/(2\sigma^2))\,dz\notag\\
  & = \frac{\sqrt{2}}{\sigma\sqrt{\pi}}\left[-\sigma^2\exp(-z^2/(2\sigma^2))\right]_{\alpha}^{\infty}\notag\\
  & = \frac{\sigma\sqrt{2}}{\sqrt{\pi}}\exp(-\alpha^2/(2\sigma^2))
  \le \sigma\cdot\exp(-\alpha^2/(2\sigma^2)),\label{eqn:estimateGaussian}
\end{align}
where the inequality follows from $\alpha\ge 1$.

After such a truncated perturbation, all points lie in the hypercube 
$[-\alpha,1+\alpha]^d$. Hence, one can apply 
Theorems~\ref{theorem:runningTime1},~\ref{theorem:runningTime2}, 
and~\ref{theorem:approximation} with
\begin{equation*}
  \phi =
  \frac{(2\alpha+1)^d}{(\sigma\sqrt{2\pi}-\sigma^2\sqrt{2\pi}\exp(-\alpha^2/(2\sigma^2)))^d}  
  = O\left(\frac{\alpha^d}{\sigma^d}\right),
\end{equation*}
where the first equality follows from~\eqref{eqn:TruncatedGaussian} and the observation
that shifting and scaling the hypercube $[-\alpha,1+\alpha]^d$ to $[0,1]^d$ leads to
densities that are larger than the original densities by at most a factor of $(2\alpha+1)^d$.
The second equality follows because the term $\sigma^2\sqrt{2\pi}\exp(-\alpha^2/(2\sigma^2))$
is in $o(\sigma)$ if $\sigma$ goes to $0$.

It is not necessary to truncate the Gaussian random variables if the 
standard deviation is small enough. For $\sigma\le 
\min\{\alpha/\sqrt{2(n+1)\ln{n}+2\ln{d}},1\}$, the probability that one 
of the Gaussian random variables has an absolute value larger than 
$\alpha\ge 1$ is bounded from above by $n^{-n}$. This follows from a union
bound over all $dn$ Gaussian variables and~\eqref{eqn:estimateGaussian}:
\begin{align*}
  & dn \cdot  \Pr{X\notin[-\alpha,\alpha]}
  \le \exp(\ln(dn)) \left(\sigma\cdot\exp(-\alpha^2/(2\sigma^2))\right)\\
  \le & \exp(\ln(dn)-\alpha^2/(2\sigma^2))
  \le \exp(\ln(dn)-(n+1)\ln{n}-\ln{d}) = n^{-n}.
\end{align*}
We have used~$\sigma\le 1$ for the second inequality.
In this case, even if 
one does not truncate the random variables, 
Theorems~\ref{theorem:runningTime1},~\ref{theorem:runningTime2}, 
and~\ref{theorem:approximation} can be applied with 
$\phi=O(\alpha^d/\sigma^d)$. To see this, it suffices to observe that 
the worst-case bound for the number of 2-changes is $n!$ and the 
worst-case approximation ratio is $O(\log{n})$~\cite{ChandraKT99}. 
Multiplying these values with the failure probability of $n^{-n}$ 
adds less than 1 to the expected values. In particular, this 
implies that the expected length of the longest path in the state graph 
is bounded by $O(\mathrm{poly}(n,1/\sigma))$.

\section{Conclusions and Open Problems}

We have shown several new results on the running time and the 
approximation ratio of the 2-Opt heuristic. However, there are still a 
variety of open problems regarding this algorithm. Our lower bounds 
only show that there exist families of instances on which 2-Opt takes 
an exponential number of steps if it uses a particular pivot rule. It 
would be interesting to analyze the diameter of the state graph and to 
either present instances on which every pivot rule needs an exponential 
number of steps or to prove that there is always an improvement 
sequence of polynomial length to a locally optimal solution. Also the 
worst number of local improvements for some natural pivot rules like, 
\eg, the one that always makes the largest possible improvement or the 
one that always chooses a random improving 2-change, is not known yet. 
Furthermore, the complexity of computing locally optimal solutions is 
open. The only result in this regard is due to Krentel~\cite{Krentel89} 
who shows that it is {\sf PLS}-complete to compute a local optimum for 
the metric TSP for $k$-Opt for some constant $k$. It is not known 
whether his construction can be embedded into the Euclidean metric and 
whether it is {\sf PLS}-complete to compute locally optimal solutions 
for $2$-Opt. Fischer and Torenvliet~\cite{FischerT95} show, however, 
that for the general TSP, it is {\sf PSPACE}-hard to compute a local 
optimum for $2$-Opt that is reachable from a given initial tour.

The obvious open question concerning the probabilistic analysis is 
how the gap between experiments and theory can be narrowed further. 
In order to tackle this question, new methods seem to be necessary. Our 
approach, which is solely based on analyzing the smallest improvement 
made by a sequence of linked 2-changes, seems to yield too pessimistic 
bounds. Another interesting area to explore is the expected 
approximation ratio of 2-Opt. In experiments, approximation ratios 
close to $1$ are observed. For instances that are chosen uniformly at 
random, the bound on the expected approximation ratio is a constant 
but unfortunately a large one. It seems to be a very challenging 
problem to improve this constant to a value that matches the 
experimental results.

Besides 2-Opt, there are also other local search algorithms that are
successful for the traveling salesperson problem. In particular, the
Lin-Kernighan heuristic~\cite{LinK73} is one of the most successful local
search algorithm for the symmetric TSP. It is a variant of $k$-Opt in
which $k$ is not fixed and it can roughly be described as follows: Each
local modification starts by removing one edge $\{a,b\}$ from the current
tour, which results in a Hamiltonian path with the two endpoints $a$ and
$b$. Then an edge $\{b,c\}$ is added, which forms a cycle; there is a
unique edge $\{c,d\}$ incident to $c$ whose removal breaks the cycle,
producing a new Hamiltonian path with endpoints $a$ and $d$. This
operation is called a rotation. Now either a new Hamiltonian cycle can be
obtained by adding the edge $\{a,d\}$ to the tour or another rotation can
be performed. There are a lot of different variants and heuristic
improvements of this basic scheme, but little is known theoretically.
Papadimitriou~\cite{Papadimitriou92} shows for a variant of the
Lin-Kernighan heuristic that computing a local optimum is {\sf
PLS}-complete, which is a sharp contrast to the experimental results.
Since the Lin-Kernighan heuristic is widely used in practice, a
theoretical explanation for its good behavior in practice is of great
interest. Our analysis of 2-Opt relies crucially on the fact that there
are only a polynomial number of different 2-changes. For the Lin-Kernighan
heuristic, however, the number of different local improvements is
exponential. Hence, it is an interesting question as to whether nonetheless the
smallest possible improvement is polynomially large or whether different
methods yield a polynomial upper bound on the expected running time of
the Lin-Kernighan heuristic.

\begin{appendix}

\section{Inequalities from Section~\ref{subsubsec:LB:LpPoints}}
\label{app:Inequalities}

Inequalities corresponding to the improvements made by the 2-changes 
in the sequence in which $G_{n-1}^P$ changes its state from $(S,L)$ to $(S,S)$
while resetting $G_{n-1}^R$:

\noindent{\bf Inequality 1:}
\begin{alignat*}{5}
\,\,\, & \sqrt[p]{9.7^p+3.6^p} &&+ \sqrt[p]{4.3^p+6.9^p} &&- \sqrt[p]{0.3^p+1.7^p} &&- \sqrt[p]{14.3^p+1.6^p} &&\, > 0\\
\end{alignat*}
For $p\ge 3$, we obtain
\begin{equation*}\textstyle
  \sqrt[p]{0.3^p+1.7^p}
   = 1.7\cdot\sqrt[p]{1+\left(\frac{0.3}{1.7}\right)^p} 
  \le 1.7\cdot\sqrt[3]{1+\left(\frac{0.3}{1.7}\right)^{3}}
  < 1.71
\end{equation*}
and
\begin{equation*}\textstyle
  \sqrt[p]{14.3^p+1.6^p}
   = 14.3\cdot\sqrt[p]{1+\left(\frac{1.6}{14.3}\right)^p}
  \le 14.3\cdot\sqrt[3]{1+\left(\frac{1.6}{14.3}\right)^{3}}
  < 14.31.
\end{equation*}
Hence, for $p\ge 3$,
\[
\sqrt[p]{9.7^p+3.6^p} + \sqrt[p]{4.3^p+6.9^p} - \sqrt[p]{0.3^p+1.7^p} - \sqrt[p]{14.3^p+1.6^p}
\ge 9.7+6.9-1.71-14.31 > 0. 
\]

\noindent{\bf Inequality 2:}
\begin{alignat*}{5}
\,\,\, & \sqrt[p]{0.0^p+1.0^p} &&+ \sqrt[p]{8.7^p+14.3^p} &&- \sqrt[p]{1.5^p+7.1^p} &&- \sqrt[p]{7.2^p+6.2^p} &&\, > 0\\
\end{alignat*}
For $p\ge 4$, we obtain
\begin{equation*}\textstyle
  \sqrt[p]{1.5^p+7.1^p}
   = 7.1\cdot\sqrt[p]{1+\left(\frac{1.5}{7.1}\right)^p} 
  \le 7.1\cdot\sqrt[4]{1+\left(\frac{1.5}{7.1}\right)^{4}}
  < 7.11
\end{equation*}
and
\begin{equation*}\textstyle
  \sqrt[p]{7.2^p+6.2^p}
   = 7.2\cdot\sqrt[p]{1+\left(\frac{6.2}{7.2}\right)^p}
  \le 7.2\cdot\sqrt[4]{1+\left(\frac{6.2}{7.2}\right)^{4}}
  < 8.04.
\end{equation*}
Hence, for $p\ge 4$,
\[
\sqrt[p]{0.0^p+1.0^p} + \sqrt[p]{8.7^p+14.3^p} - \sqrt[p]{1.5^p+7.1^p} - \sqrt[p]{7.2^p+6.2^p}
\ge 1.0+14.3-7.11-8.04 > 0. 
\]
For the remaining case $p=3$, the inequality can simply be checked by plugging in the appropriate values.

\noindent{\bf Inequality 3:}
\begin{alignat*}{5}
\,\,\, & \sqrt[p]{1.5^p+7.1^p} &&+ \sqrt[p]{4.3^p+6.9^p} &&- \sqrt[p]{3.5^p+3.7^p} &&- \sqrt[p]{9.3^p+3.9^p} &&\, > 0\\
\end{alignat*}
For $p\ge 4$, we obtain
\begin{equation*}\textstyle
  \sqrt[p]{3.5^p+3.7^p}
   = 3.7\cdot\sqrt[p]{1+\left(\frac{3.5}{3.7}\right)^p} 
  \le 3.7\cdot\sqrt[4]{1+\left(\frac{3.5}{3.7}\right)^{4}}
  < 4.29
\end{equation*}
and
\begin{equation*}\textstyle
  \sqrt[p]{9.3^p+3.9^p}
   = 9.3\cdot\sqrt[p]{1+\left(\frac{3.9}{9.3}\right)^p}
  \le 9.3\cdot\sqrt[4]{1+\left(\frac{3.9}{9.3}\right)^{4}}
  < 9.38.
\end{equation*}
Hence, for $p\ge 4$,
\[
\sqrt[p]{1.5^p+7.1^p} + \sqrt[p]{4.3^p+6.9^p} - \sqrt[p]{3.5^p+3.7^p} - \sqrt[p]{9.3^p+3.9^p}
\ge 7.1+6.9-4.29-9.38 > 0. 
\]
For the remaining case $p=3$, the inequality can simply be checked by plugging in the appropriate values.

\noindent{\bf Inequality 4:}
\begin{alignat*}{5}
\,\,\, & \sqrt[p]{0.0^p+1.0^p} &&+ \sqrt[p]{14.3^p+1.6^p} &&- \sqrt[p]{6.5^p+1.6^p} &&- \sqrt[p]{7.8^p+4.2^p} &&\, > 0\\
\end{alignat*}
For $p\ge 3$, we obtain
\begin{equation*}\textstyle
  \sqrt[p]{6.5^p+1.6^p}
   = 6.5\cdot\sqrt[p]{1+\left(\frac{1.6}{6.5}\right)^p} 
  \le 6.5\cdot\sqrt[3]{1+\left(\frac{1.6}{6.5}\right)^{3}}
  < 6.54
\end{equation*}
and
\begin{equation*}\textstyle
  \sqrt[p]{7.8^p+4.2^p}
   = 7.8\cdot\sqrt[p]{1+\left(\frac{4.2}{7.8}\right)^p}
  \le 7.8\cdot\sqrt[3]{1+\left(\frac{4.2}{7.8}\right)^{3}}
  < 8.19.
\end{equation*}
Hence, for $p\ge 3$,
\[
\sqrt[p]{0.0^p+1.0^p} + \sqrt[p]{14.3^p+1.6^p} - \sqrt[p]{6.5^p+1.6^p} - \sqrt[p]{7.8^p+4.2^p}
\ge 1.0+14.3-6.54-8.19 > 0. 
\]

\noindent{\bf Inequality 5:}
\begin{alignat*}{5}
\,\,\, & \sqrt[p]{0.3^p+1.7^p} &&+ \sqrt[p]{7.2^p+6.2^p} &&- \sqrt[p]{4.0^p+5.2^p} &&- \sqrt[p]{3.5^p+2.7^p} &&\, > 0\\
\end{alignat*}
For $p\ge 7$, we obtain
\begin{equation*}\textstyle
  \sqrt[p]{4.0^p+5.2^p}
   = 5.2\cdot\sqrt[p]{1+\left(\frac{4.0}{5.2}\right)^p} 
  \le 5.2\cdot\sqrt[7]{1+\left(\frac{4.0}{5.2}\right)^{7}}
  < 5.32
\end{equation*}
and
\begin{equation*}\textstyle
  \sqrt[p]{3.5^p+2.7^p}
   = 3.5\cdot\sqrt[p]{1+\left(\frac{2.7}{3.5}\right)^p}
  \le 3.5\cdot\sqrt[7]{1+\left(\frac{2.7}{3.5}\right)^{7}}
  < 3.58.
\end{equation*}
Hence, for $p\ge 7$,
\[
\sqrt[p]{0.3^p+1.7^p} + \sqrt[p]{7.2^p+6.2^p} - \sqrt[p]{4.0^p+5.2^p} - \sqrt[p]{3.5^p+2.7^p}
> 1.7+7.2-5.32-3.58 = 0. 
\]
For the remaining cases $p\in\{3,4,5,6\}$, the inequality can simply be checked by plugging in the appropriate values.

\noindent{\bf Inequality 6:}
\begin{alignat*}{5}
\,\,\, & \sqrt[p]{3.5^p+3.7^p} &&+ \sqrt[p]{7.8^p+4.2^p} &&- \sqrt[p]{3.5^p+2.7^p} &&- \sqrt[p]{7.8^p+3.2^p} &&\, > 0\\
\end{alignat*}
For $p\ge 5$, we obtain
\begin{equation*}\textstyle
  \sqrt[p]{3.5^p+2.7^p}
   = 3.5\cdot\sqrt[p]{1+\left(\frac{2.7}{3.5}\right)^p} 
  \le 3.5\cdot\sqrt[5]{1+\left(\frac{2.7}{3.5}\right)^{5}}
  < 3.68
\end{equation*}
and
\begin{equation*}\textstyle
  \sqrt[p]{7.8^p+3.2^p}
   = 7.8\cdot\sqrt[p]{1+\left(\frac{3.2}{7.8}\right)^p}
  \le 7.8\cdot\sqrt[5]{1+\left(\frac{3.2}{7.8}\right)^{5}}
  < 7.82.
\end{equation*}
Hence, for $p\ge 5$,
\[
\sqrt[p]{3.5^p+3.7^p} + \sqrt[p]{7.8^p+4.2^p} - \sqrt[p]{3.5^p+2.7^p} - \sqrt[p]{7.8^p+3.2^p}
> 3.7+7.8-3.68-7.82 = 0. 
\]
For the remaining cases $p\in\{3,4\}$, the inequality can simply be checked by plugging in the appropriate values.

\noindent{\bf Inequality 7:}
\begin{alignat*}{5}
\,\,\, & \sqrt[p]{6.5^p+1.6^p} &&+ \sqrt[p]{9.3^p+3.9^p} &&- \sqrt[p]{5.0^p+5.5^p} &&- \sqrt[p]{7.8^p+3.2^p} &&\, > 0\\
\end{alignat*}
For $p\ge 3$, we obtain
\begin{equation*}\textstyle
  \sqrt[p]{5.0^p+5.5^p}
   = 5.5\cdot\sqrt[p]{1+\left(\frac{5.0}{5.5}\right)^p} 
  \le 5.5\cdot\sqrt[3]{1+\left(\frac{5.0}{5.5}\right)^{3}}
  < 6.63
\end{equation*}
and
\begin{equation*}\textstyle
  \sqrt[p]{7.8^p+3.2^p}
   = 7.8\cdot\sqrt[p]{1+\left(\frac{3.2}{7.8}\right)^p}
  \le 7.8\cdot\sqrt[3]{1+\left(\frac{3.2}{7.8}\right)^{3}}
  < 7.98.
\end{equation*}
Hence, for $p\ge 3$,
\[
\sqrt[p]{6.5^p+1.6^p} + \sqrt[p]{9.3^p+3.9^p} - \sqrt[p]{5.0^p+5.5^p} - \sqrt[p]{7.8^p+3.2^p}
\ge 6.5+9.3-6.63-7.98 > 0. 
\]

Inequalities corresponding to the improvements made by the 2-changes 
in the sequence in which gadget $G^R_{n-2}$ resets gadget $G^P_{n-1}$ from 
$(S,S)$ to $(L,L)$:

\noindent{\bf Inequality 1:}
\begin{alignat*}{5}
\,\,\, & \sqrt[p]{27.3^p+21.06^p} &&+ \sqrt[p]{5.0^p+5.5^p} &&- \sqrt[p]{13.7^p+0.9^p} &&- \sqrt[p]{18.6^p+16.46^p} &&\, > 0\\
\end{alignat*}
For $p\ge 10$, we obtain
\begin{equation*}\textstyle
  \sqrt[p]{13.7^p+0.9^p}
   = 13.7\cdot\sqrt[p]{1+\left(\frac{0.9}{13.7}\right)^p} 
  \le 13.7\cdot\sqrt[10]{1+\left(\frac{0.9}{13.7}\right)^{10}}
  < 13.71
\end{equation*}
and
\begin{equation*}\textstyle
  \sqrt[p]{18.6^p+16.46^p}
   = 18.6\cdot\sqrt[p]{1+\left(\frac{16.46}{18.6}\right)^p}
  \le 18.6\cdot\sqrt[10]{1+\left(\frac{16.46}{18.6}\right)^{10}}
  < 19.09.
\end{equation*}
Hence, for $p\ge 10$,
\begin{alignat*}{1}
& \sqrt[p]{27.3^p+21.06^p} + \sqrt[p]{5.0^p+5.5^p} - \sqrt[p]{13.7^p+0.9^p} - \sqrt[p]{18.6^p+16.46^p}\\
 > &\, 27.3+5.5-13.71-19.09 = 0. 
\end{alignat*}
For the remaining cases $p\in\{3,4,5,6,7,8,9\}$, the inequality can simply be checked by plugging in the appropriate values.

\noindent{\bf Inequality 2:}
\begin{alignat*}{5}
\,\,\, & \sqrt[p]{4.0^p+5.2^p} &&+ \sqrt[p]{60.84^p+24.96^p} &&- \sqrt[p]{60.84^p+23.06^p} &&- \sqrt[p]{4.0^p+3.3^p} &&\, > 0\\
\end{alignat*}
For $p\ge 4$, we obtain
\begin{equation*}\textstyle
  \sqrt[p]{60.84^p+23.06^p}
   = 60.84\cdot\sqrt[p]{1+\left(\frac{23.06}{60.84}\right)^p} 
  \le 60.84\cdot\sqrt[4]{1+\left(\frac{23.06}{60.84}\right)^{4}}
  < 61.16
\end{equation*}
and
\begin{equation*}\textstyle
  \sqrt[p]{4.0^p+3.3^p}
   = 4.0\cdot\sqrt[p]{1+\left(\frac{3.3}{4.0}\right)^p}
  \le 4.0\cdot\sqrt[4]{1+\left(\frac{3.3}{4.0}\right)^{4}}
  < 4.4.
\end{equation*}
Hence, for $p\ge 4$,
\begin{alignat*}{1}
&\sqrt[p]{4.0^p+5.2^p} + \sqrt[p]{60.84^p+24.96^p} - \sqrt[p]{60.84^p+23.06^p} - \sqrt[p]{4.0^p+3.3^p}\\
\ge &\, 5.2+60.84-61.16-4.4 > 0. 
\end{alignat*}
For the remaining case $p=3$, the inequality can simply be checked by plugging in the appropriate values.

\noindent{\bf Inequality 3:}
\begin{alignat*}{5}
\,\,\, & \sqrt[p]{60.84^p+23.06^p} &&+ \sqrt[p]{12.3^p+14.4^p} &&- \sqrt[p]{15.8^p+11.8^p} &&- \sqrt[p]{57.34^p+20.46^p} &&\, > 0\\
\end{alignat*}
For $p\ge 4$, we obtain
\begin{equation*}\textstyle
  \sqrt[p]{15.8^p+11.8^p}
   = 15.8\cdot\sqrt[p]{1+\left(\frac{11.8}{15.8}\right)^p} 
  \le 15.8\cdot\sqrt[4]{1+\left(\frac{11.8}{15.8}\right)^{4}}
  < 16.91
\end{equation*}
and
\begin{equation*}\textstyle
  \sqrt[p]{57.34^p+20.46^p}
   = 57.34\cdot\sqrt[p]{1+\left(\frac{20.46}{57.34}\right)^p}
  \le 57.34\cdot\sqrt[4]{1+\left(\frac{20.46}{57.34}\right)^{4}}
  < 57.58.
\end{equation*}
Hence, for $p\ge 4$,
\begin{alignat*}{1}
& \sqrt[p]{60.84^p+23.06^p} + \sqrt[p]{12.3^p+14.4^p} - \sqrt[p]{15.8^p+11.8^p} - \sqrt[p]{57.34^p+20.46^p}\\
\ge & \,60.84+14.4-16.91-57.58 > 0. 
\end{alignat*}
For the remaining case $p=3$, the inequality can simply be checked by plugging in the appropriate values.

\noindent{\bf Inequality 4:}
\begin{alignat*}{5}
\,\,\, & \sqrt[p]{2.2^p+4.9^p} &&+ \sqrt[p]{18.6^p+16.46^p} &&- \sqrt[p]{15.4^p+16.26^p} &&- \sqrt[p]{1.0^p+4.7^p} &&\, > 0\\
\end{alignat*}
For $p\ge 5$, we obtain
\begin{equation*}\textstyle
  \sqrt[p]{15.4^p+16.26^p}
   = 16.26\cdot\sqrt[p]{1+\left(\frac{15.4}{16.26}\right)^p} 
  \le 16.26\cdot\sqrt[5]{1+\left(\frac{15.4}{16.26}\right)^{5}}
  < 18.22
\end{equation*}
and
\begin{equation*}\textstyle
  \sqrt[p]{1.0^p+4.7^p}
   = 4.7\cdot\sqrt[p]{1+\left(\frac{1.0}{4.7}\right)^p}
  \le 4.7\cdot\sqrt[5]{1+\left(\frac{1.0}{4.7}\right)^{5}}
  < 4.71.
\end{equation*}
Hence, for $p\ge 5$,
\begin{alignat*}{1}
& \sqrt[p]{2.2^p+4.9^p} + \sqrt[p]{18.6^p+16.46^p} - \sqrt[p]{15.4^p+16.26^p} - \sqrt[p]{1.0^p+4.7^p}\\
\ge &\, 4.9+18.6-18.22-4.71 > 0. 
\end{alignat*}
For the remaining cases $p\in\{3,4\}$, the inequality can simply be checked by plugging in the appropriate values.

\noindent{\bf Inequality 5:}
\begin{alignat*}{5}
\,\,\, & \sqrt[p]{13.7^p+0.9^p} &&+ \sqrt[p]{4.0^p+3.3^p} &&- \sqrt[p]{0.0^p+7.8^p} &&- \sqrt[p]{9.7^p+3.6^p} &&\, > 0\\
\end{alignat*}
For $p\ge 3$, we obtain
\begin{equation*}\textstyle
  \sqrt[p]{9.7^p+3.6^p}
   = 9.7\cdot\sqrt[p]{1+\left(\frac{3.6}{9.7}\right)^p}
  \le 9.7\cdot\sqrt[3]{1+\left(\frac{3.6}{9.7}\right)^{3}}
  < 9.87.
\end{equation*}
Hence, for $p\ge 3$,
\[
\sqrt[p]{13.7^p+0.9^p} + \sqrt[p]{4.0^p+3.3^p} - \sqrt[p]{0.0^p+7.8^p} - \sqrt[p]{9.7^p+3.6^p}
\ge 13.7+4.0-7.8-9.87 > 0. 
\]

\noindent{\bf Inequality 6:}
\begin{alignat*}{5}
\,\,\, & \sqrt[p]{15.8^p+11.8^p} &&+ \sqrt[p]{1.0^p+4.7^p} &&- \sqrt[p]{6.1^p+2.2^p} &&- \sqrt[p]{8.7^p+14.3^p} &&\, > 0\\
\end{alignat*}
For $p\ge 7$, we obtain
\begin{equation*}\textstyle
  \sqrt[p]{6.1^p+2.2^p}
   = 6.1\cdot\sqrt[p]{1+\left(\frac{2.2}{6.1}\right)^p} 
  \le 6.1\cdot\sqrt[7]{1+\left(\frac{2.2}{6.1}\right)^{7}}
  < 6.11
\end{equation*}
and
\begin{equation*}\textstyle
  \sqrt[p]{8.7^p+14.3^p}
   = 14.3\cdot\sqrt[p]{1+\left(\frac{8.7}{14.3}\right)^p}
  \le 14.3\cdot\sqrt[7]{1+\left(\frac{8.7}{14.3}\right)^{7}}
  < 14.37.
\end{equation*}
Hence, for $p\ge 7$,
\begin{alignat*}{1}
& \sqrt[p]{15.8^p+11.8^p} + \sqrt[p]{1.0^p+4.7^p} - \sqrt[p]{6.1^p+2.2^p} - \sqrt[p]{8.7^p+14.3^p}\\
\ge &\, 15.8+4.7-6.11-14.37 > 0. 
\end{alignat*}
For the remaining cases $p\in\{3,4,5,6\}$, the inequality can simply be checked by plugging in the appropriate values.

\noindent{\bf Inequality 7:}
\begin{alignat*}{5}
\,\,\, & \sqrt[p]{15.4^p+16.26^p} &&+ \sqrt[p]{57.34^p+20.46^p} &&- \sqrt[p]{33.54^p+53.82^p} &&- \sqrt[p]{8.4^p+17.1^p} &&\, > 0\\
\end{alignat*}
For $p\ge 4$, we obtain
\begin{equation*}\textstyle
  \sqrt[p]{33.54^p+53.82^p}
   = 53.82\cdot\sqrt[p]{1+\left(\frac{33.54}{53.82}\right)^p} 
  \le 53.82\cdot\sqrt[4]{1+\left(\frac{33.54}{53.82}\right)^{4}}
  < 55.75
\end{equation*}
and
\begin{equation*}\textstyle
  \sqrt[p]{8.4^p+17.1^p}
   = 17.1\cdot\sqrt[p]{1+\left(\frac{8.4}{17.1}\right)^p}
  \le 17.1\cdot\sqrt[4]{1+\left(\frac{8.4}{17.1}\right)^{4}}
  < 17.35.
\end{equation*}
Hence, for $p\ge 4$,
\begin{alignat*}{1}
&\sqrt[p]{15.4^p+16.26^p} + \sqrt[p]{57.34^p+20.46^p} - \sqrt[p]{33.54^p+53.82^p} - \sqrt[p]{8.4^p+17.1^p}\\
\ge &\,16.26+57.34-55.75-17.35 > 0. 
\end{alignat*}{1}
For the remaining case $p=3$, the inequality can simply be checked by plugging in the appropriate values.

\section{Some Probability Theory}
\label{appendix:probTheory}

\begin{lemma}
\label{lemma:ProbLinComb}
Let $X^1,\ldots,X^n\in [0,1]^d$ be stochastically independent $d$-dimensional random row vectors, 
and, for $i\in[n]$ and some $\phi\ge 1$, let 
$f_i\colon[0,1]^d\to[0,\phi]$ denote the joint probability density of the entries 
of $X^i$. Furthermore, let $\lambda^1,\ldots,\lambda^k\in\ZZ^{dn}$ be fixed
linearly independent row vectors. For $i\in[n]$ and a fixed 
$\varepsilon\ge 0$, we denote by ${\cal A}_i$ the event that 
$\lambda^{i}\cdot X$ takes a value in the interval $[0,\varepsilon]$, 
where $X$ denotes the vector $X=(X^1,\ldots,X^n)\tra$. Under these 
assumptions,
\[ 
\Pr{\bigcap_{i=1}^{k}{\cal A}_i}\le (\varepsilon\phi)^{k}. 
\]
\end{lemma}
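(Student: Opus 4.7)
My plan is to choose $k$ coordinates of $X$ as ``free'' variables, reduce the event $\bigcap_i{\cal A}_i$ to a parallelepiped constraint on these coordinates, and then use independence of the $X^i$'s to turn the pointwise density bound into the desired factor $\phi^k$. To this end, I would assemble $\lambda^1,\ldots,\lambda^k$ as the rows of a matrix $\Lambda\in\ZZ^{k\times dn}$. Because its rows are linearly independent, $\Lambda$ has rank $k$, so there is a set $S\subseteq[dn]$ of $k$ column indices for which the $k\times k$ submatrix $\Lambda_S$ is non-singular; since its entries are integers, $|\det(\Lambda_S)|\ge1$. For any fixed values of the remaining coordinates $X_{\bar S}$, the event $\bigcap_i{\cal A}_i$ is equivalent to $\Lambda_S X_S\in[0,\varepsilon]^k-\Lambda_{\bar S}X_{\bar S}$, which confines $X_S$ to a parallelepiped of $k$-dimensional Lebesgue volume at most $\varepsilon^k/|\det\Lambda_S|\le\varepsilon^k$.

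Next I would exploit the block-independence of $X=(X^1,\ldots,X^n)$. Partition $[dn]$ into $n$ blocks $B_1,\ldots,B_n$, where $B_i$ collects the $d$ column indices corresponding to the coordinates of $X^i$, and let $P=\{i\in[n]:B_i\cap S\neq\emptyset\}$ be the set of blocks touched by $S$. Since every element of $S$ lies in exactly one block, $|P|\le|S|=k$. Writing $\Pr{\bigcap_i{\cal A}_i}$ as an integral against the product density $\prod_i f_i(X^i)$ over $[0,1]^{dn}$, I would bound $f_i\le\phi$ pointwise for each $i\in P$ (pulling out a factor $\phi^{|P|}$), integrate out each $f_i$ with $i\notin P$ to $1$ using that it is a probability density on $[0,1]^d$, use the parallelepiped bound of the previous paragraph for the integration over $X_S$, and note that the coordinates lying inside blocks of $P$ but outside $S$ are integrated against Lebesgue measure on $[0,1]$ and each contribute a factor $1$. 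Combining these estimates yields $\Pr{\bigcap_i{\cal A}_i}\le\phi^{|P|}\cdot\varepsilon^k$.

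The only point requiring care is the final inequality: a priori the argument produces $\phi^{|P|}$ rather than $\phi^k$, which is strictly sharper than what the lemma claims whenever several free coordinates lie inside the same block $X^i$. This slack is absorbed by the standing assumption $\phi\ge1$ (necessary anyway, since $f_i$ is a density on the unit hypercube), giving $\phi^{|P|}\le\phi^k$ and hence the desired bound $(\varepsilon\phi)^k$. Beyond this, no step involves any delicate estimate: the whole proof is essentially a Lebesgue volume computation combined with the pointwise density bound, and the integer-lattice hypothesis enters only through $|\det\Lambda_S|\ge1$.
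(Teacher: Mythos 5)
Your proof is correct and follows essentially the same route as the paper's: both isolate a nonsingular integer $k\times k$ submatrix (the paper equivalently completes $\lambda^1,\ldots,\lambda^k$ to a basis of $\RR^{dn}$ with canonical vectors), use $|\det|\ge 1$ to control the volume (resp.\ Jacobian), bound the densities of the at most $k$ affected blocks by $\phi$, and integrate the remaining block densities to $1$. The only difference is presentational: you condition on the complementary coordinates and apply a $k$-dimensional parallelepiped volume bound via Fubini, whereas the paper carries out the corresponding change of variables globally on all $dn$ coordinates.
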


\begin{proof}
The main tool for proving the lemma is a change of variables. Instead 
of using the canonical basis of the $dn$-dimensional vector space 
$\RR^{dn}$, we use the given linear combinations as basis vectors. To 
be more precise, the basis ${\cal B}$ that we use consists of two 
parts: it contains the vectors $\lambda^{1},\ldots,\lambda^{k}$ and it 
is completed by some vectors from the canonical basis 
$\{e^{1},\ldots,e^{dn}\}$, where $e^{i}$ denotes the $i$-th canonical 
row vector, \ie, $e^i_i=1$ and $e^i_j=0$ for $j\neq i$. That is, the 
basis ${\cal B}$ can be written as 
$\{\lambda^{1},\ldots,\lambda^{k},e^{\pi(1)},\ldots,e^{\pi(dn-k)}\}$, 
for some injective function $\pi\colon[dn-k]\to[dn]$.

Let $\Phi\colon\RR^{dn}\to\RR^{dn}$ be defined by $\Phi(x)=Ax$, where 
$A$ denotes the $(dn)\times(dn)$-matrix 
\[
   \begin{pmatrix}
     \lambda^{1}\\
     \vdots\\
     \lambda^{k}\\
     e^{\pi(1)}\\
     \vdots\\
     e^{\pi(dn-k)}
   \end{pmatrix}.
\]
Since ${\cal B}$ is a basis of $\mathbb{R}^{dn}$, the function $\Phi$ is a 
bijection. We define $Y=(Y_1,\ldots,Y_{dn})\tra$ as $Y=\Phi(X)$, and 
for $i\in[n]$, we denote by $Y^i$ the vector 
$(Y_{d(i-1)+1},\ldots,Y_{di})$. Let $f\colon\RR^{dn}\to\RR$ denote the 
joint density of the entries of the random vectors $X^1,\ldots,X^{n}$, 
and let $g\colon\RR^{dn}\to\RR$ denote the joint density of the entries 
of the random vectors $Y^1,\ldots,Y^{n}$. Due to the independence of 
the random vectors $X^1,\ldots,X^{n}$, we have 
$f(x_1,\ldots,x_{dn})=f_1(x_1,\ldots,x_d)\cdot\cdots\cdot 
f_{n}(x_{d(n-1)+1},\ldots,x_{dn})$. We can express the joint density 
$g$ as
\[
   g(y_1,\ldots,y_{dn}) = |\det_{\partial} \Phi^{-1}(y_1,\ldots,y_{dn})\,|\cdot f(\Phi^{-1}(y_1,\ldots,y_{dn})),
\]
where $\det_{\partial}$ denotes the determinant of the Jacobian matrix 
of $\Phi^{-1}$ (see, e.g.,~\cite{SpielmanT04}).

The matrix $A$ is invertible as ${\cal B}$ is a basis of 
$\mathbb{R}^{dn}$. Hence, for $y\in\RR^{dn}$, $\Phi^{-1}(y)=A^{-1}y$ 
and the Jacobian matrix of $\Phi^{-1}$ equals $A^{-1}$. Thus, 
$\det_{\partial}\Phi^{-1}=\det A^{-1}=(\det A)^{-1}$. Since all entries 
of $A$ are integers, also its determinant must be an integer, and since 
it has rank $dn$, we know that $\det A\neq0$. Hence, $|\det A\,|\ge 1$ 
and $|\det A^{-1}\,|\le 1$. For $y\in\RR^{dn}$, we decompose 
$\Phi^{-1}(y)\in\RR^{dn}$ into $n$ subvectors with $d$ entries each, 
\ie, $\Phi^{-1}(y)=(\Phi^{-1}_1(y),\ldots,\Phi^{-1}_n(y))$ with 
$\Phi^{-1}_i(y)\in\RR^d$ for $i\in[n]$. This yields
\begin{equation*}
 g(y) = |\det A^{-1}\,|\cdot f(\Phi^{-1}(y)) \le
 f_1(\Phi^{-1}_1(y))\cdots f_{n}(\Phi^{-1}_{n}(y)),
\end{equation*}
where we used that $|\det A^{-1}\,|\le 1$ and that the vectors
$X^1,\ldots,X^n$ are stochastically independent.

The probability we want to estimate can be written as
\begin{equation}\label{eqn:appendix:probA}
  \Pr{\bigcap_{i=1}^{k}{\cal A}_i} = \int_{y_1=0}^{\varepsilon}\cdots
  \int_{y_{k}=0}^{\varepsilon}\int_{y_{k+1}=-\infty}^{\infty}\cdots\int_{y_{dn}=-\infty}^{\infty}g(y_1,\ldots,y_{dn})\,dy_{dn}\cdots
  dy_1.
\end{equation}
Since all entries of the vectors $X^1,\ldots,X^n$ take only values in
the interval $[0,1]$ and since for $i\in\{k+1,\ldots,dn\}$, the random
variable $Y_i$ coincides with one of these entries,
(\ref{eqn:appendix:probA}) simplifies to
\begin{equation}
  \label{eqn:probEstimate}
  \Pr{\bigcap_{i=1}^{k}{\cal A}_i} = \int_{y_1=0}^{\varepsilon}\cdots
  \int_{y_{k}=0}^{\varepsilon}\int_{y_{k+1}=0}^{1}\cdots\int_{y_{dn}=0}^{1}g(y_1,\ldots,y_{dn})\,dy_{dn}\cdots
  dy_1.
\end{equation}

By the definition of~$\pi$, the basis ${\cal B}$ consists of the vectors $\lambda^{1},\ldots,\lambda^{k}$
and the canonical vectors $e^i$ for $i\in\Pi=\{\ell\mid\exists j\in[dn-k]\colon \pi(j)=\ell\}$.
We divide the vectors $e^{1},\ldots,e^{dn}$ into $n$ groups of $d$ vectors each, \ie, the first group
consists of the vectors $e^{1},\ldots,e^{d}$, the second group consists of the vectors $e^{d+1},\ldots,e^{2d}$,
and so on. The set of vectors $e^{i}$ with $i\notin\Pi$, \ie, the vectors from the canonical basis that
are replaced by the vectors $\lambda^{1},\ldots,\lambda^{k}$ in basis ${\cal B}$, can intersect at most $k$
of these groups. In order to simplify the notation, we reorder and rename the groups such that only vectors from the
first $k$ groups are replaced by the vectors $\lambda^{1},\ldots,\lambda^{k}$. As every group consists of $d$
vectors, we can assume that, after renaming, $[dn]\setminus\Pi\subseteq [dk]$, \ie,
only vectors $e^{i}$ from the canonical basis with $i\le dk$ are replaced by the vectors 
$\lambda^{1},\ldots,\lambda^{k}$ in the basis $\cal B$. After that, we can reorder and rename the groups $k+1,\ldots,n$ 
such that $\pi(i)=i$, for $i>dk$. This implies, in particular, that for $i>k$ we have $\Phi^{-1}_i(y)=(y_{di+1},\ldots,y_{d(i+1)})$. 
Under these assumptions, the density $g$ can be upper bounded as follows:
\begin{equation}
  \label{eqn:jointDensity}
  g(y_1,\ldots,y_{dn}) \le \phi^{k}\cdot f_{k+1}(y_{dk+1},\ldots,y_{d(k+1)})\cdots f_n(y_{d(n-1)+1},\ldots,y_{dn}),
\end{equation}
where we bounded each of the densities $f_1,\ldots,f_k$ from above by $\phi$ and used that
$\Phi^{-1}_i(y)=(y_{di+1},\ldots,y_{d(i+1)})$ for $i>k$. 

Putting together~(\ref{eqn:probEstimate}) and~(\ref{eqn:jointDensity}) 
yields
\begin{equation*}
\begin{split}
\Pr{\bigcap_{i=1}^k{\cal A}_i}&
\le (\varepsilon\phi)^{k}\cdot\left(
\int_{y_{dk+1}=0}^{1}\cdots\int_{y_{d(k+1)}=0}^{1}f_{k+1}(y_{dk+1},\ldots,y_{d(k+1)})\right.\\
& \left.\quad\ldots
\int_{y_{d(n-1)+1}=0}^{1}\int_{y_{dn}=0}^{1}f_{n}(y_{d(n-1)+1},\ldots,y_{dn})
\,dy_{dn}\cdots dy_{dk+1}\right)\\
& = (\varepsilon\phi)^{k},
\end{split}
\end{equation*}
where the last equation follows because $f_{k+1},\ldots,f_n$ are density 
functions. The occurrence of $\varepsilon^k$ is due to the first $k$ integrals
in~(\ref{eqn:probEstimate}) because each of the variables $y_1,\ldots,y_k$ is integrated
over an interval of length $\varepsilon$ and none of them appears in the integrand coming
from~(\ref{eqn:jointDensity}).
\end{proof}

\section{Proofs of some Lemmas from Section~\ref{subsec:UB:L2}}

\subsection{Proof of Lemma~\ref{lemma:densityDelta}}
\label{appendix:proof:densityDelta}

Let $a,c\in(0,C]$ for some $C>0$. In the following proof, we use the 
following two identities (see~\cite{bronshtein2007handbook}):
\begin{align*}
  \int_{0}^c\frac{1}{\sqrt{z(c-z)}}\,dz & = 
  \left[ \arctan\left(\frac{z-c/2}{\sqrt{z(c-z)}}\right) \right]_{0}^c \\
  & = \left(\lim_{x\to\infty}\arctan(x)\right) - \left(\lim_{x\to-\infty}\arctan(x)\right) 
  = \frac{\pi}{2} - (-\frac{\pi}{2}) = \pi 
\end{align*}
and
\begin{alignat*}{1}
  \int_{0}^{a}\frac{1}{\sqrt{z(z+c)}}\,dz & =
  \left[\ln\left(\frac{c}{2}+z+\sqrt{z(z+c)}\right)\right]_{0}^{a}\\  
  & = \ln\left(\frac{c}{2}+a+\sqrt{a(a+c)}\right) - \ln\left(\frac{c}{2}\right)\\
  & \le \ln\left(\frac{c}{2}+a+\sqrt{(a+c)(a+c)}\right) + \ln\left(\frac{2}{c}\right)\\
  & = \ln\left(\frac{3}{2}c+2a\right)+\ln\left(\frac{2}{c}\right)
  \le \ln\left(4C\right)+\ln\left(\frac{2}{c}\right).
\end{alignat*}

Since in both identities the integrands are non-negative, the following inequalities are true
for any~$[\alpha_1,\alpha_2]\subseteq[0,c]$ and~$[\beta_1,\beta_2]\subseteq [0,a]$:
\begin{equation}\label{eqn:Integral1}
     \int_{\alpha_1}^{\alpha_2}\frac{1}{\sqrt{z(c-z)}}\,dz \le \pi
\end{equation}
and
\begin{equation}\label{eqn:Integral2}
   \int_{\beta_1}^{\beta_2}\frac{1}{\sqrt{z(z+c)}}\,dz \le \ln\left(4C\right)+\ln\left(\frac{2}{c}\right).
\end{equation}
We will frequently use these inequalities in the following.

\begin{proof}[Lemma~\ref{lemma:densityDelta}]
The conditional density of $\Delta$ can be calculated as convolution of
the conditional densities of $Z_1$ and $Z_2$ as follows:
\[
   f_{\Delta|T=\tau,R_1=r_1,R_2=r_2}(\delta) 
     =  \int_{-\infty}^{\infty}f_{Z|T=\tau,R=r_1}(z)\cdot f_{Z|T=\tau,R=r_2}(z-\delta)\, dz.       
\]
In order to estimate this integral, we distinguish between several 
cases. In the following, let $\kappa$ denote a
sufficiently large constant.

{\bf First case: $\tau\le r_1$ and $\tau\le r_2$.} \\
Since $Z_i$ takes only values in the interval $[-\tau,\tau]$, we can 
assume $0<\delta\le\min\{1/2,2\tau\}$ and
\[
 f_{\Delta|T=\tau,R_1=r_1,R_2=r_2}(\delta)
 =  \int_{-\tau+\delta}^{\tau}f_{Z|T=\tau,R=r_1}(z)\cdot f_{Z|T=\tau,R=r_2}(z-\delta)\, dz.
\]
Due to Lemma~\ref{lemma:densityZ}, we can estimate the densities of $Z_1$ and $Z_2$ by
\begin{equation}\label{FirstEstimateZ}
  f_{Z|T=\tau,R=r_i}(z) 
     \le  \sqrt{\frac{2}{\tau^2-z^2}}
    \le \sqrt{\frac{2}{\tau(\tau-|z|)}} 
     \le  \sqrt{\frac{2}{\tau}}\left(\frac{1}{\sqrt{\tau-z}}+\frac{1}{\sqrt{\tau+z}}\right).
\end{equation}
For $\delta\in (0,\min\{1/2,2\tau\}]$,
we obtain the following upper bound on the density of $\Delta$:
\begin{align*}
  & f_{\Delta|T=\tau,R_1=r_1,R_2=r_2}(\delta) \\
   \le & \frac{2}{\tau} \int_{-\tau+\delta}^{\tau}\left(\frac{1}{\sqrt{\tau-z}}+\frac{1}{\sqrt{\tau+z}}\right)\left(\frac{1}{\sqrt{\tau-z+\delta}}+\frac{1}{\sqrt{\tau+z-\delta}}\right)\, dz  \\
   = & \frac{2}{\tau} \left(\int_{-\tau+\delta}^{\tau}\frac{1}{\sqrt{(\tau-z)(\tau-z+\delta)}}\,dz 
                           + \int_{-\tau+\delta}^{\tau}\frac{1}{\sqrt{(\tau+z)(\tau-z+\delta)}}\,dz \right.\\
   & \left.               + \int_{-\tau+\delta}^{\tau}\frac{1}{\sqrt{(\tau-z)(\tau+z-\delta)}}\,dz
                           + \int_{-\tau+\delta}^{\tau}\frac{1}{\sqrt{(\tau+z)(\tau+z-\delta)}}\,dz \right) \\
   = & \frac{2}{\tau} \left(\int_{0}^{2\tau-\delta}\frac{1}{\sqrt{z'(z'+\delta)}}\,dz' 
                           + \int_{\delta}^{2\tau}\frac{1}{\sqrt{z'(2\tau+\delta-z')}}\,dz' \right.\\
   & \left.               + \int_{0}^{2\tau-\delta}\frac{1}{\sqrt{z'(2\tau-\delta-z')}}\,dz'
                           + \int_{0}^{2\tau-\delta}\frac{1}{\sqrt{z'(z'+\delta)}}\,dz' \right).
\end{align*}
For the four integrals, we used the substitutions
$z'=\tau-z$, $z'=\tau+z$, $z'=\tau-z$, and $z'=\tau-\delta+z$, respectively.
Using~\eqref{eqn:Integral1} and~\eqref{eqn:Integral2} and the fact that $2\tau-\delta \le 2\sqrt{d}=O(1)$ 
yields that the previous term is bounded from above by
\begin{align*}
  & \frac{2}{\tau}\left(
    \left(\ln(4(2\sqrt{d}))+\ln\left(2\delta^{-1}\right)\right) + \pi + \pi
  + \left(\ln(4(2\sqrt{d}))+\ln\left(2\delta^{-1}\right)\right)\right)\\
  = & \frac{2}{\tau}\left(2\pi+2\ln(8\sqrt{d})+2\ln\left(2\delta^{-1}\right)\right)
  = \frac{O(1) + 4\ln\left(\delta^{-1}\right)}{\tau}.
\end{align*}
Since we assume that $\delta\le 1/2$, the logarithm $\ln(\delta^{-1})$ is bounded from below by 
the constant $\ln(2)$. Using this observation, we can absorb the~$O(1)$ term and bound the previous expression from above by
\[
  \frac{\kappa}{\tau}\cdot\ln\left(\delta^{-1}\right)
\]
if $\kappa$ is a large enough constant.

{\bf Second case: $r_1\le \tau$ and $r_2\le\tau$.} \\
Since $Z_i$ takes only values in the interval $[-\tau,2r_i-\tau]$, we 
can assume $0<\delta\le\min\{1/2,2r_1\}$ and
\[
 f_{\Delta|T=\tau,R_1=r_1,R_2=r_2}(\delta) 
 = \int_{-\tau+\delta}^{\min\{2r_1-\tau,2r_2-\tau+\delta\}}f_{Z|T=\tau,R=r_1}(z)\cdot f_{Z|T=\tau,R=r_2}(z-\delta)\,dz.
\]
The limits of the integral follow because~$f_{Z|T=\tau,R=r_1}(z)$ is only nonzero for~$z\in[-\tau,2r_1-\tau]$
and~$f_{Z|T=\tau,R=r_2}(z-\delta)$ is only nonzero for~$z\in[-\tau+\delta,2r_2-\tau+\delta]$.
The intersection of these two intervals is~$[-\tau+\delta,\min\{2r_1-\tau,2r_2-\tau+\delta\}]$.

Due to Lemma~\ref{lemma:densityZ}, we can estimate the densities of $Z_1$ and $Z_2$ by
\begin{align}
  f_{Z|T=\tau,R_i=r_i}(z) \le &  \sqrt{\frac{2}{(\tau+z)(2r_i-\tau-z)}} 
     \le  \begin{cases}
          \sqrt{\frac{2}{r_i(\tau+z)}} & \mbox{if }  z\le r_i-\tau \\
          \sqrt{\frac{2}{r_i(2r_i-\tau-z)}} & \mbox{if }  z\ge r_i-\tau
      \end{cases}  \notag\\
    \le & \sqrt{\frac{2}{r_i}}\left(\frac{1}{\sqrt{\tau+z}}+\frac{1}{\sqrt{2r_i-\tau-z}}\right).\label{SecondEstimateZ}
\end{align}
{\bf Case 2.1: $\delta\in(\max\{0,2(r_1-r_2)\},2r_1]$.} \\
We obtain the following upper bound on the density of $\Delta$:
\begin{align*}
  & f_{\Delta|T=\tau,R_1=r_1,R_2=r_2}(\delta) \\
   \le & \frac{2}{\sqrt{r_1r_2}} \int_{-\tau+\delta}^{2r_1-\tau}\left(\frac{1}{\sqrt{\tau+z}}+\frac{1}{\sqrt{2r_1-\tau-z}}\right)\left(\frac{1}{\sqrt{\tau+z-\delta}}+\frac{1}{\sqrt{2r_2-\tau-z+\delta}}\right)\, dz  \\
   = & \frac{2}{\sqrt{r_1r_2}} \left(\int_{-\tau+\delta}^{2r_1-\tau}\frac{1}{\sqrt{(\tau+z)(\tau+z-\delta)}}\,dz 
                                                  + \int_{-\tau+\delta}^{2r_1-\tau}\frac{1}{\sqrt{(2r_1-\tau-z)(\tau+z-\delta)}}\,dz \right.\\
   & \left.                                    + \int_{-\tau+\delta}^{2r_1-\tau}\!\frac{1}{\sqrt{(\tau+z)(2r_2-\tau-z+\delta)}}\,dz
                                                  + \int_{-\tau+\delta}^{2r_1-\tau}\!\frac{1}{\sqrt{(2r_1-\tau-z)(2r_2-\tau-z+\delta)}}\,dz \right)\\
   = & \frac{2}{\sqrt{r_1r_2}} \left(\int_{0}^{2r_1-\delta}\frac{1}{\sqrt{(z'+\delta)z'}}\,dz' 
                           + \int_{0}^{2r_1-\delta}\frac{1}{\sqrt{(2r_1-\delta-z')z'}}\,dz' \right.\\
   & \left.               + \int_{\delta}^{2r_1}\frac{1}{\sqrt{z'(2r_2+\delta-z')}}\,dz'
                           + \int_{0}^{2r_1-\delta}\frac{1}{\sqrt{z'(2(r_2-r_1)+\delta+z')}}\,dz' \right).
\end{align*}
For the four integrals, we used the substitutions
$z'=z+\tau-\delta$, $z'=z+\tau-\delta$, $z'=z+\tau$, and $z'=2r_1-\tau-z$, respectively.
Using~\eqref{eqn:Integral1} and~\eqref{eqn:Integral2} and the facts that $2r_1-\delta \le 2\sqrt{d}$
and $2(r_2-r_1)+\delta \le 2r_2 \le 2\sqrt{d}$ yields that the previous term is bounded from above by
\begin{align*}
     & \frac{2}{\sqrt{r_1r_2}}\left(\left(\ln(4(2\sqrt{d}))+\ln\left(2\delta^{-1}\right)\right)+\pi+\pi+\left(\ln(4(2\sqrt{d}))+\ln\left(2(2(r_2-r_1)+\delta)^{-1}\right)\right)\right)\\
     = & 
     \frac{2}{\sqrt{r_1r_2}}\left( 2\pi+2\ln(8\sqrt{d})+\ln\left(2\delta^{-1}\right)+\ln\left(2(2(r_2-r_1)+\delta)^{-1}\right)\right) \\
     \le & \frac{2}{\sqrt{r_1r_2}}\left(\ln\left(\delta^{-1}\right)+\ln\left((2(r_2-r_1)+\delta)^{-1}\right)+O(1)\right)\\
     \le & \frac{\kappa}{\sqrt{r_1r_2}}\left(\ln\left(\delta^{-1}\right)+\ln\left((2(r_2-r_1)+\delta)^{-1}\right)\right),
\end{align*}
where the last inequality assumes that $\kappa$ is a large enough constant.

{\bf Case 2.2: $\delta\in(0,\max\{0,2(r_1-r_2)\})$.} \\
We obtain the following upper bound on the density of $\Delta$:
\begin{align*}
  & f_{\Delta|T=\tau,R_1=r_1,R_2=r_2}(\delta) \\
   \le & \frac{2}{\sqrt{r_1r_2}} \int_{-\tau+\delta}^{2r_2-\tau+\delta}\!\!\left(\frac{1}{\sqrt{\tau+z}}+\frac{1}{\sqrt{2r_1-\tau-z}}\right)\left(\frac{1}{\sqrt{\tau+z-\delta}}+\frac{1}{\sqrt{2r_2-\tau-z+\delta}}\right)dz  \\
   = & \frac{2}{\sqrt{r_1r_2}} \left(\int_{-\tau+\delta}^{2r_2-\tau+\delta}\frac{1}{\sqrt{(\tau+z)(\tau+z-\delta)}}\,dz 
                                                  + \int_{-\tau+\delta}^{2r_2-\tau+\delta}\frac{1}{\sqrt{(2r_1-\tau-z)(\tau+z-\delta)}}\,dz \right.\\
   & \left.                                    +\! \int_{-\tau+\delta}^{2r_2-\tau+\delta}\!\!\!\!\!\!\frac{1}{\sqrt{(\tau+z)(2r_2-\tau-z+\delta)}}\,dz
                                                  \!+\! \int_{-\tau+\delta}^{2r_2-\tau+\delta}\!\!\!\!\!\!\frac{1}{\sqrt{(2r_1-\tau-z)(2r_2-\tau-z+\delta)}}\,dz \right) \\
   = & \frac{2}{\sqrt{r_1r_2}} \left(\int_{0}^{2r_2}\frac{1}{\sqrt{(z'+\delta)z'}}\,dz' 
                           + \int_{0}^{2r_2}\frac{1}{\sqrt{(2r_1-\delta-z')z'}}\,dz' \right.\\
   & \left.               + \int_{0}^{2r_2}\frac{1}{\sqrt{(2r_2+\delta-z')z'}}\,dz'
                           + \int_{0}^{2r_2}\frac{1}{\sqrt{(2(r_1-r_2)-\delta+z')z'}}\,dz' \right) \\
 \le & \frac{2}{\sqrt{r_1r_2}}\left(2\pi+2\ln(8\sqrt{d})+\ln\left(2\delta^{-1}\right)+\ln\left(2(2(r_1-r_2)-\delta)^{-1}\right)\right) \\
 \le & \frac{\kappa}{\sqrt{r_1r_2}}\left(\ln\left(\delta^{-1}\right)+\ln\left((2(r_1-r_2)-\delta)^{-1}\right)\right).
\end{align*}
For the four integrals, we used the substitutions
$z'=z+\tau-\delta$, $z'=z+\tau-\delta$, $z'=2r_2-\tau+\delta-z$, and $z'=2r_2-\tau+\delta-z$, respectively.
Using~\eqref{eqn:Integral1} and~\eqref{eqn:Integral2} and the facts that $\delta\le 2(r_1-r_2) \le 2\sqrt{d}$
and $2(r_1-r_2)-\delta \le 2(r_1-r_2) \le 2\sqrt{d}$ yields the penultimate inequality.
The last inequality follows for the same reasons as in Case 2.1.

{\bf Third case: $r_1\le\tau\le r_2$.} \\
Since $Z_1$ takes only values in the interval $[-\tau,2r_1-\tau]$ and 
$Z_2$ takes only values in the interval $[-\tau,\tau]$, the random variable~$\Delta=Z_1-Z_2$
takes only values in the interval~$[-2\tau,2r_1]$. For~$\delta\notin[-2\tau,2r_1]$, the density of~$\Delta$
is trivially zero. As additionally, by definition,~$\delta\in(0,1/2]$, we can assume $0<\delta\le\min\{1/2,2r_1\}$ and
\[
 f_{\Delta|T=\tau,R_1=r_1,R_2=r_2}(\delta)
 = \int_{-\tau+\delta}^{2r_1-\tau}f_{Z|T=\tau,R=r_1}(z)\cdot f_{Z|T=\tau,R=r_2}(z-\delta)\, dz.
\]
Using~\eqref{FirstEstimateZ} and~\eqref{SecondEstimateZ}, we obtain
the following upper bound on the density of $\Delta$ for $\delta\in(0,\min\{1/2,2r_1\}]$:
\begin{align*}
  & f_{\Delta|T=\tau,R_1=r_1,R_2=r_2}(\delta) \\
   \le & \frac{2}{\sqrt{\tau r_1}} \int_{-\tau+\delta}^{2r_1-\tau}\left(\frac{1}{\sqrt{\tau+z}}+\frac{1}{\sqrt{2r_1-\tau-z}}\right)\left(\frac{1}{\sqrt{\tau-z+\delta}}+\frac{1}{\sqrt{\tau+z-\delta}}\right)\, dz  \\
   = & \frac{2}{\sqrt{\tau r_1}} \left(\int_{-\tau+\delta}^{2r_1-\tau}\frac{1}{\sqrt{(\tau+z)(\tau-z+\delta)}}\,dz 
                                                    + \int_{-\tau+\delta}^{2r_1-\tau}\frac{1}{\sqrt{(2r_1-\tau-z)(\tau-z+\delta)}}\,dz \right.\\
   & \left.                                      + \int_{-\tau+\delta}^{2r_1-\tau}\frac{1}{\sqrt{(\tau+z)(\tau+z-\delta)}}\,dz
                                                    + \int_{-\tau+\delta}^{2r_1-\tau}\frac{1}{\sqrt{(2r_1-\tau-z)(\tau+z-\delta)}}\,dz \right) \\
   = & \frac{2}{\sqrt{\tau r_1}} \left(\int_{\delta}^{2r_1}\frac{1}{\sqrt{z'(2\tau+\delta-z')}}\,dz' 
                           + \int_{0}^{2r_1-\delta}\frac{1}{\sqrt{z'(2(\tau-r_1)+\delta+z')}}\,dz' \right.\\
   & \left.               + \int_{0}^{2r_1-\delta}\frac{1}{\sqrt{(z'+\delta)z'}}\,dz'
                           + \int_{0}^{2r_1-\delta}\frac{1}{\sqrt{(2r_1-\delta-z')z'}}\,dz' \right) \\
 \le & \frac{2}{\sqrt{\tau r_1}}\left(2\pi+2\ln(8\sqrt{d})+\ln\left(2\delta^{-1}\right)+\ln\left(2(2(\tau-r_1)+\delta)^{-1}\right)\right) \\
 \le & \frac{\kappa}{\sqrt{\tau r_1}}\cdot\ln\left(\delta^{-1}\right).
\end{align*}
For the four integrals, we used the substitutions
$z'=z+\tau$, $z'=2r_1-\tau-z$, $z'=z+\tau-\delta$, and $z'=z+\tau-\delta$, respectively.
Using~\eqref{eqn:Integral1} and~\eqref{eqn:Integral2} and the facts that $2(\tau-r_1)+\delta\le 2\tau \le 2\sqrt{d}$
and $\delta \le 2r_1 \le 2\sqrt{d}$ yields the penultimate inequality.
The last inequality follows for the same reasons as in Case 2.1.

{\bf Fourth case: $r_2\le\tau\le r_1$.} \\
Since $Z_1$ takes only values in the interval $[-\tau,\tau]$ and $Z_2$ 
takes only values in the interval $[-\tau,2r_2-\tau]$, the random variable~$\Delta=Z_1-Z_2$
takes only values in the interval~$[-2r_2,2\tau]$. For~$\delta\notin[-2r_2,2\tau]$, the density of~$\Delta$
is trivially zero. As additionally, by definition,~$\delta\in(0,1/2]$, we can assume 
$0<\delta\le \min\{1/2,2\tau\}$ and
\[
 f_{\Delta|T=\tau,R_1=r_1,R_2=r_2}(\delta)
 = \int_{-\tau+\delta}^{\min\{2r_2-\tau+\delta,\tau\}}f_{Z|T=\tau,R=r_1}(z)\cdot f_{Z|T=\tau,R=r_2}(z-\delta)\, dz.
\]
The limits of the integral follow because~$f_{Z|T=\tau,R=r_1}(z)$ is only nonzero for~$z\in[-\tau,\tau]$
and~$f_{Z|T=\tau,R=r_2}(z-\delta)$ is only nonzero for~$z\in[-\tau+\delta,2r_2-\tau+\delta]$.
The intersection of these two intervals is~$[-\tau+\delta,\min\{2r_2-\tau+\delta,\tau\}]$.

{\bf Case 4.1: $\delta\in(0,2(\tau-r_2))$.} \\
Using~\eqref{FirstEstimateZ} and~\eqref{SecondEstimateZ}, we obtain
the following upper bound on the density of $\Delta$:
\begin{align*}
  & f_{\Delta|T=\tau,R_1=r_1,R_2=r_2}(\delta) \\
   \le & \frac{2}{\sqrt{\tau r_2}}
  \int_{-\tau+\delta}^{2r_2-\tau+\delta}\left(\frac{1}{\sqrt{\tau-z}}+\frac{1}{\sqrt{\tau+z}}\right)\left(\frac{1}{\sqrt{\tau+z-\delta}}+\frac{1}{\sqrt{2r_2-\tau-z+\delta}}\right)\, dz  \\
   = & \frac{2}{\sqrt{\tau r_2}} \left(\int_{-\tau+\delta}^{2r_2-\tau+\delta}\frac{1}{\sqrt{(\tau+z)(\tau+z-\delta)}}\,dz 
                                                     + \int_{-\tau+\delta}^{2r_2-\tau+\delta}\frac{1}{\sqrt{(\tau-z)(\tau+z-\delta)}}\,dz \right.\\
   & \left.                                       + \int_{-\tau+\delta}^{2r_2-\tau+\delta}\frac{1}{\sqrt{(\tau+z)(2r_2-\tau-z+\delta)}}\,dz
                                                     + \int_{-\tau+\delta}^{2r_2-\tau+\delta}\frac{1}{\sqrt{(\tau-z)(2r_2-\tau-z+\delta)}}\,dz \right) \\
   = & \frac{2}{\sqrt{\tau r_2}} \left(\int_{0}^{2r_2}\frac{1}{\sqrt{(z'+\delta)z'}}\,dz' 
                           + \int_{0}^{2r_2}\frac{1}{\sqrt{(2\tau-\delta-z')z'}}\,dz' \right.\\
   & \left.               + \int_{0}^{2r_2}\frac{1}{\sqrt{(2r_2+\delta-z')z'}}\,dz'
                           + \int_{0}^{2r_2}\frac{1}{\sqrt{(2(\tau-r_2)-\delta+z')z'}}\,dz' \right) \\
 \le & \frac{2}{\sqrt{\tau r_2}}\left(2\pi+2\ln(8\sqrt{d})+\ln\left(2\delta^{-1}\right)+\ln\left(2(2(\tau-r_2)-\delta)^{-1}\right)\right) \\
 \le & \frac{\kappa}{\sqrt{\tau r_2}}\left(\ln\left(\delta^{-1}\right)+\ln\left((2(\tau-r_2)-\delta)^{-1}\right)\right).
\end{align*}
For the four integrals, we used the substitutions
$z'=z+\tau-\delta$, $z'=z+\tau-\delta$, $z'=2r_2-\tau-z+\delta$, and $z'=2r_2-\tau-z+\delta$, respectively.
Using~\eqref{eqn:Integral1} and~\eqref{eqn:Integral2} and the facts that $\delta \le 2r_2 \le 2\sqrt{d}$
and $2(\tau-r_2)-\delta\le 2\tau \le 2\sqrt{d}$ yields the penultimate inequality.
The last inequality follows for the same reasons as in Case 2.1.

{\noindent\bf Case 4.2: $\delta\in(2(\tau-r_2),2\tau]$.} \\
Using~\eqref{FirstEstimateZ} and~\eqref{SecondEstimateZ}, we obtain
the following upper bound on the density of $\Delta$:
\begin{align*}
  & f_{\Delta|T=\tau,R_1=r_1,R_2=r_2}(\delta) \\
   \le & \frac{2}{\sqrt{\tau r_2}}
  \int_{-\tau+\delta}^{\tau}\left(\frac{1}{\sqrt{\tau-z}}+\frac{1}{\sqrt{\tau+z}}\right)\left(\frac{1}{\sqrt{\tau+z-\delta}}+\frac{1}{\sqrt{2r_2-\tau-z+\delta}}\right)\, dz  \\
   = & \frac{2}{\sqrt{\tau r_2}} \left(\int_{-\tau+\delta}^{\tau}\frac{1}{\sqrt{(\tau-z)(\tau+z-\delta)}}\,dz 
                                                    + \int_{-\tau+\delta}^{\tau}\frac{1}{\sqrt{(\tau+z)(\tau+z-\delta)}}\,dz \right.\\
   & \left.                                      + \int_{-\tau+\delta}^{\tau}\frac{1}{\sqrt{(\tau-z)(2r_2-\tau-z+\delta)}}\,dz
                                                    + \int_{-\tau+\delta}^{\tau}\frac{1}{\sqrt{(\tau+z)(2r_2-\tau-z+\delta)}}\,dz \right) \\
   = & \frac{2}{\sqrt{\tau r_2}} \left(\int_{0}^{2\tau-\delta}\frac{1}{\sqrt{(2\tau-\delta-z')z'}}\,dz' 
                           + \int_{0}^{2\tau-\delta}\frac{1}{\sqrt{(z'+\delta)z'}}\,dz' \right.\\
   & \left.               + \int_{0}^{2\tau-\delta}\frac{1}{\sqrt{z'(2(r_2-\tau)+\delta+z')}}\,dz'
                           + \int_{\delta}^{2\tau}\frac{1}{\sqrt{z'(2r_2+\delta-z')}}\,dz' \right) \\
 \le & \frac{2}{\sqrt{\tau r_2}}\left(2\pi+2\ln(8\sqrt{d})+\ln\left(2\delta^{-1}\right)+\ln\left(2(2(r_2-\tau)+\delta)^{-1}\right)\right) \\
 \le & \frac{\kappa}{\sqrt{\tau r_2}}\left(\ln\left(\delta^{-1}\right)+\ln\left((2(r_2-\tau)+\delta)^{-1}\right)\right).
\end{align*}
For the four integrals, we used the substitutions
$z'=\tau+z-\delta$, $z'=\tau+z-\delta$, $z'=\tau-z$, and $z'=\tau+z$, respectively.
Using~\eqref{eqn:Integral1} and~\eqref{eqn:Integral2} and the facts that $\delta \le 2\tau \le 2\sqrt{d}$
and $2(r_2-\tau)+\delta\le 2r_2 \le 2\sqrt{d}$ yields the penultimate inequality.
The last inequality follows for the same reasons as in Case 2.1.

Altogether, this yields the lemma.
\end{proof}

\subsection{Proof of Lemma~\ref{lemma:UB:DensitiesDeltaZ}}
\label{appendix:proof:densities}

First, we derive the following lemma, which gives bounds on the 
conditional density of the random variable $\Delta$ when only one of 
the radii $R_1$ and $R_2$ is given.
\begin{lemma}
\label{lemma:densityDelta2}
Let $r_1,r_2,\tau\in(0,\sqrt{d})$ and $\delta\in(0,1/2]$. In the 
following, let $\kappa$ denote a sufficiently large constant.
\begin{enumerate}
 \renewcommand{\labelenumi}{\alph{enumi})}
\item The density of $\Delta$ under the conditions $T=\tau$ and $R_1=r_1$ is
bounded by
\[
f_{\Delta|T=\tau,R_1=r_1}(\delta) \le \begin{cases}
\frac{\kappa}{\sqrt{\tau r_1}}\cdot\ln\left(\delta^{-1}\right) & \mbox{if }r_1\le\tau,\\
\frac{\kappa}{\tau}\cdot\ln\left(\delta^{-1}\right) & \mbox{if }r_1\ge\tau.
\end{cases}
\]
\item The density of $\Delta$, under the conditions $T=\tau$ and
$R_2=r_2$, is bounded by
\[
f_{\Delta|T=\tau,R_2=r_2}(\delta) \le \begin{cases}
   \frac{\kappa}{\sqrt{\tau r_2}}\cdot(\ln\left(\delta^{-1}\right)+\ln|2(\tau-r_2)-\delta|^{-1}) & \mbox{if } r_2\le\tau,\\
   \frac{\kappa}{\tau}\cdot\ln\left(\delta^{-1}\right) & \mbox{if } r_2\ge\tau.
   \end{cases}
\]
\end{enumerate}
\end{lemma}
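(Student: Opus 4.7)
The plan is to derive both parts by integrating the conditional density bound from Lemma~\ref{lemma:densityDelta} against the marginal density of the ``unconditioned'' radius. In the simplified random experiment underlying the analysis, $R_1$ and $R_2$ are mutually independent and each is also independent of $T$; moreover, as shown in Section~\ref{subsubsec:Simplified} (and in the main text of Section~\ref{subsubsec:LinkedPairs}), each $R_i$ has density $r^{d-1}/d^{d/2-1}$ on $[0,\sqrt{d}]$. Hence for part~a) I would write
\[
   f_{\Delta|T=\tau,R_1=r_1}(\delta) \;=\; \int_0^{\sqrt{d}} f_{\Delta|T=\tau,R_1=r_1,R_2=r_2}(\delta)\cdot \frac{r_2^{d-1}}{d^{d/2-1}}\, dr_2,
\]
and analogously for part~b) with $r_1$ as the integration variable.

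To apply Lemma~\ref{lemma:densityDelta}, I would split the domain of integration at $\tau$ so that in each sub-interval exactly one of the four cases of that lemma applies. For part~a) with $r_1\le\tau$, integration over $r_2\in[0,\tau]$ invokes the case ``$r_1,r_2\le\tau$'' (containing the logarithmic singularity $\ln|2(r_1-r_2)-\delta|^{-1}$ at $r_2=r_1-\delta/2$) while integration over $r_2\in[\tau,\sqrt{d}]$ invokes ``$r_1\le\tau\le r_2$'' (no singularity). For part~a) with $r_1\ge\tau$ the two pieces come from the cases ``$r_2\le\tau\le r_1$'' and ``$\tau\le r_1,\tau\le r_2$''. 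The case splits for part~b) are obtained symmetrically. The crucial structural observation explaining the asymmetry between parts~a) and~b) is that in part~b) the factor $\ln|2(\tau-r_2)-\delta|^{-1}$ does not depend on the integration variable $r_1$, so it passes through the integral unchanged and survives in the final bound; in part~a) every singular factor depends on $r_2$ and is averaged out.

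The evaluation reduces to two elementary kinds of integrals: polynomial contributions of the form $\int_0^\tau r^{d-3/2}\,dr = \tau^{d-1/2}/(d-1/2)$ (and an analogous integral over $[\tau,\sqrt{d}]$), and logarithmic contributions of the form $\int_0^\tau \ln|2r-c|^{-1}\,dr$, which are uniformly bounded by an absolute constant depending only on $d$ since $\ln|u|^{-1}$ is integrable near $0$ and the interval of integration has length at most $2\sqrt{d}$. Using the constraint $\tau\le\sqrt{d}$ to bound $\tau^k$ by a constant for any fixed exponent $k$, the prefactors such as $\tau^{d-1/2}\cdot\sqrt{\tau}=\tau^d$ and $\tau^{d-1}$ that appear after integration can be absorbed into a single constant, delivering the desired bounds $\kappa/\sqrt{\tau r_1}\cdot \ln(\delta^{-1})$, $\kappa/\tau\cdot\ln(\delta^{-1})$, and in part~b) the extra term $\kappa/\sqrt{\tau r_2}\cdot\ln|2(\tau-r_2)-\delta|^{-1}$.

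The main technical care will be in verifying that the constant contributions produced by the integrated logarithmic singularities (which carry no $\ln(\delta^{-1})$ factor of their own) can be absorbed into terms of the form $\kappa\ln(\delta^{-1})$. This is possible because $\delta\le 1/2$ forces $\ln(\delta^{-1})\ge \ln 2>0$, so any absolute constant may be majorised by a sufficiently large multiple of $\ln(\delta^{-1})$; this is precisely the same trick used to absorb the $O(1)$ terms at the end of each case of the proof of Lemma~\ref{lemma:densityDelta}, and once it is applied case by case the stated bounds follow directly.
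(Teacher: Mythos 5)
Your proposal is correct and follows essentially the same route as the paper's proof in Appendix~\ref{appendix:proof:densities}: integrate the conditional bound of Lemma~\ref{lemma:densityDelta} against the marginal density $r^{d-1}/d^{d/2-1}$ of the unconditioned radius, split the integration at $\tau$ to match the four cases, observe that in part~b) the factor $\ln|2(\tau-r_2)-\delta|^{-1}$ is constant in the integration variable and therefore survives while all other singular factors are averaged out, and absorb the resulting $O(1)$ terms using $\tau\le\sqrt{d}$ and $\ln(\delta^{-1})\ge\ln 2$. The only cosmetic difference is that you bound the integrated logarithmic singularity directly by an absolute constant (via integrability of $\ln|u|^{-1}$ near $0$), where the paper packages the same estimate as the auxiliary Lemma~\ref{lem:CalculationIntegralDetail}.
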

\begin{proof}
a) We can write the density of $\Delta$ under the conditions $T=\tau$
and $R_1=r_1$ as
\begin{equation}\label{eqn:f_DeltaTauR1}
  f_{\Delta|T=\tau,R_1=r_1}(\delta) 
   = \int_{0}^{\sqrt{d}}f_{R_2}(r_2)\cdot
  f_{\Delta|T=\tau,R_1=r_1,R_2=r_2}(\delta)\,dr_2,
\end{equation}
where $f_{R_2}$ denotes the density of the length $R_2=\dist(O,Q_2)$.
The point~$Q_2$ is chosen uniformly at random from a hyperball with radius~$\sqrt{d}$ centered at the point~$O$.
The volume of a $d$-dimensional hyperball of radius~$r\ge 0$ is~$V_d(r)=\alpha r^d$ for~$\alpha=\frac{\pi^{d/2}}{\Gamma(d/2+1)}$
(see~\cite{bronshtein2007handbook}). The probability distribution~$F_{R_2}(r)$ of~$R_2$ is, for~$r\in[0,\sqrt{d}]$, proportional to~$V_d(r)$.
Let~$F_{R_2}(r)=\beta \alpha r^d$ for some~$\beta\ge 0$. Since~$F_{R_2}(\sqrt{d})=1$, it must be true that~$\beta=\frac{1}{\alpha d^{d/2}}$.
This yields, for~$r\in[0,\sqrt{d}]$,
\[
    f_{R_2}(r) = \frac{d}{dr} F_{R_2}(r) = \beta \alpha d r^{d-1} = \frac{r^{d-1}}{d^{d/2-1}}.
\]
Together with~\eqref{eqn:f_DeltaTauR1} this implies
\[
  f_{\Delta|T=\tau,R_1=r_1}(\delta) 
  = \int_{0}^{\sqrt{d}}\frac{r_2^{d-1}}{d^{d/2-1}}\cdot 
  f_{\Delta|T=\tau,R_1=r_1,R_2=r_2}(\delta)\,dr_2.
\]

We use Lemma~\ref{lemma:densityDelta} to bound this integral. For 
$r_1\le\tau$, we obtain
\begin{alignat*}{1}
   & f_{\Delta|T=\tau,R_1=r_1}(\delta)\\
   \le & \int_{0}^{\tau}\frac{r_2^{d-1}}{d^{d/2-1}}\cdot  \frac{\kappa}{\sqrt{r_1r_2}}\left(\ln\left(\delta^{-1}\right)+\ln|2(r_1-r_2)-\delta|^{-1}\right)\,dr_2\\
  & + \int_{\tau}^{\sqrt{d}}\frac{r_2^{d-1}}{d^{d/2-1}}\cdot \frac{\kappa}{\sqrt{\tau r_1}}\cdot\ln\left(\delta^{-1}\right)\,dr_2\\
   = & \frac{\kappa\ln\left(\delta^{-1}\right)}{d^{d/2-1}\sqrt{r_1}}\int_{0}^{\tau}r_2^{d-3/2}\,dr_2
       + \frac{\kappa}{d^{d/2-1}\sqrt{r_1}}\int_{0}^{\tau}r_2^{d-3/2}\ln|2(r_1-r_2)-\delta|^{-1}\,dr_2\\
   & + \frac{\kappa \ln\left(\delta^{-1}\right)}{d^{d/2-1}\sqrt{\tau r_1}} \int_{\tau}^{\sqrt{d}}r_2^{d-1} \,dr_2.
\end{alignat*}
The integral in the second line corresponds to the case~$r_1\le \tau$ and~$r_2\le \tau$ of Lemma~\ref{lemma:densityDelta}
and the integral in the third line corresponds to the case~$r_1\le\tau\le r_2$. 
Using the fact that~$\tau\le\sqrt{d}=O(1)$ and $\ln\left(\delta^{-1}\right) \ge \ln(2) = \Omega(1)$, the density~$f_{\Delta|T=\tau,R_1=r_1}(\delta)$
can be bounded from above by
\begin{alignat}{1}
      & \frac{\kappa\ln\left(\delta^{-1}\right)}{d^{d/2-1}\sqrt{r_1}}\int_{0}^{\tau}(\sqrt{d})^{d-3/2}\,dr_2
       + \frac{\kappa}{d^{d/2-1}\sqrt{r_1}}\int_{0}^{\tau}(\sqrt{d})^{d-3/2}\ln|2(r_1-r_2)-\delta|^{-1}\,dr_2\notag\\
   & + \frac{\kappa \ln\left(\delta^{-1}\right)}{d^{d/2-1}\sqrt{\tau r_1}} \int_{\tau}^{\sqrt{d}}(\sqrt{d})^{d-1} \,dr_2\notag\\
   & = \frac{O(1)}{\sqrt{r_1}}\cdot\ln\left(\delta^{-1}\right)
          +  \frac{O(1)}{\sqrt{r_1}}\cdot \int_{0}^{\tau}\ln|2(r_1-r_2)-\delta|^{-1}\,dr_2
          + \frac{O(1)}{\sqrt{\tau r_1}}\cdot\ln\left(\delta^{-1}\right). \label{eqn:CalculationIntegral1}
\end{alignat}

In order to bound the integral in the second term, we use the following lemma.
\begin{lemma}\label{lem:CalculationIntegralDetail}
Let~$f\colon \RR\to\RR$ be a linear function of the form~$f(x)=ax+b$ for arbitrary~$a,b\in\RR$ with~$|a|\ge 1$.
Furthermore, let~$c\in\RR$ and~$\varepsilon> 0$ be arbitrary. Then
\[
    \int_{c}^{c+\varepsilon}\ln\left(\frac{1}{|f(x)|}\right)\,dx \le \varepsilon \left(\ln\left(\frac{2}{\varepsilon}\right)+1\right).
\]
\end{lemma}
\begin{proof}
First we substitute~$z$ for $ax+b$ in the integral:
\begin{equation}\label{eqn:CalculationIntegralDetail1}
     \int_{c}^{c+\varepsilon}\ln\left(\frac{1}{|f(x)|}\right)\,dx
  = \int_{c}^{c+\varepsilon}\ln\left(\frac{1}{|ax+b|}\right)\,dx
  = \frac{1}{a}\int_{ac+b}^{a(c+\varepsilon)+b}\ln\left(\frac{1}{|z|}\right)\,dz.
\end{equation}
We first consider the case~$a>0$. In this case, the integral~$\int_{B}^{B+a\varepsilon}\ln(1/|z|)\,dz$
is maximized for~$B=-a\varepsilon/2$ because~$\ln(1/|z|)$ is symmetric around~$0$ and 
monotonically decreasing for $z>0$.
This yields
\begin{align*}
     &   \frac{1}{a}\int_{ac+b}^{a(c+\varepsilon)+b}\ln\left(\frac{1}{|z|}\right)\,dz
   \le \frac{1}{a}\int_{-a\varepsilon/2}^{a\varepsilon/2}\ln\left(\frac{1}{|z|}\right)\,dz
   = \frac{2}{a}\int_{0}^{a\varepsilon/2}\ln\left(\frac{1}{z}\right)\,dz.\\
    & = \frac{2}{a} \left[z(\ln(1/z)+1)\right]_{0}^{a\varepsilon/2}
    = \frac{2}{a} \cdot \frac{a\varepsilon}{2} \left(\ln\left(\frac{2}{a\varepsilon}\right)+1\right)
    = \varepsilon \left(\ln\left(\frac{2}{a\varepsilon}\right)+1\right).
\end{align*}
 
For~$a<0$, the last integral in~\eqref{eqn:CalculationIntegralDetail1} can be rewritten as follows:
\[
         \frac{1}{a}\int_{ac+b}^{a(c+\varepsilon)+b}\ln\left(\frac{1}{|z|}\right)\,dz
    =  \frac{1}{|a|}\int_{a(c+\varepsilon)+b}^{ac+b}\ln\left(\frac{1}{|z|}\right)\,dz.
\]
In this case the integral~$\int_{B+a\varepsilon}^{B}\ln(1/|z|)\,dz$
is maximized for~$B=-a\varepsilon/2$ because~$\ln(1/|z|)$ is symmetric around~$0$ and 
monotonically decreasing for $z>0$.
This yields
\begin{align*}
     &   \frac{1}{|a|}\int_{a(c+\varepsilon)+b}^{ac+b}\ln\left(\frac{1}{|z|}\right)\,dz
   \le \frac{1}{|a|}\int_{a\varepsilon/2}^{-a\varepsilon/2}\ln\left(\frac{1}{|z|}\right)\,dz
   = \frac{2}{|a|}\int_{0}^{|a|\varepsilon/2}\ln\left(\frac{1}{z}\right)\,dz.\\
    & = \frac{2}{|a|} \left[z(\ln(1/z)+1)\right]_{0}^{|a|\varepsilon/2}
    = \frac{2}{|a|} \cdot \frac{|a|\varepsilon}{2} \left(\ln\left(\frac{2}{|a|\varepsilon}\right)+1\right)
    = \varepsilon \left(\ln\left(\frac{2}{|a|\varepsilon}\right)+1\right).
\end{align*}
Altogether this proves the lemma because~$|a|\ge 1$.
\end{proof}

The previous lemma and~\eqref{eqn:CalculationIntegral1} imply that the density~$f_{\Delta|T=\tau,R_1=r_1}(\delta)$
is bounded from above by
\begin{alignat*}{1}
        &    \frac{O(1)}{\sqrt{r_1}}\cdot\ln\left(\delta^{-1}\right)
          +  \frac{O(1)}{\sqrt{r_1}}\cdot \tau \left(\ln\left(\frac{2}{\tau}\right)+1\right)           
          +  \frac{O(1)}{\sqrt{\tau r_1}}\cdot\ln\left(\delta^{-1}\right)\\
          &  = \frac{O(1)}{\sqrt{r_1}}\cdot\ln\left(\delta^{-1}\right)
          +  \frac{O(1)}{\sqrt{r_1}}           
          +  \frac{O(1)}{\sqrt{\tau r_1}}\cdot\ln\left(\delta^{-1}\right),
\end{alignat*}
where we used~$\tau\le\sqrt{d}=O(1)$ (which implies~$\tau\ln(2/\tau)=O(1)$) for the equality.
For a sufficiently large constant~$\kappa'$ we can bound the previous term from above by 
\begin{alignat*}{1}
          \frac{\kappa'}{\sqrt{\tau r_1}}\cdot\ln\left(\delta^{-1}\right),
\end{alignat*}
where we used $\ln\left(\delta^{-1}\right)\ge \ln(2) =\Omega(1)$ and~$\tau\le\sqrt{d}$.

For $\tau\le r_1$ we obtain
\begin{alignat*}{1}
   f_{\Delta|T=\tau,R_1=r_1}(\delta)
   \le & \int_{0}^{\tau}\frac{r_2^{d-1}}{d^{d/2-1}}\cdot 
  \frac{\kappa}{\sqrt{\tau r_2}}\left(\ln\left(\delta^{-1}\right)+\ln|2(\tau-r_2)-\delta|^{-1}\right)\,dr_2\\
   & + 
  \int_{\tau}^{\sqrt{d}}\frac{r_2^{d-1}}{d^{d/2-1}}\cdot\frac{\kappa}{\tau}\cdot\ln\left(\delta^{-1}\right)\,dr_2,
\end{alignat*}
where the integral in the first line corresponds to the case~$r_2\le \tau \le r_1$ of Lemma~\ref{lemma:densityDelta}
and the integral in the second line corresponds to the case~$\tau\le r_1$ and~$\tau\le r_2$. 
Analogously to the case~$r_1\le \tau$, this implies that the density~$f_{\Delta|T=\tau,R_1=r_1}(\delta)$ is bounded
from above by
\begin{alignat*}{1}
 & \frac{\kappa}{d^{d/2-1}\sqrt{\tau}}\int_{0}^{\tau}r_2^{d-3/2}\left(\ln\left(\delta^{-1}\right)+\ln|2(\tau-r_2)-\delta|^{-1}\right)\,dr_2\\
   & + \frac{\kappa}{d^{d/2-1}\tau}\cdot\int_{\tau}^{\sqrt{d}}r_2^{d-1}\ln\left(\delta^{-1}\right)\,dr_2\\
  \le & \frac{\kappa}{d^{d/2-1}\sqrt{\tau}}\int_{0}^{\tau}(\sqrt{d})^{d-3/2}\left(\ln\left(\delta^{-1}\right)+\ln|2(\tau-r_2)-\delta|^{-1}\right)\,dr_2\\
   & + \frac{\kappa}{d^{d/2-1}\tau}\cdot\int_{\tau}^{\sqrt{d}}(\sqrt{d})^{d-1}\ln\left(\delta^{-1}\right)\,dr_2\\   
  = &    \frac{O(1)}{\sqrt{\tau}}\cdot\ln\left(\delta^{-1}\right)
 +  \frac{O(1)}{\sqrt{\tau}}\cdot \int_{0}^{\tau}\ln|2(\tau-r_2)-\delta|^{-1}\,dr_2
 + \frac{O(1)}{\tau}\cdot\ln\left(\delta^{-1}\right).
\end{alignat*}
By Lemma~\ref{lem:CalculationIntegralDetail} this is bounded from above by
\[
     \frac{O(1)}{\sqrt{\tau}}\cdot\ln\left(\delta^{-1}\right)
 +  \frac{O(1)}{\sqrt{\tau}}\cdot  \tau \left(\ln\left(\frac{2}{\tau}\right)+1\right)
 + \frac{O(1)}{\tau}\cdot\ln\left(\delta^{-1}\right)
 \le \frac{\kappa'}{\tau}\ln\left(\delta^{-1}\right),
\]
for a sufficiently large constant~$\kappa'$.

b)  We can write the density of $\Delta$ under the conditions $T=\tau$
and $R_2=r_2$ as
\begin{equation}\label{eqn:DensityDeltaTauR2}
  f_{\Delta|T=\tau,R_2=r_2}(\delta) 
  = \int_{0}^{\sqrt{d}}\frac{r_1^{d-1}}{d^{d/2-1}}\cdot 
  f_{\Delta|T=\tau,R_1=r_1,R_2=r_2}(\delta)\,dr_1.
\end{equation}
For $r_2\le\tau$ and sufficiently large constants $\kappa'$ and $\kappa''$, we
obtain
\begin{alignat*}{1}
  f_{\Delta|T=\tau,R_2=r_2}(\delta) 
  & \le \int_{0}^{\tau}\frac{r_1^{d-1}}{d^{d/2-1}}\cdot 
  \frac{\kappa}{\sqrt{r_1r_2}}\left(\ln\left(\delta^{-1}\right)+\ln|2(r_1-r_2)-\delta|^{-1}\right)\,dr_1\\
  & + \int_{\tau}^{\sqrt{d}}\frac{r_1^{d-1}}{d^{d/2-1}}\cdot\frac{\kappa}{\sqrt{\tau r_2}}\left(\ln\left(\delta^{-1}\right)+\ln|2(\tau-r_2)-\delta|^{-1}\right)\,dr_1.  
\end{alignat*}
The integral in the first line corresponds to the case~$r_1\le \tau$ and~$r_2\le \tau$ of Lemma~\ref{lemma:densityDelta}
and the integral in the second line corresponds to the case~$r_2\le\tau\le r_1$. 
Using that~$\tau\le\sqrt{d}=O(1)$ and~$\ln\left(\delta^{-1}\right) \ge \ln(2) = \Omega(1)$ 
yields that the density~$f_{\Delta|T=\tau,R_2=r_2}(\delta)$ is bounded from above by
\begin{alignat*}{1}
  & \frac{O(1)}{\sqrt{r_2}}\ln\left(\delta^{-1}\right)+\frac{O(1)}{\sqrt{r_2}}\int_{0}^{\sqrt{d}}\ln|2(r_1\!-\!r_2)\!-\!\delta|^{-1}\,dr_1
  + \frac{O(1)}{\sqrt{\tau r_2}}\left(\ln\left(\delta^{-1}\right)+\ln|2(\tau\!-\!r_2)\!-\!\delta|^{-1}\right)\\
  & \le \frac{O(1)}{\sqrt{\tau r_2}}\left(\int_{0}^{\sqrt{d}}\ln|2(r_1-r_2)-\delta|^{-1}\,dr_1
                +\ln\left(\delta^{-1}\right)+\ln|2(\tau-r_2)-\delta|^{-1}\right).
\end{alignat*}
Together with Lemma~\ref{lem:CalculationIntegralDetail} the previous formula implies the following upper bound on the density~$f_{\Delta|T=\tau,R_2=r_2}(\delta)$:
\begin{alignat*}{1}
  & \frac{O(1)}{\sqrt{\tau r_2}}\left(\sqrt{d} \left(\ln\left(\frac{2}{\sqrt{d}}\right)+1\right)
                +\ln\left(\delta^{-1}\right)+\ln|2(\tau-r_2)-\delta|^{-1}\right)\\
   & \le \frac{\kappa'}{\sqrt{\tau r_2}}\left(
                \ln\left(\delta^{-1}\right)+\ln|2(\tau-r_2)-\delta|^{-1}\right),
\end{alignat*}
for a sufficiently large constant~$\kappa'$.

For $\tau\le r_2$ and a sufficiently large constant $\kappa'$, we obtain by~$\eqref{eqn:DensityDeltaTauR2}$
and Lemma~\ref{lemma:densityDelta}
\begin{align*}
  f_{\Delta|T=\tau,R_2=r_2}(\delta) 
  & \le \int_{0}^{\tau}\frac{r_1^{d-1}}{d^{d/2-1}}\cdot 
  \frac{\kappa}{\sqrt{\tau r_1}}\cdot\ln\left(\delta^{-1}\right)\,dr_1
  + \int_{\tau}^{\sqrt{d}}\frac{r_1^{d-1}}{d^{d/2-1}}\cdot 
   \frac{\kappa}{\tau}\cdot\ln\left(\delta^{-1}\right)\,dr_1.
\end{align*}
The first integral corresponds to the case~$r_1\le \tau \le r_2$ of Lemma~\ref{lemma:densityDelta}
and the second integral corresponds to the case~$\tau\le r_1$ and~$\tau\le r_2$. 
Using that~$\tau\le\sqrt{d}=O(1)$ yields that the previous term is bounded from above by
\begin{align*}
  & \frac{\kappa}{d^{d/2-1}\sqrt{\tau}}\cdot\ln\left(\delta^{-1}\right)\int_{0}^{\tau}r_1^{d-3/2}\,dr_1
  + \frac{\kappa}{d^{d/2-1}\tau}\cdot\ln\left(\delta^{-1}\right)\int_{\tau}^{\sqrt{d}}r_1^{d-1}\,dr_1\\   
  & \frac{\kappa}{d^{d/2-1}\sqrt{\tau}}\cdot\ln\left(\delta^{-1}\right)\int_{0}^{\tau}(\sqrt{d})^{d-3/2}\,dr_1
  + \frac{\kappa}{d^{d/2-1}\tau}\cdot\ln\left(\delta^{-1}\right)\int_{\tau}^{\sqrt{d}}(\sqrt{d})^{d-1}\,dr_1\\   
   \le & \frac{\kappa'}{\tau}\cdot\ln\left(\delta^{-1}\right),
\end{align*}
for a sufficiently large constant~$\kappa'$.
\end{proof}

Now we are ready to prove Lemma~\ref{lemma:UB:DensitiesDeltaZ}.
\begin{proof}[Lemma~\ref{lemma:UB:DensitiesDeltaZ}]
a) In order to prove part a), we integrate 
$f_{\Delta|T=\tau,R_1=r}(\delta)$ over all values $\tau$ that $T$ can 
take. We denote by $f_T$ the density of the length $T=\dist(O,P)$.
We have argued in the proof of Lemma~\ref{lemma:densityDelta2} that, for~$\tau\in[0,\sqrt{d}]$,
$f_{R_2}(\tau)=f_T(\tau)=\frac{\tau^{d-1}}{d^{d/2-1}}$.
We obtain, for a sufficiently large constant~$\kappa'$,
\begin{alignat*}{1}
  f_{\Delta|R_1=r}(\delta) & = \int_{0}^{\sqrt{d}}f_{T}(\tau)\cdot
  f_{\Delta|T=\tau,R_1=r}(\delta)\,d\tau\\
  & = \int_{0}^{\sqrt{d}}\frac{\tau^{d-1}}{d^{d/2-1}}\cdot
  f_{\Delta|T=\tau,R_1=r}(\delta)\,d\tau\\ 
  & \le \int_{0}^{r}\frac{\tau^{d-1}}{d^{d/2-1}}\cdot\frac{\kappa}{\tau}\cdot\ln\left(\delta^{-1}\right)\,d\tau
  + \int_{r}^{\sqrt{d}}\frac{\tau^{d-1}}{d^{d/2-1}}\cdot\frac{\kappa}{\sqrt{\tau r}}\cdot\ln\left(\delta^{-1}\right)\,d\tau\\
  & \le \int_{0}^{\sqrt{d}}\frac{\kappa(\sqrt{d})^{d-2}}{d^{d/2-1}}\cdot\ln\left(\delta^{-1}\right)\,d\tau
  + \int_{0}^{\sqrt{d}}\frac{(\sqrt{d})^{d-3/2}}{d^{d/2-1}}\cdot\frac{\kappa}{\sqrt{r}}\cdot\ln\left(\delta^{-1}\right)\,d\tau\\
  & \le O(1)\cdot  \ln\left(\delta^{-1}\right) + \frac{O(1)}{\sqrt{r}} \ln\left(\delta^{-1}\right)
   \le \frac{\kappa'}{\sqrt{r}}\cdot\ln\left(\delta^{-1}\right),
\end{alignat*}
where we used Lemma~\ref{lemma:densityDelta2}~a) for the first inequality,
and $0\le r\le\sqrt{d}=O(1)$ and~$\ln\left(\delta^{-1}\right)\ge \ln(2) =\Omega(1)$ for the other inequalities.

Furthermore, we integrate $f_{\Delta|T=\tau,R_2=r}(\delta)$ over all 
values $\tau$ that $T$ can take:
\begin{alignat*}{1}
  f_{\Delta|R_2=r}(\delta) & =
  \int_{0}^{\sqrt{d}}\frac{\tau^{d-1}}{d^{d/2-1}}\cdot
  f_{\Delta|T=\tau,R_2=r}(\delta)\,d\tau\\ 
  & \le \int_{0}^{r}\frac{\tau^{d-1}}{d^{d/2-1}}\cdot\frac{\kappa}{\tau}\cdot\ln\left(\delta^{-1}\right)\,d\tau\\
  & + \int_{r}^{\sqrt{d}}\frac{\tau^{d-1}}{d^{d/2-1}}\cdot\frac{\kappa}{\sqrt{\tau r}}(\ln\left(\delta^{-1}\right)+\ln|2(\tau-r)-\delta|^{-1})\,d\tau\\
  & \le O(1)\cdot\ln\left(\delta^{-1}\right)+ \frac{O(1)}{\sqrt{r}}\cdot\ln\left(\delta^{-1}\right) + \frac{O(1)}{\sqrt{r}}\sqrt{d} \left(\ln\left(\frac{2}{\sqrt{d}}\right)+1\right)\\
  & \le \frac{\kappa'}{\sqrt{r}}\cdot\ln\left(\delta^{-1}\right),
\end{alignat*}
where we used Lemma~\ref{lemma:densityDelta2}~b) for the first inequality, and
Lemma~\ref{lem:CalculationIntegralDetail}, $0\le r\le\sqrt{d}=O(1)$, and~$\ln\left(\delta^{-1}\right)\ge \ln(2) =\Omega(1)$
for the second and third inequalities.

b)
Let~$f_{R_1}(r)=\frac{r^{d-1}}{d^{d/2-1}}$ denote the density of the length $R_1=\dist(O,Q_1)$.
For a sufficiently large constant $\kappa'$,
\begin{alignat*}{1}
  f_{\Delta|T=\tau}(\delta) & =
  \int_{0}^{\sqrt{d}}f_{R_1}(r)\cdot
  f_{\Delta|T=\tau,R_1=r}(\delta)\,dr\\
  & = \int_{0}^{\sqrt{d}}\frac{r^{d-1}}{d^{d/2-1}}\cdot
  f_{\Delta|T=\tau,R_1=r}(\delta)\,dr\\
  & \le \int_{0}^{\tau}\frac{r^{d-1}}{d^{d/2-1}}\cdot
  \frac{\kappa}{\sqrt{\tau r}}\cdot\ln\left(\delta^{-1}\right)\,dr
  + \int_{\tau}^{\sqrt{d}}\frac{r^{d-1}}{d^{d/2-1}}\cdot
   \frac{\kappa}{\tau}\cdot\ln\left(\delta^{-1}\right)\,dr\\
  & \le \frac{O(1)}{\sqrt{\tau}} \cdot\ln\left(\delta^{-1}\right) + \frac{O(1)}{\tau} \cdot\ln\left(\delta^{-1}\right)  
  \le \frac{\kappa'}{\tau}\cdot\ln\left(\delta^{-1}\right).
\end{alignat*}
For the penultimate inequality we used~$0\le \tau\le\sqrt{d}=O(1)$ and~$\ln\left(\delta^{-1}\right)\ge \ln(2) =\Omega(1)$.

c)
Using part b), for a sufficiently large constant $\kappa'$,
\begin{alignat*}{1}
 f_{\Delta}(\delta) 
 & = \int_{0}^{\sqrt{d}}f_T(\tau)\cdot f_{\Delta|T=\tau}(\delta)\,d\tau\\
 & \le \int_{0}^{\sqrt{d}}\frac{\tau^{d-1}}{d^{d/2-1}}\cdot
 \frac{\kappa}{\tau}\cdot\ln\left(\delta^{-1}\right)\,d\tau
 \le \kappa'\cdot\ln\left(\delta^{-1}\right).
\end{alignat*}

d)
Let $f_{R_i}$ denote the density of $R_i$.
Using Lemma~\ref{lemma:densityZ}, we obtain
\begin{alignat*}{1}
  f_{Z_i|T=\tau}(z)  
  & = \int_{r=0}^{\tau}f_{R_i}(r)\cdot f_{Z|T=\tau,R=r}(z)\,dr \\ 
  & \le \int_{r=\frac{z+\tau}{2}}^{\tau}\frac{r^{d-1}}{d^{d/2-1}}\sqrt{\frac{2}{(\tau+z)(2r-\tau-z)}}\,dr
  + \int_{r=\tau}^{\sqrt{d}}f_{R_i}(r)\sqrt{\frac{2}{\tau^2-z^2}}\,dr.
\end{alignat*}
The lower limit of the first integral follows from the fact that,
according to Lemma~\ref{lemma:densityZ}, $z$ always takes a value in the interval $(-\tau,\min\{\tau,2R_i-\tau\})$.
Since~$z\le 2R_i-\tau$ is equivalent to~$R_i\ge\frac{z+\tau}{2}$,
we can bound~$f_{Z_i|T=\tau}(z)$ from above by
\begin{alignat*}{1}
  & \sqrt{\frac{2}{\tau+z}}d^{1/2}\int_{r=\frac{z+\tau}{2}}^{\tau}\sqrt{\frac{1}{2r-\tau-z}}\,dr
  +\sqrt{\frac{2}{\tau^2-z^2}}\int_{r=\tau}^{\sqrt{d}}f_{R_i}(r)\,dr\\
  & \le \sqrt{\frac{2}{\tau+z}}d^{1/2}\int_{r=\frac{z+\tau}{2}}^{\tau}\sqrt{\frac{1}{2r-\tau-z}}\,dr
  +\sqrt{\frac{2}{\tau^2-z^2}},
\end{alignat*}
where we used $r^{d-1} \le \tau^{d-1} \le (\sqrt{d})^{d-1}$ and the fact that the integral over a density is at most $1$.
Because
\[
   \int_{\frac{z+\tau}{2}}^{\tau}\sqrt{\frac{1}{2r-\tau-z}}\,dr
   = \frac{1}{2}\int_{x=0}^{\tau-z}\sqrt{\frac{1}{x}}\,dx
   \le \frac{1}{2}\int_{0}^{\sqrt{d}}\sqrt{\frac{1}{x}}\,dx
   = [\sqrt{x}]_{0}^{\sqrt{d}} = d^{1/4} = O(1),
\]
we can bound the conditional density of~$Z_i$ from above by
\begin{alignat*}{1}
  f_{Z_i|T=\tau}(z) 
   & \le \sqrt{\frac{2}{\tau+z}}d^{1/2}\cdot O(1)
  +\sqrt{\frac{2}{\tau^2-z^2}}\\ 
  & = \frac{O(1)}{\sqrt{\tau+z}}+\frac{O(1)}{\sqrt{\tau^2-z^2}}
  \le \frac{\kappa'}{\sqrt{\tau^2-z^2}},
\end{alignat*}
for a large enough constant $\kappa'$, where we used
\[
   \tau + z = \frac{\tau^2-z^2}{\tau-z} \ge \frac{\tau^2-z^2}{\sqrt{d}}      
\]
for the last inequality, which holds because $\tau\le\sqrt{d}$ and~$z\ge 0$.
\end{proof}

\section{Negatively Associated Random Variables\label{appendix:NegativelyAssociated}}

Dubhashi and Ranjan~\cite{DubhashiR98} define negatively associated random variables as follows.
\begin{definition}[\cite{DubhashiR98}, Definition 3]
The random variables $X_1,\ldots,X_n$ are negatively associated if for every two disjoint
index sets $I,J\subseteq[n]$,
\[
   \Ex{f(X_i,i\in I)\cdot g(X_j,j\in J)} \le \Ex{f(X_i,i\in I)}\cdot \Ex{g(X_j,j\in J)}, 
\] 
for all functions $f:\RR^{|I|}\to\RR$ and $g:\RR^{|J|}\to\RR$ that are both non-decreasing or both
non-increasing.
\end{definition}

In Section~\ref{sec:approximation}, we used the following result from Dubhashi and Ranjan's paper.
\begin{lemma}[\cite{DubhashiR98}, Proposition 6]
The Chernoff-Hoeffding bounds are applicable to sums of random variables that satisfy the negative association condition.
\end{lemma}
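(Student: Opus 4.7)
The plan is to reduce the Chernoff-Hoeffding bound for negatively associated sums to the independent case by showing that the key MGF factorization step still holds (as an inequality) under the negative association hypothesis, after which the standard derivation goes through verbatim.

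First I would recall the structure of the standard Chernoff-Hoeffding argument for a sum $S = \sum_{i=1}^n X_i$ of random variables. One applies Markov's inequality to $e^{tS}$ for $t > 0$, obtaining
\[
  \Pr{S \ge a} \;=\; \Pr{e^{tS} \ge e^{ta}} \;\le\; e^{-ta}\cdot \Ex{e^{tS}} \;=\; e^{-ta}\cdot \Ex{\prod_{i=1}^n e^{tX_i}},
\]
and analogously for the lower tail using $t < 0$. The only place in the entire derivation where independence is used is in the factorization $\Ex{\prod_i e^{tX_i}} = \prod_i \Ex{e^{tX_i}}$; every subsequent step (choice of $t$, term-by-term bounds on $\Ex{e^{tX_i}}$ via Hoeffding's lemma or the standard $\{0,1\}$-estimate, optimization over $t$) treats each variable in isolation and so does not require independence. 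Hence it suffices to prove the inequality
\[
  \Ex{\prod_{i=1}^n e^{tX_i}} \;\le\; \prod_{i=1}^n \Ex{e^{tX_i}}
\]
whenever $X_1,\ldots,X_n$ are negatively associated.

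The second step is to upgrade the two-set negative association condition from Definition~B.1 to an $n$-fold product inequality by induction on $n$. For $n=2$ the statement is exactly the definition, applied to the non-decreasing functions $x \mapsto e^{tx}$ (for $t > 0$; for $t < 0$ use the all-non-increasing version). For the inductive step, set $I = \{1\}$, $J = \{2,\ldots,n\}$, $f(x_1) = e^{tx_1}$, and $g(x_2,\ldots,x_n) = \prod_{i=2}^n e^{tx_i}$. The function $g$ is a product of monotone functions of distinct coordinates, hence monotone in each argument (all non-decreasing for $t>0$, all non-increasing for $t<0$), so the definition of negative association gives
\[
  \Ex{e^{tX_1}\cdot \prod_{i=2}^n e^{tX_i}} \;\le\; \Ex{e^{tX_1}}\cdot \Ex{\prod_{i=2}^n e^{tX_i}}.
\]
A subtle point that needs justification is that the restriction of $(X_2,\ldots,X_n)$ to its index subset is itself negatively associated, so that the inductive hypothesis applies to the second factor; this is a standard closure property of negative association (a subvector of a negatively associated vector is again negatively associated), and follows immediately from the definition by taking $f$ or $g$ constant on the omitted coordinates. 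Iterating yields the desired product inequality.

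Combining the two steps, we get $\Ex{e^{tS}} \le \prod_i \Ex{e^{tX_i}}$, and plugging this into Markov gives exactly the bound one would get in the independent case; the optimization of $t$ and the resulting Chernoff-Hoeffding inequalities (in particular the form $\Pr{S \le (1-\delta)\Ex{S}} \le \exp(-\Ex{S}\delta^2/2)$ used in Section~\ref{sec:approximation}) follow without any further change. The main conceptual obstacle is simply the extension from the two-set definition to the $n$-fold product inequality, which is purely a matter of induction and the closure of negative association under taking subvectors; no new probabilistic idea is needed beyond what is already built into Definition~B.1.
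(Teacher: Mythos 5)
Your proposal is correct, but note that the paper itself does not prove this statement at all: it is quoted verbatim as Proposition~6 of Dubhashi and Ranjan~\cite{DubhashiR98} and used as a black box, with the paper's own work in Appendix~D confined to verifying that the balls-into-bins indicators are negatively associated. What you have reconstructed is essentially the standard argument behind the cited result: the only use of independence in the Chernoff--Hoeffding derivation is the factorization of the moment generating function, and negative association yields the one-sided inequality $\Ex{e^{tS}}\le\prod_i\Ex{e^{tX_i}}$ (for $t>0$ via all-non-decreasing functions, for $t<0$ via all-non-increasing ones, the latter being what the lower-tail bound in Section~\ref{sec:approximation} actually needs), after which the usual optimization over $t$ is untouched. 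Two small points are worth making explicit: the coordinatewise monotonicity of $g(x_2,\ldots,x_n)=\prod_{i\ge2}e^{tx_i}$ relies on the factors being nonnegative (true here since $e^{tx}>0$), and the closure of negative association under passing to subvectors is immediate because the definition already quantifies over \emph{all} disjoint index sets $I,J\subseteq[n]$, so no separate argument with constant paddings is needed. With those details in place, your induction is sound and gives exactly the bound $\Pr{X\le(1-\delta)\Ex{X}}\le\exp(-\Ex{X}\delta^2/2)$ that the paper invokes; the trade-off is simply that you supply a self-contained proof where the paper prefers a citation.
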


It remains to show that the random variables $X_1,\ldots,X_k$ defined in Section~\ref{sec:approximation}
satisfy the negative association condition. Remember that these variables come from a balls-into-bins process
in which $n$ balls are put independently into $k$ bins. Each ball has its own probability distribution on the $k$ bins
and the 0-1-variable $X_i$ indicates whether bin~$i$ contains at least one ball.

In order to show that the variables $X_1,\ldots,X_k$ are negatively associated, we follow the same line of arguments
as Lenzen and Wattenhofer~\cite{LenzenW10}, who showed the same statement for a balls-into-bins process in which the balls are put
uniformly at random into the bins. The proof is based on the following statements proven in~\cite{DubhashiR98}.
\begin{lemma}\label{lemma:NegativelyAssociated}
\begin{enumerate}
  \setlength{\itemsep}{0em}
  \renewcommand{\labelenumi}{\alph{enumi})}
  \item If $X_1,\ldots,X_n$ are 0-1-random variables with $\sum X_i = 1$, then $X_1,\ldots,X_n$ are negatively associated.
  \item If $X$ and $Y$ are sets of negatively associated random variables and if the random variables in $X$ and $Y$ are mutually independent,
        then $X\cup Y$ is also negatively associated.
  \item Assume that the random variables $X_1,\ldots,X_n$ are negatively associated and, for some $k\in\NN$, let $I_1,\ldots,I_k\subseteq[n]$ be mutually disjoint index
        sets. For $j\in[k]$, let $h_j:\RR^{|I_j|}\to\RR$ be functions that are all non-decreasing or all non-increasing,
        and define $Y_j = h_j(X_i, i\in I_j)$. Then the random variables $Y_1,\ldots,Y_k$ are also negatively associated.
\end{enumerate}
\end{lemma}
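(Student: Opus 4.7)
For part (a), I plan a direct computation exploiting the constraint $\sum X_i=1$. Since each $X_i\in\{0,1\}$, exactly one equals $1$, so for disjoint $I,J\subseteq[n]$ the joint vector $((X_i)_{i\in I},(X_j)_{j\in J})$ takes only finitely many values: either the unique $1$ is outside $I\cup J$, or it is at some unique index in $I$, or at some unique index in $J$. Writing $p_i=\Pr{X_i=1}$, $f_0$ for $f$ at the zero vector, and $f_i$ for $f$ at the $i$-th standard basis vector (and analogously $g_0,g_j$), I would expand $\Ex{fg}$ and $\Ex{f}\Ex{g}$ explicitly. Assuming $f,g$ both non-decreasing (the other case is symmetric), I expect the difference to factor as
\[
\Ex{f}\Ex{g}-\Ex{fg}=\Big(\sum_{i\in I}p_i(f_0-f_i)\Big)\Big(\sum_{j\in J}p_j(g_0-g_j)\Big),
\]
which is the product of two non-positive quantities by monotonicity, hence non-negative.

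For part (b), I plan a two-step conditional argument. Split each given disjoint index set into an $X$-part and a $Y$-part. Conditioning on $Y=y$ and using independence of $X$ and $Y$, both $f$ and $g$ reduce to functions of $X$ alone (with the $Y$-coordinates frozen to $y$), still monotone in the same direction, so negative association of $X$ yields $\Ex{fg\mid Y=y}\le\phi(y)\psi(y)$ for $\phi(y):=\Ex{f(X_{I_X},y_{I_Y})}$ and $\psi(y):=\Ex{g(X_{J_X},y_{J_Y})}$. Crucially, $\phi$ depends on $y$ only through the $I_Y$-coordinates and $\psi$ only through the disjoint $J_Y$-coordinates, and both inherit the common monotonicity direction of $f,g$. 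A second application, this time of negative association of $Y$ to the pair $(\phi,\psi)$, gives $\Ex{\phi(Y)\psi(Y)}\le\Ex{\phi(Y)}\Ex{\psi(Y)}=\Ex{f}\Ex{g}$; combining via the tower rule yields the claim.

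For part (c), the idea is to absorb the $h_j$'s into the outer test functions. Given disjoint $A,B\subseteq[k]$ and functions $F,G$ monotone in the same direction on $\RR^{|A|}$ and $\RR^{|B|}$, define $\tilde F((x_i)_{i\in I_A}):=F\big((h_a((x_i)_{i\in I_a}))_{a\in A}\big)$ with $I_A:=\bigcup_{a\in A}I_a$, and analogously $\tilde G$ on $I_B:=\bigcup_{b\in B}I_b$. Since all $h_j$ share a common monotonicity, composition with $F$ preserves $F$'s monotonicity in each underlying $X_i$, and similarly for $\tilde G$. Disjointness of $I_A$ and $I_B$ follows from the mutual disjointness of the $I_j$'s together with $A\cap B=\emptyset$, so negative association of $(X_1,\ldots,X_n)$ gives $\Ex{\tilde F\tilde G}\le\Ex{\tilde F}\Ex{\tilde G}$, which is exactly the negative-association inequality for $(Y_1,\ldots,Y_k)$. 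The only real subtlety is part (c)'s same-direction hypothesis on the $h_j$'s: without it, composing a non-decreasing $F$ with a mix of non-decreasing and non-increasing $h_j$'s would destroy the monotonicity needed to invoke NA on the $X$-side. Once the factorization in (a) and the conditioning order in (b) are set up correctly, those two parts are essentially bookkeeping.
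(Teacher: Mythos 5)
Your three proofs are correct, but note that the paper itself does not prove this lemma at all: it is imported verbatim from Dubhashi and Ranjan~\cite{DubhashiR98} (``the following statements proven in~\cite{DubhashiR98}''), so your submission is a self-contained reproof rather than an alternative to an argument in the paper. Your arguments are the standard ones from that literature, and they check out. In a) the claimed factorization is exactly right: writing $A=\sum_{i\in I}p_if_i$, $B=\sum_{j\in J}p_jg_j$, $P_I=\sum_{i\in I}p_i$, $P_J=\sum_{j\in J}p_j$, one gets $\Ex{f}\Ex{g}-\Ex{fg}=(A-P_If_0)(B-P_Jg_0)=\bigl(\sum_{i\in I}p_i(f_0-f_i)\bigr)\bigl(\sum_{j\in J}p_j(g_0-g_j)\bigr)$, and both factors have the same sign whether $f,g$ are both non-decreasing or both non-increasing, so the difference is non-negative. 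In b) the two ingredients you isolate are the right ones: by independence $\Ex{fg\mid Y=y}$ splits via negative association of $X$ into $\phi(y)\psi(y)$ with $\phi,\psi$ depending on disjoint blocks of $Y$-coordinates and inheriting the common monotonicity direction (monotonicity survives averaging over the $X$-coordinates), and then negative association of $Y$ plus $\Ex{\phi(Y)}=\Ex{f}$, $\Ex{\psi(Y)}=\Ex{g}$ (again by independence) and the tower rule finish it. In c) the composition argument is complete once you observe, as you do, that whatever the common direction of the $h_j$ and of $F,G$, the composed functions $\tilde F,\tilde G$ again share a direction and act on disjoint subsets of the $X_i$, so the definition of negative association for $(Y_1,\ldots,Y_k)$ reduces directly to that for $(X_1,\ldots,X_n)$; this is also where your remark about the necessity of the same-direction hypothesis on the $h_j$ belongs. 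Apart from routine integrability/measurability caveats in b), there are no gaps.
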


Based on this lemma, we prove the theorem about the balls-into-bins process.
\begin{theorem}
Consider a balls-into-bins process in which $n$ balls are put independently into $k$ bins. 
Each ball has its own probability distribution on the $k$ bins
and the 0-1-variable $X_i$ indicates whether bin~$i$ contains at least one ball.
The random variables $X_1,\ldots,X_k$ are negatively associated.
\end{theorem}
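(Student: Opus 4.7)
The plan is to reduce the claim to Lemma~\ref{lemma:NegativelyAssociated} by expressing each indicator $X_i$ as a monotone function of a collection of more elementary indicators whose negative association is easy to establish. For every ball $j\in[n]$ and every bin $i\in[k]$, let $Y_{i,j}$ be the 0-1 random variable that equals $1$ if and only if ball $j$ is placed into bin $i$. By construction, $X_i = 1-\prod_{j=1}^n(1-Y_{i,j})$, which is a non-decreasing function of the variables $\{Y_{i,j}\}_{j\in[n]}$, and the index sets involved for different values of $i$ are pairwise disjoint.

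First I would fix a ball $j$ and consider the family $\{Y_{i,j}\}_{i\in[k]}$. Since each ball lands in exactly one bin, these are 0-1 variables summing to~$1$, so Lemma~\ref{lemma:NegativelyAssociated}~a) yields that $\{Y_{i,j}\}_{i\in[k]}$ is negatively associated. Next I would invoke the independence of the balls: the families $\{Y_{i,j}\}_{i\in[k]}$ for different $j$ are mutually independent because each ball is placed independently according to its own distribution. Repeated application of Lemma~\ref{lemma:NegativelyAssociated}~b) therefore shows that the whole collection $\{Y_{i,j}\}_{i\in[k],\,j\in[n]}$ is negatively associated.

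Finally, I would apply Lemma~\ref{lemma:NegativelyAssociated}~c) with the disjoint index sets $I_i=\{(i,j):j\in[n]\}$ and the non-decreasing functions $h_i(y_{i,1},\ldots,y_{i,n}) = 1-\prod_{j=1}^n(1-y_{i,j})$. Since $X_i = h_i(Y_{i,j}, j\in[n])$ and all $h_i$ are simultaneously non-decreasing, this yields that $X_1,\ldots,X_k$ are negatively associated, as desired.

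There is essentially no obstacle here; the work is entirely in choosing the right auxiliary variables and applying the three parts of Lemma~\ref{lemma:NegativelyAssociated} in sequence. The only point worth double-checking is that parts a) and b) indeed combine to give negative association of the joint family $\{Y_{i,j}\}$, for which one uses the assumption that balls are placed independently; since this is exactly the setting of our probabilistic model, no further condition is needed.
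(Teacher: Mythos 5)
Your proposal is correct and follows essentially the same route as the paper's proof: introduce the per-ball, per-bin indicators, apply parts a), b), and c) of Lemma~\ref{lemma:NegativelyAssociated} in exactly that order, using the non-decreasing functions $h_i$ on disjoint index sets. The only cosmetic difference is that you write $h_i$ as $1-\prod_j(1-y_{i,j})$ while the paper writes it as the indicator of $\sum_j X_i^j\ge 1$; these are the same function.
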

\begin{proof}
First we define for each bin $i\in[k]$ and each ball $j\in[n]$ a 0-1-variables $X_i^j$ indicating whether ball~$j$ ends up in bin~$i$.
For a ball $j\in[n]$, the random variables $X_1^j,\ldots,X_k^j$ are negatively associated according to Lemma~\ref{lemma:NegativelyAssociated}~a). 
Since the balls are put independently into the bins, all random variables $X_i^j$ for $i\in[k]$ and $j\in[n]$ are negatively associated according to Lemma~\ref{lemma:NegativelyAssociated}~b).

Now we define for each bin $i\in[k]$ the set $I_i=\{X_i^1,\ldots,X_i^n\}$ and the function 
\[
  h_i(X_i^1,\ldots,X_i^n) = \begin{cases}
     1 & \text{if $X_i^1+\cdots+X_i^n \ge 1$,} \\
     0 & \text{if $X_i^1+\cdots+X_i^n = 0$.}
  \end{cases} 
\]
Observe that $X_i = h_i(X_i^1,\ldots,X_i^n)$. As these functions are non-decreasing Lemma~\ref{lemma:NegativelyAssociated}~c)
implies that the random variables $X_1,\ldots,X_k$ are negatively associated.
\end{proof}

\end{appendix}

\bibliographystyle{plain}
\bibliography{Literature}

\end{document}